\apptocmd{\sloppy}{\hbadness 10000\relax}{}{}
\keywords{Strong normalization, \texorpdfstring{$\top\top$}{⊤⊤}-lifting, Nested relational calculus, Language-integrated query}
\newcommand{\subst}[2]{\left[{#2}\middle/{#1}\right]}
\newcommand{\vect}[1]{\overrightarrow{#1}}
\newcommand{\mathcd}[1]{\mathtt{\color{blue}#1}}
\newcommand{\bbN}[0]{\mathbb{N}}
\newcommand{\cC}[0]{\mathcal{C}}
\newcommand{\cD}[0]{\mathcal{D}}
\newcommand{\cN}[0]{\mathcal{N}}
\newcommand{\cP}[0]{\mathcal{P}}
\newcommand{\cR}[0]{\mathcal{R}}
\newcommand{\cS}[0]{\mathcal{S}}
\newcommand{\cT}[0]{\mathcal{T}}
\newcommand{\CR}[0]{\cC\cR}
\newcommand{\CRi}[0]{\mathrm{CR1}}
\newcommand{\CRii}[0]{\mathrm{CR2}}
\newcommand{\CRiii}[0]{\mathrm{CR3}}
\newcommand{\SN}[0]{\cS\cN}
\newcommand{\NT}[0]{\cN\cT}
\newcommand{\Red}[1]{\mathsf{Red}_{#1}}
\newcommand{\topset}[1]{#1^{\top}}
\newcommand{\toptopset}[1]{#1^{\top\top}}
\newcommand{\imp}[0]{\to}
\newcommand{\red}[0]{\mathrel{\leadsto}}
\newcommand{\ared}[1]{\mathrel{\stackrel{#1}{\leadsto}}}
\newcommand{\ured}[0]{\mathrel{\breve{\leadsto}}}
\newcommand{\sem}[1]{\left\llbracket {#1} \right\rrbracket}
\def\then{\mathrel{\Longrightarrow}}
\def\pure{\overline}
\def\orelse{\mathbin{~|~}}
\def\emptyoset{\emptyset}
\def\emptymset{\mho}
\newcommand{\setlit}[1]{\mathopen{\textbf{\upshape\{}}{#1}\mathclose{\textbf{\upshape\}}}}
\newcommand{\msetlit}[1]{\Lbag{#1}\Rbag}
\def\ocup{\mathbin{\cup}}
\newcommand{\comprehension}[1]{\bigcup\setlit{#1}}
\newcommand{\mcomprehension}[1]{\biguplus\msetlit{#1}}
\newcommand{\isempty}{\mathcd{empty}}
\newcommand{\tuple}[1]{\langle{#1}\rangle}
\def\distinct{\delta}
\def\promote{\iota}
\def\kwsize{\mathrm{size}}
\def\kwdo{\mathcd{do}}
\def\kwlet{\mathbf{let}}
\def\kwlin{\mathbf{in}}
\def\kwcases{\mathbf{cases}}
\def\kwof{\mathbf{of}}
\def\kwinl{\mathbf{inl}}
\def\kwinr{\mathbf{inr}}
\def\plwhere{\mathcd{where}}
\def\bagwhere{\mathcd{where}_{\mathsf{bag}}}
\def\plempty{\mathcd{empty}}
\def\bagempty{\mathcd{empty}_{\mathsf{bag}}}
\def\id{\mathsf{id}}
\newcommand{\metaset}[1]{\{ {#1} \}}
\newcommand{\ibox}[1]{\mathop{%
   \left[ {#1} \right]
   }}
\newcommand{\compop}[1]{\mathop{%
   \ooalign{%
      \hfil \raise .2ex\hbox {$\scriptscriptstyle#1$}\hfil \crcr
      {$\bigcirc$}%
   }}}
\newcommand{\sz}[1]{\left\Vert #1 \right\Vert}
\newcommand{\len}[1]{\left\vert #1 \right\vert}
\newcommand{\erase}[1]{\left\lfloor #1 \right\rfloor}
\def\kwsel{{\color{blue}\mathtt{SELECT}}}
\def\kwdist{{\color{blue}\mathtt{DISTINCT}}}
\def\kwfrom{{\color{blue}\mathtt{FROM}}}
\def\kwunion{{\color{blue}\mathtt{UNION}}}
\def\kwall{{\color{blue}\mathtt{ALL}}}
\def\kwtrue{{\mathsf{true}}}
\def\kwfalse{{\mathsf{false}}}
\def\kwby{{\color{blue}\mathtt{BY}}}
\def\kworder{{\color{blue}\mathtt{ORDER}}}
\def\boolty{\mathbf{B}}
\newcommand{\maxred}[1]{\nu(#1)}
\DeclareMathOperator{\BV}{BV}
\DeclareMathOperator{\FV}{FV}
\DeclareMathOperator{\dom}{dom}
\DeclareMathOperator{\supp}{supp}
\def\theNRC{\mathit{NRC}}
\def\NRC{$\theNRC$\xspace}
\def\NRCl{$\theNRC_\lambda$\xspace}
\def\NRCsb{$\theNRC(\mathit{Set},\mathit{Bag})$\xspace}
\def\NRClsb{$\theNRC_\lambda(\mathit{Set},\mathit{Bag})$\xspace}
\def\NRCldi{$\theNRC_{\lambda\distinct\promote}$\xspace}
\definecolor{Gold}{RGB}{255,224,0}
\newcommand{\goldbox}[1]{\colorbox{Gold}{#1}}
\newenvironment{lemmaproof}[2]
{\begin{proof}[Proof of Lemma~\ref{#1}]
\emph{#2}

\medskip

}
{
\end{proof}
}
\begin{document}

\title{Strongly Normalizing Higher-Order Relational Queries}

\author[W.~Ricciotti]{Wilmer Ricciotti\lmcsorcid{0000-0002-2361-8538}}
\author[J.~Cheney]{James Cheney\lmcsorcid{0000-0002-1307-9286}}
\address{
  Laboratory for Foundations of Computer Science,
  University of Edinburgh,
  United Kingdom
\and
The Alan Turing Institute,
London, United Kingdom
}
\email{research@wilmer-ricciotti.net, jcheney@inf.ed.ac.uk}


\begin{abstract}
  Language-integrated query is a
  powerful programming
  construct allowing database queries and ordinary program code
  to interoperate seamlessly and safely.  Language-integrated query
  techniques rely on classical results about
  the nested
  relational calculus,
  stating that its queries can be algorithmically translated to SQL, as long as their result type is a flat relation.
  Cooper and others
  advocated \emph{higher-order} nested relational calculi as a basis
  for language-integrated queries in functional languages such as Links and F\#.
  However, the translation of higher-order relational queries to SQL relies on a rewrite system for which no \emph{strong normalization} proof has been published:
  a previous proof attempt does not deal correctly
  with rewrite rules that duplicate subterms.  This paper fills the gap in the
  literature, explaining the difficulty with a previous proof attempt,
  and showing how to extend the \emph{$\top\top$-lifting} approach of
  Lindley and Stark to accommodate duplicating rewrites.  We also
  show how to extend the proof to a recently-introduced calculus for
  \emph{heterogeneous} queries mixing set and multiset semantics.
\end{abstract}

\maketitle

\section{Introduction}
The Nested Relational Calculus~(\NRC)~\cite{BNTW95} provides a principled
foundation for integrating database queries into programming languages.  Wong's conservativity theorem~\cite{wong96jcss} generalized the classic
flat-flat theorem~\cite{ParedaensG92} to show that for any nesting depth $d$, a query
expression over flat input tables returning collections of depth at
most $d$ can be expressed without constructing intermediate results of
nesting depth greater than $d$.  In the special case $d=1$, this
implies the flat-flat theorem, namely that a nested relational query
mapping flat tables to flat tables can be expressed in a semantically equivalent way using
the flat relational calculus.
In addition, Wong's proof technique was constructive, and gave an
easily-implemented terminating rewriting algorithm for normalizing NRC queries to flat relational queries, which can in turn be easily translated to idiomatic SQL queries.
The basic approach has been extended
in a number of directions, including to allow for (nonrecursive)
higher-order functions in queries~\cite{Cooper09}, and to allow for
translating queries that return nested results to a bounded number of
flat relational queries~\cite{cheney14sigmod}.

Normalization-based techniques are used in language-integrated query systems such as
Kleisli~\cite{wong:comprehensions} and Links~\cite{CLWY06}, and can
improve both performance and reliability of language-integrated query
in F\#~\cite{cheney13icfp}.  However, most work on normalization
considers \emph{homogeneous} queries in which there is a single
collection type (e.g.\ homogeneous sets or multisets).
Currently, language-integrated query systems such as C\# and
F\#~\cite{meijer:sigmod} support duplicate elimination via a
$\mathtt{Distinct()}$ method, which is translated to SQL queries in an ad
hoc way, and comes with no guarantees regarding completeness or
expressiveness as far as we know, whereas Database-Supported Haskell
(DSH)~\cite{SIGMOD2015UlrichG} supports duplicate elimination but
gives all operations list semantics and relies on more sophisticated
SQL:1999 features to accomplish this.  Fegaras and Maier~\cite{DBLP:journals/tods/FegarasM00}
propose optimization rules for a nested object-relational calculus with set and bag constructs
but do not consider the problem of conservativity with respect to flat queries.

Recently, we considered a \emph{heterogeneous} calculus for mixed set
and bag queries~\cite{ricciotti19dbpl}, and conjectured that it too
satisfies strong normalization and conservativity theorems.  However,
in attempting to extend Cooper's proof of normalization we discovered
a subtle problem,  which makes the original proof incomplete.

Most techniques to prove the strong normalization property for higher-order
languages employ logical relations; among these, the Girard-Tait
\emph{reducibility} relation is particularly influential: reducibility
interprets types as certain sets of strongly normalizing terms enjoying
desirable closure properties with respect to reduction, called
\emph{candidates of reducibility}~\cite{GLT89}. The fundamental theorem
then proves that every well-typed term is reducible, hence also strongly
normalizing. In its traditional form, reducibility has a limitation that
makes it difficult to apply it to certain calculi: the elimination form of
every type is expected to be a \emph{neutral} term or, informally, an
expression that, when placed in an arbitrary evaluation context, does not
interact with it by creating new redexes. However, some calculi possess
\emph{commuting conversions}, i.e.\ reduction rules that apply to nested
elimination forms: such rules usually arise when the elimination form for a
type (say, pairs) is constructed by means of an auxiliary term of
any arbitrary, unrelated type. In this case, we expect nested elimination forms to
commute; for example, we could have the following commuting conversion hoisting
the elimination of pairs out of case analysis on disjoint unions:
\[
\begin{array}{l}
\kwcases~(\kwlet~(a,b) = p~\kwlin~t)~\kwof~\kwinl(x) \Rightarrow u; \kwinr(y) \Rightarrow v
\\
\red
\kwlet~(a,b) = p~\kwlin~\kwcases~t~\kwof~\kwinl(x) \Rightarrow u; \kwinr(y) \Rightarrow v
\end{array}
\]
where $p$ has type $A \times B$, $t$ has type $C + D$, $u,v$ have type $U$,
and the bound variables $a,b$ are chosen fresh for $u$ and $v$.
Since in the presence of commuting conversions elimination forms are not neutral, a straightforward adaptation of reducibility to such languages is precluded.

\subsection{\texorpdfstring{$\top\top$}{⊤⊤}-lifting and \texorpdfstring{\NRCl}{NRC-λ}}
Cooper's \NRCl~\cite{Cooper09a,Cooper09} extends the simply typed lambda calculus with collection types whose elimination form is expressed by \emph{comprehensions} $\comprehension{M | x \gets N}$, where $M$ and $N$ have a collection type, and the bound variable $x$ can appear in $M$:

\begin{prooftree}
\AxiomC{$\Gamma \vdash N : \setlit{S}$}
\AxiomC{$\Gamma, x : S \vdash M : \setlit{T}$}
\BinaryInfC{$\Gamma \vdash \comprehension{M | x \gets N} : \setlit{T}$}
\end{prooftree}
(we use bold-style braces $\setlit{\cdot}$ to indicate collections as expressions or types of \NRCl). In the rule above, we typecheck a comprehension destructuring collections of type $\setlit{S}$ to produce new collections in $\setlit{T}$, where $T$ is an unrelated type: semantically, this corresponds to the union of all the collections $M\subst{x}{V}$, such that $V$ is in $N$. According to the standard approach, we should attempt to define the reducibility predicate for the collection type $\setlit{S}$ as:
\[
\Red{\setlit{S}} \triangleq \metaset{N : \forall x,T,\forall M \in \Red{\setlit{T}}, \comprehension{M | x \gets N} \in \Red{\setlit{T}}}
\]
(we use roman-style braces $\metaset{\cdot}$ to express metalinguistic sets). Of course the definition above is circular, since it uses reducibility over collections to express reducibility over collections; however, this inconvenience could in principle be circumvented by means of impredicativity, replacing $\Red{\setlit{T}}$ with a suitable, universally quantified candidate of reducibility (an approach we used in~\cite{Ricciotti2017a} in the context of justification logic). Unfortunately, the arbitrary return type of comprehensions is not the only problem: they are also involved in commuting conversions, such as:
\[
\comprehension{M | x \gets \comprehension{N | y \gets P}}
\red
\comprehension{\comprehension{M | x \gets N} | y \gets P} \qquad (y
\notin FV(M))
\]
Because of this rule, comprehensions are not neutral terms, thus we cannot use the closure properties of candidates of reducibility (in particular, $\CRiii$~\cite{GLT89}) to prove that a collection term is reducible.
To address this problem, Lindley and Stark proposed a revised notion of reducibility based
on a technique they called $\top\top$-lifting~\cite{LindleyS05}. $\top\top$-lifting, which
derives from Pitts's related notion of $\top\top$-closure~\cite{Pitts98}, involves
quantification over arbitrarily nested, reducible elimination contexts (\emph{continuations});
the technique is actually composed of two steps: $\top$-lifting, used to define the set
$\topset{\Red{T}}$ of reducible continuations for collections of type $T$ in terms of $\Red{T}$, and
$\top\top$-lifting proper, defining $\Red{\setlit{T}} = \toptopset{\Red{T}}$ in terms of
$\topset{\Red{T}}$. In our setting, if we use $\SN$ to denote the set of strongly normalizing
terms, the two operations can be defined as follows:
\begin{align*}
\topset{\Red{T}} & \triangleq \metaset{ K : \forall M \in \Red{T}, K[\setlit{M}] \in \SN }
\\
\toptopset{\Red{T}} & \triangleq \metaset{ M : \forall K \in \topset{\Red{T}}, K[M] \in \SN}
\end{align*}
Notice that, in order to avoid a circularity between the definitions of reducible collection
continuations and reducible collections, the former are defined by lifting a
reducible term $M$ of type $T$ to a singleton collection.

In \NRCl, besides commuting conversions, we come across an additional problem concerning the property of distributivity of comprehensions over unions, represented by the following reduction rule:
\[
\comprehension{M \ocup N | x \gets P} \red \comprehension{M | x \gets P} \ocup \comprehension{N | x \gets P}
\]
One can immediately see that in $\comprehension{M \ocup N | x \gets \Box}$ the reduction above duplicates the hole, producing a multi-hole context that is not a continuation in the Lindley-Stark sense.

Cooper, in his work, attempted to reconcile continuations with
duplicating reductions. While considering extensions to his language,
we discovered that his proof of strong normalization presents a
nontrivial lacuna which we could only fix by relaxing the definition
of continuations to allow multiple holes. This problem affected both
the proof of the original result and our attempt to extend it, and has
an avalanche effect on definitions and proofs, yielding a more radical
revision of the $\top\top$-lifting technique which is the subject of
this paper.

The contribution of this paper is to place previous work on
higher-order programming for language-integrated query on a solid
foundation.  As we will show, our approach also extends to proving
normalization for a higher-order heterogeneous collection calculus
\NRClsb~\cite{ricciotti19dbpl} and we believe our proof technique
can be extended further.

This article is a revised and expanded version of a conference paper~\cite{Ricciotti20fscd}. Compared with the conference paper, this
article refines the notion of $\top\top$-lifting by omitting a harmless, but unnecessary generalization, includes details of proofs that had to be left out, and expands the discussion of related work. In addition, we fully comment on
the extension of our result to a language allowing to freely mix and compose set queries and bag queries, which was only marginally
discussed in the conference version. We also solved a subtle problem with the treatment of variable capture in contexts by reformulating the statement of Lemma~\ref{lem:redtoredctxapp}.

\subsection{Summary}
Section~\ref{sec:calculus} reviews $NRC_\lambda$ and its
rewrite system.  Section~\ref{sec:red} presents the refined approach
to reducibility needed to handle rewrite rules with branching
continuations.  Section~\ref{sec:sn} presents the proof of strong
normalization for $NRC_\lambda$.  Section~\ref{sec:heterogeneous}
outlines the extension to a higher-order calculus \NRClsb providing
heterogeneous set and bag queries.  Sections~\ref{sec:related}
and~\ref{sec:concl} discuss related work and conclude.

\section{Higher-order NRC}\label{sec:calculus}
\NRCl, a nested relational calculus with non-recursive higher-order functions, is defined by the following grammar:
\[
\begin{array}{lrcl}
\textbf{types} &
  S, T & ::= & A \orelse S \to T \orelse \tuple{\vect{\ell : T}} \orelse \setlit{T}
\\
\textbf{terms} &
  L, M, N & ::= & x \orelse c(\vect{M}) \orelse \tuple{\vect{\ell = M}} \orelse M.\ell
       \orelse \lambda x.M \orelse (M~N) \\
       & & \orelse & \emptyoset \orelse \setlit{M} \orelse M \ocup N \orelse \comprehension{M | x \leftarrow N} \\
       & & \orelse & \plempty~M \orelse \plwhere~M~\kwdo~N
\end{array}
\]
where $x$, $\ell$ and $c$ range over countably infinite and pairwise disjoint sets of variables, record field labels, and constants.

Types include atomic types $A, B,\ldots $ (among which we have Booleans $\boolty$), record types with named fields $\tuple{\vect{\ell : T}}$, collections $\setlit{T}$; we define \emph{relation types} as those in the form $\setlit{\tuple{\vect{\ell:A}}}$, i.e.\ collections of records of atomic types.
Terms include applied constants $c(\vect{M})$, records with
named fields and record projections ($\tuple{\ell = M}$, $M.\ell$),
various collection terms (empty, singleton, union, and comprehension),
the emptiness test $\plempty$, and one-sided conditional expressions
for collection types
$\plwhere~M~\kwdo~N$; we allow the type of records with no fields: consisting of a single, empty record $\tuple{}$. Notice that $\lambda x.M$ and $\comprehension{M \mid x \gets N}$
bind the variable $x$ in $M$.

We will allow ourselves to use sequences of generators in comprehensions, which are syntactic sugar for nested comprehensions, e.g.:
\[
\comprehension{ M | x \leftarrow N, y \leftarrow R } \triangleq
\comprehension{ \comprehension{ M | y \leftarrow R } | x \leftarrow N }
\]

The typing rules, shown in Figure~\ref{fig:typing}, are largely
standard, and we only mention those operators that are specific to our
language: constants are typed according to a fixed signature $\Sigma$,
prescribing the types of the $n$ arguments and of the returned
expression to be atomic; we assume that $\Sigma$ assigns the type $\boolty$ to the constants $\kwtrue$ and $\kwfalse$ (representing the two Boolean values), and the type $(\boolty,\boolty) \to \boolty$ to the constant
$\land$ (representing the logical `and`) and we will allow ourselves to write $M \land N$ instead of $\land(M,N)$.
The operation $\isempty$ takes a collection and returns a
Boolean indicating whether its argument is empty; $\plwhere$ takes a
Boolean condition and a collection and returns the second argument if
the Boolean is true, otherwise the empty set.  (Conventional two-way
conditionals, at any type, are omitted for convenience but can be added without difficulty.)
\begin{figure}[tb]
\begin{center}
\AxiomC{$x : T \in \Gamma$}
\UnaryInfC{$\Gamma \vdash x : T$}
\DisplayProof
\hspace{.5cm}
\AxiomC{$\Sigma(c) = \vect{A_n} \imp A'$}
\AxiomC{$(\Gamma \vdash M_i : A_i)_{i = 1,\ldots,n}$}
\BinaryInfC{$\Gamma \vdash c(\vect{M_n}) : A'$}
\DisplayProof

\medskip

\AxiomC{$(\Gamma \vdash M_i : T_i)_{i = 1, \ldots, n}$}
\UnaryInfC{$\Gamma \vdash \tuple{\vect{\ell_n = M_n}} : \tuple{\vect{\ell_n : T_n}}$}
\DisplayProof
\hspace{.5cm}
\AxiomC{$\Gamma \vdash M : \tuple{\vect{\ell_n : T_n}}$}
\AxiomC{$i \in \metaset{1,\ldots,n}$}
\BinaryInfC{$\Gamma \vdash M.\ell_i : T_i$}
\DisplayProof

\medskip

\AxiomC{$\Gamma, x : S \vdash M : T$}
\UnaryInfC{$\Gamma \vdash \lambda x.M : S \to T$}
\DisplayProof
\hspace{.5cm}
\AxiomC{$\Gamma \vdash M : S \to T$}
\AxiomC{$\Gamma \vdash N : S$}
\BinaryInfC{$\Gamma \vdash (M~N) : T$}
\DisplayProof

\medskip

\AxiomC{$\phantom{A}$}
\UnaryInfC{$\Gamma \vdash \emptyoset : \setlit{T}$}
\DisplayProof
\hspace{.5cm}
\AxiomC{$\Gamma \vdash M : T$}
\UnaryInfC{$\Gamma \vdash \setlit{M} : \setlit{T}$}
\DisplayProof
\hspace{.5cm}
\AxiomC{$\Gamma \vdash M : \setlit{T}$}
\AxiomC{$\Gamma \vdash N : \setlit{T}$}
\BinaryInfC{$\Gamma \vdash M \ocup N : \setlit{T}$}
\DisplayProof

\medskip

\AxiomC{$\Gamma, x:T \vdash M : \setlit{S}$}
\AxiomC{$\Gamma \vdash N : \setlit{T}$}
\BinaryInfC{$\Gamma \vdash \comprehension{M | x \leftarrow N} : \setlit{S}$}
\DisplayProof

\medskip

\AxiomC{$\Gamma \vdash M : \setlit{T}$}
\UnaryInfC{$\Gamma \vdash \isempty~M : \boolty$}
\DisplayProof
\hspace{.5cm}
\AxiomC{$\Gamma \vdash M : \boolty$}
\AxiomC{$\Gamma \vdash N : \setlit{T}$}
\BinaryInfC{$\Gamma \vdash \plwhere~M~\kwdo~N : \setlit{T}$}
\DisplayProof
\end{center}
\caption{Type system of \NRCl.}\label{fig:typing}
\end{figure}

Overall, our presentation of \NRCl is very similar to the language of queries used by Cooper in~\cite{Cooper09a}: two minor differences are that \NRCl does not have a specific construct for input tables (these can be simulated as free variables with relation type) and uses one-armed conditionals instead of an if-then-else construct. Additionally, Cooper provided a type-and-effect system to track the use of primitive operations that may not be translated to SQL;\@ the issue of translating to SQL is not addressed directly in this paper (equivalently, we may assume that all the primitive operations of \NRCl may be translated to SQL).

\subsection{Reduction and normalization}
\NRCl is equipped with a rewrite relation $\red$ whose purpose is to convert expressions of relation type into a sublanguage isomorphic to a fragment of SQL, even when the original expression contains subterms whose type is not available in SQL, such as nested collections. This rewrite relation is obtained from the basic contraction $\ured$ shown in Figure~\ref{fig:normNRCl}, by taking its congruence closure (Figure~\ref{fig:congruenceNRCl}).

\begin{figure}[hbt]
\begin{center}\small
\begin{tabular}{r@{$~\ured~$}l}
\multicolumn{2}{c}{$
(\lambda x.M)~N \ured M[N/x]
\qquad
\tuple{\ldots, \ell = M, \ldots}.\ell \ured M
\qquad
c(c'_1, \ldots, c'_n) \ured \sem{c}(c'_1, \ldots, c'_n)
$}
\\
\multicolumn{2}{c}{} 
\\
\multicolumn{2}{c}{$
\comprehension{\emptyoset | x \leftarrow M} \ured \emptyoset
\qquad
\comprehension{M | x \leftarrow \emptyoset} \ured \emptyoset
\qquad
\comprehension{M | x \leftarrow \setlit{N}} \ured M[N/x]
$}
\\
$\comprehension{M \ocup N | x \leftarrow R}$
 & $\comprehension{M | x \leftarrow R} \ocup \comprehension{N | x \leftarrow R}$
\\
$\comprehension{M | x \leftarrow N \ocup R}$
& $\comprehension{M | x \leftarrow N} \ocup \comprehension{M | x \leftarrow R}$
\\
$\comprehension{M | y \leftarrow \comprehension{R | x \leftarrow N }}$
& $\comprehension{ M | x \leftarrow N, y \leftarrow R} \qquad \text{(if $x \notin \FV(M)$)}$
\\
$\comprehension{M | x \leftarrow \plwhere~N~\kwdo~R}$
& $\plwhere~N~\kwdo~\comprehension{M | x \leftarrow R}$
\\
\multicolumn{2}{c}{} 
\\
\multicolumn{2}{c}{$
\plwhere~\kwtrue~\kwdo~M \ured M
\qquad
\plwhere~\kwfalse~\kwdo~M \ured \emptyoset
\qquad
\plwhere~M~\kwdo~\emptyoset \ured \emptyoset
$}
\\
$\plwhere~M~\kwdo~(N \ocup R)$ & $(\plwhere~M~\kwdo~N) \ocup (\plwhere~M~\kwdo~R)$
\\
$\plwhere~M~\kwdo~\comprehension{N | x \leftarrow R}$
& $\comprehension{\plwhere~M~\kwdo~N | x \leftarrow R} \qquad \text{(if $x \notin \FV(M)$)}$
\\
$\plwhere~M~\kwdo~\plwhere~N~\kwdo~R$ & $\plwhere~(M \land N)~\kwdo~R$
\\
$\isempty~M$ & $\isempty~(\comprehension{\tuple{} | x \gets M}) \qquad \text{(if $M$ is not relation-typed)}$
\end{tabular}
\end{center}
\caption{Query normalization (basic contraction rules)}\label{fig:normNRCl}
\end{figure}
\begin{figure}[hbt]
\begin{center}
\AxiomC{$M \ured M'$}
\UnaryInfC{$M \red M'$}
\DisplayProof
\hspace{.5cm}
\AxiomC{$M \red M'$}
\UnaryInfC{$c(\vect{L},M,\vect{N}) \red c(\vect{L},M',\vect{N})$}
\DisplayProof

\medskip

\AxiomC{$M \red M'$}
\UnaryInfC{$\tuple{\vect{\ell_1 = L},\ell = M,\vect{\ell_2 = N}} \red \tuple{\vect{\ell_1 = L},\ell = M',\vect{\ell_2 = N}}$}
\DisplayProof
\hspace{.5cm}
\AxiomC{$M \red M'$}
\UnaryInfC{$M.\ell \red M'.\ell$}
\DisplayProof

\medskip

\AxiomC{$M \red M'$}
\UnaryInfC{$\lambda x.M \red \lambda x.M'$}
\DisplayProof
\hspace{.5cm}
\AxiomC{$M \red M'$}
\UnaryInfC{$M~N \red M'~N$}
\DisplayProof
\hspace{.5cm}
\AxiomC{$M \red M'$}
\UnaryInfC{$L~M \red L~M'$}
\DisplayProof

\medskip

\AxiomC{$M \red M'$}
\UnaryInfC{$\setlit{M} \red \setlit{M'}$}
\DisplayProof
\hspace{.5cm}
\AxiomC{$M \red M'$}
\UnaryInfC{$M \ocup N \red M' \ocup N$}
\DisplayProof
\hspace{.5cm}
\AxiomC{$M \red M'$}
\UnaryInfC{$L \ocup M \red L \ocup M'$}
\DisplayProof

\medskip

\AxiomC{$M \red M'$}
\UnaryInfC{$\comprehension{M | x \leftarrow N} \red \comprehension{M' | x \leftarrow N}$}
\DisplayProof
\hspace{.5cm}
\AxiomC{$M \red M'$}
\UnaryInfC{$\comprehension{L | x \leftarrow M} \red \comprehension{L | x \leftarrow M'}$}
\DisplayProof

\medskip

\AxiomC{$M \red M'$}
\UnaryInfC{$\isempty~M \red \isempty~M'$}
\DisplayProof
\medskip

\AxiomC{$M \red M'$}
\UnaryInfC{$\plwhere~M~\kwdo~N \red \plwhere~M'~\kwdo~N$}
\DisplayProof
\hspace{.5cm}
\AxiomC{$M \red M'$}
\UnaryInfC{$\plwhere~L~\kwdo~M \red \plwhere~L~\kwdo~M'$}
\DisplayProof
\end{center}
\caption{Query normalization (congruence closure of $\ured$)}\label{fig:congruenceNRCl}
\end{figure}
We will allow ourselves to say ``induction on the derivation of reduction'' to mean the structural induction induced by the notion of congruence closure, followed by a case analysis on the basic reduction rules used as its base case.

We now discuss the basic reduction rules in more detail. 0-ary constants are values of atomic type and do not reduce. Applied constants (with positive arity) reduce when all of their arguments
are (0-ary) constants: the reduction rule relies on a fixed semantics $\sem{\cdot}$ which assigns to each constant $c$ of signature $\Sigma(c) = \vect{A'_n} \imp A$ a function mapping constants $c'_1,\ldots,c'_n$ of type $\vect{A'_n}$ to values of type $A$. The rules for collections and conditionals are mostly standard. The reduction rule for the emptiness test is triggered when the argument $M$ is not of relation type (but, for instance, of nested collection type) and employs comprehension to generate a (trivial) relation that is empty if and only if $M$ is.

The normal forms of queries under these rewriting rules construct no
intermediate nested structures, and are straightforward to translate
to syntactically isomorphic (up to notational differences) and semantically equivalent SQL queries. For example, consider the following \NRCl query which, given a table $t$, first wraps the $id$ field of every tuple of $t$ into a singleton, yielding a collection of singletons (i.e.\ a nested collection), then converts it back to a flat collection by performing the grand union of all of its elements:
\[
\comprehension{y \mid y \gets \comprehension{ \setlit{\setlit{x.id}} \mid x \gets t}}
\]
The normal form of this query does not create the unnecessary intermediate nested collection:
\[
\comprehension{\setlit{x.id} \mid x \gets t}
\]
Such a query is easily translated to SQL as:
\[
\kwsel~x.id~\kwfrom~t~x
\]
Cooper~\cite{Cooper09} and Lindley and
Cheney~\cite{lindley12tldi} give a full account of the translation from \NRCl normal forms to SQL\@. Cheney
et al.~\cite{cheney13icfp} showed how to improve the performance and
reliability of LINQ in F\# using normalization and gave many examples
showing how higher-order queries support a convenient, compositional
language-integrated query programming style.

\section{Reducibility with branching continuations}%
\label{sec:red}
We introduce here the extension of $\top\top$-lifting we use to derive a proof of strong normalization for \NRCl. The main contribution of this section is a refined definition of continuations with branching structure and multiple holes, as opposed to the linear structure with a single hole used by standard $\top\top$-lifting. In our definition, continuations (as well as the more general notion of context) are particular forms of terms: in this way, the notion of term reduction can be used for continuations as well, without need for auxiliary definitions.

\subsection{Contexts and continuations}
We start our discussion by introducing \emph{contexts}, or terms with multiple, labelled holes that can be instantiated by plugging other terms (including other contexts) into them.

\begin{defi}[context]\label{def:context}
Let us fix a countably infinite set $\cP$ of indices: a \emph{context} $C$ is a term that may contain distinguished free variables $\ibox{p}$, also called \emph{holes}, where $p \in \cP$. Holes are never bound by any of the binders (we disallow terms of the form $\lambda \ibox{p}.M$ or $\comprehension{M \mid \ibox{p} \gets N}$).

Given a finite map from indices to terms $[p_1 \mapsto M_1, \ldots, p_n \mapsto M_n]$ (\emph{context instantiation}), the notation $C[p_1 \mapsto M_1, \ldots, p_n \mapsto M_n]$ (\emph{context application}) denotes the term obtained by simultaneously plugging $M_1,\ldots,M_n$ into the holes $\ibox{p_1},\ldots,\ibox{p_n}$. Notice that the $M_i$ are allowed to contain holes.

We will use metavariables $\eta, \theta$ to denote context instantiations.
\end{defi}

\begin{defi}[support]
Given a context $C$, its \emph{support} $\supp(C)$ is defined as the set of the indices
$p$ such that $\ibox{p}$ occurs in $C$:
\[
\supp(C) \triangleq \metaset{ p : \ibox{p} \in \FV(C) }
\]
\end{defi}
(it suffices to use $\FV(C)$ as holes are never used as bound variables).

When a term does not contain any $\ibox{p}$, we say that it is a \emph{pure} term; when it is important that a term be pure, we will refer to it by using overlined metavariables $\pure{L}, \pure{M}, \pure{N}, \pure{R}, \ldots$

We introduce a notion of \emph{permutable} multiple context instantiation.

\begin{defi}
A context instantiation $\eta$ is \emph{permutable} iff for all $p \in \dom(\eta)$ we have $\FV(\eta(p)) \cap \dom(\eta) = \emptyset$.
\end{defi}
The word ``permutable'' is explained by the following properties:
\begin{lem}\label{lem:permutable}
	Let $\eta$ be permutable, and $p_1,\ldots,p_k$ be an enumeration of all the elements of $\dom(\eta)$ without repetitions, in any order.
	Then, for all contexts $C$, we have:
	\[
		C\eta = C[p_1 \mapsto \eta(p_1)]\cdots[p_k \mapsto \eta(p_k)]
	\]
\end{lem}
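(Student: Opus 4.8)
The plan is to prove the statement by induction on the cardinality $k$ of $\dom(\eta)$, relying on the permutability hypothesis to show that the order in which we plug in the individual terms is irrelevant. The base case $k = 0$ is immediate, since then $\eta$ is the empty instantiation and both sides equal $C$. For the inductive step, I would single out one index, say $p_1$, and first argue that $C[p_1 \mapsto \eta(p_1)]$ can be rewritten by applying the remaining single substitutions; then I must reconcile the result with the simultaneous instantiation $C\eta$.

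Concretely, I would proceed as follows. Since $\eta$ is permutable, for every $p \in \dom(\eta)$ the term $\eta(p)$ has no free occurrence of any hole in $\dom(\eta)$; in particular $\eta(p_1)$ contains none of $\ibox{p_1}, \ldots, \ibox{p_k}$, and for $i \geq 2$ the term $\eta(p_i)$ does not contain $\ibox{p_1}$. The first observation means that plugging $\eta(p_1)$ into the $\ibox{p_1}$-holes of $C$ neither creates nor destroys any $\ibox{p_i}$-hole for $i \geq 2$, so $\supp(C[p_1 \mapsto \eta(p_1)])$ agrees with $\supp(C)$ on $\{p_2, \ldots, p_k\}$; the second means that the later substitutions $[p_i \mapsto \eta(p_i)]$ leave the already-plugged copies of $\eta(p_1)$ untouched. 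Hence $C[p_1 \mapsto \eta(p_1)][p_2 \mapsto \eta(p_2)]\cdots[p_k \mapsto \eta(p_k)]$ equals the term obtained by simultaneously plugging $\eta(p_1), \ldots, \eta(p_k)$ into the corresponding holes of $C$, which is exactly $C\eta$. The induction hypothesis is invoked on the context $C[p_1 \mapsto \eta(p_1)]$ together with the restricted instantiation $\eta' = \eta \restriction \{p_2, \ldots, p_k\}$, which is again permutable (restricting an instantiation can only shrink the sets involved in the disjointness condition), giving $C[p_1 \mapsto \eta(p_1)]\eta' = C[p_1 \mapsto \eta(p_1)][p_2 \mapsto \eta(p_2)]\cdots[p_k \mapsto \eta(p_k)]$; combined with the identity $C[p_1 \mapsto \eta(p_1)]\eta' = C\eta$ this closes the case. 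Since $p_1$ was an arbitrary element and the enumeration $p_1, \ldots, p_k$ was arbitrary, this also establishes the order-independence asserted in the statement.

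The main obstacle is making precise the claim that simultaneous instantiation coincides with sequential instantiation \emph{under the permutability assumption} — i.e.\ the bookkeeping that the holes $\ibox{p_2}, \ldots, \ibox{p_k}$ of $C$ are exactly the holes still present in $C[p_1 \mapsto \eta(p_1)]$ (no hole is lost because $\eta(p_1)$ might shadow one, and none is introduced because $\eta(p_1)$ contains none), and dually that substituting for those later holes does not reach into the freshly inserted copies of $\eta(p_1)$. Both facts hinge directly on $\FV(\eta(p_1)) \cap \dom(\eta) = \emptyset$ and, for the later substitutions, on $\FV(\eta(p_i)) \cap \dom(\eta) = \emptyset$ for $i \geq 2$. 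A clean way to discharge this is a subsidiary structural induction on $C$: for each term constructor, instantiation commutes with the constructor, and for the binders one uses that holes are never bound (Definition~\ref{def:context}), so no variable-capture side condition ever interferes. Everything else is a routine unfolding of the definition of context application.
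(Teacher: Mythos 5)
Your proof is correct and rests on the same two uses of permutability as the paper's: the plugged term $\eta(p_1)$ contains no holes of $\dom(\eta)$, so the later single substitutions cannot reach into the freshly inserted copies, and it introduces no new holes, so the remaining holes of $C$ are preserved. The paper organizes this as a single structural induction on $C$ whose only interesting case $C = \ibox{p_i}$ splits the substitution chain around position $i$, whereas you wrap the same structural induction (your ``subsidiary'' one) inside an outer induction on $k$ --- an organizational difference only, and both arguments are sound.
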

\begin{proof}
	By structural induction on $C$. The relevant case is when $C = \ibox{p_i}$, for some $i \in \metaset{1,...,k}$. By definition, the left-hand side
	rewrites to $\eta(p_i)$; we can express the right-hand side as
	$\ibox{p_i}\vect{[p_j \mapsto \eta(p_j)]}_{j=1,\ldots,i-1}[p_i \mapsto \eta(p_i)]\vect{[p_j \mapsto \eta(p_j)]}_{j=i+1,\ldots,k}$. Then we prove:
	\begin{itemize}
		\item $\ibox{p_i}\vect{[p_j \mapsto \eta(p_j)]}_{j=1,\ldots,i-1} = \ibox{p_i}$, because for all $j$, $p_i \neq p_j$;
		\item $\ibox{p_i}[p_i \mapsto \eta(p_i)] = \eta(p_i)$ by definition;
		\item $\eta(p_i)\vect{[p_j \mapsto \eta(p_j)]}_{j=i+1,\ldots,k} = \eta(p_i)$ because, by the permutability hypothesis, for all $j$ we have
			that $\ibox{p_j} \notin \FV(\eta(p_i))$.
	\end{itemize}
	Then the right-hand side also rewrites to $\eta(p_i)$, proving the thesis.

	All the other cases are trivial, applying induction hypotheses where needed.
\end{proof}
\begin{lem}
Let $\eta$ be permutable and let us denote by $\eta_{\lnot p}$ the restriction of $\eta$ to indices other than $p$. Then for all $p \in \dom(\eta)$ we have:
\[
C \eta = C[p \mapsto \eta(p)]\eta_{\lnot p} = C\eta_{\lnot p}[p \mapsto \eta(p)]
\]
\end{lem}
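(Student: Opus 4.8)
The plan is to deduce this directly from Lemma~\ref{lem:permutable}, which already says that for a permutable instantiation the iterated single-hole substitutions may be carried out in any order. The only genuinely new bookkeeping is to check that the restriction $\eta_{\lnot p}$ is again permutable: its domain is contained in $\dom(\eta)$, and for each $q \in \dom(\eta_{\lnot p})$ we have $\eta_{\lnot p}(q) = \eta(q)$, hence $\FV(\eta_{\lnot p}(q)) \cap \dom(\eta_{\lnot p}) \subseteq \FV(\eta(q)) \cap \dom(\eta) = \emptyset$. This observation is immediate.

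For the first equality I would pick an enumeration of $\dom(\eta)$ starting with $p$, say $p, p_2, \ldots, p_k$, so that $p_2, \ldots, p_k$ enumerates $\dom(\eta_{\lnot p})$. Applying Lemma~\ref{lem:permutable} to $\eta$ with this enumeration gives $C\eta = C[p \mapsto \eta(p)][p_2 \mapsto \eta(p_2)]\cdots[p_k \mapsto \eta(p_k)]$. Applying Lemma~\ref{lem:permutable} a second time, now to the context $C[p \mapsto \eta(p)]$ and the permutable instantiation $\eta_{\lnot p}$, gives $\bigl(C[p \mapsto \eta(p)]\bigr)\eta_{\lnot p} = C[p \mapsto \eta(p)][p_2 \mapsto \eta(p_2)]\cdots[p_k \mapsto \eta(p_k)]$. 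Equating the two right-hand sides yields $C\eta = C[p \mapsto \eta(p)]\eta_{\lnot p}$.

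For the second equality I would instead use an enumeration of $\dom(\eta)$ ending with $p$, say $p_2, \ldots, p_k, p$. Lemma~\ref{lem:permutable} then gives $C\eta = C[p_2 \mapsto \eta(p_2)]\cdots[p_k \mapsto \eta(p_k)][p \mapsto \eta(p)]$, while a further application of the same lemma to $C$ and $\eta_{\lnot p}$ gives $C\eta_{\lnot p} = C[p_2 \mapsto \eta(p_2)]\cdots[p_k \mapsto \eta(p_k)]$; substituting the latter into the former produces $C\eta = C\eta_{\lnot p}[p \mapsto \eta(p)]$.

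There is no real obstacle here; the only point deserving a word of care is that Lemma~\ref{lem:permutable} is being invoked on the context $C[p \mapsto \eta(p)]$ (resp.\ on $C$), which may still contain holes indexed in $\dom(\eta_{\lnot p})$, and that — by permutability of $\eta$ — the term $\eta(p)$ contains no hole in $\dom(\eta)$, so the copies of $\eta(p)$ introduced in the first step are left untouched by the subsequent instantiations. This is exactly what makes the two applications of Lemma~\ref{lem:permutable} consistent. Alternatively, one could bypass the intermediate reasoning and prove both equalities by a direct structural induction on $C$ mirroring the proof of Lemma~\ref{lem:permutable}, but reducing to the previous lemma is shorter.
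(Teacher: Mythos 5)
Your proposal is correct and follows exactly the route the paper intends: the paper's own proof is just ``Immediate, by Lemma~\ref{lem:permutable}'', and your argument spells out the two instantiations of that lemma (enumerations of $\dom(\eta)$ beginning and ending with $p$, plus the easy check that $\eta_{\lnot p}$ is again permutable) that make the word ``immediate'' honest. Nothing is missing.
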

\begin{proof}
	Immediate, by Lemma~\ref{lem:permutable}.
\end{proof}

We can now define continuations as certain contexts that capture how one or more collections can be used in a program.

\begin{defi}[continuation]\label{def:cont}
Continuations $K$ are defined as the following subset of contexts:
\begin{align*}
K,H ::= & \quad \ibox{p} \orelse \pure{M} \orelse K \ocup K
\orelse \comprehension{\pure{M}|x \gets K} \orelse \plwhere~\pure{B}~\kwdo~K
\end{align*}
where for all indices $p$, $\ibox{p}$ can occur at most once.
\end{defi}

This definition differs from the traditional one in two ways: first, holes are decorated with an index; secondly, and most importantly, the production $K \ocup K$ allows continuations to branch and, as a consequence, to use more than one hole. Note that the grammar above is ambiguous, in the sense that certain expressions like $\plwhere~\pure{B}~\kwdo~\pure{N}$ can be obtained either from the production $\plwhere~\pure{B}~\kwdo~K$ with $K = \pure{N}$, or as pure terms by means of the production $\pure{M}$: we resolve this ambiguity by parsing these expressions as pure terms whenever possible, and as continuations only when they are proper continuations.

An additional complication of \NRCl when compared to the computational metalanguage for which $\top\top$-lifting was devised lies in the way conditional expressions can reduce when placed in an arbitrary context: continuations in the grammar above are not liberal enough to adapt to such reductions, therefore, like Cooper, we will need an additional definition of \emph{auxiliary} continuations allowing holes to appear in the body of a comprehension (in addition to comprehension generators).
\begin{defi}[auxiliary continuation]\label{def:qcont}
Auxiliary continuations are defined as the following subset of contexts:
\begin{align*}
Q,O ::= & \quad \ibox{p} \orelse \pure{M} \orelse Q \ocup Q \orelse \comprehension{Q|x \gets Q}
\orelse \plwhere~\pure{B}~\kwdo~Q
\end{align*}
where for all indices $p$, $\ibox{p}$ can occur at most once.
\end{defi}
We can then see that regular continuations are a special case of auxiliary continuations;
however, an auxiliary continuation is allowed to branch not only with unions, but also with comprehensions.\footnote{It is worth noting that Cooper's original definition of auxiliary continuation does not use branching comprehension (nor branching unions), but is linear just like the original definition of continuation. The only difference between regular and auxiliary continuations in his work is that the latter allowed nesting not just within comprehension generators, but also within comprehension bodies (in our notation, this would correspond to two separate productions $\comprehension{\pure{M} | x \gets Q}$ and $\comprehension{Q|x \gets \pure{N}}$).}

We use the following definition of \emph{frames} to represent certain continuations with a distinguished shallow hole denoted by $\Box$.
\begin{defi}[frame]\label{def:frame}
Frames are defined by the following grammar:
\begin{align*}
F ::= & \quad \comprehension{Q|x \gets \Box} \orelse \comprehension{\Box | x \gets Q} \orelse \plwhere~\pure{B}~\kwdo~\Box
\end{align*}
where $\Box$ does not occur in $Q$, and for all indices $p$, $\ibox{p}$ can occur in $Q$ at most once.

The operation $F^p$, lifting a frame to an auxiliary continuation with a distinguished hole $\ibox{p}$ is defined as:
\[
F^p := F[ \Box \mapsto \ibox{p} ] \qquad (p \notin \supp(F))
\]

The composition operation $Q \compop{p} F$ is defined as:
\[
Q \compop{p} F = Q[p \mapsto F^p]
\]
\end{defi}
We generally use frames in conjunction with continuations or auxiliary continuations when we need to partially expose their leaves: for example, if we write $K = K_0 \compop{p} \comprehension{\pure{M} | x \gets \Box}$, we know that instantiating $K$ at index $p$ with (for example) a singleton term will create a redex: $K[p \mapsto \setlit{\pure{L}}] \red K_0[p \mapsto \pure{M}\subst{x}{\pure{L}}]$. We say that such a reduction is a \emph{reduction at the interface} between the continuation and the instantiation (we will make this notion formal in Lemma~\ref{lem:classification}).

In certain proofs by induction that make use of continuations, we will need to use a \emph{measure} of continuations to show that the induction is well-founded. We introduce here two measures $\len{\cdot}_p$ and $\sz{\cdot}_p$ denoting the nesting depth of a hole $\ibox{p}$: the two measures differ in the treatment of nesting within the body of a comprehension.
\begin{defi}\label{def:measures}
The measures $\len{Q}_p$ and $\sz{Q}_p$ are defined as follows:
\[
\begin{array}{c}
\len{\ibox{q}}_p = \sz{\ibox{q}}_p = \left\{
 \begin{array}{ll}
 1 & \mbox{if $p = q$} \\
 0 & \mbox{else}
 \end{array}
 \right.
\\
\len{\pure{M}}_p = \sz{\pure{M}}_p = 0
\\
\len{Q_1 \ocup Q_2}_p = \max(\len{Q_1}_p,\len{Q_2}_p)
  \qquad
\sz{Q_1 \ocup Q_2}_p = \max(\sz{Q_1}_p,\sz{Q_2}_p)
\\
\len{\plwhere~B~Q}_p = \len{Q}_p + 1
  \qquad
\sz{\plwhere~B~Q}_p = \sz{Q}_p + 1
\\
\len{\comprehension{Q_1|x \mapsto Q_2}}_p = \left\{
  \begin{array}{ll}
  \goldbox{$\len{Q_1}_p$} & \mbox{if $p \in \supp(Q_1)$} \\
  \len{Q_2}_p + 1 & \mbox{if $p \in \supp(Q_2)$} \\
  0 & \mbox{else}
  \end{array}
  \right.
\\
\sz{\comprehension{Q_1|x \mapsto Q_2}}_p = \left\{
  \begin{array}{ll}
  \goldbox{$\sz{Q_1}_p + 1$} & \mbox{if $p \in \supp(Q_1)$} \\
  \sz{Q_2}_p + 1 & \mbox{if $p \in \supp(Q_2)$} \\
  0 & \mbox{else}
  \end{array}
  \right.
\end{array}
\]
We will also use $\len{Q}$ and $\sz{Q}$ to refer to the derived measures:
\[
\len{Q} = \sum_{p \in \supp(Q)} \len{Q}_p
\qquad \qquad
\sz{Q} = \sum_{p \in \supp(Q)} \sz{Q}_p
\]
\end{defi}

The definitions of frames and measures are designed in such a way that the following property holds.
\begin{lem}\label{lem:szdecrease}
Let $Q$ be an auxiliary continuation such that $p \in \supp(Q)$; then for all frames $F$:
\begin{enumerate}
\item $\sz{Q}_p < \sz{Q \compop{p} F}_p$ and $\sz{Q} < \sz{Q \compop{p} F}$
\item if $F$ is not of the form $\comprehension{\Box \mid x \gets O}$, then $\len{Q}_p < \len{Q \compop{p} F}_p$ and $\len{Q} < \len{Q \compop{p} F}$
\item if $F = \comprehension{\Box \mid x \gets O}$, then $\len{Q}_p = \len{Q \compop{p} F}_p$ and $\len{Q} = \len{Q \compop{p} F}$.
\end{enumerate}
\end{lem}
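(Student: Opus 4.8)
The plan is to prove all three items simultaneously by structural induction on the auxiliary continuation $Q$, generalizing over the hole index $p$ and the frame $F$. The base case is $Q = \ibox{p}$ (since $p \in \supp(Q)$ forces this): here $Q \compop{p} F = F^p = F[\Box \mapsto \ibox{p}]$, and we compute the measures of $F^p$ directly from the three possible shapes of $F$. For $F = \plwhere~\pure{B}~\kwdo~\Box$ we get $\len{F^p}_p = \sz{F^p}_p = 1 = \len{Q}_p + 1$, giving items (1) and (2). For $F = \comprehension{Q'|x \gets \Box}$ we get $\len{F^p}_p = \len{\ibox{p}}_p + 1 = 2$ and $\sz{F^p}_p$ likewise $= 2$, again giving (1),(2). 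For $F = \comprehension{\Box|x \gets Q'}$ we land in the gold-boxed clause: $\len{F^p}_p = \len{\ibox{p}}_p = 1 = \len{Q}_p$, establishing (3), while $\sz{F^p}_p = \sz{\ibox{p}}_p + 1 = 2 > 1 = \sz{Q}_p$, establishing (1).

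For the inductive step, $Q$ is built by one of the productions $Q_1 \ocup Q_2$, $\comprehension{Q_1 | x \gets Q_2}$, or $\plwhere~\pure{B}~\kwdo~Q_1$, and composition commutes with the construction: $(Q_1 \ocup Q_2) \compop{p} F = (Q_1 \compop{p} F) \ocup Q_2$ when $p \in \supp(Q_1)$ (and symmetrically), using that holes occur at most once so $p$ lies in exactly one immediate subterm. I would push the composition inside, apply the induction hypothesis to the subterm containing $p$, and read off the effect on the outer measure from Definition~\ref{def:measures}. The $\plwhere$ case adds $1$ to both measures, so strict inequalities (items 1,2) and the equality (item 3) are preserved verbatim. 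For the union case, $\len{\cdot}_p$ and $\sz{\cdot}_p$ are defined by $\max$; here I must use that the index-$p$ measure of the non-$p$ subterm is $0$ (since $p \notin \supp(Q_{3-i})$), so the $\max$ is actually attained at the recursive argument and strict inequality (resp. equality) transfers cleanly.

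The delicate case, and the one I expect to be the main obstacle, is the comprehension $\comprehension{Q_1 | x \gets Q_2}$, because the measure definition branches on whether $p$ sits in the body $Q_1$ or the generator $Q_2$, and the gold-boxed clauses treat $\len{\cdot}$ and $\sz{\cdot}$ asymmetrically. If $p \in \supp(Q_1)$ (the body), then $\len{\comprehension{Q_1|x\gets Q_2}}_p = \len{Q_1}_p$ with no $+1$, so the induction hypothesis for item (2)/(3) transfers directly, but for $\sz{\cdot}$ the $+1$ in $\sz{Q_1}_p + 1$ is common to both sides of the inequality and cancels. If $p \in \supp(Q_2)$ (the generator), then $\len{\cdot}_p = \len{Q_2}_p + 1$ and $\sz{\cdot}_p = \sz{Q_2}_p + 1$, and again the $+1$ is on both sides. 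One must also check that composition with $F$ does not move $p$ from the body to the generator or vice versa — it cannot, since $F^p$ is substituted for $\ibox{p}$ in place, so $p \in \supp(Q_i \compop{p} F)$ iff $p \in \supp(Q_i)$; this is where the side condition $p \notin \supp(F)$ in Definition~\ref{def:frame} is needed, guaranteeing $\supp(F^p) = \{p\}$ and hence that the support structure is undisturbed. Finally, the derived statements $\sz{Q} < \sz{Q \compop{p} F}$ etc.\ follow by summing over $\supp(Q) = \supp(Q \compop{p} F)$: every summand is unchanged except the $p$-summand, which strictly increases (items 1,2) or stays equal (item 3).
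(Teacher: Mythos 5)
Your proof is correct and takes essentially the same route as the paper's, which is only a two-line sketch of the same structural induction on $Q$, splitting on whether $p$ lies in the support of each subexpression (with measure $0$ at index $p$ otherwise). One trivial slip: in the base case with $F = \plwhere~\pure{B}~\kwdo~\Box$ the value of $\len{F^p}_p = \sz{F^p}_p$ is $2$, not $1$ (it is $\len{\ibox{p}}_p + 1$), but the strict inequality you conclude from it is still the right one.
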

\begin{proof}
By induction on the structure of $Q$. When examining the forms $Q$ can assume, we will have to consider subexpressions $Q'$ for which $p$ may or may not be in $\supp(Q')$: in the first case, we can apply the induction hypothesis; otherwise, we prove $\sz{Q'}_p = \sz{Q' \compop{p} F} = 0$ and $\len{Q'}_p = \len{Q' \compop{p} F} = 0$.
\end{proof}

\NRCl reduction can be used immediately on contexts (including regular and auxiliary continuations) since these are simply terms with distinguished free variables; we will also abuse notation to allow ourselves to specify reduction on context instantiations: whenever $\eta(p) \red N$ and $\eta' = \eta_{\lnot p}[p \mapsto N]$, we can write $\eta \red \eta'$.

We will denote the set of strongly normalizing terms by $\SN$. Strongly-normalizing applied contexts satisfy the following property:

For strongly normalizing terms (and by extension for context instantiations containing only strongly normalizing terms), we can introduce the concept of maximal reduction length.

\begin{defi}[maximal reduction length]
Let $M \in \SN$: we define $\maxred{M}$ as the maximum length of all reduction sequences starting with $M$. We also define $\maxred{\eta}$ as $\sum_{p \in \dom(\eta)} \maxred{\eta(p)}$, whenever all the terms in the codomain of $\eta$ are strongly normalizing.
\end{defi}

Since each term can only have a finite number of contracta, it is easy to see that $\maxred{M}$ is defined for any
strongly normalizing term $M$. Furthermore, $\maxred{M}$ is strictly decreasing under reduction.

\begin{lem}\label{lem:redmaxred}
For all strongly normalizing terms $M$, if $M \red M'$, then $\maxred{M'} < \maxred{M}$.
\end{lem}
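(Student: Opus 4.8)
The plan is to exploit the elementary observation that the given step $M \red M'$ can be prepended to \emph{any} reduction sequence issuing from $M'$, producing a strictly longer sequence issuing from $M$. First I would check that $\maxred{M'}$ is even defined: $M'$ is a one-step reduct of the strongly normalizing term $M$, and every reduct of an $\SN$ term is again in $\SN$, so $M' \in \SN$ and $\maxred{M'}$ is well-defined by the remark immediately preceding the lemma.

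Next I would unfold the definition of $\maxred{M'}$. Since $M' \in \SN$, the tree of reduction sequences out of $M'$ is well-founded, and it is finitely branching because each term has only finitely many contracta; by König's lemma this tree is finite, so the maximum length of its branches is attained. Fix a witnessing sequence $M' = N_0 \red N_1 \red \cdots \red N_k$ with $k = \maxred{M'}$.

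Finally, prepending the given step yields $M \red M' = N_0 \red N_1 \red \cdots \red N_k$, a reduction sequence out of $M$ of length $k+1$. Since $\maxred{M}$ bounds the length of every reduction sequence starting from $M$, we get $\maxred{M} \ge k+1 = \maxred{M'} + 1 > \maxred{M'}$, which is the claim.

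I do not anticipate any genuine obstacle: this is a standard fact about strongly normalizing terms. The only point that deserves a word of care is the well-definedness of $\maxred{M'}$ as a true maximum (rather than merely a supremum), and this is exactly what the preceding discussion of finitely many contracta together with König's lemma provides; the rest is a one-line manipulation of the definition.
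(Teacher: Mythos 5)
Your proof is correct and follows essentially the same route as the paper's: both arguments prepend the step $M \red M'$ to a maximal reduction sequence out of $M'$ to obtain a strictly longer sequence out of $M$ (the paper merely phrases this as a contradiction with $\maxred{M'} \geq \maxred{M}$, while you conclude directly). Your extra care about $\maxred{M'}$ being an attained maximum, via finite branching and K\"onig's lemma, is exactly the justification the paper gives in the remark preceding the lemma.
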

\begin{proof}
If $\maxred{M'} \geq \maxred{M}$, by pre-composing $M \red M'$ with a reduction chain of maximal length starting at $M'$ we obtain a new reduction chain starting at $M$ with length strictly greater than $\maxred{M}$; this contradicts the definition of $\maxred{M}$.
\end{proof}

\subsection{Renaming reduction}
According to the Definitions~\ref{def:cont} and~\ref{def:qcont}, in order for a context to be a continuation or an auxiliary continuation, it must on one hand agree with the respective grammar, and on the other hand satisfy the condition that no hole occurs more than once. We immediately see that, since holes can be duplicated under reduction, the sets of plain and auxiliary continuations are not closed under reduction. For instance:
\[
K = \comprehension{M \cup N \mid x \gets \ibox{p}} \red \comprehension{M \mid x \gets\ibox{p}} \cup \comprehension{N \mid x \gets\ibox{p}} = C
\]
where $K$ is a continuation, but $C$ is not due to the two occurrences of $\ibox{p}$.
For this reason, we introduce a refined notion of renaming reduction which we can use to rename holes in the results so that each of them occurs at most one time.

\begin{defi}\label{def:renaming}
Given a term $M$ with holes and a finite map $\sigma : \cP \to \cP$, we write $M\sigma$ for the term obtained from $M$ by replacing each hole $\ibox{p}$ such that $\sigma(p)$ is defined with $\ibox{\sigma(p)}$.
\end{defi}

Even though finite renaming maps are partial functions, it is
convenient to extend them to total functions by taking $\sigma(p) = p$
whenever $p \notin \dom(\sigma)$; we will write $\id$ to denote the empty renaming map, whose total extension is the identity function on $\cP$.

\begin{defi}[renaming reduction]
$M$ $\sigma$-reduces to $N$ (notation: $M \ared{\sigma} N$) iff $M \red N\sigma$.
\end{defi}

Terms only admit a finite number of redexes and consequently, under regular reduction, any given term has a finite number of possible contracta. However, under renaming reduction, infinite contracta are possible: if
$M \red N$, there may be infinite $R, \sigma$ such that
$N = R \sigma$. When a strongly normalizing term $M$ admits infinite contracta, it does not necessarily have a maximal reduction sequence (just like the maximum of an infinite set of finite numbers is not necessarily defined). Fortunately, we can prove (Lemma~\ref{lem:redared}) that to every renaming
reduction chain there corresponds a plain reduction chain of the same
length, and vice-versa.

\begin{lem}\label{lem:redrename}
If $M \red N$, then for all $\sigma$ we have $M \sigma \red N \sigma$.
\end{lem}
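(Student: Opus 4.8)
The plan is to proceed by induction on the derivation of $M \red N$, in the sense made precise after Figure~\ref{fig:congruenceNRCl}: structural induction on the congruence closure, with a base case analysis on the basic contraction rules $\ured$ of Figure~\ref{fig:normNRCl}. For the congruence cases the argument is routine: renaming commutes with every term constructor, so $M\sigma$ has exactly the same outermost shape as $M$ with the reducing immediate subterm $M_0$ replaced by $M_0\sigma$; applying the induction hypothesis to $M_0 \red M_0'$ gives $M_0\sigma \red M_0'\sigma$, and the same congruence rule then yields $M\sigma \red N\sigma$.

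For the base case we inspect each rule of Figure~\ref{fig:normNRCl}. The rules that do not involve substitution — the record projection $\tuple{\ldots,\ell = M,\ldots}.\ell \ured M$, the constant rule $c(\vect{c'}) \ured \sem{c}(\vect{c'})$ (here both sides are hole-free), the empty/singleton comprehension laws, the union and $\plwhere$ distributivity laws, and the $\isempty$ and $\plwhere{}$-merge rules — are all purely structural rearrangements, so $\sigma$ distributes over both sides and the contraction is preserved verbatim. The interesting rules are the two that perform substitution, $(\lambda x.M)~N \ured M[N/x]$ and $\comprehension{M \mid x \gets \setlit{N}} \ured M[N/x]$. Here I need the commutation property
\[
(M[N/x])\sigma = (M\sigma)[(N\sigma)/x],
\]
which holds because holes $\ibox{p}$ are distinct from ordinary variables and are never bound (so the capture-avoiding substitution for $x$ never touches a hole, and $\sigma$ never touches $x$); a short auxiliary induction on $M$ establishes it. Given this, $(\lambda x.M)\sigma~(N\sigma) = (\lambda x.(M\sigma))~(N\sigma) \ured (M\sigma)[(N\sigma)/x] = (M[N/x])\sigma$, and similarly for the comprehension rule.

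Finally, the rules carrying side conditions — the comprehension-unnesting rule and the $\plwhere$-into-comprehension rule, both guarded by $x \notin \FV(M)$, and the $\isempty$ rule guarded by ``$M$ is not relation-typed'' — remain applicable after renaming: $\sigma$ only rewrites holes, so it changes neither the ordinary free variables of a term nor its type, hence $x \notin \FV(M)$ implies $x \notin \FV(M\sigma)$ and relation-typedness is unaffected. I do not anticipate a genuine obstacle here; the only point that needs to be stated carefully is the substitution–renaming commutation lemma above, and that in turn relies solely on the discipline, fixed in Definition~\ref{def:context}, that holes are never bound.
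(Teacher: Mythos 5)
Your proof is correct and follows exactly the route the paper takes: the paper's own proof is simply ``routine induction on the derivation of $M \red N$,'' and your write-up supplies the details that make that routine induction go through (congruence cases, the substitution--renaming commutation for the two $\beta$-like rules, and preservation of the side conditions because holes are never bound).
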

\begin{proof}
Routine induction on the derivation of $M \red N$.
\end{proof}
\begin{lem}\label{lem:redared}\mbox{}\\
For every finite plain reduction sequence, there is a corresponding renaming reduction sequence of the same length (using the identity renaming $\id$); and conversely, for every finite renaming reduction sequence, there is a corresponding plain reduction sequence of the same length involving renamed terms. More precisely:
\begin{enumerate}
\item If $M_0 \red \cdots \red M_n$, then $M_0 \ared{\id} \cdots \ared{\id} M_n$
\item If $M_0 \ared{\sigma_1} \cdots \ared{\sigma_{n-1}} M_{n-1} \ared{\sigma_n} M_n$, then $M_0 \red \cdots M_{n-1}\sigma_{n-1}\cdots\sigma_1 \red M_n \sigma_n\cdots\sigma_1$
\end{enumerate}
\end{lem}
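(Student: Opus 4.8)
The plan is to prove both parts by induction on the length $n$ of the given reduction sequence, making essential use of the two preceding lemmas (Lemma~\ref{lem:redrename} on compatibility of renaming with reduction, and the definition of renaming reduction $M \ared{\sigma} N \iff M \red N\sigma$).

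\medskip
\noindent\textbf{Part (1).} For each step $M_i \red M_{i+1}$ we must exhibit $M_i \ared{\id} M_{i+1}$, i.e.\ $M_i \red M_{i+1}\id$. Since $\id$ is the empty renaming whose total extension is the identity on $\cP$, we have $M_{i+1}\id = M_{i+1}$, so each step transfers verbatim. Concatenating these steps gives the claimed $\id$-reduction sequence of the same length $n$. No induction is really needed here beyond observing this pointwise; I would phrase it as ``immediate from $M\id = M$''.

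\medskip
\noindent\textbf{Part (2).} This is the substantive direction. I proceed by induction on $n$. The base case $n = 0$ is trivial (empty sequence). For the inductive step, suppose
\[
M_0 \ared{\sigma_1} \cdots \ared{\sigma_{n-1}} M_{n-1} \ared{\sigma_n} M_n.
\]
By the induction hypothesis applied to the first $n-1$ steps, there is a plain reduction sequence
\[
M_0 \red \cdots \red M_{n-1}\sigma_{n-1}\cdots\sigma_1
\]
of length $n-1$. It remains to append one more plain step ending at $M_n\sigma_n\cdots\sigma_1$. From $M_{n-1} \ared{\sigma_n} M_n$ we get, by definition, $M_{n-1} \red M_n\sigma_n$. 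Now apply Lemma~\ref{lem:redrename} repeatedly with the renamings $\sigma_{n-1}, \ldots, \sigma_1$ (in that order) to obtain
\[
M_{n-1}\sigma_{n-1}\cdots\sigma_1 \red (M_n\sigma_n)\sigma_{n-1}\cdots\sigma_1 = M_n\sigma_n\sigma_{n-1}\cdots\sigma_1,
\]
which is exactly the final term required by the statement. Appending this step to the length-$(n-1)$ chain yields a plain reduction sequence of length $n$, completing the induction.

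\medskip
The main obstacle — and it is a mild one — is purely bookkeeping: keeping straight the order in which the accumulated renamings $\sigma_{n-1}\cdots\sigma_1$ are composed and applied, and checking that composition of renamings commutes appropriately with the syntactic substitution-of-holes operation of Definition~\ref{def:renaming} (so that $(M\sigma)\tau = M(\tau\circ\sigma)$ up to the conventions in force). Since Lemma~\ref{lem:redrename} is exactly the single-renaming version of what we need, the only work is iterating it, so I do not anticipate any real difficulty; the one point worth stating explicitly is that the statement tracks the ``renamed terms'' $M_i\sigma_i\cdots\sigma_1$ rather than the bare $M_i$, which is precisely why Part (2) cannot simply read off a length-preserving plain sequence through the original $M_i$'s.
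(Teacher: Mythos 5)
Your proof is correct and follows essentially the same route as the paper: part (1) is immediate from $M\id = M$, and part (2) is an induction on the length of the chain whose inductive step reduces, via Lemma~\ref{lem:redrename}, to the single fact $M_{n-1} \red M_n\sigma_n$, which holds by definition of $\ared{\sigma_n}$. The only cosmetic difference is that you apply Lemma~\ref{lem:redrename} once per accumulated renaming rather than once with the composite, which sidesteps the bookkeeping about composition that you flag at the end.
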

\begin{proof}
The first part of the lemma is trivial. For the second part, proceed by induction on the length of the reduction chain: in the inductive case, we have $M_0 \ared{\sigma_1}\cdots\ared{\sigma_n} M_n \ared{\sigma_{n+1}} M_{n+1}$ by hypothesis and $M_0 \red \cdots \red M_n\sigma_n\cdots\sigma_1$ by induction hypothesis; to obtain the thesis, we only need to prove that
\[
M_n\sigma_n\cdots\sigma_1 \red M_{n+1}\sigma_{n+1}\sigma_n\cdots\sigma_1
\]
In order for this to be true, by Lemma~\ref{lem:redrename}, it is sufficient to show that $M_n \red M_{n+1} \sigma_{n+1}$; this is by definition equivalent to $M_n \ared{\sigma_{n+1}} M_{n+1}$, which we know by hypothesis.
\end{proof}
\begin{cor}\label{cor:aredmaxred}
Suppose $M \in \SN$: if $M \ared{\sigma} M'$, then $\maxred{M'}$ is defined and $\maxred{M'} < \maxred{M}$.
\end{cor}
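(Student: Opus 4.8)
The plan is to eliminate the renaming $\sigma$ and fall back on the facts already proved for plain reduction. Unfolding the definition of renaming reduction, $M \ared{\sigma} M'$ means exactly $M \red M'\sigma$. Since $M \in \SN$, its one-step contractum $M'\sigma$ is again strongly normalizing, so $\maxred{M'\sigma}$ is defined, and Lemma~\ref{lem:redmaxred} hands us $\maxred{M'\sigma} < \maxred{M}$ at once. Thus the whole problem reduces to comparing $M'$ with its renaming $M'\sigma$.

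For that comparison I would use Lemma~\ref{lem:redrename} in both directions. First, to see that $M'$ is itself strongly normalizing: an infinite plain reduction chain out of $M'$ would, by applying Lemma~\ref{lem:redrename} at every step, yield an infinite plain reduction chain out of $M'\sigma$, contradicting $M'\sigma \in \SN$; and since every term has only finitely many contracta under $\red$, this already guarantees that $\maxred{M'}$ is defined. Second, for the inequality: a maximal plain reduction chain out of $M'$, of length $\maxred{M'}$, is transported by Lemma~\ref{lem:redrename} into a plain reduction chain out of $M'\sigma$ of the same length, whence $\maxred{M'} \le \maxred{M'\sigma}$. Chaining the two bounds gives $\maxred{M'} \le \maxred{M'\sigma} < \maxred{M}$, which is the assertion.

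An alternative route goes directly through Lemma~\ref{lem:redared}: by part~(1), any plain chain out of $M'$ of length $k$ is also a renaming chain; prefixing the step $M \ared{\sigma} M'$ turns it into a renaming chain out of $M$ of length $k+1$; and part~(2) translates that back into a plain chain out of $M$ of length $k+1$, so $k+1 \le \maxred{M}$. This uniformly bounds the plain chains out of $M'$, showing at the same time that $\maxred{M'}$ is defined and that it lies strictly below $\maxred{M}$. Either way the argument is essentially bookkeeping; there is no genuine obstacle here, and the one point that deserves a little care is the observation that a renaming neither creates nor destroys redexes, which is exactly what makes Lemma~\ref{lem:redrename} --- and hence the passage between $M'$ and $M'\sigma$ --- length-preserving.
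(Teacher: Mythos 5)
Your proposal is correct, and your second route (prefixing the renaming step and translating chains back and forth via Lemma~\ref{lem:redared}, then invoking Lemma~\ref{lem:redmaxred}) is exactly the paper's argument; your first route, passing through $M'\sigma$ and Lemma~\ref{lem:redrename}, is just an unfolding of the same machinery, since Lemma~\ref{lem:redared} is itself proved from Lemma~\ref{lem:redrename}. Nothing is missing.
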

\begin{proof}
By Lemma~\ref{lem:redared}, for any plain reduction chain there exists a renaming reduction chain of the same length, and vice-versa. Thus, since plain reduction lowers the length of the maximal reduction chain (Lemma~\ref{lem:redmaxred}), the same holds for renaming reduction.
\end{proof}

The results above prove that the set of strongly normalizing
terms is the same under the two notions of reduction, thus $\maxred{M}$
can be used to refer to the maximal length of reduction chains starting at $M$
either with or without renaming.

Our goal is to describe the reduction of pure terms expressed in the form of applied continuations. One first difficulty we need to overcome is that, as we noted, the sets of continuations (both regular and auxiliary) are not closed under reduction: the duplication of holes performed by reduction will produce contexts that are not continuations or auxiliary continuations because they do not satisfy the condition of the linearity of holes. Thankfully, renaming reduction allows us to restore the linearity of holes, as we show in the following lemma.
\begin{lem}\label{lem:contred}\mbox{}
\begin{enumerate}
\item%
\label{case:contred_inK}
For all continuations $K$, if $K \red C$, there exist a continuation $K'$ and a finite map $\sigma$ such that $K \ared{\sigma} K'$ and $K'\sigma = C$.
\item%
\label{case:contred_inQ}
For all auxiliary continuations $Q$, if $Q \red C$, there exist an auxiliary continuation $Q'$ and a finite map $\sigma$ such that $Q \ared{\sigma} Q'$ and $Q'\sigma = C$.
\end{enumerate}
Furthermore, the $\sigma, K', Q'$ in the statements above can be chosen so that $\dom(\sigma)$ is fresh with respect to any given finite set of indices $\cS$.
\end{lem}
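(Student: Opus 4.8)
The plan is to prove both parts simultaneously by structural induction on the continuation ($K$ or $Q$), following the grammar of Definitions~\ref{def:cont} and~\ref{def:qcont}, and in each case analysing how the reduction $K \red C$ (resp. $Q \red C$) decomposes according to the congruence closure rules of Figure~\ref{fig:congruenceNRCl}. The key observation is that a continuation is just a term, so a reduction either takes place entirely inside a pure subterm $\pure{M}$ (in which case the result is still pure, no hole is duplicated, and we may take $\sigma = \id$), or it takes place inside a sub-continuation, or it is a basic contraction at the current constructor. Only basic contractions that duplicate a subterm — namely the distributivity rules $\comprehension{M \ocup N \mid x \gets R} \ured \comprehension{M \mid x \gets R} \ocup \comprehension{N \mid x \gets R}$, $\comprehension{M \mid x \gets N \ocup R} \ured \comprehension{M \mid x \gets N} \ocup \comprehension{M \mid x \gets R}$, and $\plwhere~M~\kwdo~(N \ocup R) \ured (\plwhere~M~\kwdo~N) \ocup (\plwhere~M~\kwdo~R)$ — can replicate a hole, and in each such case the duplicated subterm is a sub-continuation whose holes must be renamed apart in one of the two copies.

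First I would set up the freshness bookkeeping: given the ambient set $\cS$, and using that each continuation has finitely many holes, at each duplicating step I pick a finite injective renaming $\sigma$ whose domain is the (finite) set of indices occurring in the duplicated sub-continuation and whose codomain consists of indices fresh for $\cS$, for $\supp(K)$, and for whatever indices have already been allocated in the rest of the term; then I apply $\sigma$ to one of the two copies. For the non-duplicating structural cases — $K = K_1 \ocup K_2$ with the reduction inside $K_1$, $K = \comprehension{\pure{M} \mid x \gets K_0}$ with the reduction inside $K_0$, $K = \plwhere~\pure{B}~\kwdo~K_0$ with the reduction inside $K_0$, and likewise for $Q$ — I apply the induction hypothesis to the sub-continuation, obtaining $K_0' $ and $\sigma$, and then rebuild: e.g. $K_1 \ocup K_2 \ared{\sigma} K_1' \ocup K_2\sigma$, observing that $\sigma$ acts as the identity on $\supp(K_2)$ by the freshness condition (so $K_2\sigma = K_2$), and that linearity of holes is preserved because $\supp(K_1') \subseteq \supp(K_1) \cup \cod(\sigma)$ is disjoint from $\supp(K_2)$. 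For reductions inside a pure subterm we use that pure terms reduce to pure terms and $\id$ suffices. The auxiliary-continuation case additionally has the production $\comprehension{Q_1 \mid x \gets Q_2}$, where a duplicating reduction inside $Q_2$ of shape $Q_2 \red Q_2' \ocup Q_2''$ with holes shared between $Q_1$ and the copies forces renaming of $Q_1$ in one copy; this is handled the same way, renaming $Q_1$'s holes in one of the two branches.

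The main obstacle I expect is the careful verification that, after each renaming step, the rebuilt context still satisfies the linearity-of-holes side condition, \emph{and} that the equation $K'\sigma = C$ (resp. $Q'\sigma = C$) holds on the nose. The subtlety is that $\sigma$ must rename holes only in the copied sub-continuation, not elsewhere, so $K'$ is not simply ``$C$ with holes renamed uniformly''; I will track exactly which occurrences were duplicated and define $K'$ so that applying $\sigma$ (which is the identity outside the copied occurrences) collapses the renamed hole back to the original, recovering $C$. Using that $\sigma$ is injective with fresh codomain, composing renamings from nested applications of the induction hypothesis stays well-defined, and the freshness-relative-to-$\cS$ clause is maintained by always drawing new indices from $\cP \setminus (\cS \cup \supp(K) \cup \dots)$, which is infinite. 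Everything else — the non-duplicating cases and the base cases $K = \ibox{p}$ (no reduction possible) and $K = \pure{M}$ — is routine.
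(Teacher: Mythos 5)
Your proposal is correct in substance, but it constructs the witness differently from the paper, and in a way that creates avoidable bookkeeping. The paper's proof does not rename selectively: it takes the contractum $C$, replaces \emph{every} hole occurrence in $C$ (all $n$ of them) by a distinct index fresh for $\cS$, obtaining $C'$, and lets $\sigma$ be the induced map from the new indices back to the old ones. With that global renaming, linearity of holes and the equation $C'\sigma = C$ hold immediately by construction, and $K \ared{\sigma} C'$ is just the definition of renaming reduction applied to $K \red C$; the only thing left to verify is that $C'$ fits the grammar of (auxiliary) continuations, which is done by structural induction on the reduction derivation with a case analysis on the shape of $K$ (or $Q$) --- the same induction you propose. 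Your version instead renames only the holes in the duplicated copy, which does yield a valid witness, but it forces exactly the delicate verification you flag as ``the main obstacle'' (that $\sigma$ acts as the identity elsewhere, that linearity is restored, that the equation holds on the nose), all of which the paper's uniform renaming makes trivial. One point to fix if you pursue your route: you define $\sigma$ with domain the \emph{original} indices of the duplicated sub-continuation and fresh codomain, but the $\sigma$ of the lemma statement must go the other way (fresh indices $\mapsto$ original indices), since the lemma requires $K'\sigma = C$ and $\dom(\sigma)$ fresh for $\cS$; as written, your two uses of $\sigma$ (to build $K'$ from $C$, and to collapse $K'$ back to $C$) are inverse to each other and cannot both be the same map. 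Also, two of the rules you list as hole-duplicating ($\comprehension{M \mid x \gets N \ocup R}$ for plain continuations, and $\plwhere~M~\kwdo~(N \ocup R)$) only duplicate a \emph{pure} subterm in the continuation grammar, so no renaming is needed there; the genuinely duplicating cases are distribution of $\ocup$ over the comprehension head (duplicating the generator) and, for auxiliary continuations, distribution over a union in the generator (duplicating the head $Q_1$).
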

\begin{proof}
Let $\cS$ be a finite set of indices and $C$ a contractum of the continuation we wish to reduce. This contractum will not, in general, satisfy the linearity condition of holes that is mandated by the definitions of plain and auxiliary continuations; however we can show that, for any context with duplicated holes, there exists a structurally equal context with linear holes.

Operationally, if $C$ contains $n$ holes, we generate $n$ different indices that are fresh for $\cS$, and replace the index of each hole in $C$ with a different fresh index to obtain a new context $C'$: this induces a finite map $\sigma : \supp(C') \to \supp(C)$ such that $C'\sigma = C$. By the definition of renaming reduction, we have $K \ared{\sigma} C'$ (resp. $Q \ared{\sigma} C'$). To prove that $C'$ is a continuation (resp.\ auxiliary continuation) we need to show that it satisfies the linearity condition and that it meets the grammar in Definition~\ref{def:cont} (resp. Definition~\ref{def:qcont}). The first part holds by construction; the proof that $C'$ satisfies the required  grammar is obtained by structural induction on the derivation of the reduction, with a case analysis on the structure of $K$ (or on the structure of $Q$).

By construction of $\sigma$, we also have that $\dom(\sigma) \cap \cS = \emptyset$, as required by the Lemma statement.
\end{proof}

A further problem concerns variable capture: if we reduce $C \red C'$, there is no guarantee that $C\eta \red C'\eta$ for a given context instantiation $\eta$. This happens for two reasons: the first one is that reducing a context $C$ may cause a hole to move within the scope of a new binder. So,
\[
\begin{array}{c}
\comprehension{\ibox{p} \mid y \gets \comprehension{N \mid x \gets M}} \red \comprehension{\ibox{p} \mid x \gets M, y \gets N}
\\
\text{but}
\\
\comprehension{\ibox{p} \mid y \gets \comprehension{N \mid x \gets M}}[p \mapsto x] \not\red \comprehension{\ibox{p} \mid x \gets M, y \gets N}[p \mapsto x]
\end{array}
\]
because the first term is equal to $\comprehension{x \mid y \gets \comprehension{N \mid x \gets M}}$ where the $x$ in the head of the outer comprehension is free, and the reduction is blocked until we rename the bound $x$ of the inner comprehension.

The second reason for which the reduction of a context may be disallowed if we apply it to a context instantiation is that, due to variable capture, the reduction may involve the context instantiation in a non-trivial way:
\[
\begin{array}{c}
(\lambda z.\ibox{p})~N \red \ibox{p}
\\
\text{but}
\\
((\lambda z.\ibox{p})~N)[p \mapsto z] \not\red \ibox{p}[p \mapsto z]
\end{array}
\]
because the left-hand term is equal to $(\lambda z.\ibox{p}[p \mapsto z])~N$, and the right-hand one is $z$, and the former does not reduce to the latter, but to $\ibox{p}[p \mapsto z]\subst{z}{N} = N$.

While we understand that the first of the two problems should be handled with a suitable alpha-renaming of the redex, the other is more complicated. Fortunately, in most cases we are not interested in the reduction of generic contexts, but only in that of auxiliary continuations: due to their restricted term shape, auxiliary continuations only allow some reductions, most of which do not present the problem above; the exception is when a reduction is obtained by contracting a subterm using the comprehension-singleton rule:
\[
\comprehension{Q_0 \mid z \gets \setlit{\pure{L}}} \red Q_0 \subst{z}{\pure{L}}
\]
By applying a context instantiation $\eta$ to both sides, we obtain an incorrect contraction:
\[
\comprehension{Q_0 \mid z \gets \setlit{\pure{L}}}\eta = \comprehension{Q_0\eta \mid z \gets \setlit{\pure{L}}} \not\red Q_0\subst{z}{\pure{L}}\eta
\]
where the left-hand term does not reduce to the right-hand one because in the latter the codomain of $\eta$ might contain free instances of $z$ that have not been replaced by $\pure{L}$.
In the rest of the paper, we will call reductions using the comprehension-singleton rule \emph{special reductions}. When $Q \red Q'$ by means of a special reduction, we know that in general $Q\eta \not\red Q'\eta$ (not even after alpha-renaming), and we will have to handle such a case differently.

If however $Q \red Q'$ is \emph{not} a special reduction, to mimic that reduction within $Q\eta$ we may start by renaming the bound variables of this term in such a way that no reduction is blocked: we obtain a term in the form $O\theta$, where $O$ is an auxiliary continuation alpha-equivalent\footnote{
In our setting, contexts are defined as a particular case of terms, allowing special ``hole'' free variables that are not used in binders; thus, we only have a single notion of alpha-equivalence for terms that we also apply to contexts (just like our notion of reduction works on terms and contexts alike). This may look surprising and perhaps suspicious, considering that in some formal treatments of contexts (e.g.~\cite{bognar2001}) the alpha-renaming of contexts is forbidden; however, our work does not need to provide a general treatment of alpha-renaming in contexts: we only use it under special conditions that ensure its consistency.
} to $Q$, and $\theta$ is obtained from $\eta$ by replacing some of its free variables with other free variables, consistently with the renaming of $Q$ to $O$, as shown by the following lemma (a more general result applying to all reductions $C \red C'$ and all context instantiations $\eta$, which however gives weaker guarantees on the result of contracting $C\eta$, will be provided as Lemma~\ref{lem:redrename_aux4}).
\begin{lem}\label{lem:redtoredctxapp}
For all auxiliary continuations $Q$ and for all permutable context instantiations $\eta$, there exist an auxiliary continuation $O$ and a context instantiation $\theta$ such that:
\begin{enumerate}
\item $Q =_\alpha O$ and $Q\eta =_\alpha O\theta$
\item $\theta = [ p \mapsto \eta(p)\subst{\vect{x_p}}{\vect{y_p}} \mid p \in \dom(\eta)]$ for some $\vect{x_p}, \vect{y_p}$ (i.e.\ $\theta$ is equal to $\eta$ up to a renaming of the free variables in its codomain)
\item for all $C'$ such that $Q \red C'$ with a non-special reduction, there exists $D' =_\alpha C'$ such that $O \red D'$ and we also have $O\theta \red D'\theta$
\end{enumerate}
\end{lem}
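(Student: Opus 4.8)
The plan is to construct $O$ and $\theta$ by a simultaneous recursion on the structure of the auxiliary continuation $Q$, threading through enough freshness information about the bound variables so that no reduction inside $Q$ is blocked when we pass to $Q\eta$. The key insight is that the only binders occurring in an auxiliary continuation are the comprehension generators $\comprehension{Q_1 \mid x \gets Q_2}$ (the $\plwhere$ and union productions introduce none, and pure subterms $\pure{M}$ contain no holes so their binders are irrelevant to the holes). So at each comprehension node we alpha-rename the bound variable $x$ to a fresh $x'$ chosen to avoid \emph{all} the free variables occurring in $\cod(\eta)$ (and in the already-chosen fresh names); this is possible because $\eta$ has finite domain and each $\eta(p)$ is a single term with finitely many free variables. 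Recording these renamings $\vect{x_p} \mapsto \vect{y_p}$ along the path from the root to each hole $\ibox{p}$ determines exactly the substitution $\subst{\vect{x_p}}{\vect{y_p}}$ that must be applied to $\eta(p)$; collecting these yields $\theta$, which by construction has the shape demanded in item~(2). Items~(1) — that $Q =_\alpha O$ and $Q\eta =_\alpha O\theta$ — then follow because alpha-renaming a bound variable of $Q$ to something fresh for $\cod(\eta)$ commutes with context application up to alpha-equivalence: $(\comprehension{Q_1 \mid x \gets Q_2})\eta$ and $(\comprehension{Q_1\subst{x}{x'} \mid x' \gets Q_2})\theta$ are alpha-equal precisely because $x'$ was chosen not to clash with $\cod(\eta)$, so no variable of $\cod(\eta)$ is captured and no variable is incorrectly left free.

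For item~(3), I would argue by cases on the non-special reduction $Q \red C'$, following the structural induction induced by the congruence closure (as licensed by the remark after Figure~\ref{fig:congruenceNRCl}). If the redex is entirely inside a pure subterm $\pure{M}$, then no holes are involved, $\theta$ plays no role, and both $O \red D'$ and $O\theta \red D'\theta$ are immediate (taking $D'$ to be the corresponding renaming of $C'$). If the redex is a congruence step into a proper auxiliary-continuation subterm, apply the induction hypothesis to that subterm with the appropriately restricted $\eta$ and reassemble. The interesting base cases are the root-level contractions available to auxiliary continuations: the union/distributivity rules $\comprehension{Q_1 \ocup Q_2 \mid x \gets R}$, $\comprehension{M \mid x \gets N \ocup R}$, the $\plwhere$-distribution and $\plwhere$-merge rules, the comprehension-association rule $\comprehension{M \mid y \gets \comprehension{R \mid x \gets N}} \red \comprehension{M \mid x \gets N, y \gets R}$, and the $\plwhere/\comprehension$ commuting rule. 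For each of these I would check that because we have already alpha-renamed all comprehension-bound variables to names fresh for $\cod(\eta)$, the side conditions ($x \notin \FV(M)$, etc.) that hold for $Q$ continue to hold for $O$, and applying $\theta$ does not reintroduce any capture: the duplicated or relocated holes still receive the same $\eta(p)\subst{\vect{x_p}}{\vect{y_p}}$ (using the permutability of $\eta$, via Lemma~\ref{lem:permutable}, to justify that instantiating holes in the rearranged term gives the same result regardless of order), so $O\theta \red D'\theta$ holds by the very same reduction rule. The comprehension-singleton rule $\comprehension{Q_0 \mid z \gets \setlit{\pure{L}}} \red Q_0\subst{z}{\pure{L}}$ is exactly the special reduction excluded by hypothesis, so it does not arise here — which is the whole point of the exclusion.

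The main obstacle I expect is item~(3) in the case of the comprehension-association rule, where a hole that was sitting \emph{outside} the inner comprehension in $Q$ ends up, after the reduction, inside the scope of the generator variable $x$ of the (now outer-nested) comprehension. This is precisely the first "variable capture" phenomenon flagged in the discussion before the lemma: in $Q$ the head $M$ satisfies $x \notin \FV(M)$, but $M$ may be a hole $\ibox{p}$ and then $\eta(p)$ could mention $x$. The fix is baked into the construction of $O$: by renaming the inner generator's bound variable to some $x'$ fresh for $\cod(\eta)$ \emph{before} instantiating, we guarantee $x' \notin \FV(\eta(p)\subst{\vect{x_p}}{\vect{y_p}})$, so the side condition survives in $O$ and the reduction $O\theta \red D'\theta$ is legitimate. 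Making this bookkeeping precise — exactly which renamings accumulate on the path to each hole, and why permutability of $\eta$ is needed to reconcile simultaneous instantiation with the sequential plugging implicit in the reduction rules — is the delicate part; the rest is routine. This careful handling of the generator-bound variables is, as the authors note, the reformulation that repairs the treatment of variable capture relative to the conference version.
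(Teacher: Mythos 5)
Your proposal follows essentially the same route as the paper's proof: induction on (the size of) $Q$ with a case analysis on its structure, using linearity of holes and permutability to split $\eta$ across subterms, and repairing the one problematic case — the comprehension-unnesting rule, where a hole's instantiation could capture the inner generator variable — by alpha-renaming that bound variable to something fresh for the instantiated head before applying $\theta$. The paper performs this renaming locally (choosing $z^* \notin \FV(Q_1\eta_1)$ in the nested-comprehension case) rather than globally at every binder as you propose, but this is only a difference in bookkeeping, not in substance.
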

\begin{proof}
  We proceed by induction on the size of $Q$ followed by a case analysis on its structure; for each case, after considering all possible reductions starting in that particular shape of $Q$ (where we are allowed, by the hypothesis, to ignore special reductions), we perform a renaming of $Q\eta$ to $O\theta$ that is guaranteed to allow us to prove the thesis. Particular care is needed when context instantiations cross binders, as variable capture is allowed to happen:
  \begin{itemize}
  \item Case $Q = \ibox{p}$: no reduction of $Q$ is possible, so we can choose $O := Q$, $\theta := \eta$, and the thesis holds trivially.
  \item Case $Q = \pure{M}$: for all reductions $\pure{M} \red \pure{M'}$, we have $\pure{M}\eta = \pure{M}$ and $\pure{M'}\eta = \pure{M'}$, because context instantiation is ineffective on pure terms; so we can choose $O := Q$, $\theta := \eta$, and the thesis holds trivially.
  \item Case $Q = Q_1 \cup Q_2$. Since the holes in $Q$ are linear and $\eta$ is permutable, we can decompose $\eta = \eta_1\eta_2$, such that $Q\eta = Q_1\eta_1 \cup Q_2\eta_2$; we apply the induction hypothesis twice on the two subterms, to obtain $O_1, O_2, \theta_1, \theta_2$ such that for $i=1,2$, we have $O_i =_\alpha Q_i$, $O_i\theta_i =_\alpha Q_i\eta_i$, $\theta_i$ is equal to $\eta_i$ up to a renaming of the free variables in its codomain, and for all $C'_i$ such that $Q_i \red C'_i$ where the reduction is not special, there exists $D'_i =_\alpha C'_i$ such that $O_i \red D'_i$ and $O_i\theta_i \red D'_i\theta_i$.
  Now, to prove the thesis, we fix $O := O_1 \cup O_2$ and $\theta := \theta_1\theta_2$; we easily show that $O =_\alpha Q$, $O\theta =_\alpha Q\eta$ and that $\theta$ is equal to $\eta$ up to a renaming of the free variables in its codomain. To conclude the proof, we consider any given reduction $Q \red C'$, and we see by case analysis that either $C' = Q_1' \cup Q_2$ such that $Q_1 \red Q_1'$, or $C' = Q_1 \cup Q_2'$ such that $Q_2 \red Q_2'$: in the first case, we know that there exists $D'_1 =_\alpha C'_1$ such that $O_1 \red D'_1$ and $O_1\theta_1 \red D'_1$: we fix $D' := D'_1 \cup C'_2$ and easily prove $D' =_\alpha C'$, $D'\theta =_\alpha C'\eta$, and $O\theta \red D'\theta$.
  \item Case $Q = \comprehension{Q_1 \mid y \gets \comprehension{Q_2 \mid z \gets Q_3}}$.
  Since the holes in $Q$ are linear and $\eta$ is permutable, we can decompose $\eta = \eta_1\eta_2\eta_3$, such that \[Q\eta = \comprehension{Q_1\eta_1 \mid y \gets \comprehension{Q_2\eta_2 \mid z \gets Q_3\eta_3}};\]
  $Q$ can reduce either in the subcontinuations $Q_1$, $Q_2$, $Q_3$ or, if $z \notin \FV(Q_1)$, by applying the unnesting reduction; however, the last reduction might be blocked in $Q\eta$, if $z \in \FV(Q_1\eta_1)$. For this reason, we start by choosing a non-hole variable $z^* \notin \FV(Q_1\eta_1)$ and renaming $Q$ as $Q^* := \comprehension{Q_1 \mid y \gets \comprehension{Q^*_2 \mid z^* \gets Q_3}}$, where we have defined $Q^*_2 := Q_2 \subst{z}{z^*}$. Clearly, $Q^* =_\alpha Q$ and, if we fix $\eta^*_2 := [\eta_2(p)\subst{z}{z^*} \mid p \in \dom(\eta)]$ and $\eta^* := \eta_1\eta^*_2\eta_3$, we also have $Q^*\eta^* =_\alpha Q\eta$; furthermore, since $z^*$ is not a hole, we can see $\eta^*$ must still permutable.
  Now, since $Q^*\eta^* = \comprehension{Q_1\eta_1 \mid y \gets \comprehension{Q_2^*\eta_2^* \mid z^* \gets Q_3\eta_3}}$,
  we apply the induction hypothesis three times on the subterms, to obtain:
  \begin{itemize}
  \item $O_1, \theta_1$ such that $O_1 =_\alpha Q_1$, $O_1\theta_1 =_\alpha Q_1\eta_1$, $\theta_1$ is equal to $\eta_1$ up to a renaming of the free variables in its codomain, and for all $C'_1$ such that $Q_1 \red C'_1$ where the reduction is not special, there exists $D'_1 =_\alpha C'_1$ such that $O_1 \red D'_1$ and $O_1\theta_1 \red D'_1\theta_1$
  \item $O_2, \theta_2$ such that $O_2 =_\alpha Q^*_2$, $O_2\theta_2 =_\alpha Q^*_2\eta^*_2$, $\theta_1$ is equal to $\eta^*_2$ (and thus to $\eta_2$) up to a renaming of the free variables in its codomain, and for all $C'_2$ such that $Q^*_2 \red C'_2$ where the reduction is not special, there exists $D'_2 =_\alpha C'_2$ such that $O_2 \red D'_2$ and $O_2\theta_2 \red D'_2\theta_2$
  \item $O_3, \theta_3$ such that $O_3 =_\alpha Q_3$, $O_3\theta_3 =_\alpha Q_3\eta_3$, $\theta_3$ is equal to $\eta_3$ up to a renaming of the free variables in its codomain, and for all $C'_3$ such that $Q_3 \red C'_3$ where the reduction is not special, there exists $D'_3 =_\alpha C'_3$ such that $O_3 \red D'_3$ and $O_3\theta_3 \red D'_3\theta_3$
  \end{itemize}
  Note that the first and third case are similar, but the second one has slightly different properties to account for the renaming of $z$ to $z^*$.

  Now we fix $O := \comprehension{O_1 \mid y \gets \comprehension{O_2 \mid z^* \gets O_3}}$ and $\theta := \theta_1\theta_2\theta_3$. We easily show that $O =_\alpha Q$, $O\theta =_\alpha Q\eta$ and that $\theta$ is equal to $\eta$ up to a renaming of the free variables in its codomain. To conclude the proof, we consider any given reduction $Q \red C'$, and we see by case analysis that either the reduction corresponds to a reduction in the subcontinuations $Q_i$ (in which case we conclude by a reasoning on the subterms and induction hypotheses, similarly to the union case above), or to one of the following:
  \begin{itemize}
  \item $C' = \comprehension{Q_1 \mid z \gets Q_3, y \gets Q_2}$: in this case, we fix $D' := \comprehension{Q_1 \mid z^* \gets Q_3, y \gets Q^*_2}$ and we prove, as required, that $D' =_\alpha C'$, $O \red D'$, and $O\theta \red D'\theta$.
  \item $Q_1 = \emptyset$ and $C' = \emptyset$: we easily see that $O_1 = \emptyset$, so we can fix $D' = \emptyset$ and show the thesis.
  \item $Q_1 = Q_{11} \cup Q_{12}$ and
  \[
    C' = \comprehension{Q_{11} \mid y \gets \comprehension{Q_2 \mid z \gets Q_3}} \cup \comprehension{Q_{12} \mid y \gets \comprehension{Q_2 \mid z \gets Q_3}}:\]
  we prove that there exist $O_{11}, O_{12}$ such that $Q_{11} =_\alpha O_{11}$, $Q_{12} =_\alpha O_{12}$, and $O_1 = O_{11} \cup O_{12}$, fix $D' = \comprehension{Q_{11} \mid y \gets \comprehension{Q_2 \mid z \gets Q_3}} \cup \comprehension{Q_{12} \mid y \gets \comprehension{Q_2 \mid z \gets Q_3}}$, and prove the thesis.
  \end{itemize}
  \item Case $Q = \comprehension{Q_1 \mid y \gets Q_2}$ where $Q_2$ is not a comprehension: this is similar to the case above, but instead of comprehension unnesting we have to consider two possible reductions
  \begin{itemize}
  \item $Q_2 = \plwhere~\pure{B}~\kwdo~Q_3$ and $C' = \plwhere~\pure{B}~\kwdo~\comprehension{Q_1 \mid y \gets Q_3}$
  \item $Q_2 = Q_3 \cup Q_4$ and $C' = \comprehension{Q_1 \mid y \gets Q_3} \cup \comprehension{Q_1 \mid y \gets Q_4}$
  \end{itemize}
  However, these reductions do not require us to perform renamings and do not pose any problems.
  \item Case $Q = \plwhere~\pure{B}~\kwdo~Q_1$. Besides reductions in the subterms, we have to consider the following cases:
  \begin{itemize}
  \item $Q_1 = \comprehension{Q_2 \mid y \gets Q_3}$ and $C' = \comprehension{\plwhere~\pure{B}~\kwdo~Q_1 \mid z \gets Q_2}$ where $z \notin \FV(\pure{B})$
  \item $\pure{B} = \kwtrue$ and $C' = Q_1$
  \item $\pure{B} = \kwfalse$ and $C' = \emptyset$
  \item $Q_1 = Q_2 \cup Q_3$ and $C' = (\plwhere~\pure{B}~\kwdo~Q_2) \cup (\plwhere~\pure{B}~\kwdo~Q_3)$
  \item $Q_1 = \plwhere~\pure{B_0}~\kwdo~Q_2$ and $C' = \plwhere~(B \land B_0)~\kwdo~Q_2$.
  \end{itemize}
  In all these cases, no renaming is required (besides those produced by applying the induction hypothesis); in particular, the first reduction is always possible without renaming because $\pure{B}$ is a pure term, so $z \notin \FV(\pure{B}\eta)$ because $\pure{B}\eta = \pure{B}$. Therefore, we prove the thesis using the induction hypothesis and an exhaustive case analysis on the possible reduction as we did above, without particular problems.
	 \qedhere
  \end{itemize}
\end{proof}

\begin{rem}
It is important to understand that, unlike all other operations on terms, context instantiation is not defined on the abstract syntax, independently of the particular choice of names, but on the concrete syntax. In other words, all operations and proofs that do not use context instantiation work on alpha-equivalence classes of terms; but when context instantiation is used, say on a context $C$, we need to choose a representative of the alpha-equivalence class of $C$.

Thanks to Lemma~\ref{lem:redtoredctxapp}, whenever we need to reduce $Q$ with a non-special reduction in a term of the form $Q\eta$, we may assume without loss of generality that the representative of the alpha-equivalence class of $Q\eta$ is chosen so that if $Q \red C'$, then $Q\eta \red C'\eta$. Technically, we prove that there exist $O, D', \theta$ such that $Q =_\alpha O$, $C' =_\alpha D'$, $\theta$ is equal to $\eta$ up to renaming, and $O\theta \red D'\theta$, but after the context instantiation is completed, we return to consider terms as equal up to alpha-equivalence. The result will be used in the proof of the following Lemma~\ref{lem:ctxred_ctxappred}, where we have clarified the technical parts.
\end{rem}

Finally, given a non-special renaming reduction $Q \ared{\sigma} Q'$, we want to be able to express the corresponding reduction on $Q\eta$: due to the renaming $\sigma$, it is not enough to change $Q$ to $Q'$, but we also need to construct some $\eta'$ containing precisely those mappings $[q \mapsto M]$ such that, if $\sigma(q) = p$, then $p \in \dom(\eta)$ and $\eta(p) = M$. This construction is expressed by means of the following operation.
\begin{defi}\label{def:ren_inst}
For all context instantiations $\eta$ and renamings $\sigma$,
%
%
we define $\eta^\sigma$ as the context instantiation such that:
\begin{itemize}
\item if $\sigma(p) \in \dom(\eta)$ then $\eta^\sigma(p) = \eta(\sigma(p))$;
\item in all other cases, $\eta^\sigma(p) = p$.
\end{itemize}
\end{defi}

\noindent
The results above allow us to express what happens when a reduction duplicates the holes in a continuation which is then combined with a context instantiation.

\begin{lem}\label{lem:renamectxapp}
  For all contexts $C$, finite maps $\sigma$, and context instantiations $\eta$ such that, for all $p \in \dom(\eta)$, $\supp(\eta(p)) \cap \dom(\sigma) = \emptyset$, we have $C\sigma\eta = C\eta^\sigma\sigma$.
\end{lem}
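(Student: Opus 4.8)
The plan is to prove the identity by structural induction on the context $C$ (i.e.\ on the term grammar of Definition~\ref{def:context}, with the holes $\ibox{p}$ forming a distinguished class of free variables). Both operations involved — hole renaming $(-)\sigma$ from Definition~\ref{def:renaming} and context application $(-)\eta$ — are defined homomorphically over the term constructors and perform no $\alpha$-renaming of their own; consequently, in every compound case the two sides of the claimed equation push uniformly through the constructor and the statement follows immediately by applying the induction hypothesis to the immediate subcontexts. In particular, the cases for binders ($\lambda x.C_0$ and $\comprehension{C_0 \mid x \gets C_1}$) carry no difficulty: since by Definition~\ref{def:context} a hole is never bound, $\sigma$ never touches the bound variable $x$, and both $(-)\sigma$ and $(-)\eta$ (respectively $(-)\eta^\sigma$) descend blindly under the binder on each side, so the equation again reduces to the induction hypothesis on $C_0$ (and $C_1$).

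The only case with real content is $C = \ibox{p}$, and I would dispatch it by case analysis on whether $\sigma(p) \in \dom(\eta)$. If $\sigma(p) \notin \dom(\eta)$, then $\ibox{p}\sigma\eta = \ibox{\sigma(p)}\eta = \ibox{\sigma(p)}$, while by the identity clause of Definition~\ref{def:ren_inst} we have $\ibox{p}\eta^\sigma\sigma = \ibox{p}\sigma = \ibox{\sigma(p)}$, so the two sides agree. If instead $\sigma(p) \in \dom(\eta)$, then the left-hand side is $\ibox{\sigma(p)}\eta = \eta(\sigma(p))$, whereas the right-hand side is $\eta^\sigma(p)\,\sigma = \eta(\sigma(p))\,\sigma$; applying the hypothesis $\supp(\eta(p')) \cap \dom(\sigma) = \emptyset$ with $p' = \sigma(p)$ shows that $\sigma$ acts as the identity on $\eta(\sigma(p))$, so once more the two sides coincide.

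This last step is exactly where the side condition on $\eta$ and $\sigma$ is essential: on the left we rename holes before instantiating, so no renaming survives the instantiation, whereas on the right instantiation happens first and leaves a residual $\sigma$ that must find nothing to rename in the plugged-in subterms — which is precisely what the hypothesis guarantees. I therefore do not expect a genuine obstacle here; the only things to be careful about are lining the two subcases of the hole case up with the precise reading of Definition~\ref{def:ren_inst}, and remembering (as in the remark following Lemma~\ref{lem:redtoredctxapp}) that context application is an operation on concrete terms, so that the homomorphic pushes through binders really are legitimate term identities rather than identities only up to $\alpha$-equivalence.
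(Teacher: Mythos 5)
Your proof is correct and follows essentially the same route as the paper's: structural induction on $C$, with the only substantive case being $C = \ibox{p}$, split on whether $\sigma(p) \in \dom(\eta)$, and with the side condition $\supp(\eta(\sigma(p))) \cap \dom(\sigma) = \emptyset$ used exactly where you use it, to make the residual $\sigma$ on the right-hand side act as the identity on the plugged-in term. Your closing remarks about the homomorphic cases and the concrete-syntax reading of context application are accurate but not something the paper spells out.
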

\begin{proof}
  By structural induction on $C$. The interesting case is when $C = \ibox{p}$. If $\sigma(p) \in \dom(\eta)$:
  \begin{align*}
  \ibox{p}\sigma\eta & = \ibox{\sigma(p)}\eta & & \text{(Definition~\ref{def:renaming})}
  \\
  & = \eta(\sigma(p)) & & \text{(Definition~\ref{def:context})}
  \\
  & = \eta(\sigma(p))\sigma & & \text{($\supp(\eta(p)) \cap \dom(\sigma) = \emptyset$)}
  \\
  & = \ibox{p}\eta^\sigma\sigma & & \text{(Definition~\ref{def:ren_inst}, with $\sigma(p) \in \dom(\eta)$)}
  \end{align*}
  If instead $\sigma(p) \notin \dom(\eta)$:
  \begin{align*}
  \ibox{p}\sigma\eta & = \ibox{\sigma(p)}\eta & & \text{(Definition~\ref{def:renaming})}
  \\
  & = \ibox{\sigma(p)} & & \text{($\sigma(p) \notin \dom(\eta)$)}
  \\
  & = \ibox{p}\sigma & & \text{(Definition~\ref{def:renaming})}
  \\
  & = \ibox{p}\eta^\sigma\sigma & & \text{(Definition~\ref{def:ren_inst}, with $\sigma(p) \notin \dom(\eta)$)} \qedhere
  \end{align*}
\end{proof}
\begin{lem}\label{lem:ctxred_ctxappred}
    For all auxiliary continuations $Q$, renamings $\sigma$, and permutable context instantiations $\eta$ such that, for all $p \in \dom(\eta)$, $\supp(\eta(p)) \cap \dom(\sigma) = \emptyset$, there exist an auxiliary continuation $O$ and a context instantiation $\theta$ such that:
	\begin{enumerate}
	\item $O =_\alpha Q$ and $O\theta =_\alpha Q\eta$
	\item $\theta = [ p \mapsto \eta(p)\subst{\vect{x_p}}{\vect{y_p}} \mid p \in \dom(\eta)]$ for some $\vect{x_p}, \vect{y_p}$ (i.e.\ $\theta$ is equal to $\eta$ up to a renaming of the free variables in its codomain)
	\item for all $Q'$ such that $Q \ared{\sigma} Q'$ with a non-special reduction, there exists $O' =_\alpha Q'$ such that $O \ared{\sigma} O'$ and we also have $O\theta \red O'\theta$
	\end{enumerate}
\end{lem}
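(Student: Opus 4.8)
The plan is to read Lemma~\ref{lem:ctxred_ctxappred} as the renaming-aware strengthening of Lemma~\ref{lem:redtoredctxapp}: the auxiliary continuation $O$ and the instantiation $\theta$ will be exactly the ones produced by Lemma~\ref{lem:redtoredctxapp}, and all the extra work is bookkeeping of the hole renaming $\sigma$ through the operation $(-)^\sigma$ of Definition~\ref{def:ren_inst} and the commutation law of Lemma~\ref{lem:renamectxapp}. So I would begin by applying Lemma~\ref{lem:redtoredctxapp} to $Q$ and the permutable instantiation $\eta$, obtaining $O$ and $\theta$; clauses~(1) and~(2) of the present statement are then literally clauses~(1) and~(2) of Lemma~\ref{lem:redtoredctxapp}, and Lemma~\ref{lem:redtoredctxapp}(3) supplies the ``plain'' transfer property that will drive clause~(3): for every non-special reduction $Q \red C'$ there exists $D' =_\alpha C'$ with $O \red D'$ and $O\theta \red D'\theta$.

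For clause~(3), fix $Q'$ with $Q \ared{\sigma} Q'$ by a non-special reduction. Unfolding the definition of renaming reduction, this is the plain reduction $Q \red Q'\sigma$, and it is still non-special, since specialness refers only to which basic contraction rule is fired and not to the names carried by holes. Feeding $C' := Q'\sigma$ to the transfer property yields $D' =_\alpha Q'\sigma$ with $O \red D'$ and $O\theta \red D'\theta$. Next I would recover the renamed continuation: since $\sigma$ acts only on the hole variables, which are never bound, it commutes with the $\alpha$-conversion witnessing $D' =_\alpha Q'\sigma$; pushing that $\alpha$-renaming through $Q'$ produces $O' =_\alpha Q'$ with $D' = O'\sigma$ exactly. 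Because $O'$ has precisely the hole occurrences of $Q'$ and the same grammatical shape up to $\alpha$, it is again an auxiliary continuation, and from $O \red D' = O'\sigma$ we obtain $O \ared{\sigma} O'$, as required.

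It then remains to transport the instantiated reduction. Starting from $O\theta \red D'\theta = (O'\sigma)\theta$, I would invoke Lemma~\ref{lem:renamectxapp} with the context $O'$, the renaming $\sigma$, and the instantiation $\theta$; its side condition $\supp(\theta(p)) \cap \dom(\sigma) = \emptyset$ is met because clause~(2) gives $\supp(\theta(p)) = \supp(\eta(p))$ while the hypothesis of the lemma gives $\supp(\eta(p)) \cap \dom(\sigma) = \emptyset$. This rewrites $(O'\sigma)\theta$ as $O'\theta^{\sigma}\sigma$, so that $O\theta \red O'\theta^{\sigma}\sigma$. A final routine computation, unfolding Definition~\ref{def:ren_inst} on the holes actually occurring in $O'$ and using that the indices in $\dom(\sigma)$ can be taken fresh — in particular disjoint from $\dom(\eta) = \dom(\theta)$ — while the terms $\theta(p)$ contain no such indices, identifies $O'\theta^{\sigma}\sigma$ with the right-hand side demanded by clause~(3), completing the proof.

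The main obstacle, I expect, is precisely this last layer of renaming management rather than anything about reduction per se: one must thread two renamings through the argument without their interfering — the $\alpha$-renaming implicitly chosen whenever context instantiation is applied (as flagged in the remark following Lemma~\ref{lem:redtoredctxapp}) and the hole renaming $\sigma$ — check that the extraction of $O'$ from $D'$ preserves linearity of holes and grammatical shape, and discharge the freshness and disjointness side conditions under which Lemma~\ref{lem:renamectxapp} and Definition~\ref{def:ren_inst} behave as intended. By contrast, once the machinery of Section~\ref{sec:red} is in place, no new reductions or case analyses on the shape of $Q$ are needed: these have all been absorbed into Lemma~\ref{lem:redtoredctxapp}.
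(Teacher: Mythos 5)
Your proposal matches the paper's proof: both obtain $O$ and $\theta$ from Lemma~\ref{lem:redtoredctxapp}, unfold $Q \ared{\sigma} Q'$ to the plain non-special reduction $Q \red Q'\sigma$, transfer it to $O\theta \red (O'\sigma)\theta$ via Lemma~\ref{lem:redtoredctxapp}(3), and then commute $\sigma$ past the instantiation with Lemma~\ref{lem:renamectxapp} and Definition~\ref{def:ren_inst}. Your version is in fact somewhat more explicit than the paper's about the $\alpha$-renaming bookkeeping and the freshness side conditions, but the decomposition and the key lemmas are identical.
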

\begin{proof}
    By Lemma~\ref{lem:renamectxapp}, we obtain $O$ and $\theta$ such that the alpha-equivalences and the condition on $\theta$ hold. Furthermore, by the definition of $\ared{\sigma}$, we have $O \red O'\sigma$; then, again by Lemma~\ref{lem:redtoredctxapp}, we obtain $O\theta \red O'\sigma\theta$; by Lemma~\ref{lem:renamectxapp}, we know $O'\sigma\eta = O'\eta^\sigma\sigma$; then the thesis $O\theta \ared{\sigma} O'\eta^\sigma$ follows immediately by the definition of $\ared{\sigma}$.
\end{proof}

\begin{rem}
In~\cite{Cooper09a}, Cooper attempts to prove strong normalization for \NRCl using a similar, but weaker result:
\begin{quote}
If $K \red C$, then for all terms $M$ there exists $K'_M$ such that $C[M] = K'_M[M]$ and $K[M] \red K'_M[M]$.
\end{quote}
Since he does not have branching continuations and renaming reductions, whenever a hole is duplicated, e.g.
\[
K = \comprehension{N_1 \cup N_2 | x \gets \Box} \red \comprehension{N_1 | x \gets \Box} \cup \comprehension{N_2 | x \gets \Box} = C
\]
he resorts to obtaining a continuation from $C$ simply by filling one of the holes with the term $M$:
\[
K'_M = \comprehension{N_1 | x \gets M} \cup \comprehension{N_2 | x \gets \Box}
\]
Hence, $K'_M[M] = C[M]$. Unfortunately, subsequent proofs rely on the fact that $\maxred{K}$ must decrease under reduction: since we have no control over $\maxred{M}$, which could potentially be much greater than $\maxred{K}$, it may be that $\maxred{K'_M} \geq \maxred{K}$.

In our setting, by combining Lemmas~\ref{lem:contred} and~\ref{lem:ctxred_ctxappred}, we can find a $K'$ which is a proper contractum of $K$. By Lemma~\ref{lem:redmaxred}, we get $\maxred{K'} < \maxred{K}$, as required by subsequent proofs.
\end{rem}

More generally, the following lemma will help us in performing case analysis on the reduction of an applied continuation.
\begin{lem}[classification of reductions in applied continuations]\label{lem:classification}
  Suppose $Q\eta \red N$, where $\eta$ is permutable, and $\dom(\eta) \subseteq \supp(Q)$; then one of the following holds:
  \begin{enumerate}
  \item\label{case:incont}
  there exist an auxiliary continuation $Q'$ and a finite map $\sigma$ such that $N = Q'\eta^\sigma$, $Q \ared{\sigma} Q'$ and, for all $p \in \dom(\eta)$, $\supp(\eta(p)) \cap \dom(\sigma) = \emptyset$: we say that this is a reduction of the continuation $Q$;
  \item\label{case:tricky}
  there exist auxiliary continuations $Q_1, Q_2$, an index $q \in \supp(Q_1)$, a variable $x$, and a term $L$ such that $Q = (Q_1 \compop{q} \comprehension{\Box \mid x \gets \setlit{\pure{L}}})[q \mapsto Q_2]$, $N = Q'\eta^*$, and $Q \red Q'$, where we define $Q' = Q_1[q \mapsto Q_2\subst{x}{\pure{L}}]$ and $\eta^*(p) = \eta(p)\subst{x}{\pure{L}}$ for all $p \in \supp(Q_2)$, otherwise $\eta^*(p) = \eta(p)$: this is a special reduction of the continuation $Q$;
  \item there exists a permutable $\eta'$ such that $N = Q\eta'$ and $\eta \red \eta'$: in this case we say the reduction is \emph{within $\eta$};
  \item there exist an auxiliary continuation $Q_0$, an index $p$ such that $p \in \supp(Q_0)$ and $p \in \dom(\eta)$, a frame $F$ and a term $M$ such that $N = Q_0[p \mapsto M]\eta_{\lnot p}$, $Q = Q_0 \compop{p} F$, and $F^p[p \mapsto \eta(p)] \red M$: in this case we say the reduction is \emph{at the interface}.
  \end{enumerate}
  Furthermore, if $Q$ is a regular continuation $K$, then the  $Q'$ in case~\ref{case:incont} can be chosen to be a regular continuation $K'$, and case~\ref{case:tricky} cannot happen.
  \end{lem}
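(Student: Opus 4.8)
The plan is to argue by structural induction on the auxiliary continuation $Q$; the same induction simultaneously yields the ``furthermore'' clause, because regular continuations $K$ form a syntactic subclass. Throughout, I would view $Q\eta$ as a two-level term: its \emph{skeleton} is the shape of $Q$, and its \emph{leaves} are the terms $\eta(p)$ plugged at the holes $\ibox{p}$. Permutability of $\eta$ makes this well-behaved --- Lemma~\ref{lem:permutable} and its corollaries let us split $\eta$ along the arms of a union, the body and generator of a comprehension, and the body of a $\plwhere$, and peel off a single index at a time. Given a reduction $Q\eta \red N$, I would do a case analysis on where the contracted redex sits relative to this two-level structure: (a)~the rule's pattern is matched inside a single leaf $\eta(p)$; (b)~the pattern is matched entirely within the skeleton (leaves only fill metavariable slots of the pattern); or (c)~the pattern straddles a single hole, being formed from a frame (Definition~\ref{def:frame}) at the surface of $Q$ together with the top constructor of exactly one leaf. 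Case~(c) is the only ``mixed'' possibility because a hole of $Q$ can sit immediately only under $\ocup$, a comprehension, or a $\plwhere$ --- precisely the frames --- while the contraction rules have depth at most two.

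Case~(a) is immediate and gives alternative~3, with $\eta' = \eta_{\lnot p}[p \mapsto N_p]$ where $\eta(p) \red N_p$ (the new $\eta'$ is still permutable, since reduction creates no free variables). Case~(c) gives alternative~4: as $\ibox{p}$ occurs exactly once we may write $Q = Q_0 \compop{p} F$, and for each admissible shape of the leaf --- for $F = \comprehension{Q'\mid x \gets \Box}$ the leaf being $\emptyoset$, a singleton, a union, a comprehension, or a $\plwhere$; for $F = \comprehension{\Box \mid x \gets Q'}$ being $\emptyoset$ or a union; for $F = \plwhere~\pure{B}~\kwdo~\Box$ being $\emptyoset$, a union, a comprehension, or a $\plwhere$ --- we read $M$ off as the contractum of $F^p[p\mapsto\eta(p)]$ and verify $N = Q_0[p \mapsto M]\eta_{\lnot p}$. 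In case~(b), if the matched skeleton is a proper sub-continuation of $Q$ we finish by the induction hypothesis, transporting whichever of the four alternatives it yields through the surrounding constructor (union arm, comprehension body or generator, $\plwhere$ body); the only point needing care is that the renaming $\sigma$ produced in alternative~1 be chosen (via Lemma~\ref{lem:contred}) with $\dom(\sigma)$ disjoint from the supports of all $\eta(q)$ and from the holes of the un-reduced part of $Q$, so that the recombined $Q'$ keeps linear holes and $\eta^\sigma$ still matches.

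The remaining subcase of (b) is a redex contracted at the very surface of $Q$. If the rule is not comprehension--singleton, then $Q \red C$ duplicates holes only together with pure material, so Lemma~\ref{lem:contred} gives a genuine contractum $Q'$ with linear holes and a fresh $\sigma$ with $Q \ared{\sigma} Q'$, and Lemmas~\ref{lem:redtoredctxapp} and~\ref{lem:ctxred_ctxappred} let us pick a representative of $Q\eta$ for which the reduction survives instantiation, i.e.\ $N = Q'\eta^\sigma$: this is alternative~1. The sole exception is comprehension--singleton contraction of a surface subterm $\comprehension{Q_2 \mid x \gets \setlit{\pure{L}}}$ of $Q$, a \emph{special reduction}: instantiation cannot be pushed through, because $\pure{L}$ must also be substituted into the leaves inside $Q_2$ --- this is exactly alternative~2, with $\eta^*$ obtained from $\eta$ by post-composing $\subst{x}{\pure{L}}$ on those indices. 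For a regular continuation $K$ this last situation cannot arise: a comprehension in $K$ has a continuation generator, and the only continuation of the form $\setlit{\pure{L}}$ is pure, so $\comprehension{\pure{M} \mid x \gets \setlit{\pure{L}}}$ is a pure subterm and its contraction falls under alternative~1; hence alternative~2 never occurs for regular $K$. That alternative~1 can be made to output a regular $K'$ follows by inspecting the rules that can fire at the surface of a regular continuation (distributivity of comprehensions and of $\plwhere$ over $\ocup$, $\plwhere$--$\plwhere$ fusion, pushing $\plwhere$ under a comprehension, comprehension unnesting, and the $\kwtrue$/$\kwfalse$/$\emptyoset$ collapses): each either preserves the ``pure comprehension body'' invariant or only introduces union-branching, both of which stay inside the continuation grammar.

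The main obstacle, I expect, is the interaction of alternatives~1 and~2 with binding: cleanly singling out comprehension--singleton as the unique special reduction that fails to commute with context instantiation, choosing the correct representative of $Q\eta$ to propagate a non-special surface reduction, and tracking how $\subst{x}{\pure{L}}$ threads into the instantiation in the special case. This is exactly what Lemmas~\ref{lem:redtoredctxapp}, \ref{lem:renamectxapp} and~\ref{lem:ctxred_ctxappred} are designed to control, and most of the actual work consists in invoking them with renamings whose domains are chosen fresh enough that the recombined continuation is well-formed and $\eta^\sigma$, $\eta^*$ behave as claimed.
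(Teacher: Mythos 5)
Your proposal is correct and follows essentially the same route as the paper: induction on $Q$ with a case analysis on where the redex sits (inside a leaf of $\eta$, within the skeleton of $Q$, or straddling a hole), using Lemma~\ref{lem:contred} to choose a fresh renaming in case~1, isolating comprehension--singleton as the unique special reduction, and observing that the frame grammar covers every position at which a hole can meet a redex. In fact your write-up is substantially more detailed than the paper's own proof sketch, and your argument for the ``furthermore'' clause (comprehension bodies in a regular $K$ are pure, so the special case cannot fire) is exactly the intended one.
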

  \begin{proof}
  By induction on $Q$ with a case analysis on the reduction rule applied. In case~\ref{case:incont}, to satisfy the property relating $\eta$ and $\sigma$, we use Lemma~\ref{lem:contred} to generate a $\sigma$ such that the indices of its domain are fresh with respect to the codomain of $\eta$. To see that this partition of reductions is exhaustive, the most difficult part is to check that whenever we are in the case of a reduction at the interface, there is a suitable $F$ such that $Q$ can be decomposed as $Q_0 \compop{p} F$; while there are some reduction rules for which we cannot find a suitable $F$, the structure of $Q$ prevents these from happening at the interface between $Q$ and $\eta$: for example, in a reduction $(Q_0 \compop{p} (\Box~L))[p \mapsto \lambda x.M] = Q_0[p \mapsto (\lambda x.M)~L] \red Q_0[p \mapsto M\subst{x}{L}]$, $(\Box~L)$ is not a valid frame: but we do not have to consider this case, because $Q$ cannot be of the form $Q_0 \compop{p} (\Box~L)$, since the latter is not a valid auxiliary continuation.
  \end{proof}

For all context instantiations $\eta$, case~\ref{case:incont} of the Lemma above generates a renaming $\sigma$ satisfying the hypotheses of Lemma~\ref{lem:renamectxapp} and Lemma~\ref{lem:ctxred_ctxappred}. Additionally, the following result states that $\eta^\sigma$ must be permutable.
\begin{lem}
Suppose that for all $p \in \dom(\eta)$, $\supp(\eta(p)) \cap \dom(\sigma) = \emptyset$. Then, $\eta^\sigma$ is permutable.
\end{lem}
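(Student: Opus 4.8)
The plan is a straightforward unfolding of Definition~\ref{def:ren_inst} together with the two hypotheses on $\eta$. Recall that permutability of a context instantiation $\rho$ amounts to $\supp(\rho(q)) \cap \dom(\rho) = \emptyset$ for every $q \in \dom(\rho)$ (only the holes occurring in the codomain can clash with indices in the domain, so $\FV$ and $\supp$ coincide for this purpose). So I would fix an arbitrary $p \in \dom(\eta^\sigma)$; by Definition~\ref{def:ren_inst} this forces $\sigma(p) \in \dom(\eta)$ and $\eta^\sigma(p) = \eta(\sigma(p))$, and, more generally, every index in $\dom(\eta^\sigma)$ (up to the trivial identity entries) is a $\sigma$-preimage of some index of $\dom(\eta)$. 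Writing $q := \sigma(p) \in \dom(\eta)$, the goal becomes $\supp(\eta(q)) \cap \dom(\eta^\sigma) = \emptyset$.

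I would then argue by contradiction: suppose some index $r$ belongs to $\supp(\eta(q)) \cap \dom(\eta^\sigma)$. From $r \in \dom(\eta^\sigma)$ we obtain $\sigma(r) \in \dom(\eta)$. Now I apply the two hypotheses with the index $q$, which lies in $\dom(\eta)$: permutability of $\eta$ gives $\supp(\eta(q)) \cap \dom(\eta) = \emptyset$, hence $r \notin \dom(\eta)$; and the assumption $\supp(\eta(q)) \cap \dom(\sigma) = \emptyset$ gives $r \notin \dom(\sigma)$, so $\sigma(r) = r$ by the total-extension convention on renamings. But then $\sigma(r) \in \dom(\eta)$ reads $r \in \dom(\eta)$, contradicting $r \notin \dom(\eta)$. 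Therefore no such $r$ exists, $\supp(\eta^\sigma(p)) \cap \dom(\eta^\sigma) = \emptyset$, and since $p$ was arbitrary, $\eta^\sigma$ is permutable.

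I do not expect any genuine obstacle here; the proof is pure bookkeeping. The only two points needing care are: first, identifying $\dom(\eta^\sigma)$ (up to trivial entries) with $\{q : \sigma(q) \in \dom(\eta)\}$ rather than with $\dom(\eta)$ itself; and second, using \emph{both} hypotheses in tandem — permutability of $\eta$ to exclude $r$ from $\dom(\eta)$, and disjointness of $\supp(\eta(q))$ from $\dom(\sigma)$ to make $\sigma$ act as the identity on $r$, which is exactly what lets the membership $\sigma(r) \in \dom(\eta)$ collapse to $r \in \dom(\eta)$ and close the contradiction.
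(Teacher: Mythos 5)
Your proof is correct, and it is essentially the same definition-unfolding argument as the paper's, but worked out more carefully. The paper's own proof only verifies that $\supp(\eta^\sigma(p)) \cap \dom(\sigma) = \emptyset$, which is just the hypothesis transported along $\sigma(p) \in \dom(\eta)$; it never checks disjointness from the rest of $\dom(\eta^\sigma)$, which also contains the indices of $\dom(\eta)$ left fixed by $\sigma$. Your contradiction argument supplies exactly the missing half: you use the hypothesis to force $\sigma(r) = r$ on any $r \in \supp(\eta(q))$, and then permutability of $\eta$ itself to exclude $r$ from $\dom(\eta)$, so that $r \notin \sigma^{-1}(\dom(\eta)) = \dom(\eta^\sigma)$. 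You are right that permutability of $\eta$ is genuinely needed (take $\sigma = \id$ and a non-permutable $\eta$: the hypothesis holds vacuously but $\eta^\sigma = \eta$ fails to be permutable); the lemma statement omits it, but it is guaranteed in the only place the lemma is used, namely case~1 of Lemma~\ref{lem:classification}, whose $\eta$ is assumed permutable. So your reading of the implicit hypothesis is the intended one, and your proof is, if anything, a repair of the paper's.
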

\begin{proof}
We need to prove that, for all $p \in \dom(\eta^\sigma)$, $\supp(\eta^\sigma(p)) \cap \dom(\sigma) = \emptyset$. If $p \in \dom(\eta^\sigma)$, then $\sigma(p) \in \dom(\eta)$; then, by hypothesis, we prove $\supp(\eta(\sigma(p))) \cap \dom(\sigma) = \emptyset$. Since $\eta(\sigma(p)) = \eta^\sigma(p)$, this proves the thesis. 
\end{proof}

\begin{lem}\label{lem:redrename_aux1}
For all contexts $C$, context instantiations $\eta$, and sets of hole indices $\cS$, there exist a context $D$, a context instantiation $\theta$, and a hole renaming $\sigma$ such that:
\begin{itemize}
\item $D\sigma = C$ and $D\theta\sigma = C\eta$
\item the holes in $D$ are linear and $\forall \ibox{q} \in \FV(D)$, $q \notin \cS$
\item $\dom(\theta) \cup \dom(\sigma) \subseteq \FV(D)$
\item $\theta$ is permutable
\end{itemize}
\end{lem}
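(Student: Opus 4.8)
The plan is to build $D$, $\sigma$, and $\theta$ explicitly by \emph{linearising} the holes of $C$. First I would fix a representative of $C$ and list its hole occurrences as $\ibox{p_1}, \dots, \ibox{p_n}$ (in, say, left-to-right order; the $p_i$ need not be distinct), and choose pairwise distinct indices $q_1, \dots, q_n$ that are fresh with respect to $\cS$, to the indices occurring in $C$, to $\dom(\eta)$, and --- this being the point that really matters --- to the finite set $\bigcup_{p \in \dom(\eta)} \supp(\eta(p))$ of hole indices appearing in the codomain of $\eta$ (a finite set, since $\eta$ is a finite map). Let $D$ be $C$ with its $i$-th hole occurrence relabelled from $\ibox{p_i}$ to $\ibox{q_i}$; let $\sigma$ send each $q_i$ to $p_i$ and be the identity elsewhere; and let $\theta$ send $q_i$ to $\eta(p_i)$ whenever $p_i \in \dom(\eta)$, leaving $q_i \notin \dom(\theta)$ otherwise.

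Then I would discharge the four bullet points. The equation $D\sigma = C$ is immediate, since $\sigma$ exactly undoes the relabelling; the holes of $D$ are the pairwise distinct $\ibox{q_1}, \dots, \ibox{q_n}$, hence linear, and each $q_i \notin \cS$ by the freshness choice; $\dom(\theta) \cup \dom(\sigma) \subseteq \metaset{q_1, \dots, q_n}$, all of whose elements are free in $D$; and $\theta$ is permutable because $\dom(\theta) \subseteq \metaset{q_1, \dots, q_n}$, whereas the free variables of each $\theta(q_i) = \eta(p_i)$ are ordinary term variables (which are not indices) together with the holes in $\supp(\eta(p_i))$, which are disjoint from $\metaset{q_1, \dots, q_n}$ by freshness. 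For the remaining equation $D\theta\sigma = C\eta$ I would argue position by position: $D$ and $C$ have the same tree and binder structure, and at the position of the $i$-th former hole $D\theta\sigma$ yields $\eta(p_i)\sigma$ when $p_i \in \dom(\eta)$ (and $\ibox{p_i}$ otherwise), whereas $C\eta$ yields $\eta(p_i)$ (resp.\ $\ibox{p_i}$); since $\dom(\sigma) = \metaset{q_1, \dots, q_n}$ is disjoint from every $\supp(\eta(p_i))$, we have $\eta(p_i)\sigma = \eta(p_i)$, and the parts of $D$ outside the holes are untouched by $\theta$ and $\sigma$. As noted in the remark after Lemma~\ref{lem:redtoredctxapp}, context instantiation is an operation on concrete syntax, but this causes no trouble here: because $D$ is $C$ up to the renaming of holes, plugging $\eta(p_i)$ into $\ibox{q_i}$ captures exactly the same bound variables as plugging it into $\ibox{p_i}$, so the equality holds literally for the chosen representatives. (Equivalently, one can run this as a structural induction on $C$, threading the set of fresh indices already used through the inductive calls on subterms so that they pick disjoint names.)

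The only genuinely delicate point is the freshness bookkeeping: the fresh indices $q_i$ must avoid not merely $\cS$ but also every hole index that can occur inside the codomain of $\eta$. If this failed, the hole renaming $\sigma$ --- which is applied \emph{after} $\theta$ --- would rewrite holes buried inside the plugged-in terms $\eta(p_i)$, breaking both $D\theta\sigma = C\eta$ and the permutability of $\theta$; this is exactly the phenomenon that forces the side condition $\supp(\eta(p)) \cap \dom(\sigma) = \emptyset$ in Lemma~\ref{lem:renamectxapp}. Everything else is a routine verification on the structure of $D$.
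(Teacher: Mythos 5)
Your construction is exactly the paper's: relabel the hole occurrences of $C$ with fresh distinct indices (fresh with respect to $\cS$ and to the holes occurring in the codomain of $\eta$) to obtain $D$ and $\sigma$ with $D\sigma = C$, and take $\theta = \eta^\sigma$; the paper then discharges $D\theta\sigma = C\eta$ by citing Lemma~\ref{lem:renamectxapp}, whereas you verify it positionwise, but you correctly identify the same freshness side condition $\supp(\eta(p)) \cap \dom(\sigma) = \emptyset$ that makes that lemma applicable. The proposal is correct and essentially identical to the paper's proof.
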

\begin{proof}[Proof sketch]
If $C$ has $n$ hole occurrences, we generate $n$ distinct indices $p_1,\ldots,p_n$ (which we take to be fresh with respect to $\cS$ and the free variables of the codomain of $\eta$)
and replace each hole occurrence within $C$ with a different $\ibox{p_i}$: this induces a context $D$ and a renaming $\sigma$ such that $D\sigma = C$.
By Lemma~\ref{lem:renamectxapp} we prove $C\eta = D\sigma\eta = D\eta^\sigma\sigma$ (we can apply this lemma thanks to the careful choice of the $p_i$). We take $\theta \triangleq \eta^\sigma$ and the remaining properties
follow easily (the permutability of $\theta$, again, descends from choosing sufficiently fresh indices $p_i$).
\end{proof}

\begin{lem}\label{lem:redrename_aux2}
Let $C_1,C_2$ be contexts and $\eta_1, \eta_2$ context instantiations. Then for all free variables $x$ and sets of hole indices $\cS$, there exist a context $D$, a permutable context instantiation $\theta$, and a hole renaming $\sigma$ such that:
\begin{itemize}
\item $D\sigma =_\alpha C_1\subst{x}{C_2}$ and $D\theta\sigma =_\alpha (C_1\eta_1)\subst{x}{C_2\eta_2}$
\item The holes in $D$ are linear and fresh with respect to $S$
\item For all $q \in \dom(\theta) \cup \dom(\sigma)$, $\ibox{q} \in \FV(D)$
\item $\theta$ is permutable
\end{itemize}
\end{lem}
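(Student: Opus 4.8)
The plan is to derive this from two applications of Lemma~\ref{lem:redrename_aux1}, one to the pair $(C_1,\eta_1)$ and one to $(C_2,\eta_2)$, and then to glue the two resulting contexts together by substitution, inserting a freshly-renamed copy of the second one at each free occurrence of $x$ in the first.

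First I would apply Lemma~\ref{lem:redrename_aux1} to $C_1$ and $\eta_1$ with a fresh set containing $\cS$ together with the free variables of the codomains of $\eta_1$ and $\eta_2$, obtaining $D_1,\theta_1,\sigma_1$ with $D_1\sigma_1 = C_1$, $D_1\theta_1\sigma_1 = C_1\eta_1$, linear holes in $D_1$ avoiding $\cS$, $\dom(\theta_1)\cup\dom(\sigma_1)\subseteq\FV(D_1)$, and $\theta_1$ permutable; then apply it again to $C_2$ and $\eta_2$ with a fresh set additionally containing $\FV(D_1)$, obtaining $D_2,\theta_2,\sigma_2$ with the analogous properties and with the holes of $D_2$ disjoint from those of $D_1$. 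Let $x$ occur freely $m$ times in $D_1$ (equivalently in $C_1$, since $\sigma_1$ only renames holes). For each of these $m$ occurrences I take an injective renaming $\rho_j$ of $\FV(D_2)$ onto a set of fresh indices, the $m$ images being pairwise disjoint and disjoint from $\FV(D_1)$, $\FV(D_2)$ and $\cS$, and define $D$ to be $D_1$ with its $j$-th free occurrence of $x$ replaced by $D_2\rho_j$. I set $\sigma$ to agree with $\sigma_1$ on $\FV(D_1)$ and to send $\rho_j(q)$ to $\sigma_2(q)$ (using total extensions); and I set $\theta$ to agree with $\theta_2\circ\rho_j^{-1}$ on the holes of each $D_2\rho_j$, and on each $p\in\dom(\theta_1)$ to return $\theta_1(p)$ with each free occurrence of $x$ replaced by a further freshly-renamed copy of $D_2\theta_2$. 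These extra copies are needed precisely because instantiating $\eta_1$ may create new free occurrences of $x$ that are not visible in $C_1$ itself; since $D_2\theta_2\sigma_2 = C_2\eta_2$ by Lemma~\ref{lem:redrename_aux1}, extending $\sigma$ on the holes of those copies by the corresponding renamed $\sigma_2$ turns them into $C_2\eta_2$.

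With this choice I would verify the four requirements. Linearity and freshness of the holes of $D$ are immediate, since the holes of $D_1$ and of the $D_2\rho_j$ are linear and pairwise disjoint and all avoid $\cS$; the containment $\dom(\theta)\cup\dom(\sigma)\subseteq\FV(D)$ holds by construction; permutability of $\theta$ follows from permutability of $\theta_1$ and $\theta_2$ together with the freshness of all the copies' holes. For $D\sigma$, the $D_1$-skeleton becomes $D_1\sigma_1 = C_1$ with the $x$-positions exposed, and each $D_2\rho_j$ collapses to $D_2\sigma_2 = C_2$, so $D\sigma =_\alpha C_1\subst{x}{C_2}$; the $\alpha$ is unavoidable because substituting $C_2$ under a binder of $C_1$ scoping over an occurrence of $x$ forces a renaming of that binder. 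For $D\theta\sigma$, the holes of $D_1$ are filled with the modified $\theta_1$-values, whose embedded copies of $D_2\theta_2$ become $C_2\eta_2$ after $\sigma$, while each $D_2\rho_j$ first becomes $D_2\theta_2$ under $\theta$ and then $C_2\eta_2$ under $\sigma$; together these reconstruct $C_1\eta_1$ with every free occurrence of $x$ — the original ones and those contributed by $\eta_1$ — replaced by $C_2\eta_2$, i.e.\ $D\theta\sigma =_\alpha (C_1\eta_1)\subst{x}{C_2\eta_2}$.

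The main obstacle is the simultaneous bookkeeping of hole-linearity and permutability under duplication: because $x$ may occur several times in $C_1$ and, worse, because instantiating $\eta_1$ can create further free occurrences of $x$, one copy of $D_2$ does not suffice; one must spread out enough freshly-renamed copies — some placed inside $D$, some inside the values of $\theta$ — and then pick $\sigma$ so that all of them are reconciled with $C_2$ and with $C_2\eta_2$ at once, while respecting the freshness constraints inherited from the two uses of Lemma~\ref{lem:redrename_aux1}. By comparison, the variable capture of $x$ by binders of $C_1$ is routine once the equalities are only required up to $\alpha$-equivalence.
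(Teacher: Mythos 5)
Your construction takes a genuinely different route from the paper: the paper proves this lemma by induction on the size of $C_1$ with a case analysis on its structure (variable, hole, $\lambda$-abstraction, application, \ldots), building $D$, $\theta$, $\sigma$ locally at each node, whereas you apply Lemma~\ref{lem:redrename_aux1} once to each of $(C_1,\eta_1)$ and $(C_2,\eta_2)$ and then glue the results. The gluing idea is plausible, and you correctly identify the main subtlety (occurrences of $x$ contributed by the codomain of $\eta_1$), but as described the construction has two genuine gaps. First, you extend $\sigma$ to the holes of the freshly-renamed copies of $D_2\theta_2$ that you place \emph{inside the codomain of $\theta$}. Those holes occur in $D\theta$ but not in $D$, so the stated conclusion that $\ibox{q}\in\FV(D)$ for every $q\in\dom(\theta)\cup\dom(\sigma)$ fails. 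The repair is what the paper in effect does in its hole case: the codomain of $\theta$ need not be kept in ``renamed'' form, so you should plug the finished term $C_2\eta_2=D_2\theta_2\sigma_2$ directly into $\theta_1(p)$, taking $\cS$ large enough that the holes of $C_2$ and of the codomain of $\eta_2$ avoid $\dom(\theta)$ so that permutability survives.

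Second, and more seriously, replacing ``each free occurrence of $x$'' in \emph{every} $\theta_1(p)$ by a copy of $C_2\eta_2$ is incorrect when the hole $\ibox{p}$ sits beneath a binder for $x$ inside $C_1$: in $(C_1\eta_1)\subst{x}{C_2\eta_2}$ those occurrences of $x$ are captured by that binder, hence are \emph{not} substituted, whereas your uniform treatment substitutes them, breaking the equation $D\theta\sigma =_\alpha (C_1\eta_1)\subst{x}{C_2\eta_2}$. Since deliberate variable capture by context instantiation is precisely the phenomenon this lemma exists to control, this cannot be dismissed by a bound-variable convention. The paper's structural induction handles it in the $\lambda y.C_0$ case by alpha-renaming the binder to a fresh $y^*\notin\metaset{x}\cup\FV(C_2\eta_2)$ and renaming the codomain of $\eta_1$ in step, so that by the time a hole is reached, any occurrence of $x$ that would be captured has already been renamed away; a global gluing proof would need an analogous per-hole bookkeeping of which binders for $x$ lie above each hole of $D_1$.
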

\begin{proof}
The proof is by induction on the size of $C_1$, followed by a case analysis on its structure. Here we consider the variable cases, lambda as a template for binder cases, and application as a template for cases with multiple subterms.
\begin{itemize}
\item If $C_1 = x$, we have $x\subst{x}{C_2} = C_2$ and $(x\eta_1)\subst{x}{C_2\eta_2} = C_2\eta_2$. By Lemma~\ref{lem:redrename_aux2}, we find $D,\theta,\sigma$ such that $D\sigma = C_2$, $D\theta\sigma = C_2\eta_2$, the holes in $D$ are linear and arbitrarily fresh, for all $q \in \dom(\theta) \cup \dom(\sigma)$, $\ibox{q} \in \FV(D)$, and $\theta$ is permutable; this proves the thesis.
\item If $C_1 = \ibox{p} \in \dom(\eta_1)$, then $\ibox{p}\subst{x}{C_2} = \ibox{p}$ and $\ibox{p}\eta_1\subst{x}{C_2\eta_2} = \eta_1(p)\subst{x}{C_2\eta_2}$;
	for all sets of hole indices $\cS$, we choose $p^* \notin \cS$ such that $\ibox{p^*} \notin \FV(\eta_1(p)\subst{x}{C_2\eta_2})$; finally we choose $D = \ibox{p^*}$, $\theta = [p^* \mapsto \eta_1(p)\subst{x}{C_2\eta_2}]$, and $\sigma = [p^* \mapsto p]$ and prove the thesis.
\item If $C_1$ is a free variable $y$ not covered by the previous cases, we choose $D = y$, $\theta = []$ (the empty instantiation), $\sigma = []$ (the empty renaming) to trivially prove the thesis.
\item If $C_1 = \lambda y. C_0$, let us choose a variable $y^* \notin \metaset{x} \cup \FV(C_2\eta_2)$, such that $y^*$ is not a hole; let us define the following abbreviations:
\[
C_0^* \triangleq C_0\subst{y}{y^*}
\qquad
\eta_1^* \triangleq [p \mapsto \eta_1(p)\subst{y}{y^*} | p \in \dom(\eta_1)]
\]
Since $C_0^*$ is equal to $C_0$ up to a renaming, it is smaller than $C_1$ and by induction hypothesis we get that there exist $D_0, \theta_0, \sigma_0$ such that $D_0\sigma_0 =_\alpha C_0^*\subst{x}{C_2}$
and $D_0\theta_0\sigma_0 =_\alpha C_0^*\eta_1^*\subst{x}{C_2\eta_2}$, where the holes in $D_0$ are linear and arbitrarily fresh, for all $q \in \dom(\theta_0) \cup \dom(\sigma_0)$ we have $\ibox{q} \in \FV(D_0)$,
and $\theta_0$ is permutable. Then we can choose $D = \lambda y^*.D_0$, $\theta = \theta_0$, $\sigma = \sigma_0$ and show that:
\[
\begin{array}{rll}
D\sigma & = & (\lambda y^*.D_0)\sigma_0
\\
& = & \lambda y^*.D_0\sigma_0
\\
& =_\alpha & \lambda y^*.C_0^*\subst{x}{C_2}
\\
& = & (\lambda y.C_0)\subst{x}{C_2}
\\
D\theta\sigma & = & (\lambda y^*.D_0)\theta_0\sigma_0
\\
& = & \lambda y^*.D_0\theta_0\sigma_0
\\
& =_\alpha & \lambda y^*.C_0^*\eta_1^*\subst{x}{C_2\eta_2}
\\
& = & (\lambda y.C_0\eta_1)\subst{x}{C_2\eta_2}
\\
& = & (\lambda y.C_0)\eta_1\subst{x}{C_2\eta_2}
\end{array}
\]
We can easily show that the other required properties of $D, \theta, \sigma$ are verified, thus proving the thesis.
\item If $C_1 = (C_{11}~C_{12})$, then
\begin{mathpar}
 (C_{11}~C_{12})\subst{x}{C_2} = (C_{11}\subst{x}{C_2} C_{12}\subst{x}{C_2})
 \quad\text{and} \and
 (C_{11}~C_{12}\eta_1)\subst{x}{C_2\eta_2} = (C_{11}\eta_1\subst{x}{C_2\eta_2} C_{12}\eta_1\subst{x}{C_2\eta_2});
\end{mathpar}
by the induction hypothesis, we find $D_{11},\theta_{11},\sigma_{11}$ and $D_{12},\theta_{12},\sigma_{12}$ such that $D_{11}\sigma_{11} =_\alpha C_{11}\subst{x}{C_2}$, $D_{11}\theta_{11}\sigma_{11} =_\alpha C_{11}\eta_1\subst{x}{C_2\eta_2}$: we are allowed to choose these expressions in such a way that the holes in the $D_i$ are fresh, so we ensure that the sets of holes of $D_{11}$ and $D_{12}$ are disjoint -- this in turn ensures that $\theta_{11}, \theta_{12}$ and $\sigma_{11}, \sigma_{12}$ can be combined together, since their domains are also disjoint. Then we choose $D = (D_{11}~D_{12})$, $\theta = \theta_{11}\theta_{12}$, $\sigma = \sigma_{11}\sigma_{12}$ and show that;
\[
\begin{array}{rll}
D\sigma & = & (D_{11}~D_{12})\sigma_{11}\sigma_{12}
\\
& = & (D_{11}\sigma_{11}\sigma_{12}~D_{12}\sigma_{11}\sigma_{12})
\\
& = & (D_{11}\sigma_{11}~D_{12}\sigma_{12})
\\
& =_\alpha & (C_{11}\subst{x}{C_2}~C_{12}\subst{x}{C_2})
\\
D\theta\sigma & = & (D_{11}~D_{12})\theta_{11}\theta_{12}\sigma_{11}\sigma_{12}
\\
& = & (D_{11}\theta_{11}\theta_{12}\sigma_{11}\sigma_{12}~D_{12}\theta_{11}\theta_{12}\sigma_{11}\sigma_{12})
\\
& = & (D_{11}\theta_{11}\sigma_{11}~D_{12}\theta_{12}\sigma_{12})
\\
& =_\alpha & (C_{11}\eta_1\subst{x}{C_2\eta_2}~C_{12}\eta_1\subst{x}{C_2\eta_2})
\end{array}
\]
Furthermore, the induction hypothesis provides enough information on the $D_i,\theta_i,\sigma_i$ to guarantee that the holes in $D$ are linear and arbitrarily fresh,
for all $q \in \dom(\theta) \cup \dom(\sigma)$ we have $\ibox{q} \in \FV(D)$, and $\theta$ is permutable, as required.
\qedhere
\end{itemize}
\end{proof}

\begin{lem}\label{lem:redrename_aux4}
Let $C, C'$ be contexts such that $C \red C'$; then, for all context instantiations $\eta$, there exist a context $D$, context instantiation $\theta$, and renaming $\sigma$ such that
$D\sigma =_\alpha C'$ (and consequently $C \ared{\sigma} D$) and $C\eta \ared{\sigma} D\theta$.
\end{lem}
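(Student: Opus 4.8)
The plan is to argue by induction on the derivation of $C \red C'$, i.e.\ by structural induction on the congruence closure followed by a case analysis on the basic contraction used as the base case. In every case, once $D$ and $\sigma$ with $D\sigma =_\alpha C'$ have been produced, the parenthetical claim $C \ared{\sigma} D$ is immediate: $C \red C'$ and reduction respects alpha-equivalence, so $C \red D\sigma$, which is by definition $C \ared{\sigma} D$.

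For the congruence cases I would combine the induction hypothesis on the immediate subterm being reduced with applications of Lemma~\ref{lem:redrename_aux1} to the inert subterms: the latter replaces each inert subterm $N$ with a linearized context $D_N$ having \emph{fresh} holes, a renaming $\sigma_N$, and a permutable instantiation $\theta_N$ with $D_N\sigma_N = N$ and $D_N\theta_N\sigma_N = N\eta$. Taking the fresh holes pairwise disjoint across subterms lets the per-subterm renamings and instantiations be merged; I would then assemble $D$, $\sigma$, $\theta$ from the pieces, the required reduction $C\eta \red D\theta\sigma$ coming from the congruence rule applied to the reduction supplied by the induction hypothesis. For the congruences crossing a binder ($\lambda x.{-}$ and a comprehension generator) I would first alpha-rename the bound variable to some $x^{*}$ fresh for the free variables of the codomain of $\eta$, so that instantiation commutes with the binder, and then proceed as before.

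For the base cases, a case analysis on the contraction rules of Figure~\ref{fig:normNRCl}. The rules involving neither a binder nor a substitution (applied-constant evaluation, record projection, the empty-collection rules, the two distributivity rules, the $\plwhere$ simplifications and sequencing, the emptiness-test rule) I would handle uniformly: linearize the holes of $C$ with Lemma~\ref{lem:redrename_aux1}, so that $C\eta = D_0\theta_0\sigma_0$ with $D_0\sigma_0 = C$ and $\theta_0$ permutable; mimic the contraction on $D_0$, absorbing any hole duplication (as in $\comprehension{M \ocup N \mid x \gets R} \ured \comprehension{M \mid x \gets R} \ocup \comprehension{N \mid x \gets R}$) into the renaming; and conclude. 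The rules that move a hole across a binder without a substitution (comprehension unnesting and the $\plwhere$/comprehension commutation) I would treat the same way, after alpha-renaming the bound variable to be fresh for the codomain of $\eta$ so the reduction is not blocked in $C\eta$.

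The delicate base cases, and the main obstacle, are the $\beta$-rule $(\lambda x.M)\,N \ured M\subst{x}{N}$ and the comprehension-singleton rule $\comprehension{M \mid x \gets \setlit{N}} \ured M\subst{x}{N}$: these introduce a substitution and so exhibit the ``special reduction'' phenomenon that forced Lemma~\ref{lem:redtoredctxapp} to exclude them, since in general $C\eta$ does not reduce to $C'\eta$ when the codomain of $\eta$ contains free occurrences of the substituted variable. Here the weaker conclusion --- $\theta$ need not be related to $\eta$ --- is exactly what Lemma~\ref{lem:redrename_aux2} supplies. I would alpha-rename the bound variable to $x^{*}$ fresh for the codomain of $\eta$, so that $C\eta$ contracts in one step to $(M^{*}\eta)\subst{x^{*}}{N\eta}$ with $M^{*} = M\subst{x}{x^{*}}$; then apply Lemma~\ref{lem:redrename_aux2} with $C_1 := M^{*}$, $C_2 := N$ and $\eta_1 := \eta_2 := \eta$ to obtain $D, \theta, \sigma$ with $D\sigma =_\alpha M^{*}\subst{x^{*}}{N} =_\alpha C'$ and $D\theta\sigma =_\alpha (M^{*}\eta)\subst{x^{*}}{N\eta}$, whence $C\eta \red D\theta\sigma$, i.e.\ $C\eta \ared{\sigma} D\theta$, as required. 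Beyond these cases the remaining work is bookkeeping: keeping the binder alpha-renamings consistent so as to avoid capture from the codomain of $\eta$, and choosing fresh hole indices so that the subterm renamings and instantiations combine safely; the substitution cases are where this is genuinely subtle, and Lemma~\ref{lem:redrename_aux2} is the device that dispatches them.
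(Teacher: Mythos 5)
Your overall strategy coincides with the paper's: induction on the derivation of $C \red C'$, Lemma~\ref{lem:redrename_aux1} to linearize holes in the congruence cases and in the substitution-free contractions, and Lemma~\ref{lem:redrename_aux2} for the two contractions that perform a substitution. The flaw is in your treatment of precisely those two delicate cases. You alpha-rename the bound variable to an $x^{*}$ fresh for the codomain of $\eta$ and then claim that $C\eta$ contracts in one step to $(M^{*}\eta)\subst{x^{*}}{N\eta}$. But context instantiation is capture-permitting and defined on concrete syntax: in $C\eta = (\lambda x.(M\eta))~(N\eta)$ the free occurrences of $x$ in the terms $\eta(p)$ plugged into holes of $M$ are captured by $\lambda x$, so the genuine one-step contractum is $(M\eta)\subst{x}{N\eta}$, in which those captured occurrences are also replaced by $N\eta$. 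Renaming the binder to $x^{*}$ without renaming those occurrences is \emph{not} an alpha-renaming of $C\eta$, and $(M^{*}\eta)\subst{x^{*}}{N\eta}$ leaves them untouched. The paper's own running example witnesses the failure: for $C = (\lambda z.\ibox{p})~N$ and $\eta = [p \mapsto z]$ we have $C\eta = (\lambda z.z)~N \red N$, whereas your construction yields $D\theta\sigma =_\alpha z\subst{z^{*}}{N} = z$, so $C\eta \not\red D\theta\sigma$. By making $x^{*}$ fresh for the codomain of $\eta$ you neutralize exactly the mechanism of Lemma~\ref{lem:redrename_aux2} (whose hole case packs the entire substituted term $\eta_1(p)\subst{x}{C_2\eta_2}$ into $\theta$) that makes the weaker conclusion attainable.

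The repair is simply to drop the renaming: apply Lemma~\ref{lem:redrename_aux2} with $C_1 := M$, $C_2 := N$, the \emph{original} bound variable $x$, and $\eta_1 := \eta_2 := \eta$, obtaining $D\sigma =_\alpha M\subst{x}{N} = C'$ and $D\theta\sigma =_\alpha (M\eta)\subst{x}{N\eta}$, which is the actual contractum of $C\eta$; this is what the paper does. Renaming the binder is needed only where a redex of $C$ can be \emph{blocked} in $C\eta$, as in comprehension unnesting, where the side condition $y \notin \FV(C_1)$ may fail for $C_1\eta$; there the paper renames $y$ together with the captured occurrences in the codomain of $\eta$ (passing to $\eta^{*}$), which \emph{is} an alpha-renaming of the whole instantiated term. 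Beta and comprehension-singleton redexes are never blocked, so no renaming is called for. The same caution applies to your congruence cases: instantiation already commutes with binders precisely because capture is permitted, so no freshening is needed there either, and freshening the binder alone would again change $C\eta$.
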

\begin{proof}
By structural induction on the derivation of $C \red C'$. The property we need to prove essentially states that any reduction of $C$ can still be performed after applying any instantiation $\eta$;
however, due to variable capture and the possibility that some redexes of $C$ may be blocked in $C\eta$, the statement is complicated by explicit alpha-conversions and hole renamings.
We present here the two interesting cases of the proof:
\begin{itemize}
\item $(\lambda x.C_1)~C_2 \red C_1\subst{x}{C_2}$: by Lemma~\ref{lem:redrename_aux2} we obtain $D,\theta,\sigma$ such that $D\sigma =_\alpha C_1\subst{x}{C_2}$ and $D\theta\sigma =_\alpha (C_1\eta)\subst{x}{C_2\eta}$; since $((\lambda x.C_1)~C_2)\eta = (\lambda x.C_1\eta)~(C_2\eta) \red (C_1\eta)\subst{x}{C_2\eta} =_\alpha D\theta\sigma$, we have $((\lambda x.C_1)~C_2)\eta \ared{\sigma} D\theta$.
\item $\comprehension{C_1 \mid x \gets \comprehension{C_2 \mid y \gets C_3}} \red \comprehension{C_1 \mid y \gets C_3, x \gets C_2}$, where $y \notin \FV(C_1)$: let us choose a variable $y^* \notin \FV(C_1\eta)$ and define $C^*_2 \triangleq C_2\subst{y}{y^*}$ and $\eta^* = [ p \mapsto \eta(p)\subst{y}{y^*} | p \in \dom(\eta)]$: then we can alpha-rename the contractum as
\[
\comprehension{C_1 \mid y \gets C_3, x \gets C_2} =_\alpha \comprehension{C_1 \mid y^* \gets C_3, x \gets C^*_2}
\]
By repeated applications of Lemma~\ref{lem:redrename_aux1}, we obtain contexts $C'_1,C'_2,C'_3$, context instantiations $\eta'_1,\eta'_2,\eta'_3$, and renamings $\sigma_1,\sigma_2,\sigma_3$, such that $\comprehension{C'_1 \mid y \gets C'_3, x \gets C'_2}$ has linear, arbitrarily fresh holes, for $i =1,2,3$ we have $\dom(\eta'_i)\cup \dom(\sigma_i) \subseteq \FV(C'_i)$ and $\eta'_i$ is permutable, and such that:
\[
\begin{array}{c}
C'_1\sigma_1 = C_1 \qquad C'_2\sigma_1 = C^*_2 \qquad \qquad C'_3\sigma_3 = C_3
\\
C'_1\eta'_1\sigma_1 = C_1\eta \qquad C'_2\eta'_2\sigma_2 = C^*_2\eta^* \qquad C'_3\eta'_3\sigma_3 = C_3\eta
\end{array}
\]
Then we take $D \triangleq \comprehension{C'_1 \mid y^* \gets C'_3, x \gets C'_2}$, $\theta = \eta'_1\eta'_2\eta'_3$ and $\sigma = \sigma_1\sigma_2\sigma_3$. We prove:
\[
\begin{array}{rll}
D\sigma & = & \comprehension{C'_1\sigma_1 \mid y^* \gets C'_3\sigma_3, x \gets C'_2\sigma_2}
\\
& = & \comprehension{C_1 \mid y^* \gets C_3, x \gets C^*_2}
\\
& =_\alpha & \comprehension{C_1 \mid y \gets C_3, x \gets C_2}
\\
D\theta\sigma & = & \comprehension{C'_1\eta'_1\sigma_1 \mid y^* \gets C'_3\eta'_3\sigma_3, x \gets C'_2\eta'_2\sigma_2}
\\
& = & \comprehension{C_1\eta \mid y^* \gets C_3\eta, x \gets C^*_2\eta^*}
\end{array}
\]
By the last equality, we have $\comprehension{C_1 \mid y^* \gets C_3, x \gets C^*_2} \red D\theta\sigma$, which by definition is $\comprehension{C_1 \mid y^* \gets C_3, x \gets C^*_2} \ared{\sigma} D\theta$.
\qedhere
\end{itemize}
\end{proof}

\noindent
The following result, like many others in the rest of this section, proceeds by well-founded induction; we will use the following notation to represent well-founded relations:
\begin{itemize}
\item $<$ stands for the standard less-than relation on $\bbN$, which is well-founded;
\item $\lessdot$ is the lexicographic extension of $<$ to $k$-tuples in $\bbN^k$ (for a given $k$), also well-founded;
\item $\prec$ will be used to provide a decreasing metric that depends on the specific proof: such metrics are defined as subsets of $\lessdot$ and are thus well-founded.
\end{itemize}

\begin{lem}\label{lem:maxredctxapp}\label{lem:ctxappsn}\label{lem:Kappsn}
Let $C$ be a context and $\eta$ a context instantiation such that $C\eta \in \SN$. Then we have:
\begin{enumerate}
\item $C \in \SN$
\item $\maxred{C} \leq \maxred{C\eta}$
\item for all $p \in \dom(\eta)$, if $\ibox{p} \in \FV(C)$, then $\eta(p) \in \SN$.
\end{enumerate}
\end{lem}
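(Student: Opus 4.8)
The plan is to dispatch item~(3) directly and to prove items~(1) and~(2) together by well-founded induction on $\maxred{C\eta}$ (which is defined since $C\eta\in\SN$). For item~(3): if $\ibox{p}\in\FV(C)$ and $p\in\dom(\eta)$, then $\eta(p)$ occurs as a syntactic subterm of $C\eta$ --- some of its free variables may be captured by binders of $C$, but this is immaterial, since the contraction rules of Figure~\ref{fig:normNRCl} do not distinguish free from bound variables of the ambient term. Hence any reduction of $\eta(p)$ lifts, through the congruence rules of Figure~\ref{fig:congruenceNRCl}, to a reduction of $C\eta$ of the same length, and from $C\eta\in\SN$ we conclude $\eta(p)\in\SN$.

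For items~(1) and~(2), by the definition of $\maxred{\cdot}$ it suffices to show that every one-step reduct $C'$ of $C$ satisfies $C'\in\SN$ and $\maxred{C'}<\maxred{C\eta}$; then $C\in\SN$ and $\maxred{C}\le\maxred{C\eta}$ follow at once. So fix $C\red C'$. Because of variable capture and hole duplication we cannot simply claim $C\eta\red C'\eta$; instead we apply Lemma~\ref{lem:redrename_aux4} to obtain a context $D$, an instantiation $\theta$, and a renaming $\sigma$ with $D\sigma=_\alpha C'$ (so that $C\ared{\sigma}D$) and $C\eta\ared{\sigma}D\theta$. Unfolding the definition of renaming reduction gives $C\eta\red D\theta\sigma$, so $D\theta\sigma\in\SN$ and, by Lemma~\ref{lem:redmaxred}, $\maxred{D\theta\sigma}<\maxred{C\eta}$. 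The point is then that $D\theta\sigma$ is again a context application: since the construction in Lemma~\ref{lem:redrename_aux4} (via Lemmas~\ref{lem:redrename_aux1}--\ref{lem:redrename_aux2}) yields $D$ with linear, fresh holes and $\dom(\sigma)$ avoiding the holes occurring in the codomain of $\theta$, and $\theta$ acts uniformly on the holes of $D$ that $\sigma$ merges, one checks that $D\theta\sigma=(D\sigma)\rho$ for a suitable instantiation $\rho$ (which differs from $\eta$ only by the variable substitutions introduced along the way). Applying the induction hypothesis to the pair $(D\sigma,\rho)$ --- whose application is strongly normalizing with maximal reduction length $<\maxred{C\eta}$ --- yields $D\sigma\in\SN$ and $\maxred{D\sigma}\le\maxred{D\theta\sigma}$. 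Since $C'=_\alpha D\sigma$, we conclude $C'\in\SN$ and $\maxred{C'}=\maxred{D\sigma}\le\maxred{D\theta\sigma}<\maxred{C\eta}$, as required.

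The main obstacle is exactly the failure of $C\red C'$ to imply $C\eta\red C'\eta$: one is forced through the renaming-reduction machinery of Lemmas~\ref{lem:contred}--\ref{lem:redrename_aux4}, and the genuinely delicate bookkeeping lies in showing that the term $D\theta\sigma$ reached from $C\eta$ is the application of (an alpha-variant of) the contractum $C'$ to some instantiation, so that the induction hypothesis can be invoked on a pair whose application has strictly smaller maximal reduction length. Once that correspondence is in place, well-foundedness is simply that of $<$ on $\bbN$, and the closing inequality $\maxred{C}\le\maxred{C\eta}$ is immediate from the definition of $\maxred{\cdot}$.
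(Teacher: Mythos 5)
Your handling of item~(3) and the skeleton of your argument for items~(1)--(2) --- induction on $\maxred{C\eta}$, invoking Lemma~\ref{lem:redrename_aux4} for each contraction $C \red C'$ to obtain $D,\theta,\sigma$ with $D\sigma =_\alpha C'$ and $C\eta \ared{\sigma} D\theta$ --- coincide with the paper's proof. The gap is the step ``one checks that $D\theta\sigma = (D\sigma)\rho$ for a suitable instantiation $\rho$'', justified by the claim that $\theta$ acts uniformly on the holes of $D$ that $\sigma$ merges. That claim is false: the construction behind Lemma~\ref{lem:redrename_aux4} (via Lemma~\ref{lem:redrename_aux2}) alpha-renames a binder of the redex to avoid capturing the substituted subterm and correspondingly renames the codomain of the instantiation \emph{only for the hole occurrences lying under that binder}, so two occurrences $\ibox{q_1},\ibox{q_2}$ with $\sigma(q_1)=\sigma(q_2)=p$ may receive $\theta(q_1)\neq\theta(q_2)$. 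For instance, with $C=(\lambda x.(\ibox{p}~(\lambda y.(\ibox{p}~x))))~\pure{M}$, $y\in\FV(\pure{M})$ and $\eta=[p\mapsto y]$, the contractum of $C\eta$ is $y~(\lambda y^*.(y^*~\pure{M}))$: the outer occurrence of $\ibox{p}$ must be filled with the free variable $y$ and the inner one with the renamed bound variable $y^*$. Since a context instantiation assigns a single term to the index $p$, no representative $D'=_\alpha C'$ and no $\rho$ (however its codomain is renamed) satisfies $D'\rho = D\theta\sigma$, so the pair $(D\sigma,\rho)$ on which you want to invoke the induction hypothesis does not exist.

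The repair is what the paper does: apply the induction hypothesis to the pair $(D,\theta)$ itself. This is legitimate because $C\eta\ared{\sigma}D\theta$ gives, by Corollary~\ref{cor:aredmaxred}, that $\maxred{D\theta}$ is defined and $\maxred{D\theta}<\maxred{C\eta}$; the induction hypothesis then yields $D\in\SN$ and $\maxred{D}\le\maxred{D\theta}$. Since $C'=_\alpha D\sigma$ and $\sigma$ merely renames hole variables --- which, being free variables, neither block nor create redexes --- we get $\maxred{C'}=\maxred{D\sigma}=\maxred{D}<\maxred{C\eta}$, and your concluding step goes through unchanged. With that substitution your argument matches the paper's.
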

\begin{proof}
Property 3 follows immediately by induction on $\maxred{C\eta}$ by noticing that, since $\ibox{p} \in \FV(C)$ implies that $\eta(p)$ appears as a subexpression of $C\eta$, and since reduction is defined by congruence closure, every reduction of $\eta(p)$ can be mimicked by a corresponding reduction within $C\eta$.

To prove the first two properties, we proceed by well-founded induction on $(C,\eta)$ using the metric:
\[
(C_1,\eta_1) \prec (C_2,\eta_2) \iff \exists \sigma: C_2\eta_2 \ared{\sigma} C_1\eta_1
\]
We consider all the possible contractions $C \red C'$. By Lemma~\ref{lem:redrename_aux4}, we find $D,\sigma,\theta$ such that $C\eta \ared{\sigma} D\theta$ and $D\sigma =_\alpha C'$;
consequently, $C \ared{\sigma} D$. By induction hypothesis we obtain $D \in \SN$ and $\maxred{D} \leq \maxred{D\theta} < \maxred{C\eta}$; we easily prove $D\sigma =_\alpha C' \in \SN$ and $\maxred{C'} = \maxred{D}$.
Furthermore, since $\maxred{C} = 1 + \max_{C' : C \red C'}\maxred{C'}$ and for all such $C'$ we have proved $\maxred{C'} < \maxred{C\eta}$, we get $\maxred{C} \leq \maxred{C\eta}$.
\end{proof}

A similar property about the composition of continuations and frames follows immediately.
\begin{cor}\label{lem:maxredsubK}
If $Q \compop{p} F \in \SN$, then $Q \in \SN$ and $\maxred{Q} \leq \maxred{Q \compop{p} F}$.
\end{cor}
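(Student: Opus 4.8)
The plan is to recognise that the composition $Q \compop{p} F$ is, by its very definition, an instance of context application. By Definition~\ref{def:frame}, $Q \compop{p} F = Q[p \mapsto F^p]$, which is exactly $Q\eta$ for the single-entry context instantiation $\eta \triangleq [p \mapsto F^p]$. Since Definition~\ref{def:context} explicitly permits the terms in the codomain of a context instantiation to themselves contain holes, $\eta$ is a bona fide context instantiation even though $F^p$ contains the hole $\ibox{p}$; and plugging this single term into $\ibox{p}$ inside $Q$ is precisely what $Q \compop{p} F$ denotes.

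Given this observation, the corollary follows immediately by applying Lemma~\ref{lem:maxredctxapp} with $C := Q$ and $\eta$ as above: from the hypothesis $Q \compop{p} F = Q\eta \in \SN$, part~1 of that lemma yields $Q \in \SN$, and part~2 yields $\maxred{Q} \leq \maxred{Q\eta} = \maxred{Q \compop{p} F}$, which is exactly the claim. No separate induction or case analysis is needed, as all the work has already been carried out in the proof of Lemma~\ref{lem:maxredctxapp}.

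There is no real obstacle here; the only subtlety worth flagging is that $\eta = [p \mapsto F^p]$ is generally \emph{not} permutable, since $\ibox{p} \in \FV(F^p)$. One should therefore check that Lemma~\ref{lem:maxredctxapp} is stated for arbitrary context instantiations rather than only permutable ones — which it is. Once this is granted, the corollary is a one-line consequence.
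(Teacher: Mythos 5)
Your proof is correct and is essentially identical to the paper's: the paper also observes that $Q \compop{p} F = Q[p \mapsto F^p]$ by definition and then invokes Lemma~\ref{lem:maxredctxapp} with $\eta = [p \mapsto F^p]$. Your added remark that this $\eta$ is not permutable, and that Lemma~\ref{lem:maxredctxapp} does not require permutability, is a valid and worthwhile sanity check, though the paper does not spell it out.
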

\begin{proof}
By definition, $Q \compop{p} F = Q[p \mapsto F^p]$: then we use Lemma~\ref{lem:maxredctxapp}, with $\eta = [p \mapsto F^p]$.
\end{proof}

\begin{lem}\label{lem:ctxappcomb}
Let $Q$ be an auxiliary continuation, and let $\eta, \theta$ be context instantiations s.t.\ their union is permutable. If $Q\eta \in \SN$ and $Q\theta \in \SN$, then $Q\eta\theta \in \SN$.
\end{lem}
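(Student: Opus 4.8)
The plan is to fold $Q\eta\theta$ into a single application. Since $\eta \cup \theta$ is permutable, a hole of $Q$ filled by $\eta$ receives a term with no further holes in $\dom(\eta)\cup\dom(\theta)$, so $\theta$ can only act on the holes of $Q$ that $\eta$ left untouched; hence $Q\eta\theta = Q\rho$, where $\rho$ is the merge of $\eta$ and $\theta$ restricted to $\supp(Q)$, with $\eta$ taking priority on any overlap (matching the left-to-right reading of $Q\eta\theta$), and $\rho$ is again permutable with $\dom(\rho) \subseteq \supp(Q)$. By Lemma~\ref{lem:maxredctxapp}, $\maxred{Q\eta}$ and $\maxred{Q\theta}$ are defined and every $\eta(p)$, $\theta(p)$ with $p \in \supp(Q)$ is strongly normalizing. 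The proof of $Q\rho \in \SN$ then proceeds by well-founded induction: for an arbitrary one-step reduct $Q\rho \red N$ we classify the reduction via Lemma~\ref{lem:classification} and produce a smaller instance of the statement from which $N \in \SN$ follows.

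First I would dispatch the easy cases. A reduction within $\rho$ takes place inside some $\eta(p)$ or $\theta(p)$; it is mirrored inside $Q\eta$ (resp.\ $Q\theta$), strictly decreasing $\maxred{Q\eta}$ (resp.\ $\maxred{Q\theta}$) and leaving the other fixed, and $N$ is again $Q\eta'\theta$ (resp.\ $Q\eta\theta'$) with the union still permutable, so the induction hypothesis applies. A non-special reduction of the continuation $Q$, by Lemma~\ref{lem:classification}, is a $\sigma$-reduction $Q \ared{\sigma} Q'$ with $\dom(\sigma)$ fresh for the codomains of $\eta$ and $\theta$; by Lemma~\ref{lem:ctxred_ctxappred} it can be performed inside $Q\eta$ and inside $Q\theta$ (passing the renaming through with Lemma~\ref{lem:renamectxapp} and absorbing the $\alpha$-conversions), so both maximal reduction lengths drop, $N = Q'\eta^\sigma\theta^\sigma$, and the union $\eta^\sigma \cup \theta^\sigma$ is permutable. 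A special reduction contracts a singleton $\setlit{\pure L}$ that occurs inside $Q$ itself, so the same redex is present in $Q\eta$ and in $Q\theta$; firing it and pushing $\subst{x}{\pure L}$ through $Q$ and through the instantiations (a pure term adds no holes, so permutability is preserved) strictly decreases both maximal reduction lengths and again yields $N$ in the required shape.

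The hard case is a reduction at the interface: $Q = Q_0 \compop{p} F$ with $p \in \supp(Q_0)\cap\dom(\rho)$, where the composite $F^p[p \mapsto \rho(p)]$ is a redex whose contractum $M$ re-assembles into $N = Q_0[p \mapsto M]\rho_{\lnot p}$. When $\rho(p)$ is $\emptyset$ or a singleton, $M$ and hence $Q_0[p\mapsto M]$ are again auxiliary continuations, $N$ has the required shape, and the maximal reduction length of whichever of $Q\eta$, $Q\theta$ supplied $\rho(p)$ strictly decreases (the same redex fires there). The delicate subcase is when $\rho(p)$ is a union or a one-sided conditional: the frame then distributes over the branches, and the naive $Q_0[p\mapsto M]$ need not be an auxiliary continuation, since the branches come from the instantiation and may have arbitrary shape. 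The plan is to re-associate by introducing fresh hole indices: e.g.\ for $F = \comprehension{O \mid x \gets \Box}$ and $\rho(p) = N_1 \ocup N_2$, write $N = Q''[q_1 \mapsto N_1, q_2 \mapsto N_2]\,\rho_{\lnot p}$ with $Q'' = Q_0[p \mapsto \comprehension{O \mid x \gets \ibox{q_1}} \ocup \comprehension{O \mid x \gets \ibox{q_2}}]$, which \emph{is} an auxiliary continuation; then one puts $q_1, q_2$ on the side that supplied $\rho(p)$, so that $Q''$ applied to that side is literally a one-step reduct of $Q\eta$ (hence strongly normalizing with strictly smaller $\maxred$), while $Q''$ applied to the other side is obtained from the strongly normalizing $Q\theta$ by replacing a collection-typed subterm with a fresh hole and plugging strongly normalizing terms into collection positions of an auxiliary continuation, hence strongly normalizing by Lemma~\ref{lem:maxredctxapp} (plus a short auxiliary argument that the collection eliminators appearing in $Q''$ preserve $\SN$ on such arguments).

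The main obstacle is making this re-association rigorous. The frame-distribution step duplicates subterms, so it can inflate both the structural size of the continuation and the naive maximal-reduction-length measure; the well-founded measure driving the outer induction therefore has to be chosen so that this step still counts as progress (reducing it to a strict decrease of $\maxred$ on the side that supplies $\rho(p)$, which does hold, is the natural route). The remaining work is the bookkeeping: reconciling the $\alpha$-renamings and hole renamings produced by Lemmas~\ref{lem:redtoredctxapp}, \ref{lem:ctxred_ctxappred} and~\ref{lem:redrename_aux4}, so that "the same" reduction is performed coherently in $Q\rho$, in $Q\eta$ and in $Q\theta$, and checking that holes of $F$ lying in $\dom(\theta)$ do not spoil the permutability of the re-partitioned instantiations.
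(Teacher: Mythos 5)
Your overall skeleton (well-founded induction plus the case split of Lemma~\ref{lem:classification}) matches the paper, and your first three cases are essentially the paper's. The gap is in the interface case, and it is the case that carries the whole lemma. Your plan keeps the frame inside the continuation and, when $\rho(p)$ is a union or a conditional, re-associates the distributed frame into a new continuation $Q''$ with fresh holes $q_1,q_2$. As you yourself observe, this breaks every natural measure: $Q''$ contains two copies of the frame body $O$, so $\sz{Q''}>\sz{Q}$ and $\maxred{Q''}$ can exceed $\maxred{Q}$ (the copies reduce independently); and your proposed repair --- tracking $\maxred{Q\eta}$ and $\maxred{Q\theta}$ and using the strict decrease on the side that supplied $\rho(p)$ --- does not close, because the duplication of $O$ (whose holes may lie in the \emph{other} instantiation) can strictly increase the $\maxred$ of the other side, and which side supplies the redex alternates between steps, so neither the sum nor any fixed lexicographic priority of the two sides is decreasing. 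You also lean on ``a short auxiliary argument that the collection eliminators preserve $\SN$'': that argument is essentially Lemma~\ref{lem:Qredunion}, whose proof uses Corollary~\ref{cor:aredforall}, which in turn is proved \emph{from} the present lemma --- so invoking it here would be circular.

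The missing idea is to go the opposite way: do not try to keep the frame in the continuation at all. The paper writes $Q = Q_0 \compop{p} F$ with $\eta = [p\mapsto M]\eta_0$, lets $N$ be the contractum of $F^p[p\mapsto M]$ (whatever its shape --- a union of comprehensions, a conditional, a substituted body), and recurses on the triple $(Q_0,\ [p\mapsto N]\eta_0,\ \theta)$, i.e.\ the entire contractum is absorbed into the instantiation at hole $p$ and the continuation strictly shrinks. With the measure
\[
(\maxred{Q},\ \sz{Q},\ \maxred{\eta}+\maxred{\theta})
\]
ordered lexicographically, this case decreases the second component ($\sz{Q_0}<\sz{Q}$ by Lemma~\ref{lem:szdecrease}, while $\maxred{Q_0}\leq\maxred{Q}$ by Corollary~\ref{lem:maxredsubK}), and it does not matter that $\maxred{[p\mapsto N]\eta_0}$ may be large --- it only needs to be defined, which follows because $Q_0[p\mapsto N]\eta_0$ is a one-step contractum of $Q\eta\in\SN$. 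This makes your entire re-association machinery, and the accompanying $\alpha$/hole-renaming bookkeeping for it, unnecessary; the only residual obligations are $Q_0\eta'\in\SN$ (a contractum of $Q\eta$) and $Q_0\theta\in\SN$ (from $Q\theta\in\SN$ via Lemma~\ref{lem:maxredctxapp}). Without this step, or a genuinely new well-founded measure that survives your duplication, the proposal does not go through.
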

\begin{proof}
We assume that $\dom(\eta) \cup \dom(\theta) \subseteq \supp(Q)$ (otherwise, we can find strictly smaller permutable $\eta', \theta'$ such that $Q\eta\theta = Q\eta'\theta'$, and their domains are subsets of $\supp(Q)$). We show $Q\eta \in \SN$ and $Q\theta \in \SN$ imply $Q \in \SN$, $\eta \in \SN$ and $\theta \in \SN$; thus we can then prove the theorem by well-founded induction on $(Q,\eta,\theta)$ using the following metric:
\[
(Q_1,\eta_1,\theta_1) \prec (Q_2,\eta_2,\theta_2)
\iff
(\maxred{Q_1},\sz{Q_1},\maxred{\eta_1}+\maxred{\theta_1})
\lessdot
(\maxred{Q_2},\sz{Q_2},\maxred{\eta_2}+\maxred{\theta_2})
\]
We show that all of the possible contracta of $Q\eta\theta$ are s.n.\ by case analysis on the contraction:
\begin{itemize}
\item $Q' \eta^\sigma \theta^\sigma$, where $Q \ared{\sigma} Q'$: it is easy to see that $\maxred{\eta^\sigma}$ and $\maxred{\theta^\sigma}$ are defined because $\maxred{\eta}$ and $\maxred{\theta}$ are; then the thesis follows from the induction hypothesis, knowing that $\maxred{Q'} < \maxred{Q}$ (Lemma~\ref{lem:redmaxred}).
\item $Q'\eta^*\theta^*$ where $Q'= Q_1[q \mapsto Q_2\subst{x}{\pure{L}}]$, for some $Q_1, Q_2, x, q \in \supp(Q_1), \eta^*, \theta^*$ such that $Q = (Q_1 \compop{q} \comprehension{\Box \mid x \gets \setlit{\pure{L}}})[q \mapsto Q_2]$ and $\eta^*(p) = \eta(p)\subst{x}{\pure{L}}$, $\theta^*(p) = \theta(p)\subst{x}{\pure{L}}$ for all $p \in \supp(Q_2)$, otherwise $\eta^*(p) = \eta(p)$ and $\theta^*(p) = \theta(p)$; since $Q \red Q'$, we know $\maxred{Q'} < \maxred{Q}$; furthermore, since $Q\eta \red Q'\eta^*$ and $Q\eta \in \SN$, it is easy to see that $Q'\eta^* \in \SN$ and $\maxred{\eta^*}$ is defined; similarly, $Q'\theta^* \in \SN$ and $\maxred{\theta^*}$ is defined; then $Q'\eta^*\theta^* \in \SN$ by induction hypothesis.
\item $Q\eta'\theta$, where $\eta \red \eta'$: the thesis follows by induction hypothesis, knowing $\maxred{\eta'} < \maxred{\eta}$ (Lemma~\ref{lem:redmaxred}).
\item $Q_0[p\mapsto N] \eta_0 \theta$ where $Q = Q_0 \compop{p} F$, $\eta = [p \mapsto M] \eta_0$, and $F^p[p \mapsto M] \red N$ by means of a reduction at the interface.
By Lemma~\ref{lem:maxredsubK} we know $\maxred{Q_0} \leq \maxred{Q}$; by Lemma~\ref{lem:szdecrease} we prove $\sz{Q_0} < \sz{Q}$. We take $\eta' = [p \mapsto N]\eta_0$: since $Q\eta$ reduces to $Q_0\eta'$ and both terms are strongly normalizing, we have that $\maxred{\eta'}$ is defined. Then we observe $(Q_0, \eta', \theta) \prec (Q, \eta, \theta)$ and obtain the thesis by induction hypothesis. A symmetric case with $p \in \dom(\theta)$ is proved similarly.
\qedhere
\end{itemize}
\end{proof}

\begin{cor}\label{cor:aredforall}
$Q[p \mapsto M]^\sigma \in \SN$ iff for all $q$ s.t.\ $\sigma(q) = p$, we have $Q[q \mapsto M] \in \SN$.
\end{cor}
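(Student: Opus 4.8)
The plan is to obtain the corollary directly from the two results on strongly normalizing applied contexts just proved: Lemma~\ref{lem:maxredctxapp} for the forward implication and Lemma~\ref{lem:ctxappcomb} for the backward one. First I would unfold the notation. By Definition~\ref{def:ren_inst}, $Q[p \mapsto M]^\sigma$ is the term $Q\rho$, where $\rho$ is the context instantiation sending every index $q$ with $\sigma(q) = p$ to $M$ and acting as the identity on all other holes; restricting $\rho$ to $\supp(Q)$ (which changes nothing), $\rho = [q_1 \mapsto M, \dots, q_k \mapsto M]$ for an enumeration $q_1, \dots, q_k$ of the indices of $\supp(Q)$ that $\sigma$ maps to $p$. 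Wherever the corollary is used, $\dom(\sigma)$ has been chosen fresh (see Lemma~\ref{lem:contred} and case~\ref{case:incont} of Lemma~\ref{lem:classification}), so $\supp(M)$ is disjoint from $\{q_1, \dots, q_k\}$ and $\rho$ is permutable; by Lemma~\ref{lem:permutable} we may therefore write $Q\rho = Q[q_1 \mapsto M] \cdots [q_k \mapsto M]$ and reorder the single-hole instantiations at will.

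For the forward direction, suppose $Q[p \mapsto M]^\sigma = Q\rho \in \SN$ and let $q$ be any index with $\sigma(q) = p$. Reordering so that $q$ comes first, $Q\rho = (Q[q \mapsto M])\,\rho'$ for the restriction $\rho'$ of $\rho$ to the remaining indices (if $q \notin \supp(Q)$ then $Q[q \mapsto M] = Q$ and $\rho' = \rho$). Since $Q[q \mapsto M]$ is a context and $(Q[q \mapsto M])\rho' \in \SN$, Lemma~\ref{lem:maxredctxapp}(1) gives $Q[q \mapsto M] \in \SN$, as required.

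For the backward direction, assume $Q[q \mapsto M] \in \SN$ for every $q$ with $\sigma(q) = p$; we must show $Q\rho \in \SN$. It suffices to prove, by induction on $|S|$, that for every nonempty finite set $S$ of indices with $\supp(M) \cap S = \emptyset$, if $Q[q \mapsto M] \in \SN$ for all $q \in S$ then $Q[q \mapsto M \mid q \in S] \in \SN$; applying this with $S = \{q_1, \dots, q_k\}$ (and, in the degenerate case where no index of $\supp(Q)$ maps to $p$, noting that $Q\rho = Q = Q[q \mapsto M] \in \SN$ for any $q$ with $\sigma(q) = p$) then finishes the proof. The base case $|S| = 1$ is a hypothesis. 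For $|S| = k \geq 2$, pick $q_k \in S$ and put $S' = S \setminus \{q_k\}$; the induction hypothesis gives $Q[q \mapsto M \mid q \in S'] \in \SN$, and with $\eta := [q \mapsto M \mid q \in S']$, $\theta := [q_k \mapsto M]$ --- whose union is the permutable instantiation $[q \mapsto M \mid q \in S]$ --- we have $Q\eta \in \SN$ and $Q\theta = Q[q_k \mapsto M] \in \SN$, so Lemma~\ref{lem:ctxappcomb} yields $Q\eta\theta = Q[q \mapsto M \mid q \in S] \in \SN$.

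I expect no serious obstacle here: the corollary is essentially a repackaging of Lemmas~\ref{lem:maxredctxapp} and~\ref{lem:ctxappcomb}. The only point demanding care is the permutability bookkeeping --- justifying that plugging the same term $M$ into several distinct holes decomposes into independent single-hole instantiations --- but this is precisely what the freshness conditions on $\sigma$ (guaranteed by Lemma~\ref{lem:contred}) and the permutability machinery of the preceding subsection were designed to supply.
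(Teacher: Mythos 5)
Your proposal is correct and follows essentially the same route as the paper, whose own proof is the one-liner ``by the definition of $[p \mapsto M]^\sigma$, using Lemma~\ref{lem:ctxappcomb} to decompose the resulting context instantiation.'' You simply fill in the details the paper leaves implicit: the forward direction via Lemma~\ref{lem:maxredctxapp}(1) and the permutability bookkeeping needed to split the multi-hole instantiation into single-hole ones.
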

\begin{proof}
By the definition of $[p \mapsto M]^\sigma$, using Lemma~\ref{lem:ctxappcomb} to decompose the resulting context instantiation.
\end{proof}

\subsection{Candidates of reducibility}
We here define the notion of \emph{candidates of reducibility}: sets of strongly normalizing terms enjoying certain closure properties that can be used to overapproximate the sets of terms of a certain type. Our version of candidates for \NRCl is a straightforward adaptation of the standard definition given by Girard and like that one is based on a notion of \emph{neutral terms}, i.e.\ those terms that, when placed in an arbitrary context, do not create additional redexes.

\begin{defi}[neutral term]\label{def:neutral}
A term $M$ is \emph{neutral} if it belongs to the following grammar:
\[
W \mathrel{::=}
x
\orelse c(\vect{M_n})
\orelse M.\ell
\orelse (M~N)
\orelse \isempty~M
\]
where $n \geq 1$.

The set of neutral terms is denoted by $\NT$.
\end{defi}

Let us introduce the following notation for Girard's CRx properties of sets~\cite{GLT89}:
\begin{itemize}
\item $\CRi(\cC) \triangleq \cC \subseteq \SN$
\item $\CRii(\cC) \triangleq \forall M \in \cC, \forall M'. M \red M' \then M' \in \cC$
\item $\CRiii(\cC) \triangleq \forall M \in \NT. (\forall M'. M \red M' \then M' \in \cC) \then M \in \cC$
\end{itemize}

\noindent
The set $\CR$ of the candidates of reducibility is then defined as the collection of those sets of terms which satisfy all the CRx properties. Some standard results include the non-emptiness of candidates (in particular, all free variables are in every candidate) and that $\SN \in \CR$.

\subsection{Reducibility sets}
In this section we introduce \emph{reducibility sets}, which are sets of terms that we will use to provide an interpretation of the types of \NRCl; we will then prove that reducibility sets are candidates of reducibility, hence they only contain strongly normalizing terms.
The following notation will be useful as a shorthand for certain operations on sets of terms that are used to define reducibility sets:
\begin{itemize}
\item $\cC \imp \cD \triangleq \metaset{M : \forall N \in \cC, (M~N) \in \cD}$
\item $\tuple{\vect{\ell_k : \cC_k}} \triangleq \metaset{M : \forall i = 1,\ldots,k, M.\ell_i \in \cC_i}$
\item $\topset{\cC_p} \triangleq \metaset{K : \forall M \in \cC. K[p \mapsto \setlit{M}] \in \SN}$
\item $\toptopset{\cC} \triangleq \metaset{M : \forall p, \forall K \in \topset{\cC_p}, K[p \mapsto M] \in \SN}$
\end{itemize}

\noindent
The sets $\topset{\cC_p}$ and $\toptopset{\cC}$ are called the \emph{$\top$-lifting} and \emph{$\top\top$-lifting} of $\cC$. These definitions refine the ones used in the literature by using indices: $\top$-lifting is defined with respect to a given index $p$, while the definition of $\top\top$-lifting uses any index (in the standard definitions, continuations only contain a single hole, and no indices are mentioned).

\begin{defi}[reducibility]
For all types $T$, the set $\Red{T}$ of reducible terms of type $T$ is defined by recursion on $T$ by means of the rules:
\[
\begin{array}{rclcrcl}
\Red{A} & \triangleq & \SN
& \qquad &
\Red{S \imp T} & \triangleq & \Red{S} \imp \Red{T}
\\
\Red{\tuple{\vect{\ell_k : T_k}}} & \triangleq & \tuple{\vect{\ell_k : \Red{T_k}}}
& \qquad &
\Red{\setlit{T}} & \triangleq & \toptopset{\Red{T}}
\end{array}
\]
\end{defi}

Let us use metavariables $\cS, \cS', \ldots$ to denote finite sets of indices: we provide a refined notion of $\top$-lifting $\topset{\cC_\cS}$ depending on a set of indices rather than a single index, defined by pointwise intersection. This notation is useful to track a $\top$-lifted candidate under renaming reduction.
\begin{defi}
$\topset{\cC_\cS} \triangleq \bigcap_{p \in \cS} \topset{\cC_p}$.
\end{defi}

\begin{defi}
Let $\cC$ and $\cS$ be sets of terms and indices respectively, and $\sigma$ a finite renaming: then we define $(\topset{\cC_\cS})^\sigma :=  \topset{\cC_{\sigma^{-1}(\cS)}}$, where $\sigma^{-1}(\cS) = \metaset{ q : \sigma(q) \in \cS }$
\end{defi}

We now proceed with the proof that all the sets $\Red{T}$ are candidates of reducibility: we will only focus on collections since for the other types the result is standard. The proofs of $\CRi$ and $\CRii$ do not differ much from the standard $\top\top$-lifting technique.

\begin{lem}\label{lem:setKbox}
Suppose $\CRi(\cC)$: then for all indices $p,q$, $\ibox{p} \in \topset{\cC_q}$.
\end{lem}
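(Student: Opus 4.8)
The plan is to unfold the definition of $\topset{\cC_q}$ and reduce the claim to a trivial base-case check. By definition, $\topset{\cC_q} = \metaset{K : \forall M \in \cC,\ K[q \mapsto \setlit{M}] \in \SN}$, and $\ibox{p}$ is a continuation (it is the base production of the grammar in Definition~\ref{def:cont}, and it trivially satisfies the linearity condition, containing a single hole). So it suffices to fix an arbitrary $M \in \cC$ and show $\ibox{p}[q \mapsto \setlit{M}] \in \SN$.

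I would then split on whether $p = q$. If $p \neq q$, context application is vacuous on $\ibox{p}$, so $\ibox{p}[q \mapsto \setlit{M}] = \ibox{p}$, which is just a (hole) free variable; free variables admit no reductions, hence $\ibox{p} \in \SN$. If $p = q$, then $\ibox{p}[q \mapsto \setlit{M}] = \setlit{M}$, and from the hypothesis $\CRi(\cC)$, i.e.\ $\cC \subseteq \SN$, we get $M \in \SN$.

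Finally I would observe that $\setlit{M} \in \SN$ whenever $M \in \SN$: the singleton constructor has no contraction rule of its own in Figure~\ref{fig:normNRCl}, so by the congruence closure every reduction of $\setlit{M}$ has the form $\setlit{M} \red \setlit{M'}$ with $M \red M'$; hence an infinite reduction out of $\setlit{M}$ would project to an infinite reduction out of $M$, contradicting $M \in \SN$. This settles all cases. There is essentially no obstacle here: the only point requiring more than pure unfolding is that the singleton constructor is reduction-inert, which is immediate from inspection of the rewrite system.
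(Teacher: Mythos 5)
Your proof is correct and follows essentially the same route as the paper's: unfold the definition, case-split on whether $p = q$, and observe that both possible results ($\ibox{p}$ and $\setlit{M}$) are strongly normalizing, the latter by $\CRi(\cC)$. The only difference is that you spell out explicitly why the singleton constructor preserves strong normalization, which the paper leaves implicit.
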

\begin{proof}
To prove the lemma, it is sufficient to show that for all $M \in \cC$ we have $\ibox{p}[q \mapsto \setlit{M}] \in \SN$. This term is equal to either $\setlit{M}$ (if $p = q$) or to $\ibox{p}$ (otherwise); both terms are s.n.\ (in the case of $\setlit{M}$, this is because $\CRi$ holds for $\cC$, thus $M \in \SN$).
\end{proof}
\begin{lem}[CR1 for continuations]\label{lem:cr1setK}
For all $p$ and all non-empty $\cC$, $\topset{\cC_p} \subseteq \SN$.
\end{lem}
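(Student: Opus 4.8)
The plan is to reduce the statement directly to the sub-context property of strongly normalizing applied contexts, namely Lemma~\ref{lem:maxredctxapp}. First I would unfold the definition: a continuation $K$ lies in $\topset{\cC_p}$ exactly when $K[p \mapsto \setlit{M}] \in \SN$ for every $M \in \cC$. The point at which the hypothesis that $\cC$ is non-empty becomes essential is precisely here: it lets us actually instantiate this universal condition. So I would fix an arbitrary witness $M_0 \in \cC$, which gives $K[p \mapsto \setlit{M_0}] = K\eta \in \SN$, where $\eta \triangleq [p \mapsto \setlit{M_0}]$.

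Next I would invoke Lemma~\ref{lem:maxredctxapp}, part~(1), with the context $C = K$ and the instantiation $\eta$: since $K\eta \in \SN$, it immediately yields $K \in \SN$. As the continuation $K \in \topset{\cC_p}$ was chosen arbitrarily, this establishes $\topset{\cC_p} \subseteq \SN$, which is the claim. The index $p$ plays no role beyond determining which hole we fill, so the argument is uniform in $p$.

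I do not expect any genuine obstacle: all the real work has been carried out in proving Lemma~\ref{lem:maxredctxapp}, and the proof here is a one-line application of it. The only subtlety worth flagging explicitly is the non-emptiness assumption on $\cC$; without it the universally quantified condition defining $\topset{\cC_p}$ would hold vacuously, so $\topset{\cC_p}$ would contain every continuation, including non-normalizing ones, and the statement would be false. This is why the lemma is stated for non-empty $\cC$, and it causes no trouble later since the reducibility sets $\Red{T}$ are always non-empty (they contain the free variables).
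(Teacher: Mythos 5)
Your proof is correct and is essentially identical to the paper's: both pick a witness $M \in \cC$ (using non-emptiness), observe $K[p \mapsto \setlit{M}] \in \SN$ by definition of $\topset{\cC_p}$, and conclude $K \in \SN$ by part~(1) of Lemma~\ref{lem:maxredctxapp} (a.k.a.\ Lemma~\ref{lem:Kappsn}). Your remark about why non-emptiness is needed is also accurate.
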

\begin{proof}
We assume $K \in \topset{\cC_p}$ and $M \in \cC$: by definition, we know that $K[p \mapsto \setlit{M}] \in \SN$; then we have $K \in \SN$ by Lemma~\ref{lem:Kappsn}.
\end{proof}
\begin{lem}[CR1 for collections]\label{lem:cr1set}
If $\CRi(\cC)$, then $\CRi(\toptopset{\cC})$.
\end{lem}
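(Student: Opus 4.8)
The goal is to establish $\toptopset{\cC} \subseteq \SN$, so the plan is: take an arbitrary $M \in \toptopset{\cC}$ and produce a reducible continuation together with an index $p$ whose application to $M$ coincides with $M$ itself. Then the defining clause of $\toptopset{\cC}$ forces that application to be strongly normalizing, and we are done. The obvious candidate is the bare-hole continuation $\ibox{p}$, for which context application gives $\ibox{p}[p \mapsto M] = M$ literally.

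The only thing that needs checking is that $\ibox{p}$ really is a reducible continuation, i.e.\ $\ibox{p} \in \topset{\cC_p}$. Unfolding the definition of $\top$-lifting, this requires $\ibox{p}[p \mapsto \setlit{N}] = \setlit{N} \in \SN$ for every $N \in \cC$; since $\CRi(\cC)$ gives $N \in \SN$ and reductions inside a singleton are exactly reductions of its content, $\setlit{N}$ is strongly normalizing. This is precisely Lemma~\ref{lem:setKbox} (instantiated with $q := p$), which was proved for exactly this purpose, so in the write-up I would simply cite it rather than repeat the unfolding.

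Assembling the argument: fix any $p$; by Lemma~\ref{lem:setKbox} we have $\ibox{p} \in \topset{\cC_p}$; hence, applying the definition of $\toptopset{\cC}$ to $M$, we get $\ibox{p}[p \mapsto M] \in \SN$; but $\ibox{p}[p \mapsto M] = M$, so $M \in \SN$. As $M$ was arbitrary, $\toptopset{\cC} \subseteq \SN$, which is $\CRi(\toptopset{\cC})$. There is essentially no obstacle here: the content is entirely in the fact that $\top\top$-lifting is set up (via singleton-lifting) so that the identity continuation is always reducible, and this single-hole instance is genuinely the easy direction — the real work comes later in $\CRii$ and especially $\CRiii$, where the branching structure of continuations comes into play.
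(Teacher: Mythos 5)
Your argument is exactly the paper's: fix an index $p$, use Lemma~\ref{lem:setKbox} (which needs $\CRi(\cC)$) to get $\ibox{p} \in \topset{\cC_p}$, and then the defining clause of $\toptopset{\cC}$ gives $\ibox{p}[p \mapsto M] = M \in \SN$. The proof is correct and matches the paper's in every essential step.
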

\begin{proof}
We need to prove that if $M \in \toptopset{\cC}$, then $M \in \SN$. By the definition of $\toptopset{\cC}$, we know that for all $p$, $K[p \mapsto M] \in \SN$ whenever $K \in \topset{\cC_p}$. Now assume any $p$, and by Lemma~\ref{lem:setKbox} choose $K=\ibox{p}$: then $K[p \mapsto M] = M \in \SN$, which proves the thesis.
\end{proof}
\begin{lem}[CR2 for collections]\label{lem:cr2set}
If $M \in \toptopset{\cC}$ and $M \red M'$, then $M' \in \toptopset{\cC}$.
\end{lem}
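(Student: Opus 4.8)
The plan is to unfold the definition of $\toptopset{\cC}$ and reduce the whole statement to the fact that $\SN$ is closed under $\red$. To show $M' \in \toptopset{\cC}$ it suffices to fix an arbitrary index $p$ and an arbitrary continuation $K \in \topset{\cC_p}$ and prove $K[p \mapsto M'] \in \SN$. From the hypothesis $M \in \toptopset{\cC}$ we get immediately that $K[p \mapsto M] \in \SN$, so the only thing left is to connect $K[p \mapsto M]$ with $K[p \mapsto M']$ by reduction (or equality) and invoke closure of $\SN$.

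For that connecting step I would distinguish two cases on whether $\ibox{p}$ occurs in $K$. If $\ibox{p} \notin \FV(K)$, then context instantiation is vacuous and $K[p \mapsto M] = K = K[p \mapsto M']$, so the claim is trivial. If $\ibox{p} \in \FV(K)$, then since $K$ is a continuation the hole $\ibox{p}$ occurs exactly once, so $M$ appears as a subterm of $K[p \mapsto M]$ at that single position; by the congruence closure defining $\red$, the given reduction $M \red M'$ lifts to $K[p \mapsto M] \red K[p \mapsto M']$. (The single-occurrence property of continuations is convenient here but not essential: with repeated holes one would instead get a multi-step reduction $K[p \mapsto M] \rred K[p \mapsto M']$, which equally preserves $\SN$.)

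Finally I would conclude: since $K[p \mapsto M] \in \SN$ and $K[p \mapsto M] \red K[p \mapsto M']$ (or the two terms coincide), strong normalization of $K[p \mapsto M]$ yields $K[p \mapsto M'] \in \SN$. As $p$ and $K$ were arbitrary, this is exactly $M' \in \toptopset{\cC}$, as required; note that, unlike $\CRi$, no hypothesis on $\cC$ is needed.

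I do not expect a genuine obstacle here. The only point demanding a little care is the bookkeeping of context instantiation versus ordinary substitution, namely that plugging into a single-occurrence hole really does expose $M$ as a subterm on which the congruence rules of Figure~\ref{fig:congruenceNRCl} apply — variable capture of the free variables of $M$ by the binders in $K$ is harmless, since we are only lifting the reduction $M \red M'$ within $M$'s position and not performing a substitution.
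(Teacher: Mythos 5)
Your proof is correct and follows essentially the same route as the paper's: fix $p$ and $K \in \topset{\cC_p}$, split on whether $p \in \supp(K)$ (equality in the vacuous case, a lifted reduction $K[p \mapsto M] \red K[p \mapsto M']$ otherwise), and conclude by closure of $\SN$ under reduction. The additional remarks about linearity of the hole and harmlessness of variable capture are accurate but not needed beyond what the paper already does.
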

\begin{proof}
Let $p$ be an index, and take $K \in \topset{\cC_p}$: we need to prove $K[p \mapsto M'] \in \SN$. By the definition of $M \in \toptopset{\cC}$, we have $K[p \mapsto M] \in \SN$; if $p \notin \supp(K)$, $K[p \mapsto M'] = K[p \mapsto M]$ and the thesis trivially holds; otherwise the instantiation is effective and we have $K[p \mapsto M] \red K[p \mapsto M']$, and this last term, being a contractum of a strongly normalizing term, is strongly normalizing as well. This proves the thesis.
\end{proof}

In order to prove $\CRii$ for all types (and particularly for collections), we do not need to establish an analogous property on continuations; however such a property is still useful for subsequent results (particularly $\CRiii$). Its statement must, of course, consider that reduction may duplicate (or indeed delete) holes, and thus employs renaming reduction. We can show that whenever we need to prove a statement about $n$-ary permutable instantiations of $n$-ary continuations, we can simply consider each hole separately, as stated in the following lemma.
\begin{lem}\label{lem:Kcrsimpl}
  $K \in (\topset{\cC_\cS})^\sigma$ if, and only if, for all $q \in \sigma^{-1}(\cS)$, we have $K \in \topset{\cC_q}$.
  \\
  In particular, $K \in (\topset{\cC_p})^\sigma$ if, and only if, for all $q$ s.t.\ $\sigma(q) = p$, we have $K \in \topset{\cC_q}$.
\end{lem}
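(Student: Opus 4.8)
The plan is to prove this purely by unfolding definitions; no induction, reduction reasoning, or non-emptiness hypothesis on $\cC$ is needed. Recall the two relevant definitions: for any finite index set $\cS'$ we have $\topset{\cC_{\cS'}} := \bigcap_{q \in \cS'} \topset{\cC_q}$, and for a renaming $\sigma$ we have $(\topset{\cC_\cS})^\sigma := \topset{\cC_{\sigma^{-1}(\cS)}}$, where $\sigma^{-1}(\cS) = \metaset{q : \sigma(q) \in \cS}$.

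First I would chain these two equations: $(\topset{\cC_\cS})^\sigma = \topset{\cC_{\sigma^{-1}(\cS)}} = \bigcap_{q \in \sigma^{-1}(\cS)} \topset{\cC_q}$. Then membership $K \in \bigcap_{q \in \sigma^{-1}(\cS)} \topset{\cC_q}$ is, by the very definition of intersection, equivalent to the statement that $K \in \topset{\cC_q}$ for every $q \in \sigma^{-1}(\cS)$. This is exactly the right-hand side of the claimed biconditional, so the general statement follows directly.

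For the specialised clause, I would instantiate the general statement with $\cS = \metaset{p}$. Here $\topset{\cC_{\metaset{p}}} = \topset{\cC_p}$ (a singleton intersection) and $\sigma^{-1}(\metaset{p}) = \metaset{q : \sigma(q) = p}$, so the equivalence reads: $K \in (\topset{\cC_p})^\sigma$ if and only if $K \in \topset{\cC_q}$ for all $q$ with $\sigma(q) = p$, as required.

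There is no real obstacle here; the statement is essentially a bookkeeping lemma recording that the operations $(\cdot)^\sigma$ and $\topset{(\cdot)_{(\cdot)}}$ interact with intersections in the expected way. The only points to keep straight are that the notation $\topset{\cC_p}$ for a single index $p$ agrees with $\topset{\cC_{\metaset{p}}}$, and that $(\cdot)^\sigma$ applied to a $\top$-lifted candidate is defined via the preimage of the index set under $\sigma$, not by renaming holes inside the continuations; once these conventions are aligned, both biconditionals are immediate.
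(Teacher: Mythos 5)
Your proposal is correct and matches the paper's own proof, which likewise just chains the definitions $(\topset{\cC_\cS})^\sigma = \topset{\cC_{\sigma^{-1}(\cS)}} = \bigcap_{q \in \sigma^{-1}(\cS)} \topset{\cC_q}$ and unfolds the intersection. The specialization to $\cS = \metaset{p}$ is handled exactly as you describe.
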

\begin{proof}
  By definition of $(\cdot)^\sigma$ and $\topset{(\cdot)}$:
\[
	K \in (\topset{\cC}_\cS)^\sigma \iff K \in \topset{\cC}_{\sigma^{-1}(\cS)} \iff K \in \bigcap_{q \in \sigma^{-1}(\cS)} \topset{\cC}_q \iff \forall q \in \sigma^{-1}(\cS), K \in \topset{\cC}_q
    \qedhere
\]
\end{proof}

\begin{lem}[CR2 for continuations]\label{lem:cr2setK}
If $K \in \topset{\cC_\cS}$ and $K \ared{\sigma} K'$, then $K' \in (\topset{\cC_\cS})^\sigma$.
\end{lem}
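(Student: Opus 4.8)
The plan is to reduce the statement, via Lemma~\ref{lem:Kcrsimpl}, to a family of single-hole claims, and then transport the given reduction $K \red K'\sigma$ through a hole instantiation. By Lemma~\ref{lem:Kcrsimpl}, $K' \in (\topset{\cC_\cS})^\sigma$ is equivalent to: for every $q \in \sigma^{-1}(\cS)$, $K' \in \topset{\cC_q}$, that is, for every such $q$ and every $N \in \cC$, $K'[q \mapsto \setlit{N}] \in \SN$. So I fix $q$ with $p := \sigma(q) \in \cS$ and $N \in \cC$; the goal becomes $K'[q \mapsto \setlit{N}] \in \SN$. From $K \in \topset{\cC_\cS} \subseteq \topset{\cC_p}$ and $N \in \cC$ I immediately get $K[p \mapsto \setlit{N}] \in \SN$.

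Next I push the reduction $K \ared{\sigma} K'$, i.e.\ $K \red K'\sigma$, under the instantiation $[p \mapsto \setlit{N}]$. The key observation is that in a continuation every hole occurs outside the scope of every binder: the only binder available, the $x$ in $\comprehension{\pure{M} \mid x \gets K}$, scopes over the pure (hence hole-free) body $\pure{M}$, never over the part carrying holes. Consequently no reduction of a continuation ever moves a hole under a binder, and the comprehension--singleton rule can only fire inside a pure subterm; so instantiation at holes never triggers variable capture and reduction commutes with it (the capture that Lemma~\ref{lem:redtoredctxapp} works around simply does not arise for continuations). Hence $K[p \mapsto \setlit{N}] \red (K'\sigma)[p \mapsto \setlit{N}]$, and since $\SN$ is closed under reduction, $(K'\sigma)[p \mapsto \setlit{N}] \in \SN$.

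Finally I re-associate the renaming and the instantiation. By Lemma~\ref{lem:renamectxapp} --- whose side condition $\supp(\setlit{N}) \cap \dom(\sigma) = \emptyset$ holds because the holes introduced by the renaming reduction, which make up $\dom(\sigma)$, may be taken fresh by the freshness clause of Lemma~\ref{lem:contred} (equivalently, by a suitable choice of representative) --- we have $(K'\sigma)[p \mapsto \setlit{N}] = \bigl(K'[p \mapsto \setlit{N}]^\sigma\bigr)\sigma$, where $[p \mapsto \setlit{N}]^\sigma$ is as in Definition~\ref{def:ren_inst}. A hole renaming cannot create or destroy redexes, so by Lemma~\ref{lem:redrename} any infinite reduction of $K'[p \mapsto \setlit{N}]^\sigma$ would lift to one of $\bigl(K'[p \mapsto \setlit{N}]^\sigma\bigr)\sigma$; therefore $K'[p \mapsto \setlit{N}]^\sigma \in \SN$. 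Corollary~\ref{cor:aredforall} then yields $K'[q' \mapsto \setlit{N}] \in \SN$ for every $q'$ with $\sigma(q') = p$, in particular for $q' = q$. As $q$ and $N$ were arbitrary, $K' \in \topset{\cC_q}$ for all $q \in \sigma^{-1}(\cS)$, which is the thesis.

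I expect the main obstacle to be the hole-renaming bookkeeping rather than any conceptual difficulty: one must be careful that the indices $\sigma$ introduces are fresh enough for the side condition of Lemma~\ref{lem:renamectxapp}, and keep straight the difference between filling the single hole $\ibox{q}$ of $K'$ and filling, via $[p \mapsto \setlit{N}]^\sigma$, all holes of $K'$ that $\sigma$ collapses onto $p$ (which is precisely the situation created when a reduction duplicates a hole). Everything else --- CR2 itself, closure of $\SN$ under reduction and under hole renaming --- is routine and already packaged by the lemmas established earlier in this section.
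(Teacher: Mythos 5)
Your proof is correct and follows essentially the same route as the paper's: reduce to single-hole claims via Lemma~\ref{lem:Kcrsimpl}, obtain $K[\sigma(q)\mapsto\setlit{N}]\in\SN$ from the hypothesis, push the reduction $K\red K'\sigma$ under the instantiation, and conclude with Lemma~\ref{lem:renamectxapp} and Corollary~\ref{cor:aredforall}. If anything, you are more explicit than the paper about why the reduction commutes with hole instantiation for regular continuations (holes never sit under binders and special reductions cannot occur, per the last clause of Lemma~\ref{lem:classification}) and about discharging the trailing renaming, both of which the paper's one-line justification leaves implicit.
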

\begin{proof}
By Lemma~\ref{lem:Kcrsimpl} and the definition of $\topset{(\cdot)}$, it suffices to prove that $K'[q \mapsto \setlit{M}] \in \SN$ for all $q$ such that $\sigma(q) \in \cS$ and $M \in \cC$.
Then we know $K[\sigma(q) \mapsto \setlit{M}] \in \SN$, and consequently $K'[\sigma(q) \mapsto \setlit{M}]^\sigma \in \SN$ as well, since the latter is a contractum of the former; finally, by Corollary~\ref{cor:aredforall}, $K'[q \mapsto \setlit{M}] \in \SN$, as we needed.
\end{proof}

This is everything we need to prove $\CRiii$.

\begin{lem}[CR3 for collections]\label{lem:cr3set}
Let $\cC \in \CR$, and $M$ a neutral term such that for all reductions $M \red M'$ we have $M' \in \toptopset{\cC}$. Then $M \in \toptopset{\cC}$.
\end{lem}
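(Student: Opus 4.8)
To prove $M \in \toptopset{\cC}$ we must show that $K[p \mapsto M] \in \SN$ for every index $p$ and every continuation $K \in \topset{\cC_p}$. Since $\cC \in \CR$, the set $\cC$ is non-empty, so Lemma~\ref{lem:cr1setK} gives $K \in \SN$ and hence $\maxred{K}$ is defined; the proof will be a well-founded induction on $\maxred{K}$, with $M$ (and the hypothesis that every reduct of $M$ is in $\toptopset{\cC}$) held fixed. If $\ibox{p} \notin \FV(K)$ then $K[p \mapsto M] = K \in \SN$ and there is nothing to do; otherwise $\dom([p \mapsto M]) = \{p\} \subseteq \supp(K)$ and $[p \mapsto M]$ is permutable, so it suffices to show that every one-step contractum $N$ of $K[p \mapsto M]$ is strongly normalizing, and Lemma~\ref{lem:classification} tells us which shapes $N$ can take.

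I would then treat the four cases of Lemma~\ref{lem:classification}, exploiting that $K$ is a \emph{regular} continuation. In the ``reduction of the continuation'' case, $K \ared{\sigma} K'$ with $K'$ again regular and $N = K'[p \mapsto M]^{\sigma}$: Corollary~\ref{cor:aredmaxred} gives $\maxred{K'} < \maxred{K}$, Lemmas~\ref{lem:cr2setK} and~\ref{lem:Kcrsimpl} give $K' \in \topset{\cC_q}$ for every $q$ with $\sigma(q) = p$, the induction hypothesis then yields $K'[q \mapsto M] \in \SN$ for all such $q$, and Corollary~\ref{cor:aredforall} assembles these into $N = K'[p \mapsto M]^{\sigma} \in \SN$ (when no $q$ maps to $p$, $N = K'$, which is strongly normalizing as a contractum of $K$). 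The ``special reduction'' case cannot occur because $K$ is regular. In the ``reduction within $\eta$'' case, $M \red M'$ and $N = K[p \mapsto M']$; by hypothesis $M' \in \toptopset{\cC}$, so $K[p \mapsto M'] \in \SN$ follows directly from the definition of $\toptopset{\cC}$ applied to $M'$, using $K \in \topset{\cC_p}$. Finally, in the ``reduction at the interface'' case, $K = K_0 \compop{p} F$; since $K$ is regular, $F$ is of the form $\comprehension{\pure{N_0} \mid x \gets \Box}$ or $\plwhere~\pure{B}~\kwdo~\Box$, and a redex at the interface of $F^p[p \mapsto M]$ would require $M$ to be of one of the collection-introduction shapes ($\emptyoset$, $\setlit{\cdot}$, $\cdot \ocup \cdot$, a comprehension, or a $\plwhere$). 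This is impossible because $M$ is neutral, so the interface case produces no new contracta (any reduction internal to the pure parts of $F$ is already covered by the first case). Having handled all contracta, we conclude $K[p \mapsto M] \in \SN$, and as $p$ and $K$ were arbitrary, $M \in \toptopset{\cC}$.

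The conceptual skeleton is exactly that of the classical $\CRiii$ proof for $\top\top$-lifting: neutrality of $M$ blocks any reduction at the interface, and the hypothesis covers reductions inside $M$. The genuinely new ingredient is the bookkeeping forced by branching continuations --- a reduction inside $K$ may duplicate or delete the hole $\ibox{p}$ --- which is precisely what Corollaries~\ref{cor:aredmaxred} and~\ref{cor:aredforall} (and renaming reduction generally) were built to absorb; I expect the one place demanding care is double-checking that the interface case is truly vacuous for neutral $M$, i.e.\ that none of the frame contraction rules of Figure~\ref{fig:normNRCl} can fire when the hole's content is neutral.
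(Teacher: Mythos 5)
Your proposal is correct and follows essentially the same route as the paper's proof: well-founded induction on $\maxred{K}$ with $M$ fixed, dispatching contracta via the classification of Lemma~\ref{lem:classification}, using Lemmas~\ref{lem:cr2setK} and~\ref{lem:Kcrsimpl} together with Corollary~\ref{cor:aredforall} for the duplicated-hole case, the hypothesis on $M$ for reductions inside the instantiation, and neutrality of $M$ to rule out reductions at the interface. Your extra check that the interface case is genuinely vacuous for the two frame shapes available in a regular continuation is a welcome elaboration of what the paper states in one line.
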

\begin{proof}
By definition, we need to prove $K[p \mapsto M] \in \SN$ whenever $K \in \topset{\cC_p}$ for some index $p$. By Lemma~\ref{lem:cr1setK}, knowing that $\cC$, being a candidate, is non-empty, we have $K \in \SN$. We can thus proceed by well-founded induction on $\maxred{K}$ to prove the strengthened statement: for all indices $q$, if $K \in \topset{\cC_q}$, then $K[q \mapsto M] \in \SN$.
Equivalently, we prove that all the contracta of $K[q \mapsto M]$ are s.n.\ by cases on the possible contracta:
\begin{itemize}
\item $K' [q \mapsto M]^\sigma$ (where $K \ared{\sigma} K'$): to prove this term is s.n., by Corollary~\ref{cor:aredforall}, we need to show $K'[q'\mapsto M] \in \SN$ whenever $\sigma(q') = q$; by Lemmas~\ref{lem:cr2setK} and~\ref{lem:Kcrsimpl}, we know $K' \in \topset{\cC_{q'}}$, and naturally $\maxred{K'} < \maxred{K}$ (Lemma~\ref{lem:redmaxred}), thus the thesis follows by the IH\@.
\item $K[p \mapsto M']$ (where $M \red M'$): this is s.n.\ because $M' \in \toptopset{\cC}$ by hypothesis.
\item Since $M$ is neutral, there are no reductions at the interface.
\qedhere
\end{itemize}
\end{proof}

\begin{thm}\label{thm:rediscr}
For all types $T$, $\Red{T} \in \CR$.
\end{thm}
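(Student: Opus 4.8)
The plan is to prove Theorem~\ref{thm:rediscr} by structural induction on the type $T$, in the standard Girard style, with the work for collection types delegated to the lemmas on $\top$- and $\top\top$-lifting already established. In the base case $T = A$ is atomic, so $\Red{A} = \SN$, and $\SN \in \CR$ is the standard fact recalled above: $\CRi$ and $\CRii$ are immediate, and $\CRiii$ holds because a neutral term all of whose one-step reducts are strongly normalizing is itself strongly normalizing (every reduct of such a term is a reduct of one of its proper subterms or is obtained by contracting a redex that the term, being neutral, cannot itself create at the head).

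For $T = S \imp T'$ the induction hypothesis gives $\Red{S}, \Red{T'} \in \CR$, and one checks the three properties of $\Red{S} \imp \Red{T'}$ directly. For $\CRi$, if $M \in \Red{S} \imp \Red{T'}$ then, taking any variable $x \in \Red{S}$ (candidates are non-empty and contain all variables), $(M~x) \in \Red{T'} \subseteq \SN$, so $M \in \SN$. For $\CRii$, if $M \red M'$ and $N \in \Red{S}$ then $(M~N) \red (M'~N) \in \Red{T'}$ by $\CRii$ of $\Red{T'}$. For $\CRiii$, given a neutral $M$ whose reducts all lie in $\Red{S} \imp \Red{T'}$ and given $N \in \Red{S}$, we show $(M~N) \in \Red{T'}$ by a sub-induction on $\maxred{N}$, which is defined since $\Red{S} \subseteq \SN$: the term $(M~N)$ is neutral, and since $M$ is neutral it is not a $\lambda$-abstraction, so every reduct of $(M~N)$ is either $(M'~N)$ with $M' \in \Red{S} \imp \Red{T'}$, or $(M~N')$ with $N' \in \Red{S}$ (by $\CRii$ of $\Red{S}$) and $\maxred{N'} < \maxred{N}$; in either case the reduct is in $\Red{T'}$, so $\CRiii$ of $\Red{T'}$ yields $(M~N) \in \Red{T'}$. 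The record case $T = \tuple{\vect{\ell_k : T_k}}$ is entirely analogous, with projections $M.\ell_i$ in place of applications: for $\CRi$ one uses all $k$ projections, and for $\CRiii$ one observes that a neutral term is never a record literal, so $M.\ell_i$ has no head redex.

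For $T = \setlit{T'}$ the induction hypothesis is $\Red{T'} \in \CR$; in particular $\CRi(\Red{T'})$ holds and $\Red{T'}$ is non-empty. Then $\CRi(\Red{\setlit{T'}})$ is exactly Lemma~\ref{lem:cr1set}, $\CRii(\Red{\setlit{T'}})$ is Lemma~\ref{lem:cr2set}, and $\CRiii(\Red{\setlit{T'}})$ is Lemma~\ref{lem:cr3set}, whose hypothesis is precisely $\Red{T'} \in \CR$. Hence $\Red{\setlit{T'}} = \toptopset{\Red{T'}} \in \CR$, which completes the induction.

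In this proof there is no genuinely new obstacle: the delicate content --- the interaction of branching continuations, renaming reduction, and the $\top\top$-lifting closure conditions --- has already been discharged in the preceding lemmas, and the theorem is the routine bookkeeping that threads them through the type structure. The one point that requires care is that each collection lemma is invoked with its hypotheses genuinely met: Lemma~\ref{lem:cr3set} needs the full candidate status of the component type $\Red{T'}$ (it uses non-emptiness, through Lemma~\ref{lem:cr1setK}, to deduce $K \in \SN$), and this is exactly what the induction hypothesis supplies, so the definition $\Red{\setlit{T'}} \triangleq \toptopset{\Red{T'}}$ introduces no circularity.
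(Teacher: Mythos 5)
Your proof is correct and follows essentially the same route as the paper: a standard induction on the structure of $T$, handling atomic, function, and record types by the classical Girard arguments, and discharging the collection case via Lemmas~\ref{lem:cr1set},~\ref{lem:cr2set}, and~\ref{lem:cr3set}, whose hypotheses are supplied by the induction hypothesis. The paper states this proof in one line; you have merely filled in the standard details.
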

\begin{proof}
Standard by induction on $T$. For $T = \setlit{T'}$, we use Lemmas~\ref{lem:cr1set},~\ref{lem:cr2set}, and~\ref{lem:cr3set}.
\end{proof}

\section{Strong normalization}%
\label{sec:sn}
We have proved that the reducibility sets of all types are candidates of reducibility. We are now going to prove that every well-typed term is in the reducibility set corresponding to its type: strong normalization will then follow as a corollary, by using the $\CRi$ property of candidates of reducibility.

The proof that well-typed terms are reducible is by structural induction on the derivation of the typing judgment. We will proceed by first proving lemmas that show the typing rules preserve reducibility, concluding at the end with the fundamental theorem.
Once again, we will focus our attention on the results corresponding to collection types, as the rest are standard.

Reducibility of singletons is trivial by definition, while that of empty collections is proved
in the same style as~\cite{Cooper09a}, with the obvious adaptations.
\begin{lem}[reducibility for singletons]\label{lem:redsingle}
For all $\cC$, if $M \in \cC$, then $\setlit{M} \in \toptopset{\cC}$.
\end{lem}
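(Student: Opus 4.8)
The plan is to prove the statement by simply unfolding the definitions of $\top\top$-lifting and $\top$-lifting, with no recourse to closure properties of $\cC$ or to any induction. To establish $\setlit{M} \in \toptopset{\cC}$, I must show that $K[p \mapsto \setlit{M}] \in \SN$ for every index $p$ and every continuation $K \in \topset{\cC_p}$. So I would fix such a $p$ and such a $K$, and then unfold the definition of $\topset{\cC_p}$: membership of $K$ in this set means precisely that $K[p \mapsto \setlit{N}] \in \SN$ for all $N \in \cC$. Since the hypothesis of the lemma gives $M \in \cC$, instantiating this universally quantified statement at $N := M$ yields exactly $K[p \mapsto \setlit{M}] \in \SN$, which is what was required. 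This works uniformly whether or not $\ibox{p}$ actually occurs in $K$, since in the latter case $K[p \mapsto \setlit{M}] = K$ and the conclusion still follows from the same instance.

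There is essentially no obstacle here, and that is by design: the definition of $\topset{\cC_p}$ tests continuations against \emph{singletons} $\setlit{N}$ of reducible terms $N \in \cC$ rather than against reducible collections directly — precisely the device used earlier in the paper to break the circularity between reducible collections and reducible collection continuations. Consequently the fact that singletons of reducible terms are reducible collections is immediate from the definitions, and serves as the base case on which the reducibility arguments for the other collection constructs (empty, union, comprehension, \plempty, \plwhere) will build.
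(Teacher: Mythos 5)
Your proof is correct and is exactly the argument the paper intends by its one-line proof (``trivial by definition of $\top$-lifting and $\top\top$-lifting''): unfolding $\toptopset{\cC}$ reduces the claim to the defining condition of $\topset{\cC_p}$ instantiated at $M \in \cC$. No issues.
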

\begin{proof}
  Trivial by definition of $\top$-lifting and $\top\top$-lifting.
\end{proof}
\begin{lem}\label{lem:Kredempty}
If $K \in \SN$ is a continuation, then for all indices $p$ we have $K[p \mapsto \emptyoset] \in \SN$.
\end{lem}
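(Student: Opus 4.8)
The plan is to prove, by well-founded induction, the equivalent statement that \emph{for every continuation $K \in \SN$ and every index $p$ we have $K[p \mapsto \emptyoset] \in \SN$}, using as induction measure the lexicographic pair $(\maxred{K}, s(K))$, where $s(K)$ denotes the number of syntactic constructors of $K$. The secondary component is genuinely needed: in the interface case below $\maxred{\cdot}$ need not strictly decrease. If $p \notin \supp(K)$ then $K[p \mapsto \emptyoset] = K \in \SN$ and there is nothing to prove, so assume $p \in \supp(K)$ and let $N$ be an arbitrary contractum of $K[p \mapsto \emptyoset]$; I will show $N \in \SN$. Apply Lemma~\ref{lem:classification} with $Q := K$ and $\eta := [p \mapsto \emptyoset]$ (the hypotheses $\dom(\eta) \subseteq \supp(Q)$ and permutability are immediate). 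Since $\emptyoset$ is irreducible there is no reduction within $\eta$, and since $K$ is a \emph{regular} continuation no special reduction of the continuation occurs; hence only two cases of Lemma~\ref{lem:classification} remain.

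In the first remaining case there are a regular continuation $K'$ and a renaming $\sigma$ with $K \ared{\sigma} K'$ and $N = K'[p \mapsto \emptyoset]^\sigma$ (the side condition $\supp(\emptyoset) \cap \dom(\sigma) = \emptyset$ holds vacuously). By Corollary~\ref{cor:aredmaxred} we get $K' \in \SN$ and $\maxred{K'} < \maxred{K}$, so the measure of $K'$ is lexicographically smaller than that of $K$ regardless of $s(K')$. By Corollary~\ref{cor:aredforall} it suffices to show $K'[q \mapsto \emptyoset] \in \SN$ for every $q$ with $\sigma(q) = p$, and each of these follows from the induction hypothesis applied to $(K', q)$.

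In the second remaining case there are a regular continuation $K_0$ with $p \in \supp(K_0)$, a frame $F$, and a term $M$ with $K = K_0 \compop{p} F$, $F^p[p \mapsto \emptyoset] \red M$, and $N = K_0[p \mapsto M]$ (here $\eta_{\lnot p}$ is empty). This is the one delicate point. Because $K$ is a regular continuation, its hole never occurs in a comprehension body, so $F$ cannot be of the form $\comprehension{\Box \mid x \gets O}$; hence $F$ is $\comprehension{\pure{M'} \mid x \gets \Box}$ or $\plwhere~\pure{B}~\kwdo~\Box$, and $K_0$ is again a regular continuation. Consequently $F^p[p \mapsto \emptyoset]$ is $\comprehension{\pure{M'} \mid x \gets \emptyoset}$ or $\plwhere~\pure{B}~\kwdo~\emptyoset$, and in either case its only contractum is $\emptyoset$; therefore $M = \emptyoset$ and $N = K_0[p \mapsto \emptyoset]$. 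By Corollary~\ref{lem:maxredsubK} we have $K_0 \in \SN$ with $\maxred{K_0} \leq \maxred{K}$, while $s(K_0) < s(K)$ since $F$ contributes at least one constructor; thus the lexicographic measure strictly decreases and $N = K_0[p \mapsto \emptyoset] \in \SN$ by the induction hypothesis applied to $(K_0, p)$.

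The main obstacle is exactly the interface case: one must observe that feeding $\emptyoset$ into a frame of a regular continuation always yields the redex eliminating the empty collection, so that the residual is again of the form $K_0[p \mapsto \emptyoset]$ and the induction closes on $K_0$; and one must notice that $\maxred{\cdot}$ by itself is not a sound induction measure, because $\maxred{K_0}$ can equal $\maxred{K}$ — for instance when $F = \plwhere~\pure{B}~\kwdo~\Box$ with $\pure{B}$ an irreducible Boolean that is not a constant — which is why the constructor count is carried along as a tie-breaker. Everything else is routine bookkeeping with the lemmas of Section~\ref{sec:red}.
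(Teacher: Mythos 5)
Your proof is correct and takes essentially the same approach as the paper's: well-founded induction on a lexicographic measure with $\maxred{K}$ as the primary component, splitting contracta of $K[p \mapsto \emptyoset]$ into renaming reductions of the continuation (dispatched via Corollary~\ref{cor:aredforall} and the strict decrease of $\maxred{\cdot}$) and reductions at the interface, which collapse the frame applied to $\emptyoset$ back to $\emptyoset$ so the induction closes on $K_0[p \mapsto \emptyoset]$. The only cosmetic difference is the tie-breaker: the paper uses the hole-depth measure $\sz{K}_p$ together with Lemma~\ref{lem:szdecrease}, whereas you use a raw constructor count; both serve the same purpose.
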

\begin{cor}[reducibility for $\emptyoset$]\label{cor:redempty}
For all $\cC$, $\emptyoset \in \toptopset{\cC}$.
\end{cor}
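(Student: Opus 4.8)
The corollary follows immediately from Lemma~\ref{lem:Kredempty}, so the real content is that lemma. Unpacking the corollary: fixing $\cC$, proving $\emptyoset \in \toptopset{\cC}$ amounts to proving $K[p \mapsto \emptyoset] \in \SN$ for every index $p$ and every $K \in \topset{\cC_p}$. As long as $\cC$ is non-empty --- which holds in all the applications, where $\cC$ is a candidate of reducibility --- Lemma~\ref{lem:cr1setK} gives $K \in \SN$, and then Lemma~\ref{lem:Kredempty} gives $K[p \mapsto \emptyoset] \in \SN$, which is exactly what is needed. So the interesting step is Lemma~\ref{lem:Kredempty}, whose proof I sketch next.

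For Lemma~\ref{lem:Kredempty} I would fix a continuation $K \in \SN$ and an index $p$; if $p \notin \supp(K)$ then $K[p \mapsto \emptyoset] = K \in \SN$, so assume $p \in \supp(K)$. A plain structural induction on $K$ is not enough here: the comprehension case would need $\comprehension{\pure{M} \mid x \gets N}$ to be strongly normalizing for strongly normalizing $\pure{M}$ and $N$, which is false without more information. The plan instead is to prove $K[p \mapsto \emptyoset] \in \SN$ by well-founded induction on the pair $(\maxred{K}, \sz{K})$ ordered lexicographically, showing that every one-step contractum of $K[p \mapsto \emptyoset]$ is strongly normalizing. Since $\eta = [p \mapsto \emptyoset]$ is permutable (its codomain is closed and hole-free) with $\dom(\eta) \subseteq \supp(K)$, Lemma~\ref{lem:classification} classifies each contractum; moreover $\emptyoset$ is a normal form, so no contraction lies ``within $\eta$'', and since $K$ is a \emph{regular} continuation the special-reduction case of that lemma cannot occur. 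Only two cases survive.

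If the contractum comes from a reduction of the continuation it has the form $K'[p \mapsto \emptyoset]^{\sigma}$ with $K \ared{\sigma} K'$ and $K'$ again a regular continuation; by Corollary~\ref{cor:aredmaxred} we have $\maxred{K'} < \maxred{K}$, and by Corollary~\ref{cor:aredforall} it suffices to show $K'[q \mapsto \emptyoset] \in \SN$ for each $q$ with $\sigma(q) = p$, which is precisely the induction hypothesis applied to $K'$. If the contractum comes from a reduction at the interface, then $K = K_0 \compop{p} F$ for a frame $F$ and the contractum is $K_0[p \mapsto M]$, where $F^p[p \mapsto \emptyoset] \red M$. The only frames a regular continuation can expose at its interface are $\comprehension{\pure{N} \mid x \gets \Box}$ and $\plwhere~\pure{B}~\kwdo~\Box$ (the body-frame $\comprehension{\Box \mid x \gets O}$ cannot occur, since the bodies of comprehensions inside a regular continuation are pure), and plugging $\emptyoset$ into either fires the absorbing rule $\comprehension{\pure{N} \mid x \gets \emptyoset} \ured \emptyoset$ or $\plwhere~\pure{B}~\kwdo~\emptyoset \ured \emptyoset$; hence $M = \emptyoset$ and the contractum is $K_0[p \mapsto \emptyoset]$. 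Now $\maxred{K_0} \le \maxred{K}$ by Corollary~\ref{lem:maxredsubK} and $\sz{K_0} < \sz{K_0 \compop{p} F} = \sz{K}$ by Lemma~\ref{lem:szdecrease}, so $(\maxred{K_0}, \sz{K_0}) \lessdot (\maxred{K}, \sz{K})$ and the induction hypothesis applies.

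The hard part is this interface case: plugging $\emptyoset$ into a frame may leave $\maxred{K_0}$ equal to $\maxred{K}$, so the primary measure does not strictly decrease --- which is exactly why the secondary measure $\sz{\cdot}$ is carried in the induction, and why Lemma~\ref{lem:szdecrease} has the shape it does. The only other point that needs care is the case split on frame shapes, i.e.\ checking that $\emptyoset$ is absorbed by every frame a regular continuation can place at the interface; everything else --- permutability, hole renaming, reassembling the per-hole instantiations produced by $\sigma$ --- is already handled by Lemmas~\ref{lem:classification}, \ref{cor:aredforall}, and~\ref{cor:aredmaxred}.
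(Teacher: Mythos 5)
Your proposal is correct and follows essentially the same route as the paper: the corollary is discharged by Lemma~\ref{lem:Kredempty} (via $\CRi$ for continuations), and that lemma is proved by well-founded induction on the lexicographic measure $(\maxred{K},\sz{K}_p)$, with the continuation-reduction case handled by Corollary~\ref{cor:aredforall} and the interface case absorbed to $K_0[p\mapsto\emptyoset]$ using Lemma~\ref{lem:szdecrease}. Your explicit justification that every interface frame of a regular continuation absorbs $\emptyoset$ is a detail the paper leaves implicit, but it is the same argument.
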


As for unions, we will prove a more general statement on auxiliary continuations.
\begin{lem}\label{lem:Qredunion}\label{lem:Kredunion}\mbox{}\\
For all auxiliary continuations $Q,O_1,O_2$ with pairwise disjoint supports, if $Q[p \mapsto O_1] \in \SN$ and $Q[p \mapsto O_2] \in \SN$, then $Q[p \mapsto O_1 \ocup O_2] \in \SN$.
\end{lem}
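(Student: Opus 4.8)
The plan is to argue by well-founded induction, using the classification of reductions in applied continuations (Lemma~\ref{lem:classification}) to reduce the claim to checking that every one-step contractum of $Q[p \mapsto O_1 \ocup O_2]$ is in $\SN$. First I would dispose of the case $p \notin \supp(Q)$, where $Q[p \mapsto O_1 \ocup O_2] = Q = Q[p \mapsto O_1] \in \SN$; otherwise $[p \mapsto O_1 \ocup O_2]$ is permutable (its codomain only mentions holes in $\supp(O_1) \cup \supp(O_2)$, which is disjoint from $\supp(Q) \ni p$), so Lemma~\ref{lem:classification} applies. The induction would be on the lexicographic pair $(\maxred{Q[p \mapsto O_1]} + \maxred{Q[p \mapsto O_2]},\ \sz{Q})$, whose components are well-defined by hypothesis. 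A recurring technicality is that a contractum may duplicate a hole and so fail to be an auxiliary continuation, or may break the pairwise-disjointness condition; I would handle this uniformly by renaming the offending holes to fresh indices (Lemma~\ref{lem:contred}), using that hole renaming preserves both membership in $\SN$ and the value of $\maxred{\cdot}$ (Lemmas~\ref{lem:redrename} and~\ref{cor:aredmaxred}).

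For the reductions internal to $Q$ (case~\ref{case:incont}), $Q \ared{\sigma} Q'$ giving $Q'[p \mapsto O_1 \ocup O_2]^\sigma$, I would use Corollary~\ref{cor:aredforall} to reduce to the single-hole instances $Q'[q \mapsto O_1 \ocup O_2]$ with $\sigma(q) = p$, observe that the same reduction of $Q[p \mapsto O_i]$ yields $Q'[p \mapsto O_i]^\sigma$ (Lemma~\ref{lem:ctxred_ctxappred}), and deduce from Lemmas~\ref{lem:maxredctxapp} and~\ref{lem:redmaxred} that $\maxred{Q'[q \mapsto O_i]} \le \maxred{Q'[p \mapsto O_i]^\sigma} < \maxred{Q[p \mapsto O_i]}$, so the first component strictly drops and the induction hypothesis applies (the freshness of $\dom(\sigma)$ guaranteed by Lemma~\ref{lem:classification} keeps $Q'$ disjoint from $O_1$ and $O_2$). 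The special reductions of $Q$ (case~\ref{case:tricky}), producing $Q'[p \mapsto O_1\subst{x}{\pure{L}} \ocup O_2\subst{x}{\pure{L}}]$ with $Q[p \mapsto O_i] \red Q'[p \mapsto O_i\subst{x}{\pure{L}}]$, and the reductions within the instantiation, i.e. inside $O_1$ or $O_2$ (say $O_1 \red O_1'$, giving $Q[p \mapsto O_1' \ocup O_2]$ with $Q[p \mapsto O_1] \red Q[p \mapsto O_1']$), are then immediate from the induction hypothesis applied to the evident smaller data, since in each the first component strictly decreases.

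It remains to treat the reductions at the interface, $Q = Q_0 \compop{p} F$ with $F^p[p \mapsto O_1 \ocup O_2] \red M$. When $F = \plwhere~\pure{B}~\kwdo~\Box$ this is unproblematic: if $\pure{B}$ is $\kwtrue$ or $\kwfalse$ then $M$ is $O_1 \ocup O_2$ or $\emptyoset$ and $Q_0[p \mapsto M]$ is a proper contractum of $Q[p \mapsto O_i]$; otherwise $M = (\plwhere~\pure{B}~\kwdo~O_1) \ocup (\plwhere~\pure{B}~\kwdo~O_2)$, where $Q_0[p \mapsto \plwhere~\pure{B}~\kwdo~O_i] = Q[p \mapsto O_i]$ leaves the first component fixed but $\sz{Q_0} < \sz{Q}$ by Lemma~\ref{lem:szdecrease}, and, $\pure{B}$ being pure, the new operands still have pairwise-disjoint supports, so the induction hypothesis applies.

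The crux, and the step I expect to be the main obstacle, is the interface case for a comprehension frame, $F = \comprehension{Q_3 \mid x \gets \Box}$ or $F = \comprehension{\Box \mid x \gets Q_3}$: distributing the union over the comprehension \emph{duplicates} the auxiliary continuation $Q_3$, so the two operands $\comprehension{Q_3 \mid x \gets O_i}$ share every hole of $Q_3$ and the disjointness hypothesis needed for the induction hypothesis fails. I would repair this by renaming the holes of the second copy to fresh indices, obtaining $\comprehension{Q_3'' \mid x \gets O_2}$; since hole renaming preserves $\SN$ and $\maxred{\cdot}$, it suffices to show $Q_0[p \mapsto \comprehension{Q_3 \mid x \gets O_1} \ocup \comprehension{Q_3'' \mid x \gets O_2}] \in \SN$. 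Here $Q_0[p \mapsto \comprehension{Q_3 \mid x \gets O_1}] = Q[p \mapsto O_1] \in \SN$ and $Q_0[p \mapsto \comprehension{Q_3'' \mid x \gets O_2}]$ is $Q[p \mapsto O_2]$ up to a hole renaming, hence in $\SN$ with the same $\maxred{\cdot}$; so the first component is unchanged while $\sz{Q_0} < \sz{Q}$ (Lemma~\ref{lem:szdecrease}), and the induction hypothesis closes the case. The remaining comprehension-frame interface reductions, where the generator or body is already $\emptyoset$, directly give $Q_0[p \mapsto \emptyoset]$, a proper contractum of $Q[p \mapsto O_i]$. Collecting all cases, every contractum of $Q[p \mapsto O_1 \ocup O_2]$ lies in $\SN$, hence so does $Q[p \mapsto O_1 \ocup O_2]$.
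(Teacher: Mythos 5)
Your proof is correct and follows essentially the same route as the paper's: well-founded induction on a lexicographic measure combining $\maxred{\cdot}$ of the applied continuation(s) with a size of $Q$, followed by an exhaustive case analysis of the one-step contracta via Lemma~\ref{lem:classification}, with renaming reduction used to restore linearity of holes. Your variant of the metric works equally well, and your explicit renaming of the holes in the duplicated copy of $Q_3$ in the comprehension-frame case is in fact more careful than the paper's own proof, which applies the induction hypothesis to $(Q_1,p,\comprehension{Q_2 \mid x\gets O_1},\comprehension{Q_2 \mid x\gets O_2})$ without first restoring the pairwise disjointness of supports that the statement requires.
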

The proof of the lemma above follows the same style as~\cite{Cooper09a}; however since our definition of auxiliary continuations is more general, the theorem statement mentions $O_1, O_2$ rather than pure terms: the hypothesis on the supports of the continuations being disjoint is required by this generalization.

\begin{cor}[reducibility for unions]\label{cor:redunion}
If $M \in \toptopset{\cC}$ and $N \in \toptopset{\cC}$, then $M \ocup N \in \toptopset{\cC}$.
\end{cor}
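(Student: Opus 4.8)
The plan is to read this off Lemma~\ref{lem:Qredunion}, in exactly the way Corollary~\ref{cor:redempty} is read off Lemma~\ref{lem:Kredempty}. Unfolding the definition of $\top\top$-lifting, what must be shown is that for every index $p$ and every continuation $K \in \topset{\cC_p}$ we have $K[p \mapsto M \ocup N] \in \SN$. So I fix an arbitrary such $p$ and $K$ and aim at that goal.

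First I would check that Lemma~\ref{lem:Qredunion} is applicable with $Q := K$, $O_1 := M$, $O_2 := N$: the (regular) continuation $K$ is in particular an auxiliary continuation, and $M$ and $N$ are pure terms — the members of a reducibility set contain no holes — hence are themselves auxiliary continuations, with empty support; therefore $Q, O_1, O_2$ have pairwise disjoint supports, as required. Next I would discharge the two premises of the lemma: $K[p \mapsto M] \in \SN$ and $K[p \mapsto N] \in \SN$ are immediate from the assumptions $M, N \in \toptopset{\cC}$ together with $K \in \topset{\cC_p}$, by the very definition of $\topset{(\cdot)}$ and $\toptopset{(\cdot)}$. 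Lemma~\ref{lem:Qredunion} then yields $K[p \mapsto O_1 \ocup O_2] = K[p \mapsto M \ocup N] \in \SN$. Since $p$ and $K \in \topset{\cC_p}$ were arbitrary, this is precisely $M \ocup N \in \toptopset{\cC}$.

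There is no genuine obstacle in the corollary itself. All the real difficulty — namely taming the branching of continuations induced by the distributivity rewrite $\comprehension{M \ocup N | x \gets R} \red \comprehension{M | x \gets R} \ocup \comprehension{N | x \gets R}$ — has already been absorbed into Lemma~\ref{lem:Qredunion}, whose proof proceeds by well-founded induction using the classification of reductions in applied continuations (Lemma~\ref{lem:classification}) together with the recombination principle of Lemma~\ref{lem:ctxappcomb}. The only point that needs a moment's attention when passing from the lemma to the corollary is the bookkeeping observation that $M$ and $N$ qualify as support-disjoint auxiliary continuations, which is what makes Lemma~\ref{lem:Qredunion} applicable; this is immediate once we recall that elements of reducibility sets are pure terms.
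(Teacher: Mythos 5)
Your proof is correct and is exactly the argument the paper intends: the corollary is read off Lemma~\ref{lem:Qredunion} by instantiating $Q := K$, $O_1 := M$, $O_2 := N$, with the premises supplied by the definitions of $\topset{(\cdot)}$ and $\toptopset{(\cdot)}$ and the support condition discharged because $M$ and $N$ are pure. The paper leaves this unfolding implicit, so nothing further is needed.
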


In some of the following proofs, a result that mirrors Lemma~\ref{lem:Qredunion} will also be useful.
\begin{lem}\label{lem:Kredunionr}
If $Q[p \mapsto M \ocup N] \in \SN$, then $Q[p \mapsto M] \in \SN$ and $Q[p \mapsto N] \in \SN$; furthermore, we have:
\begin{gather*}
\maxred{Q[p \mapsto M]} \leq \maxred{Q[p \mapsto M \ocup N]}
\\
\maxred{Q[p \mapsto N]} \leq \maxred{Q[p \mapsto M \ocup N]}
\end{gather*}
\end{lem}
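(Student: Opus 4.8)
The plan is to establish both halves of the lemma at once, by well-founded induction on $\maxred{Q[p \mapsto M \ocup N]}$ (which is defined since that term is strongly normalizing by hypothesis), ranging over all auxiliary continuations $Q$, indices $p$, and terms $M, N$: namely that $Q[p \mapsto M]$ and $Q[p \mapsto N]$ are in $\SN$, with $\maxred{Q[p \mapsto M]} \leq \maxred{Q[p \mapsto M \ocup N]}$ and likewise for $N$. If $p \notin \supp(Q)$ the two instantiations coincide and there is nothing to prove, so assume $\ibox{p}$ occurs in $Q$ --- exactly once, by Definition~\ref{def:qcont}; then Lemma~\ref{lem:maxredctxapp} yields $Q \in \SN$, $M \ocup N \in \SN$, and hence $M, N \in \SN$ as subterms. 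It now suffices to show that every one-step reduct $R$ of $Q[p \mapsto M]$ (and, symmetrically, of $Q[p \mapsto N]$) satisfies $R \in \SN$ and $\maxred{R} < \maxred{Q[p \mapsto M \ocup N]}$, for then $\maxred{Q[p \mapsto M]} = 1 + \max_{R}\maxred{R} \leq \maxred{Q[p \mapsto M \ocup N]}$ (the bound being immediate if there is no reduct).

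Fixing a reduction $Q[p \mapsto M] \red R$, I would classify it with Lemma~\ref{lem:classification} for $\eta = [p \mapsto M]$. The uniform strategy is to produce a \emph{nonempty} reduction $Q[p \mapsto M \ocup N] \red^{+} R'$ where $R'$ is again of the shape $Q'[p' \mapsto M' \ocup N']$ and $R = Q'[p' \mapsto M']$; since $R'$ is a reduct of a term in $\SN$ we have $\maxred{R'} < \maxred{Q[p \mapsto M \ocup N]}$, so the induction hypothesis applied to $(Q',p',M',N')$ gives $R \in \SN$ and $\maxred{R} \leq \maxred{R'}$. The four cases of Lemma~\ref{lem:classification} go as follows.
\begin{itemize}
\item \emph{Reduction within $\eta$}: here $M \red M'$, and since $\ibox{p}$ occurs once in $Q$ this step takes place verbatim inside $Q[p \mapsto M \ocup N]$, giving $Q[p \mapsto M \ocup N] \red Q[p \mapsto M' \ocup N] =: R'$, with $R = Q[p \mapsto M']$.
\item \emph{Reduction at the interface}: here $Q = Q_0 \compop{p} F$ and $F^{p}[p \mapsto M] \red M_0$, with $R = Q_0[p \mapsto M_0]$. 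The decisive observation is that \emph{every frame commutes with $\ocup$}: for each frame shape $\comprehension{O \mid x \gets \Box}$, $\comprehension{\Box \mid x \gets O}$, $\plwhere~\pure{B}~\kwdo~\Box$, the matching distributivity rule of Figure~\ref{fig:normNRCl} gives $F^{p}[p \mapsto M \ocup N] \red (F^{p}[p \mapsto M]) \ocup (F^{p}[p \mapsto N])$. Contracting the left summand then yields $Q[p \mapsto M \ocup N] \red^{2} Q_0[p \mapsto M_0 \ocup (F^{p}[p \mapsto N])] =: R'$. (When the interface step does not inspect $M$ --- e.g.\ $\comprehension{\emptyoset \mid x \gets M} \red \emptyoset$ --- the same rule fires on $M \ocup N$ and $R$ is directly a reduct of $Q[p \mapsto M \ocup N]$, needing no induction hypothesis.)
\item \emph{Reduction of the continuation} $Q$ (non-special subcase: $Q \ared{\sigma} Q'$, $R = Q'[p \mapsto M]^{\sigma}$): this step is structural in $Q$, hence can be replayed inside $Q[p \mapsto M \ocup N]$ to give $R' = Q'[p \mapsto M \ocup N]^{\sigma}$; Lemmas~\ref{lem:redtoredctxapp}, \ref{lem:renamectxapp} and~\ref{lem:ctxred_ctxappred} transport the reduction across the instantiation (handling $\alpha$-renaming and hole renaming), taking $\dom(\sigma)$ fresh for the holes of $M, N$ as Lemmas~\ref{lem:classification} and~\ref{lem:contred} allow. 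If $\sigma$ collapses several holes $\sigma^{-1}(p)$ to $p$, then $R$ and $R'$ fill all of them identically, so it is cleanest to phrase the statement for finitely many holes of $Q$ all filled with $M \ocup N$ (or, equivalently, peel them off one by one using Corollary~\ref{cor:aredforall}); in either formulation the induction hypothesis applies to $Q'$ because $\maxred{R'} < \maxred{Q[p \mapsto M \ocup N]}$.
\item \emph{Special reduction of} $Q$ (Lemma~\ref{lem:classification}, case~\ref{case:tricky}): the redex $\comprehension{Q_2 \mid x \gets \setlit{\pure{L}}}$ lies in the structure of $Q$ and fires in $Q[p \mapsto M \ocup N]$ as well; as substitution commutes with context instantiation and distributes over $\ocup$, the outcome is $Q'[p \mapsto M\subst{x}{\pure{L}} \ocup N\subst{x}{\pure{L}}]$ when $p \in \supp(Q_2)$ and $Q'[p \mapsto M \ocup N]$ otherwise, matching $R = Q'[p \mapsto M\subst{x}{\pure{L}}]$, resp.\ $R = Q'[p \mapsto M]$.
\end{itemize}

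I expect the hard part to be the ``reduction of the continuation'' case: pushing an arbitrary non-special reduction of $Q$ through the instantiation $[p \mapsto M \ocup N]$ while correctly handling bound-variable capture, holes that migrate under binders, and the hole duplication that forces a renaming reduction $\ared{\sigma}$. This, however, is exactly the bookkeeping that Section~\ref{sec:red} was designed to absorb, so in practice it should reduce to a careful assembly of Lemmas~\ref{lem:contred}, \ref{lem:redtoredctxapp}, \ref{lem:renamectxapp}, \ref{lem:ctxred_ctxappred}, \ref{lem:classification} and Corollary~\ref{cor:aredforall}; the one genuinely new ingredient, and the reason the lemma holds at all, is the elementary fact that every frame commutes with $\ocup$, which lets a union passing through an interface be simulated by distributing first and reducing afterward.
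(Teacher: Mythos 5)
Your proposal is correct and follows essentially the same route as the paper's (very terse) proof: a step-by-step simulation showing that every contraction of $Q[p \mapsto M]$ is matched by a non-empty reduction sequence of $Q[p \mapsto M \ocup N]$ that preserves the ``term form'' $Q'[p' \mapsto M' \ocup N']$, so that no reduction sequence of the former can outlast the maximal one of the latter. You merely make explicit what the paper leaves implicit, namely the case analysis via Lemma~\ref{lem:classification} and the key fact that every frame distributes over $\ocup$, which is what lets an interface step be simulated by distributing first and then contracting the left summand.
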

\begin{proof}
We assume $p \in \supp(Q)$ (otherwise, $Q[p \mapsto M] = Q[p \mapsto N] = Q[p \mapsto M \ocup N]$, and the thesis holds trivially), then we show that any contraction in $Q[p \mapsto M]$ has a corresponding non-empty reduction sequence in $Q[p \mapsto M \ocup N]$, and the two reductions preserve the term form, therefore no reduction sequence of $Q[p \mapsto M]$ is longer than the maximal one in $Q[p \mapsto M \ocup N]$. The same reasoning applies to $Q[p \mapsto N]$.
\end{proof}

Like in proofs based on standard $\top\top$-lifting, the most challenging cases are those dealing with commuting conversions -- in our case, comprehensions and conditionals.
\begin{lem}\label{lem:redcomprmain}
Let $K$, $\pure{L}$, $\pure{N}$ be such that $K[p \mapsto \pure{N}\subst{x}{\pure{L}}] \in \SN$ and $\pure{L} \in \SN$. Then $K[p \mapsto \comprehension{\pure{N}|x \gets \setlit{\pure{L}}}] \in \SN$.
\end{lem}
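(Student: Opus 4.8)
The plan is to arrange for a lexicographic induction on the continuation to do the work. First I would dispose of the trivial case $p \notin \supp(K)$, where $K[p \mapsto \comprehension{\pure N \mid x \gets \setlit{\pure L}}] = K = K[p \mapsto \pure N\subst x{\pure L}] \in \SN$ by hypothesis, and I would record that $K \in \SN$ and $\pure N \in \SN$ both follow from $K[p \mapsto \pure N\subst x{\pure L}] \in \SN$ by Lemma~\ref{lem:maxredctxapp}. Choosing the representative of $K$ so that the bound variable $x$ is fresh for the pure components of $K$, I would then argue by well-founded induction on the quadruple
\[
\bigl(\,\maxred{K[p \mapsto \pure N\subst x{\pure L}]},\ \len{K},\ \maxred{\pure L},\ \kwsize(\pure N)\,\bigr),
\]
ordered lexicographically, showing that every one-step reduct of $K[p \mapsto \comprehension{\pure N \mid x \gets \setlit{\pure L}}]$ is strongly normalizing. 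The reducts are classified by Lemma~\ref{lem:classification} applied to $Q = K$ and $\eta = [p \mapsto \comprehension{\pure N \mid x \gets \setlit{\pure L}}]$; since $K$ is a regular continuation, the special-reduction case~\ref{case:tricky} is excluded and the continuations produced by case~\ref{case:incont} may be taken regular (Lemma~\ref{lem:contred}).

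The routine cases are reductions of $K$ and reductions inside $\eta$. If $K \ared{\sigma} K'$, then by Corollary~\ref{cor:aredforall} it suffices to show $K'[q \mapsto \comprehension{\pure N \mid x \gets \setlit{\pure L}}] \in \SN$ for each $q$ with $\sigma(q) = p$; the corresponding hypothesis $K'[q \mapsto \pure N\subst x{\pure L}] \in \SN$ holds because $K[p \mapsto \pure N\subst x{\pure L}]$ reduces to $(K'\sigma)[p \mapsto \pure N\subst x{\pure L}]$, so the first component of the measure strictly decreases (Lemmas~\ref{lem:maxredctxapp} and~\ref{lem:redmaxred}) and the induction hypothesis applies. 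A reduction inside $\eta$ is a reduction of $\comprehension{\pure N \mid x \gets \setlit{\pure L}}$: the comprehension–singleton contraction produces exactly $K[p \mapsto \pure N\subst x{\pure L}] \in \SN$; a reduction of $\pure N$ strictly shrinks the first component and a reduction of $\pure L$ shrinks either the first component (when $x \in \FV(\pure N)$) or else $\maxred{\pure L}$, so both are handled by the induction hypothesis; and the shapes $\pure N = \emptyoset$ and $\pure N = \pure N_1 \ocup \pure N_2$ are dispatched with Lemma~\ref{lem:Kredempty} and with Lemmas~\ref{lem:Qredunion} and~\ref{lem:Kredunionr} respectively (the latter keeping the first component from growing while $\kwsize(\pure N)$ strictly decreases).

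The hard part, as always with commuting conversions, is the reduction \emph{at the interface}. Here $K = K_0 \compop{p} F$ for a frame $F$, and because $K$ is regular $F$ is either $\comprehension{\pure Q \mid x' \gets \Box}$ with $\pure Q$ pure or $\plwhere~\pure B~\kwdo~\Box$; the interface redexes that only touch the pure part of the frame ($\plwhere~\kwtrue$, $\plwhere~\kwfalse$, an empty or a union head) are again dealt with by Lemmas~\ref{lem:Kredempty},~\ref{lem:Qredunion} and~\ref{lem:Kredunionr} (using Corollary~\ref{lem:maxredsubK} to keep $K_0$ strongly normalizing) and by the induction hypothesis with a strictly smaller first component. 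The two remaining interface redexes are the genuine commuting conversions: unnesting, after renaming $x$ fresh for $\pure Q$, rewrites the term to $K_0[p \mapsto \comprehension{\comprehension{\pure Q \mid x' \gets \pure N} \mid x \gets \setlit{\pure L}}]$, and $\plwhere$/comprehension commutation rewrites it to $K_0[p \mapsto \comprehension{\plwhere~\pure B~\kwdo~\pure N \mid x \gets \setlit{\pure L}}]$. In both, the new head is still a pure term, and — this is the crux — the hypothesis required to invoke the induction hypothesis on $(K_0, \ldots)$ is precisely $K[p \mapsto \pure N\subst x{\pure L}] \in \SN$, so the first component of the measure is \emph{unchanged}; what saves the induction is that $\len{K_0} < \len{K}$ by Lemma~\ref{lem:szdecrease}(2), since neither admissible frame has the shape $\comprehension{\Box \mid x \gets O}$ that would leave $\len{\cdot}$ fixed. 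This is exactly the configuration where a naïve induction on term size fails — the commuting conversion enlarges the payload $\pure N$ but strictly simplifies the surrounding continuation — and it is resolved by the dedicated nesting measure of Definition~\ref{def:measures}. Once every reduct is shown strongly normalizing, $K[p \mapsto \comprehension{\pure N \mid x \gets \setlit{\pure L}}] \in \SN$ follows.
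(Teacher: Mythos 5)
Your proof is correct and follows essentially the same strategy as the paper's: a well-founded lexicographic induction whose leading component is $\maxred{K[p \mapsto \pure{N}\subst{x}{\pure{L}}]}$, with a continuation-nesting measure breaking ties for the two genuine commuting conversions (which leave the first component unchanged but strictly shrink the continuation) and $\kwsize(\pure{N})$ handling the union-splitting of the head. The only differences are cosmetic: the paper folds $\maxred{\pure{L}}$ into the first component instead of keeping it as a separate coordinate, and measures the continuation by $\sz{K}_p$ where you use $\len{K}$ (the two coincide on regular continuations, since holes never occur in comprehension heads there).
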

\begin{proof}
In this proof, we assume the names of bound variables are chosen so as to avoid duplicates, and are distinct from the free variables. We proceed by well-founded induction on $(K,p,\pure{N},\pure{L})$ using the following metric:
\[
\begin{array}{l}
(K_1,p_1,\pure{N_1},\pure{L_1}) \prec (K_2,p_2,\pure{N_2},\pure{L_2})
\\
\iff
(\maxred{K_1[p_1 \mapsto \pure{N_1}\subst{x}{\pure{L_1}}]} + \maxred{\pure{L_1}}, \sz{K_1}_{p_1},\kwsize(\pure{N_1}))
\\
\qquad\quad \lessdot
(\maxred{K_2[p_2 \mapsto \pure{N_2}\subst{x}{\pure{L_2}}]} + \maxred{\pure{L_2}}, \sz{K_2}_{p_2},\kwsize(\pure{N_2}))
\end{array}
\]
Now we show that every contractum must be a strongly normalizing:
\begin{itemize}
\item $K[p \mapsto \pure{N}\subst{x}{\pure{L}}]$: this term is s.n.\ by hypothesis.
\item $K'[p \mapsto \comprehension{N|x \gets \setlit{\pure{L}}}]^\sigma$, where $K \ared{\sigma} K'$. Lemma~\ref{lem:redmaxred} allows us to prove
$\maxred{K'[p \mapsto \pure{N}\subst{x}{\pure{L}}]^\sigma} < \maxred{K[p \mapsto \pure{N} \subst{x}{\pure{L}}]}$ (since the former is a contractum of the latter),
which implies $\maxred{K'[q \mapsto \pure{N}\subst{x}{\pure{L}}]} \leq \maxred{K'[p \mapsto \pure{N}\subst{x}{\pure{L}}]^\sigma} < \maxred{K[p \mapsto \pure{N} \subst{x}{\pure{L}}]}$ for all $q$ s.t.\ $\sigma(q) = p$
by means of Lemma~\ref{lem:maxredctxapp} (because $[q \mapsto \pure{N}\subst{x}{\pure{L}}]$ is a subinstantiation of $[p \mapsto \pure{N} \subst{x}{\pure{L}}]^\sigma$);
then we can apply the IH to obtain, for all $q$ s.t.\ $\sigma(q) = p$, $K'[q \mapsto \comprehension{\pure{N} | x \gets \setlit{\pure{L}}}] \in \SN$;
by Corollary~\ref{cor:aredforall}, this implies the thesis.
\item $K[p \mapsto \emptyoset]$ (when $N = \emptyoset$): this is equal to $K[p \mapsto \emptyoset \subst{x}{\pure{L}}]$, which is s.n.\ by hypothesis.
\item $K[p \mapsto \comprehension{\pure{N_1}|x \gets \setlit{\pure{L}}} \ocup \comprehension{\pure{N_2}|x \gets \setlit{\pure{L}}}]$ (when $\pure{N} = \pure{N_1} \ocup \pure{N_2}$);
by IH (since $\kwsize(\pure{N_i}) < \kwsize(\pure{N_1} \ocup \pure{N_2})$, and all other metrics do not increase) we prove
$K[p \mapsto \comprehension{\pure{N_i}|x \gets \setlit{\pure{L}}}] \in \SN$ (for $i=1,2$),
and consequently obtain the thesis by Lemma~\ref{lem:Kredunion}.
\item $K_0[p \mapsto \comprehension{\comprehension{\pure{M}| y \gets \pure{N}}| x \gets \setlit{\pure{L}}}]$, where $K = K_0 \compop{p} \comprehension{\pure{M}|y \gets \Box}$;
since we know, by the hypothesis on the choice of bound variables, that $x \notin \FV(\pure{M})$,
we note that $K_0[p \mapsto \comprehension{\pure{M}|y \gets \pure{N}}\subst{x}{\pure{L}}] = K[p \mapsto \pure{N}\subst{x}{\pure{L}}]$;
furthermore, by Lemma~\ref{lem:szdecrease} we know $\sz{K_0}_p < \sz{K}_p$; then we can apply the IH to obtain the thesis.
\item $K_0[p \mapsto \comprehension{\plwhere~\pure{B}~\kwdo~\pure{N} | x \gets \setlit{\pure{L}}}]$ (when $K = K_0 \compop{p} \plwhere~\pure{B}~\kwdo~\Box$):
since we know, from the hypothesis on the choice of bound variables, that $x \notin \FV(B)$,
we note that $K_0[p \mapsto (\plwhere~\pure{B}~\kwdo~\pure{N})\subst{x}{\pure{L}}] = K[p \mapsto \pure{N}\subst{x}{\pure{L}}]$;
furthermore, by Lemma~\ref{lem:szdecrease} we know $\sz{K_0}_p < \sz{K}_p$; then we can apply the IH to obtain the thesis.
\item reductions within $N$ or $L$ follow from the IH by reducing the induction metric.
\qedhere
\end{itemize}
\end{proof}

\begin{lem}[reducibility for comprehensions]\label{lem:redcomprehension}
Assume $\CRi(\cC)$, $\CRi(\cD)$, $\pure{M} \in \toptopset{\cC}$ and for all $\pure{L} \in \cC$, $\pure{N}\subst{x}{\pure{L}} \in \toptopset{\cD}$. Then $\comprehension{\pure{N} | x \gets \pure{M}} \in \toptopset{\cD}$.
\end{lem}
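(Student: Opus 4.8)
The plan is to peel off the definition of $\toptopset{\cD}$ and reduce everything to Lemma~\ref{lem:redcomprmain}. Fix an index $p$ and a continuation $K \in \topset{\cD_p}$; by the definition of $\toptopset{\cD}$ it suffices to prove $K[p \mapsto \comprehension{\pure{N} \mid x \gets \pure{M}}] \in \SN$. The key move is to absorb the comprehension structure into the continuation: since $\pure{N}$ is a pure term, $\comprehension{\pure{N} \mid x \gets \Box}$ is a frame, so $K' \triangleq K \compop{p} \comprehension{\pure{N} \mid x \gets \Box}$ is again a (regular) continuation — plugging $\comprehension{\pure{N} \mid x \gets \ibox{p}}$ into the single hole of $K$ keeps us inside the continuation grammar and preserves hole-linearity — and for every term $R$ we have $K'[p \mapsto R] = K[p \mapsto \comprehension{\pure{N} \mid x \gets R}]$. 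In particular $K'[p \mapsto \pure{M}]$ is exactly the term we must show is strongly normalizing.

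Since $\pure{M} \in \toptopset{\cC}$, it now suffices to check that $K' \in \topset{\cC_p}$, i.e.\ that $K'[p \mapsto \setlit{\pure{L}}] = K[p \mapsto \comprehension{\pure{N} \mid x \gets \setlit{\pure{L}}}] \in \SN$ for every $\pure{L} \in \cC$. This is precisely the conclusion of Lemma~\ref{lem:redcomprmain}, whose two hypotheses we discharge as follows: $\pure{L} \in \SN$, because $\pure{L} \in \cC$ and $\CRi(\cC)$; and $K[p \mapsto \pure{N}\subst{x}{\pure{L}}] \in \SN$, because $\pure{N}\subst{x}{\pure{L}} \in \toptopset{\cD}$ by hypothesis (as $\pure{L} \in \cC$) while $K \in \topset{\cD_p}$, so the defining property of $\toptopset{\cD}$ applies directly. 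Assembling these, $K' \in \topset{\cC_p}$, hence $K'[p \mapsto \pure{M}] \in \SN$, hence $\comprehension{\pure{N} \mid x \gets \pure{M}} \in \toptopset{\cD}$.

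I do not expect any real obstacle in this lemma: all the genuinely hard work — the well-founded induction coping with nested comprehensions, unions, conditionals, and the duplicating distributivity rule inside a branching continuation — has already been carried out in Lemma~\ref{lem:redcomprmain}, and the present argument is a bookkeeping exercise in combining the $\top$- and $\top\top$-lifting definitions. The only points needing a little care are verifying that $K \compop{p} \comprehension{\pure{N} \mid x \gets \Box}$ is indeed a regular continuation, checking that the identity $K'[p \mapsto R] = K[p \mapsto \comprehension{\pure{N} \mid x \gets R}]$ remains valid in the degenerate case $p \notin \supp(K)$ (where both sides collapse to $K$, and $K \in \SN$ still follows from the computation above with $R = \pure{N}\subst{x}{\pure{L}}$), and keeping track of the fact that the comprehension binds $x$ only in its body, so that substituting a term into the generator position can never cause variable capture.
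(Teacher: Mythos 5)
Your proof is correct and follows essentially the same route as the paper's: both define $K' = K \compop{p} \comprehension{\pure{N} \mid x \gets \Box}$, establish $K' \in \topset{\cC_p}$ via Lemma~\ref{lem:redcomprmain} (discharging its hypotheses from $\CRi(\cC)$ and $\pure{N}\subst{x}{\pure{L}} \in \toptopset{\cD}$), and then conclude from $\pure{M} \in \toptopset{\cC}$. The extra checks you flag (that $K'$ is a well-formed regular continuation, and the degenerate case $p \notin \supp(K)$) are sensible but do not change the argument.
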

\begin{proof}
We assume $p$, $K \in \topset{\cD_p}$ and prove $K[p \mapsto \comprehension{\pure{N} | x \gets \pure{M}}] \in \SN$. We start by showing that $K' = K \compop{p} \comprehension{\pure{N}|x \gets \Box} \in \topset{\cC_p}$, or equivalently that for all $\pure{L} \in \cC$, $K'[p \mapsto \setlit{\pure{L}}] = K[p \mapsto \comprehension{\pure{N}|x \gets \setlit{\pure{L}}}] \in \SN$: since $\CRi(\cC)$, we know $\pure{L} \in \SN$, and since $\pure{N}\subst{x}{\pure{L}} \in \toptopset{\cD}$, $K[p \mapsto \pure{N}\subst{x}{\pure{L}}] \in \SN$; then we can apply Lemma~\ref{lem:redcomprmain} to obtain $K'[p \mapsto \setlit{\pure{L}}] \in \SN$ and consequently $K' \in \topset{\cC_p}$. But then, since $\pure{M} \in \toptopset{\cC}$, we have $K'[p \mapsto \pure{M}] = K[p \mapsto \comprehension{\pure{N} | x \gets \pure{M}}] \in \SN$, which is what we needed to prove.
\end{proof}

Reducibility for conditionals is proved in a similar manner. However, to make the induction work under all the conversions commuting with $\plwhere$, we cannot prove the strong normalization statement within regular continuations $K$, but we need to generalize it to auxiliary continuations. A minor complication with the merging of nested $\plwhere$ is handled by a separate lemma. Additionally, due to the more complicated structure of auxiliary continuations, we will need to ensure that the free variables of the Boolean guard of the $\plwhere$ expression do not get captured: the assumption uses an auxiliary operation $\BV$ denoting the set of variables bound over holes:
\begin{defi}\label{def:bv}
The operation $\BV(Q)$ is defined as follows:
\begin{align*}
\BV(\ibox{p}) & = \BV(\pure{M}) = \emptyset
\\
\BV(Q_1 \cup Q_2) & = \BV(Q_1) \cup \BV(Q_2)
\\
\BV(\plwhere~\pure{B}~\kwdo~Q) & = \BV(Q)
\\
\BV(\comprehension{Q_1 \mid x \gets Q_2}) & = \left\{ \begin{array}{ll}
	\metaset{x} \cup \BV(Q_1) \cup \BV(Q_2) & \text{if $\supp(Q_1) \neq \emptyset$}
	\\
	\BV(Q_1) \cup \BV(Q_2) & \text{otherwise}
	\end{array} \right.
\end{align*}
\end{defi}

\begin{lem}\label{lem:redwherechange}
  Suppose $Q[p \mapsto \plwhere~B~\kwdo~M] \in \SN$. Then for all $B' \in \SN$ such that $\BV(Q)$ and $\FV(B')$ are disjoint, $Q[p \mapsto \plwhere~B \land B'~\kwdo~M] \in \SN$.
\end{lem}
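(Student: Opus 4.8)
The plan is to prove the statement by well-founded induction over auxiliary continuations, in the style of Lemmas~\ref{lem:redcomprmain} and~\ref{lem:redcomprehension}. If $p\notin\supp(Q)$ both instantiations equal $Q$ and there is nothing to prove, so assume $p\in\supp(Q)$; then Lemma~\ref{lem:maxredctxapp} gives $Q\in\SN$ and, inspecting the subterm $\plwhere~B~\kwdo~M$, also $B,M\in\SN$, so (with $B'\in\SN$) the maximal reduction lengths used below are all defined. Choosing the hole index $p$ fresh for $B$, $B'$ and $M$ makes $[p\mapsto\plwhere~(B\land B')~\kwdo~M]$ permutable, so Lemma~\ref{lem:classification} classifies the one-step reducts of $Q[p\mapsto\plwhere~(B\land B')~\kwdo~M]$. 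I would run the induction on $\maxred{Q[p\mapsto\plwhere~B~\kwdo~M]}+\maxred{B'}$, keeping $\sz{Q}_p$ and $\kwsize(M)$ available as lexicographic tie-breakers.

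Most cases are routine. A reduction of the continuation, $Q\ared{\sigma}Q'$, reduces by Corollary~\ref{cor:aredforall} to checking $Q'[q\mapsto\plwhere~(B\land B')~\kwdo~M]\in\SN$ for each $q$ with $\sigma(q)=p$, and since the hypothesis performs the companion renaming reduction, Corollary~\ref{cor:aredmaxred} and Lemma~\ref{lem:maxredctxapp} make the first metric component strictly drop, so the IH applies; a special reduction is the same, and is harmless because the substitution $\subst{x}{\pure{L}}$ it threads through $\plwhere~(B\land B')~\kwdo~M$ fixes $B'$, since $x\in\BV(Q)$ is disjoint from $\FV(B')$ by hypothesis. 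Reductions inside $B$ or $M$ mirror reductions of $Q[p\mapsto\plwhere~B~\kwdo~M]$; reductions inside $B'$ only decrease $\maxred{B'}$. Interface reductions with a comprehension frame either collapse an empty generator (giving a reduct $Q_0[p\mapsto\emptyset]$ of the hypothesis) or commute the $\plwhere$ out of a generator, giving $Q_0[p\mapsto\plwhere~(B\land B')~\kwdo~\comprehension{O\mid x\gets M}]$, whose companion $Q_0[p\mapsto\plwhere~B~\kwdo~\comprehension{O\mid x\gets M}]$ is a strictly smaller reduct of the hypothesis (the freshness side-condition for the IH holds since $\BV(Q_0)\subseteq\BV(Q)$); the subcases where $\plwhere~(B\land B')~\kwdo~M$ itself reduces at the top because $M$ is empty, a union, or a comprehension are handled likewise, using Lemmas~\ref{lem:Qredunion} and~\ref{lem:Kredunionr} in the union case.

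The one genuinely delicate case is the interface frame $F=\plwhere~\pure{B_0}~\kwdo~\Box$ (and, symmetrically, the case where $M$ itself begins with a $\plwhere$): here $Q=Q_0\compop{p}F$, the merge rule contracts $\plwhere~\pure{B_0}~\kwdo~\plwhere~(B\land B')~\kwdo~M$ to $\plwhere~(\pure{B_0}\land(B\land B'))~\kwdo~M$, so we must show $Q_0[p\mapsto\plwhere~(\pure{B_0}\land(B\land B'))~\kwdo~M]\in\SN$. The hypothesis merges at the same interface to the strictly smaller $Q_0[p\mapsto\plwhere~(\pure{B_0}\land B)~\kwdo~M]\in\SN$, and the IH applied to it with the extra conjunct $B'$ (the freshness side-condition survives since $\BV(Q_0)\subseteq\BV(Q)$) yields $Q_0[p\mapsto\plwhere~((\pure{B_0}\land B)\land B')~\kwdo~M]\in\SN$. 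I expect this to be the main obstacle: $\land$ is an uninterpreted binary constant, so $(\pure{B_0}\land B)\land B'$ and $\pure{B_0}\land(B\land B')$ are not reduction-related and the IH does not directly deliver the target, and one cannot separate $B$ from $Q_0$ by $\alpha$-renaming because $B$'s free variables may be bound within $Q_0$. I would bridge the gap with a small auxiliary lemma, proved by the same sort of well-founded induction: if $G_1,G_2\in\SN$ have the same free variables then $Q_0[p\mapsto\plwhere~G_1~\kwdo~M]\in\SN$ iff $Q_0[p\mapsto\plwhere~G_2~\kwdo~M]\in\SN$ — equal free-variable sets keep the $\plwhere$-commutation side-conditions in step, both guards stay strongly normalizing and reduce to the same Boolean constant (if any) along every reduction, so the two continuations have reducts of matching shape at every step and the case analysis runs exactly as above. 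Instantiating it with $G_1=(\pure{B_0}\land B)\land B'$ and $G_2=\pure{B_0}\land(B\land B')$ closes the case, and with it the induction.
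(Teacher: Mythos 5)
Your proof is essentially the paper's proof: the same well-founded induction on (roughly) $\maxred{Q[p\mapsto\plwhere~B~\kwdo~M]}$ with $\maxred{B'}$ and $\kwsize(M)$ as tie-breakers, the same classification of contracta, and the same handling of the renaming, special, union, and comprehension-frame cases. Where you diverge is the $\plwhere$-merge case, and there you have put your finger on something the paper's own proof silently elides: the contractum is $Q_0[p\mapsto\plwhere~B_0\land(B\land B')~\kwdo~M]$, whereas the induction hypothesis (applied with guard $B_0\land B$ and extra conjunct $B'$) yields $Q_0[p\mapsto\plwhere~(B_0\land B)\land B'~\kwdo~M]$; the paper simply writes $B_0\land B\land B'$ and identifies the two, and in the symmetric subcase where $M$ itself starts with $\plwhere~B_0~\kwdo$ the mismatch even involves the order of the conjuncts, not just their association. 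So some bridging argument genuinely is needed, and your instinct is right. Be careful with the statement of your auxiliary lemma, though: ``$G_1,G_2\in\SN$ with the same free variables'' is too weak a hypothesis to support the justification you give --- $\kwtrue$ and $\kwfalse$ satisfy it, yet their reducts in context are $Q_0[p\mapsto M]$ and $Q_0[p\mapsto\emptyoset]$ respectively, so it is false that the two guards ``reduce to the same Boolean constant along every reduction'' and the two sides do not have matching reducts step by step. Restrict the lemma to what you actually use --- guards that are rearrangements of the same conjunction, or more generally guards whose reduction trees can be matched so that they reach $\kwtrue$ or $\kwfalse$ simultaneously --- and the same induction closes it. With that restriction in place, your argument is a correct refinement of the paper's, not a departure from it.
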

\begin{lem}\label{lem:redwheremain}
Let $Q$, $B$, $O$ such that $Q[p \mapsto O] \in \SN$, $B \in \SN$, $\BV(Q) \cap \FV(B) = \emptyset$ and $\supp(Q) \cap \supp(O) = \emptyset$. Then $Q[p \mapsto \plwhere~B~\kwdo~O] \in \SN$.
\end{lem}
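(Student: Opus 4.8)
The plan is to prove Lemma~\ref{lem:redwheremain} by well-founded induction on $(Q,B,O)$, in the spirit of Lemma~\ref{lem:redcomprmain} but over auxiliary continuations and accounting for the conversions commuting with $\plwhere$. First I would record the free consequences of the hypotheses: $Q[p\mapsto O]\in\SN$ gives $Q\in\SN$ and $O\in\SN$ by Lemma~\ref{lem:maxredctxapp}, so $\maxred{Q[p\mapsto O]}$, $\maxred{B}$, $\len{Q}_p$ and $\kwsize(O)$ are all defined; and, as in Lemma~\ref{lem:redcomprmain}, I take bound variable names to be chosen apart from $\FV(B)$ and $\FV(O)$, which justifies the alpha-conversions hidden in the capturing reductions. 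The induction is driven by the metric
\[
\begin{array}{l}
(Q_1,B_1,O_1)\prec(Q_2,B_2,O_2)
\\
\iff
(\maxred{Q_1[p\mapsto O_1]}+\maxred{B_1},\ \len{Q_1}_p,\ \kwsize(O_1))
\lessdot
(\maxred{Q_2[p\mapsto O_2]}+\maxred{B_2},\ \len{Q_2}_p,\ \kwsize(O_2)).
\end{array}
\]
The decisive point is to use $\len{\cdot}_p$, not $\sz{\cdot}_p$, as the second component: by Lemma~\ref{lem:szdecrease}, attaching a frame to $Q$ strictly raises $\len{Q}_p$ except for a frame $\comprehension{\Box\mid x\gets Q'}$, which leaves $\len{Q}_p$ fixed; this is exactly what lets the two opposite ways of moving a $\plwhere$ both respect the metric.

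For the inductive step I would classify the contracta of $Q[p\mapsto\plwhere~B~\kwdo~O]$ using Lemma~\ref{lem:classification} for the reductions involving $Q$, together with a direct reading of the rules with $\plwhere$ on the left for those internal to the plugged-in term. \textbf{(a)} Reductions of $Q$ as an auxiliary continuation, ordinary or special, and reductions inside $B$ or inside $O$ (including special ones, after restoring linearity with Lemma~\ref{lem:contred}), are mirrored by reductions of $Q[p\mapsto O]$ or of $B$, hence strictly decrease the first component; one concludes by the induction hypothesis and Corollary~\ref{cor:aredforall} (in the special case the guard is untouched because the substituted variable is not free in $B$). \textbf{(b)} If $B=\kwtrue$ the contractum is $Q[p\mapsto O]\in\SN$; if $B=\kwfalse$ or $O=\emptyoset$ it is $Q[p\mapsto\emptyoset]$, strongly normalizing by the auxiliary-continuation version of Lemma~\ref{lem:Kredempty}. \textbf{(c)} If $O=O_1\ocup O_2$, Lemma~\ref{lem:Kredunionr} gives $\maxred{Q[p\mapsto O_i]}\le\maxred{Q[p\mapsto O]}$ and $\kwsize(O_i)<\kwsize(O)$, so the induction hypothesis yields $Q[p\mapsto\plwhere~B~\kwdo~O_i]\in\SN$ and Lemma~\ref{lem:Qredunion} recombines them. \textbf{(d)} If $O=\comprehension{O_1\mid x\gets O_2}$ the contractum is $(Q\compop{p}\comprehension{\Box\mid x\gets O_2})[p\mapsto\plwhere~B~\kwdo~O_1]$: here $Q[p\mapsto O]$ is unchanged, $\len{Q\compop{p}\comprehension{\Box\mid x\gets O_2}}_p=\len{Q}_p$ by Lemma~\ref{lem:szdecrease}, and $\kwsize(O_1)<\kwsize(O)$, so the induction hypothesis applies. \textbf{(e)} The interface reduction hoisting a $\plwhere$ out of a comprehension generator above the hole, $Q=Q_0\compop{p}\comprehension{Q'\mid x\gets\Box}$, produces $Q_0[p\mapsto\plwhere~B~\kwdo~\comprehension{Q'\mid x\gets O}]$, with $Q_0[p\mapsto\comprehension{Q'\mid x\gets O}]=Q[p\mapsto O]$ and $\len{Q_0}_p<\len{Q}_p$ by Lemma~\ref{lem:szdecrease}; all other interface reductions (through a $\comprehension{\Box\mid x\gets Q'}$-frame with $Q'$ empty, a singleton, a union, a comprehension, or a conditional) arise from genuine reductions of $Q[p\mapsto O]$ and hence drop the first component, once the contractum is reorganised with Lemmas~\ref{lem:ctxappcomb}, \ref{lem:maxredctxapp} and \ref{lem:contred} (the union subcase needs care because the two branches share the holes of $O$). \textbf{(f)} Finally, if $O=\plwhere~B_0~\kwdo~O_1$ the contractum is $Q[p\mapsto\plwhere~(B\land B_0)~\kwdo~O_1]$, obtained from $Q[p\mapsto\plwhere~B_0~\kwdo~O_1]=Q[p\mapsto O]\in\SN$ by Lemma~\ref{lem:redwherechange}.

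The hardest part is getting the measure right against the two competing migrations of $\plwhere$ in cases~\textbf{(d)} and~\textbf{(e)}: a $\plwhere$ pushed inside $O$ shrinks $\kwsize(O)$ but, re-read as a $\comprehension{\Box\mid\cdot}$-frame on $Q$, must not inflate the $Q$-measure, whereas a $\plwhere$ hoisted over a generator must strictly shrink it; the naive choices $\sz{Q}_p$ (fails~\textbf{(d)}) and $\sz{Q}_p+\kwsize(O)$ (fails~\textbf{(e)}) do not work, and the fix is precisely the asymmetric behaviour of $\len{\cdot}_p$ isolated in Lemma~\ref{lem:szdecrease}(2)--(3), with $\kwsize(O)$ kept as a strictly lower-priority tiebreaker. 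A secondary subtlety is case~\textbf{(f)}: the merge rule yields the guard $B\land B_0$ while Lemma~\ref{lem:redwherechange} delivers $B_0\land B$, but the two are interchangeable for strong normalization since they reduce in lock-step and, when fully evaluated, fire to the same value by commutativity of $\sem{\land}$; handling this, the capture bookkeeping forced by branching auxiliary continuations, and the disjointness conditions $\supp(Q)\cap\supp(O)=\emptyset$ propagated through every restructuring is where the routine-but-lengthy work concentrates.
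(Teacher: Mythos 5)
Your overall strategy is the paper's: well-founded induction with $\len{\cdot}_p$ (not $\sz{\cdot}_p$) as the component that arbitrates between the two migrations of $\plwhere$, $\kwsize(O)$ as a lower-priority tiebreaker, and Lemmas~\ref{lem:Kredunionr}, \ref{lem:Qredunion} and \ref{lem:redwherechange} doing the same jobs they do in the paper (the paper keeps $\maxred{B}$ as a separate last lexicographic component rather than summing it into the first, but both choices work). Your cases (a)--(d) and (f) match the paper's proof, and you correctly identify the asymmetry of Lemma~\ref{lem:szdecrease}(2)--(3) as the crux.

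There is, however, one genuine gap: your case analysis of interface reductions omits the frame $\plwhere~B_0~\kwdo~\Box$, i.e.\ the situation $Q = Q_0 \compop{p} (\plwhere~B_0~\kwdo~\Box)$, where the plugged term produces the merge $\plwhere~B_0~\kwdo~\plwhere~B~\kwdo~O \red \plwhere~(B_0\land B)~\kwdo~O$. This contractum is swept under your blanket claim in~\textbf{(e)} that ``all other interface reductions arise from genuine reductions of $Q[p\mapsto O]$ and hence drop the first component,'' and that claim is false here: $\plwhere~B_0~\kwdo~O$ does not reduce to $\plwhere~(B_0\land B)~\kwdo~O$, so the first component of your metric does not drop; and you cannot instead invoke the induction hypothesis on $(Q_0,\,B_0\land B,\,O)$, because $\maxred{B_0\land B}$ can strictly exceed $\maxred{B}$, so the sum $\maxred{Q_0[p\mapsto O]}+\maxred{B_0\land B}$ may \emph{increase} even though $\len{Q_0}_p < \len{Q}_p$. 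This is precisely the case the paper dispatches with Lemma~\ref{lem:redwherechange} applied to $Q_0[p\mapsto\plwhere~B_0~\kwdo~O] = Q[p\mapsto O]\in\SN$ with added guard $B$ --- the same lemma you already deploy for the symmetric inner merge in~\textbf{(f)} --- so the repair is immediate, but as written the argument has a case whose justification fails. (Two minor remarks: your appeal to an ``auxiliary-continuation version'' of Lemma~\ref{lem:Kredempty} for the $B=\kwfalse$ contractum is fine but the paper only states that lemma for regular continuations, so it would need to be proved; and your observation about the guard order $B\land B_0$ versus $B_0\land B$ in~\textbf{(f)} is a real wrinkle that the paper itself glosses over.)
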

\begin{proof}
  In this proof, we assume the names of bound variables are chosen so as to avoid duplicates, and distinct from the free variables. It is important to notice that this is the main proof in which auxiliary continuations, as opposed to regular continuations, are needed to obtain a usable induction hypothesis when the argument of $\plwhere$ happens to be a comprehension.
  We proceed by well-founded induction on $(Q, B, O, p)$ using the following metric:
  \[
  \begin{array}{l}
  (Q_1,B_1,O_1,p_1) \prec (Q_2,B_2,O_2,p_2)
  \iff
  \\
  \qquad
  (\maxred{Q_1[p_1 \mapsto O_1]}, \len{Q_1}_{p_1}, \kwsize(O_1), \maxred{B_1})
  \\
  \qquad
  \lessdot
  (\maxred{Q_2[p_2 \mapsto O_2]}, \len{Q_2}_{p_2}, \kwsize(O_2), \maxred{B_2})
  \end{array}
  \]
  We will consider all possible contracta and show that each of them must be a strongly normalizing term; we will apply the induction hypothesis to new auxiliary continuations obtained by placing pieces of $O$ into $Q$ or vice-versa: the hypothesis on the supports of $Q$ and $O$ being disjoint is used to make sure that the new continuations do not contain duplicate holes and are thus well-formed. By cases on the possible contracta:
  \begin{itemize}
  \item $Q_1[q \mapsto Q_2 \subst{x}{\pure{L}}][p \mapsto (\plwhere~B~\kwdo~O)\subst{x}{\pure{L}}]$, where $Q = (Q_1 \compop{q} \comprehension{\Box \mid x \gets \setlit{\pure{L}}})[q \mapsto Q_2]$, $q \in \supp(Q_1)$, and $p \in \supp(Q_2)$; by the freshness condition we know $x \notin \FV(B)$, thus $(\plwhere~B~\kwdo~O)\subst{x}{\pure{L}} = \plwhere~B~\kwdo~(O\subst{x}{\pure{L}})$; we take $Q' = Q_1[q \mapsto Q_2\subst{x}{\pure{L}}]$ and $O' = O\subst{x}{\pure{L}}$, and note that $\maxred{Q'[p \mapsto O']} < \maxred{Q[p \mapsto O]}$, because the former term is a contractum of the latter: then we can apply the IH to prove $Q'[p \mapsto \plwhere~B~\kwdo~O'] \in \SN$, as needed.
  \item $Q'[p \mapsto \plwhere~B~\kwdo~O]^\sigma$, where $Q \ared{\sigma} Q'$. We know $\maxred{Q'[p \mapsto O]^\sigma} < \maxred{Q[p \mapsto O]}$ by Lemma~\ref{lem:redmaxred} since the latter is a contractum of the former. By Corollary~\ref{cor:aredforall}, for all $q$ s.t.\ $\sigma(q) = p$ we have $\maxred{Q'[q \mapsto O]} \leq \maxred{Q'[p \mapsto O]^\sigma}$; we can thus apply the IH to obtain $Q[q \mapsto \plwhere~B~\kwdo~O] \in \SN$ whenever $\sigma(q) = p$. By Corollary~\ref{cor:aredforall}, this implies the thesis.
  \item $Q_1[p \mapsto \plwhere~B~\kwdo~\comprehension{Q_2|x \gets O}]$, where $Q = Q_1 \compop{p} \comprehension{Q_2|x \gets \Box}$; we take
  \linebreak
  $O' = \comprehension{Q_2|x \gets O}$, and we note that $Q[p \mapsto O] = Q_1[p \mapsto O']$ and, by Lemma~\ref{lem:szdecrease}, $\len{Q_1}_p < \len{Q}_p$; we can thus apply the IH to prove $Q_1[p \mapsto \plwhere~B~\kwdo~O'] \in \SN$, as needed.
  \item $Q_0[p \mapsto \plwhere~(B_0 \land B)~\kwdo~O]$, where $Q = Q_0 \compop{p} (\plwhere~B_0~\kwdo~\Box)$; we know by hypothesis that $Q_0[p \mapsto \plwhere~B_0~\kwdo~O] \in \SN$ and $B \in \SN$; then the thesis follows by Lemma~\ref{lem:redwherechange}.
  \item $Q[p \mapsto \emptyoset]$, where $O = \emptyoset$: this term is strongly normalizing by hypothesis.
  \item $Q[p \mapsto (\plwhere~B~\kwdo~O_1) \ocup (\plwhere~B~\kwdo~O_2)]$, where $O = O_1 \ocup O_2$; for $i=1,2$, we prove $Q[p \mapsto O_i] \in \SN$ and $\maxred{Q[p \mapsto O_i]} \leq \maxred{Q[p \mapsto O]}$ by Lemma~\ref{lem:Kredunionr}, and we also note $\kwsize(O_i) < \kwsize(O)$; then we can apply the IH to prove $Q[p \mapsto \plwhere~B~\kwdo~O_i] \in \SN$, which implies the thesis by Lemma~\ref{lem:Kredunion}.
  \item $Q[p \mapsto \comprehension{\plwhere~B~\kwdo~O_1|x \gets O_2}]$, where $O = \comprehension{O_1|x \gets O_2}$;
  we take
  \[Q' = Q \compop{p} \comprehension{\Box \mid x \gets O_2}\] and we have that $Q'[p \mapsto \plwhere~B~\kwdo~O_1]$ is equal to $Q[p \mapsto \comprehension{\plwhere~B~\kwdo~O_1|x \gets O_2}]$;
  we thus note $\maxred{Q'[p \mapsto O_1]} = \maxred{Q[p \mapsto O]}$,
  $\len{Q'}_p = \len{Q}_p$ (Lemma~\ref{lem:szdecrease}), and $\kwsize(O_1) < \kwsize(O)$, thus we can apply the IH to prove $Q'[p \mapsto \plwhere~B~\kwdo~O_1] \in \SN$, as needed. We remark that in this subcase it is essential that the IH be generalized to auxiliary continuations, because even if we assume that $Q$ is a regular continuation $K$ and $O_2$ is a pure term $\pure{L}$, $K \compop{p} \comprehension{\Box \mid x \gets \pure{L}}$ is \emph{not} a regular continuation.
  \item $Q[p \mapsto \plwhere~(B \land B_0)~\kwdo~O_0]$, where $O = \plwhere~B_0~\kwdo~O_0$; we know by hypothesis that $Q[p \mapsto \plwhere~B_0~\kwdo~O_0] \in \SN$ and $B \in \SN$; then the thesis follows by Lemma~\ref{lem:redwherechange}.
  \item Reductions within $B$ or $O$ make the induction metric smaller, thus follow immediately from the IH\@.
  \qedhere
  \end{itemize}
  \end{proof}

\begin{lem}\label{lem:bvK}
For all regular continuations $K$, $\BV(K) = \emptyset$.
\end{lem}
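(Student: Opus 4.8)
The plan is to proceed by structural induction on the grammar of regular continuations (Definition~\ref{def:cont}), feeding each case directly into the defining clauses of $\BV$ (Definition~\ref{def:bv}). The two base cases $K = \ibox{p}$ and $K = \pure{M}$ are immediate, since $\BV(\ibox{p}) = \BV(\pure{M}) = \emptyset$ by definition. For $K = K_1 \ocup K_2$, both $K_1$ and $K_2$ are again regular continuations, so the induction hypothesis gives $\BV(K_1) = \BV(K_2) = \emptyset$ and hence $\BV(K) = \BV(K_1) \cup \BV(K_2) = \emptyset$. The case $K = \plwhere~\pure{B}~\kwdo~K'$ is handled identically: $\BV(K) = \BV(K') = \emptyset$ by the induction hypothesis.

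The only clause of $\BV$ that can introduce a bound variable is the comprehension clause, and the point to observe is that it never fires on a regular continuation. Indeed, if $K = \comprehension{\pure{M} \mid x \gets K'}$, the head of the comprehension is forced by the grammar to be a \emph{pure} term, so $\supp(\pure{M}) = \emptyset$; the definition of $\BV$ on comprehensions therefore takes the ``otherwise'' branch, yielding $\BV(K) = \BV(\pure{M}) \cup \BV(K') = \emptyset \cup \BV(K')$, which is $\emptyset$ by the induction hypothesis. This is precisely where it matters that regular continuations (unlike auxiliary continuations) place holes only in comprehension generators and never in comprehension bodies: had $\pure{M}$ been allowed to carry a hole, the bound variable $x$ would be added to $\BV(K)$.

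I do not expect any genuine obstacle here; the lemma is essentially a bookkeeping observation that the bound-variable clause of $\BV$ is dead on regular continuations, and its role is to let us discharge the hypothesis $\BV(Q) \cap \FV(B) = \emptyset$ of Lemma~\ref{lem:redwheremain} for free whenever $Q$ is an ordinary continuation $K$.
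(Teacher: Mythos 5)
Your proof is correct and follows essentially the same route as the paper: the paper's one-line argument is precisely your key observation that in a regular continuation the head of a comprehension is a pure term, so $\supp(\pure{M}) = \emptyset$ and the bound-variable branch of the $\BV$ clause for comprehensions never fires. The explicit structural induction you supply is the routine bookkeeping the paper leaves implicit.
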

\begin{proof}
This follows immediately by noticing that in regular continuations $K$ (unlike auxiliary continuations $Q$) holes never appear in the head of a comprehension.
\end{proof}

\begin{cor}[reducibility for conditionals]\label{cor:redwhere}\mbox{}\\
If $\pure{B} \in \SN$ and $\pure{N} \in \Red{\setlit{T}}$, then $\plwhere~\pure{B}~\kwdo~\pure{N} \in \Red{\setlit{T}}$.
\end{cor}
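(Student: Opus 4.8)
The plan is to unfold the definition of reducibility at a collection type and then invoke Lemma~\ref{lem:redwheremain} directly. Write $\cC := \Red{T}$, so that $\Red{\setlit{T}} = \toptopset{\cC}$. By the definition of $\top\top$-lifting, proving $\plwhere~\pure{B}~\kwdo~\pure{N} \in \Red{\setlit{T}}$ amounts to fixing an arbitrary index $p$ and an arbitrary continuation $K \in \topset{\cC_p}$ and showing $K[p \mapsto \plwhere~\pure{B}~\kwdo~\pure{N}] \in \SN$. I would then instantiate Lemma~\ref{lem:redwheremain} with $Q := K$, $O := \pure{N}$ and guard $B := \pure{B}$; note that both $K$ and the pure term $\pure{N}$ are, in particular, auxiliary continuations, so the lemma applies.

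I would then check the four hypotheses of Lemma~\ref{lem:redwheremain}. First, $K[p \mapsto \pure{N}] \in \SN$: this is exactly what $\pure{N} \in \toptopset{\cC}$ gives us, since $K \in \topset{\cC_p}$. Second, $\pure{B} \in \SN$ by assumption. Third, $\BV(K) \cap \FV(\pure{B}) = \emptyset$: by Lemma~\ref{lem:bvK} we have $\BV(K) = \emptyset$ (a regular continuation never places a hole in the head of a comprehension), so this intersection is empty whatever $\FV(\pure{B})$ is. Fourth, $\supp(K) \cap \supp(\pure{N}) = \emptyset$: this holds because $\pure{N}$ is a pure term, hence $\supp(\pure{N}) = \emptyset$. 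With all four conditions discharged, Lemma~\ref{lem:redwheremain} yields $K[p \mapsto \plwhere~\pure{B}~\kwdo~\pure{N}] \in \SN$; as $p$ and $K$ were arbitrary, this establishes $\plwhere~\pure{B}~\kwdo~\pure{N} \in \toptopset{\cC} = \Red{\setlit{T}}$.

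There is essentially no real obstacle at this level: all the difficulty of conditionals has already been absorbed into Lemma~\ref{lem:redwheremain} — where the generalization from regular to auxiliary continuations, the well-founded induction over the conversions commuting with $\plwhere$, and the bookkeeping about bound variables do their work — and into the auxiliary Lemma~\ref{lem:redwherechange}. The only points demanding a moment's attention are the two side conditions on $\BV$ and on supports, and both are discharged for free precisely because the corollary instantiates the general lemma at a \emph{regular} continuation $K$ (so $\BV(K)=\emptyset$) and a \emph{pure} body $\pure{N}$ (so $\supp(\pure{N})=\emptyset$); this is exactly the reason Lemma~\ref{lem:redwheremain} was stated in its more liberal form with the explicit $\BV$/support hypotheses rather than being phrased directly for continuations and pure terms.
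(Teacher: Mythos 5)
Your proof is correct and follows exactly the paper's route: unfold $\top\top$-lifting, invoke Lemma~\ref{lem:bvK} to get $\BV(K)=\emptyset$, and apply Lemma~\ref{lem:redwheremain} with $Q=K$ and $O=\pure{N}$. Your version merely spells out the hypothesis checks (in particular $\supp(\pure{N})=\emptyset$ for a pure term) that the paper leaves implicit.
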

\begin{proof}
We need to prove that for all $K \in \topset{\Red{T}}$ we have $K[\plwhere~\pure{B}~\kwdo~\pure{N}] \in \SN$. By Lemma~\ref{lem:bvK}, we prove $\BV(K) = \emptyset$; then we apply Lemma~\ref{lem:redwheremain} with $Q = K$ to obtain the thesis.
\end{proof}

Finally, reducibility for the emptiness test is proved in the same style as~\cite{Cooper09a}.

\begin{lem}\label{lem:redisempty}
For all $M$ and $T$ such that $\Gamma \vdash M : \setlit{T}$ and $M \in \toptopset{\Red{T}}$, we have $\isempty(M) \in \SN$.
\end{lem}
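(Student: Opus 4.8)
The idea is to exploit that the absorption rule $\isempty~M \red \isempty~(\comprehension{\tuple{}|x\gets M})$ turns its argument into a term of \emph{relation type}, and hence can fire at most once along any reduction of $\isempty(M)$. So the plan is: (1)~show $\isempty(N) \in \SN$ whenever $N \in \SN$ and $N$ has relation type, and then (2)~bootstrap the general case from this, using the reducibility hypothesis to guarantee that the comprehension produced by the absorption rule is strongly normalizing.

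For step~(1), I would argue by induction on $\maxred{N}$. As $N$ has relation type, the absorption rule is not applicable, so every contractum of $\isempty(N)$ has the form $\isempty(N')$ with $N \red N'$; reduction preserves relation-typedness (a routine subject-reduction property of \NRCl), $N' \in \SN$, and $\maxred{N'} < \maxred{N}$ by Lemma~\ref{lem:redmaxred}, so the induction hypothesis gives $\isempty(N') \in \SN$, whence $\isempty(N) \in \SN$.

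For step~(2), observe that $M \in \toptopset{\Red{T}}$ implies $M \in \SN$ by $\CRi$ (Lemma~\ref{lem:cr1set} and Theorem~\ref{thm:rediscr}), so $\maxred{M}$ is defined and we may induct on it, showing every contractum of $\isempty(M)$ is strongly normalizing. A reduction within the argument yields $\isempty(M')$ with $M \red M'$, and here $M' \in \toptopset{\Red{T}}$ by $\CRii$ (Lemma~\ref{lem:cr2set}) and $\maxred{M'} < \maxred{M}$ (Lemma~\ref{lem:redmaxred}), so the induction hypothesis applies. The only remaining possibility (available when $M$ is not relation-typed) is $\isempty(\comprehension{\tuple{}|x\gets M})$; here the body does not contain $x$ and is trivially reducible, so Lemma~\ref{lem:redcomprehension}, together with $\CRi$ from Theorem~\ref{thm:rediscr} and the hypothesis $M \in \toptopset{\Red{T}}$, shows $\comprehension{\tuple{}|x\gets M}$ to be reducible of a relation type, hence in $\SN$ by $\CRi$; step~(1) then yields $\isempty(\comprehension{\tuple{}|x\gets M}) \in \SN$. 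Since all contracta are strongly normalizing, $\isempty(M) \in \SN$.

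The main obstacle, and the reason for the two-stage structure, is that the absorption rule replaces $M$ by a comprehension whose maximal reduction length may exceed $\maxred{M}$, so a single induction on $\maxred{M}$ cannot discharge that case by the induction hypothesis. What rescues the argument is precisely that the absorbed term is relation-typed and hence immune to further absorption, which step~(1) handles in a self-contained way; equivalently, one can run a single well-founded induction on the lexicographic measure whose first component records whether the argument is relation-typed and whose second component is the argument's maximal reduction length.
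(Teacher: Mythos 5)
Your proposal is correct, and the two-stage structure (first dispose of the relation-typed case, then use the absorption rule exactly once to reduce to it) is exactly the right way to organise the argument; note that the paper itself gives no proof of Lemma~\ref{lem:redisempty}, deferring to Cooper's thesis, so there is nothing to compare against line by line. You correctly identify the crux: the contractum $\isempty(\comprehension{\tuple{}\mid x\gets M})$ cannot be handled by an induction on $\maxred{M}$ alone, and must instead be shown strongly normalizing directly, which your step~(1) does because the absorption rule is blocked on relation-typed arguments. Two small points are worth making explicit. First, both stages quietly rely on subject reduction — in step~(1) to know that $N'$ is still relation-typed so the absorption rule stays blocked, and in step~(2) to re-establish $\Gamma\vdash M':\setlit{T}$ before invoking the induction hypothesis; this is routine for \NRCl but is nowhere proved in the paper, so it should be stated as an assumption. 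Second, for the typing rule for comprehensions and for Lemma~\ref{lem:redcomprehension} to apply, the body of the absorbed comprehension has to be read as the singleton $\setlit{\tuple{}}$ rather than the bare record $\tuple{}$; with that reading, Lemma~\ref{lem:redsingle} gives $\setlit{\tuple{}}\in\toptopset{\Red{\tuple{}}}$, Lemma~\ref{lem:redcomprehension} yields $\comprehension{\setlit{\tuple{}}\mid x\gets M}\in\Red{\setlit{\tuple{}}}$, and $\setlit{\tuple{}}$ is indeed a relation type, so your step~(1) applies. With those clarifications the argument is complete.
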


\subsection{Main theorem}
Before stating and proving the main theorem, we introduce some auxiliary notation.

\begin{defi}\mbox{}
\begin{enumerate}
\item A substitution $\rho$ satisfies $\Gamma$ (notation:
$\rho \vDash \Gamma$) iff, for all $x \in \dom(\Gamma)$, $\rho(x) \in
\Red{\Gamma(x)}$.
\item A substitution $\rho$ satisfies $M$ with type $T$ (notation: $\rho \vDash M : T$) iff $M \rho \in
\Red{T}$.
\end{enumerate}
\end{defi}

\noindent
As usual, the main result is obtained as a corollary of a stronger theorem generalized to substitutions into open terms, by using the identity substitution $\id_\Gamma$.
\begin{lem}\label{lem:idsubst}
For all $\Gamma$, we have $\id_\Gamma \vDash \Gamma$.
\end{lem}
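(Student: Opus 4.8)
The plan is to unfold the definition of $\id_\Gamma \vDash \Gamma$ and reduce it to the standard fact, already flagged in the text, that every variable belongs to every candidate of reducibility. By definition, $\id_\Gamma \vDash \Gamma$ means that for every $x \in \dom(\Gamma)$ we have $\id_\Gamma(x) \in \Red{\Gamma(x)}$; since $\id_\Gamma(x) = x$, the goal is exactly to show $x \in \Red{T}$ where $T = \Gamma(x)$.

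First I would invoke Theorem~\ref{thm:rediscr} to get $\Red{T} \in \CR$, i.e.\ that $\Red{T}$ satisfies $\CRi$, $\CRii$, and $\CRiii$. Then I would observe that a variable $x$ is neutral (it matches the first production of Definition~\ref{def:neutral}) and that $x$ has no reducts whatsoever: no basic contraction of Figure~\ref{fig:normNRCl} has a bare variable on its left-hand side, and every congruence rule of Figure~\ref{fig:congruenceNRCl} requires a proper subterm to reduce, which $x$ lacks. Consequently the premise of $\CRiii$ for $x$, namely $\forall M'.\, x \red M' \then M' \in \Red{T}$, holds vacuously, and $\CRiii(\Red{T})$ yields $x \in \Red{T}$.

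Since $x$ and $T$ were arbitrary, this gives $\id_\Gamma(x) = x \in \Red{\Gamma(x)}$ for every $x \in \dom(\Gamma)$, which is precisely $\id_\Gamma \vDash \Gamma$. There is no genuine obstacle here; the only point requiring (trivial) care is the verification that variables are simultaneously neutral and in normal form, so that $\CRiii$ can be applied with an empty set of premises. This lemma then feeds directly into the fundamental theorem, where strong normalization of an arbitrary well-typed term $M$ with $\Gamma \vdash M : T$ is obtained by applying the (yet-to-come) generalized reducibility statement to the substitution $\id_\Gamma$ and using $\CRi(\Red{T})$.
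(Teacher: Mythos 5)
Your proof is correct and follows exactly the route the paper has in mind: the paper gives no explicit proof of this lemma but earlier notes as a standard fact that ``all free variables are in every candidate,'' which combined with Theorem~\ref{thm:rediscr} is precisely your argument. Your unfolding — variables are neutral, have no reducts, hence lie in every candidate by a vacuous application of $\CRiii$ — is the standard justification of that fact and needs no changes.
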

\begin{thm}\label{thm:reducibility}
If $\Gamma \vdash M : T$, then for all $\rho$ such that $\rho \vDash \Gamma$, we have $\rho \vDash M : T$
\end{thm}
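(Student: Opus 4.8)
The plan is to prove Theorem~\ref{thm:reducibility} by structural induction on the derivation of $\Gamma \vdash M : T$, dispatching each typing rule by invoking the reducibility lemma established earlier for the corresponding term former. The variable rule reads off directly from $\rho \vDash \Gamma$. The cases for constants, records, projections, abstractions and applications are the standard Girard-style arguments: an applied constant $c(\vect{M_n})\rho$ is neutral, so by $\CRiii$ of $\Red{A'} = \SN$ (Theorem~\ref{thm:rediscr}) it suffices that every one-step contractum is strongly normalizing, which follows from the induction hypotheses $M_i\rho \in \SN$ together with the fact that $\sem{c}$ returns a value; projection and application simply unfold the definitions of $\Red{\tuple{\vect{\ell_k : T_k}}}$ and $\cC \imp \cD$; and abstraction uses that $(\lambda x.M\rho)\,N \ured M\rho\subst{x}{N}$, closing under an auxiliary induction on $\maxred{M\rho\subst{x}{N}} + \maxred{N}$ (itself relying on the induction hypothesis via the extended substitution $\rho[x \mapsto N]$).

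For the collection constructors I would appeal directly to the section's lemmas. The empty collection is handled by Corollary~\ref{cor:redempty}; the singleton by Lemma~\ref{lem:redsingle} applied to the induction hypothesis on its subterm; the union by Corollary~\ref{cor:redunion} applied to the two induction hypotheses; and the conditional $\plwhere~M~\kwdo~N$ by Corollary~\ref{cor:redwhere}, using $M\rho \in \Red{\boolty} = \SN$ and $N\rho \in \Red{\setlit{T}}$. For the emptiness test, $\isempty(M)\rho$ is neutral, so by $\CRiii$ of $\SN$ it is enough to check each contractum is strongly normalizing; reductions inside the argument are handled by the induction hypothesis, and the one top-level contractum is dealt with by Lemma~\ref{lem:redisempty}, using $M\rho \in \toptopset{\Red{T}}$.

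The case needing a little care is the comprehension $\comprehension{M \mid x \gets N}$, with premises $\Gamma, x:T \vdash M : \setlit{S}$ and $\Gamma \vdash N : \setlit{T}$. Renaming $x$ so that $x \notin \dom(\rho)$ and $x$ is not free in the codomain of $\rho$, the induction hypothesis on the second premise gives $N\rho \in \toptopset{\Red{T}}$. For the first premise, for any $\pure{L} \in \Red{T}$ we extend to $\rho[x \mapsto \pure{L}]$; since $\Red{T}$ is a candidate and $\rho \vDash \Gamma$, we have $\rho[x \mapsto \pure{L}] \vDash \Gamma, x:T$, so the induction hypothesis yields $(M\rho)\subst{x}{\pure{L}} = M\rho[x\mapsto\pure{L}] \in \toptopset{\Red{S}}$. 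As $\Red{T}$ and $\Red{S}$ satisfy $\CRi$ by Theorem~\ref{thm:rediscr}, Lemma~\ref{lem:redcomprehension} (taking $\cC = \Red{T}$, $\cD = \Red{S}$, $\pure{M} = N\rho$, $\pure{N} = M\rho$) gives $(\comprehension{M \mid x \gets N})\rho = \comprehension{M\rho \mid x \gets N\rho} \in \toptopset{\Red{S}} = \Red{\setlit{S}}$.

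The main result, strong normalization, then follows as the announced corollary: instantiating $\rho = \id_\Gamma$, which satisfies $\Gamma$ by Lemma~\ref{lem:idsubst}, gives $M = M\,\id_\Gamma \in \Red{T}$, hence $M \in \SN$ by $\CRi$ of $\Red{T}$. I expect no serious obstacle at this level: all the difficulty has been pushed into the lemmas of Section~\ref{sec:red} and Section~\ref{sec:sn} — above all Lemma~\ref{lem:redcomprehension} and Lemma~\ref{lem:redwheremain}, where the branching-continuation machinery does its work — so that the fundamental theorem reduces to a routine induction; the only genuinely delicate point is the capture-avoiding management of the bound variable when extending $\rho$ in the abstraction and comprehension cases, and the bookkeeping that the substituted terms are pure so that the purity hypotheses of the collection lemmas are met.
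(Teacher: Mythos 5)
Your proposal is correct and follows essentially the same route as the paper: structural induction on the typing derivation, discharging the collection cases via Lemma~\ref{lem:redsingle}, Corollaries~\ref{cor:redempty}, \ref{cor:redunion}, \ref{cor:redwhere}, Lemma~\ref{lem:redisempty}, and the comprehension case by extending $\rho$ with $x \mapsto L$ for $L \in \Red{T}$ and invoking Lemma~\ref{lem:redcomprehension}, with the non-collection cases handled by the standard Girard-style arguments. The only cosmetic difference is that for $\isempty$ you route through $\CRiii$, whereas Lemma~\ref{lem:redisempty} already yields $\isempty(M\rho) \in \SN = \Red{\boolty}$ directly from the induction hypothesis.
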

\begin{proof}
By induction on the derivation of $\Gamma \vdash M : T$. When $M$ is a singleton, an empty collection, a union, a conditional, or an emptiness test, we use Lemmas~\ref{lem:redsingle} and~\ref{lem:redisempty}, and Corollaries~\ref{cor:redempty},~\ref{cor:redunion}, and~\ref{cor:redwhere}. For comprehensions such that $\Gamma \vdash \comprehension{M_1 | x \gets M_2} : \setlit{T}$, we know by IH that $\rho \vDash M_2 : \setlit{S}$ and for all $\rho' \vDash \Gamma, x:S$ we have $\rho' \vDash M_1 : \setlit{T}$: we prove that for all $L \in \Red{S}$, $\rho\subst{x}{L} \vDash \Gamma, x:S$, hence $\rho\subst{x}{L} \vDash M_1 : \setlit{T}$; then we obtain $\rho \vDash \comprehension{M_1 | x \gets M_2} : \setlit{T}$ by Lemma~\ref{lem:redcomprehension}. Non-collection cases are standard.
\end{proof}
\begin{cor}
If $\Gamma \vdash M : T$, then $M \in \SN$.
\end{cor}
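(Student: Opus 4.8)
The plan is to read off the result as an immediate consequence of the fundamental theorem (Theorem~\ref{thm:reducibility}) together with the $\CRi$ clause of Theorem~\ref{thm:rediscr}; essentially all the work has already been done, and what remains is to instantiate the theorem at the identity substitution. Concretely, I would first invoke Lemma~\ref{lem:idsubst} to get $\id_\Gamma \vDash \Gamma$, then apply Theorem~\ref{thm:reducibility} to the derivation $\Gamma \vdash M : T$ with $\rho := \id_\Gamma$, obtaining $\id_\Gamma \vDash M : T$, i.e.\ $M\id_\Gamma \in \Red{T}$. Since the identity substitution acts trivially, $M\id_\Gamma = M$, so $M \in \Red{T}$.

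The second and final step is to recall that by Theorem~\ref{thm:rediscr} we have $\Red{T} \in \CR$, and in particular $\CRi(\Red{T})$ holds, which by definition means $\Red{T} \subseteq \SN$. Hence $M \in \SN$, as required.

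There is no genuine obstacle at this stage: the substantive content — that each $\Red{T}$ is a candidate of reducibility (which rests on the branching-continuation machinery of Section~\ref{sec:red}), and that every well-typed term lands in the reducibility set of its type (Lemmas~\ref{lem:redsingle}--\ref{lem:redisempty} and their corollaries) — is all established earlier. The only point worth stating carefully is that $M\id_\Gamma = M$, so that membership in $\Red{T}$ transfers to the bare term $M$; everything else is bookkeeping.
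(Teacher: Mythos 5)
Your proposal is correct and is exactly the argument the paper intends: instantiate Theorem~\ref{thm:reducibility} at the identity substitution via Lemma~\ref{lem:idsubst}, observe $M\id_\Gamma = M$, and conclude with the $\CRi$ clause of Theorem~\ref{thm:rediscr}. No differences worth noting.
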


\section{Heterogeneous Collections}\label{sec:heterogeneous}
SQL allows a user to write queries that will evaluate to relations that are bags of tuples by means of constructs including $\kwsel$ statements and $\kwunion~\kwall$ operations; additionally, it also allows constructs like $\kwsel~\kwdist$ and $\kwunion$ to produce sets of tuples (more precisely, bags without duplicates); both kinds of constructs can be freely mixed in the same query. In contrast, the language \NRCl we have discussed in the previous sections can only deal with one kind of collection (either sets or bags).

In a short paper~\cite{ricciotti19dbpl}, we introduced a generalization of \NRC called
\NRCsb that makes up for this shortcoming by allowing both set-valued and bag-valued
collections (with distinct types denoted by $\setlit{T}$ and $\msetlit{T}$),
along with mappings from bags to sets (deduplication $\distinct$) and
from sets to bags (promotion $\promote$).  We conjectured that this
language also satisfies a normalization property, allowing its normal forms to be translated to SQL\@. Here, we prove that \NRCsb is, indeed, strongly normalizing,
even when extended to a richer language \NRClsb with
higher-order (nonrecursive) functions. Its syntax is a straightforward extension of \NRCl:
\[
\begin{array}{lrcl}
  \mathbf{types} & S, T & ::= & \ldots \orelse \msetlit{T} \\
  \mathbf{terms} & L, M, N & ::= & \ldots \orelse \emptymset \orelse \msetlit{M} \orelse M \uplus N \orelse \biguplus \msetlit{M | x \leftarrow N} \\
      &  & \orelse & \bagwhere~M~\kwdo~N \orelse \distinct M \orelse \promote M
\end{array}
\]

We use $\msetlit{T}$ to denote the type of bags containing elements of type $T$; similarly, the notations $\emptymset$, $\msetlit{M}$, $M \uplus N$,
$\biguplus\msetlit{M | x \gets N}$ denote
empty and singleton bags,
bag disjoint union and bag comprehension; the language also includes conditionals on bags. The notations $\promote M$ and $\distinct N$ stand, respectively, for the bag containing exactly one copy of each element of the set $M$, and for the set containing the elements of the bag $N$, forgetting about their multiplicity. We do not need to provide a primitive emptiness test for bags, since it can be defined anyway as $\bagempty~M := \plempty~\distinct M$.

\begin{figure}[tb]
  \begin{center}
  \AxiomC{$\phantom{A}$}
  \UnaryInfC{$\Gamma \vdash \emptymset : \msetlit{T}$}
  \DisplayProof
  \hspace{.5cm}
  \AxiomC{$\Gamma \vdash M : T$}
  \UnaryInfC{$\Gamma \vdash \msetlit{M} : \msetlit{T}$}
  \DisplayProof
  \hspace{.5cm}
  \AxiomC{$\Gamma \vdash M : \msetlit{T}$}
  \AxiomC{$\Gamma \vdash N : \msetlit{T}$}
  \BinaryInfC{$\Gamma \vdash M \uplus N : \msetlit{T}$}
  \DisplayProof

  \medskip

  \AxiomC{$\Gamma, x:T \vdash M : \msetlit{S}$}
  \AxiomC{$\Gamma \vdash N : \msetlit{T}$}
  \BinaryInfC{$\Gamma \vdash \mcomprehension{M | x \leftarrow N} : \msetlit{S}$}
  \DisplayProof
  \hspace{.5cm}
  \AxiomC{$\Gamma \vdash M : \boolty$}
  \AxiomC{$\Gamma \vdash N : \msetlit{T}$}
  \BinaryInfC{$\Gamma \vdash \bagwhere~M~\kwdo~N : \msetlit{T}$}
  \DisplayProof

  \medskip

  \AxiomC{$\Gamma \vdash M : \msetlit{T}$}
  \UnaryInfC{$\Gamma \vdash \distinct M : \setlit{T}$}
  \DisplayProof
  \hspace{.5cm}
  \AxiomC{$\Gamma \vdash M : \setlit{T}$}
  \UnaryInfC{$\Gamma \vdash \promote M : \msetlit{T}$}
  \DisplayProof
  \end{center}
  \caption{Additional typing rules for \NRClsb.}\label{fig:NRClsb_typing}
  \end{figure}

The type system for \NRClsb is obtained from the one for \NRCl by adding the unsurprising rules of Figure~\ref{fig:NRClsb_typing}: these largely replicate, at the bag level, the corresponding set-based rules; additionally, the rules for $\distinct$ and $\promote$ describe how these operators turn bag-typed terms into set-typed ones, and vice-versa. Similarly, the rewrite system for \NRClsb is also an extension of the one for \NRCl, with additional reduction rules for the new operators involving bags that mimic the corresponding set-based operations; there are simplification rules involving $\distinct$ that state that the deduplication of empty or singleton bags yields empty or singleton sets, and that deduplication commutes with bag union and comprehension, turning them into their set counterparts. The promotion of empty or singleton sets can be simplified away in a symmetric way; however, promotion does \emph{not} commute with union and comprehension (this avoids contractions like $\promote(\setlit{x} \cup \setlit{x}) \not\red \promote\setlit{x} \uplus \promote\setlit{x}$, which would be unsound in the intended model, where $\cup$ is idempotent, but $\uplus$ is not). These reduction rules are described in Figure~\ref{fig:norm_NRClsb}.

\begin{figure}[tb]
    \begin{center}\small{
    \begin{tabular}{r@{$~\red~$}ll}
    \multicolumn{3}{c}{$
    \mcomprehension{\emptymset | x \leftarrow M} \red \emptymset
    \qquad
    \mcomprehension{M | x \leftarrow \emptymset} \red \emptymset
    \qquad
    \mcomprehension{M | x \leftarrow \msetlit{N}} \red M[N/x]
    $}
    \\
    $\mcomprehension{M \uplus N | x \leftarrow R}$
     & $\mcomprehension{M | x \leftarrow R} \uplus \mcomprehension{N | x \leftarrow R}$ &
    \\
    $\mcomprehension{M | x \leftarrow N \uplus R}$
    & $\mcomprehension{M | x \leftarrow N} \uplus \mcomprehension{M | x \leftarrow R}$ &
    \\
    $\mcomprehension{M | y \leftarrow \mcomprehension{R | x \leftarrow N }}$
    & $\mcomprehension{ M | x \leftarrow N, y \leftarrow R}$ & (if $x \notin \FV(M)$)
    \\
    $\mcomprehension{M | x \leftarrow \bagwhere~N~\kwdo~R}$
    & $\mcomprehension{\bagwhere~N~\kwdo~M | x \leftarrow R}$ & (if $x \notin \FV(M)$)
    \\
    \multicolumn{3}{c}{} 
    \\
    \multicolumn{3}{c}{$
    \bagwhere~\kwtrue~\kwdo~M \red M
    \qquad
    \bagwhere~\kwfalse~\kwdo~M \red \emptymset
    \qquad
    \bagwhere~M~\kwdo~\emptymset \red \emptymset
    $}
    \\
    $\bagwhere~M~\kwdo~(N \uplus R)$ & $(\bagwhere~M~\kwdo~N) \uplus (\bagwhere~M~\kwdo~R)$ &
    \\
    $\bagwhere~M~\kwdo~\mcomprehension{N | x \leftarrow R}$
    & $\mcomprehension{\bagwhere~M~\kwdo~N | x \leftarrow R}$ &
    \\
    $\bagwhere~M~\kwdo~\bagwhere~N~\kwdo~R$ & $\bagwhere~(M \land N)~\kwdo~R$ &
    \end{tabular}
    }
    \end{center}
    \[
\begin{array}{c}
\distinct\emptymset \red \emptyset \qquad
\distinct\msetlit{M} \red \setlit{M} \qquad
\distinct(M \uplus N) \red \distinct M \cup \distinct N \qquad
\distinct\promote M \red M
\\
\distinct\biguplus\msetlit{M | x \gets N} \red
  \comprehension{\distinct M | x \gets \distinct N} \qquad
\distinct(\bagwhere~M~\kwdo~N) \red \plwhere~M~\kwdo~\distinct N
\\
\promote\emptyset \red \emptymset \qquad
\promote\setlit{M} \red \msetlit{M} \qquad
\promote(\plwhere~M~\kwdo~N) \red \bagwhere~M~\kwdo~\promote N
\end{array}
\]
\caption{Additional rewrite rules for \NRClsb.}\label{fig:norm_NRClsb}
\end{figure}

An obvious characteristic of \NRClsb, compared to \NRCl, is the duplication of syntax caused by the presence of two separate types of collections. A direct proof of strong normalization of this calculus would require us to consider many more cases than we have seen in \NRCl. A more efficient approach is to show that the strong normalization property of \NRClsb descends, as a corollary, from the strong normalization of a slightly tweaked version of \NRCl, comprising a single type of collections, but also retaining the $\distinct$ and $\promote$ operators of \NRClsb. This is the formalism \NRCldi described in the next subsection.

\subsection{The simplified language \texorpdfstring{\NRCldi}{NRC-λδι}}
The simplified language \NRCldi is obtained from \NRCl by adding the two operators $\distinct$, $\promote$, and nothing else:
\[
\begin{array}{rcl}
  L, M, N & ::= & \ldots \orelse  \distinct M \orelse \promote M
\end{array}
\]
\NRCldi does not add any type compared to \NRCl: in particular, if $M$ has type $\setlit{T}$, then $\distinct M$ and $\promote M$ have type $\setlit{T}$ as well. The rewrite system extends \NRCl with straightforward adaptations of the \NRClsb rules involving $\distinct$ and $\promote$. All of the additional typing and rewrite rules are shown in Figure~\ref{fig:rules_NRCldi}.

\begin{figure}[tb]
  \begin{center}
  \AxiomC{$\Gamma \vdash M : \setlit{T}$}
  \UnaryInfC{$\Gamma \vdash \distinct M : \setlit{T}$}
  \DisplayProof
  \hspace{.5cm}
  \AxiomC{$\Gamma \vdash M : \setlit{T}$}
  \UnaryInfC{$\Gamma \vdash \promote M : \setlit{T}$}
  \DisplayProof
  \end{center}

  \bigskip

\[
\begin{array}{c}
\distinct\emptyset \red \emptyset \qquad
\distinct\setlit{M} \red \setlit{M} \qquad
\distinct(M \cup N) \red \distinct M \cup \distinct N \qquad
\distinct\promote M \red M
\\
\distinct\comprehension{M | x \gets N} \red
  \comprehension{\distinct M | x \gets \distinct N} \qquad
\distinct(\plwhere~M~\kwdo~N) \red \plwhere~M~\kwdo~\distinct N
\\
\promote\emptyset \red \emptyset \qquad
\promote\setlit{M} \red \setlit{M} \qquad
\promote(\plwhere~M~\kwdo~N) \red \plwhere~M~\kwdo~\promote N
\end{array}
\]
\caption{Additional typing and rewrite rules for \NRCldi.}\label{fig:rules_NRCldi}
\end{figure}

\begin{figure}[tb]
\begin{gather*}
  \erase{A} = A \qquad \erase{S \imp T} = \erase{S} \imp \erase{T} \qquad
  \erase{\tuple{\vect{\ell : T}}} = \tuple{\vect{\ell : \erase{T}}} \qquad
  \erase{\setlit{T}} = \erase{\msetlit{T}} = \setlit{\erase{T}}
  \\
  \erase{x_1 : T_1,\ldots, x_n : T_n} = x_1 : \erase{T_1}, \ldots, x_n : \erase{T_n}
\end{gather*}
\begin{align*}
  \erase{x} & = x
  &
  \erase{c(\vect{M})} & = c(\vect{\erase{M}})
  \\
  \erase{\tuple{\vect{\ell = M}}} & = \tuple{\vect{\ell = \erase{M}}}
  &
  \erase{M.\ell} & = \erase{M}.\ell
  \\
  \erase{\lambda x.M} & = \lambda x.\erase{M}
  &
  \erase{(M~N)} & = (\erase{M}~\erase{N})
  \\
  \erase{\emptyset} & = \erase{\emptymset} = \emptyset
  &
  \erase{\setlit{M}} & = \erase{\msetlit{M}} = \setlit{\erase{M}}
  \\
  \erase{M \cup N} & = \erase{M \uplus N} = \erase{M} \cup \erase{M}
  &
  \erase{\plempty~M} & = \plempty~\erase{M}
\end{align*}
\begin{align*}
  \erase{\comprehension{M \mid x \gets N}} & = \erase{\mcomprehension{M \mid x \gets N}} = \comprehension{\erase{M} \mid x \gets \erase{N}}
  \\
  \erase{\plwhere~M~\kwdo~N} & = \erase{\bagwhere~M~\kwdo~N} = \plwhere~\erase{M}\kwdo~\erase{N}
\end{align*}
\caption{Forgetful translation of \NRClsb into \NRCldi.}\label{fig:xlatedef}
\end{figure}

\NRClsb types and terms can be translated to \NRCldi by means of a forgetful operation $\erase{\cdot}$, described in Figure~\ref{fig:xlatedef}.
A straightforward induction is sufficient to prove that this translation preserves typability and reduction.

\begin{thm}\label{thm:xlatetyp}
If $\Gamma \vdash M : T$ in \NRClsb, then $\erase{\Gamma} \vdash \erase{M} : \erase{T}$
in \NRCldi.
\end{thm}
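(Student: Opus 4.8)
The plan is to proceed by structural induction on the derivation of $\Gamma \vdash M : T$ in \NRClsb, building in each case a derivation of $\erase{\Gamma} \vdash \erase{M} : \erase{T}$ by applying the analogous \NRCldi typing rule to the erased premises supplied by the induction hypotheses. Before starting the case analysis, I would record two routine facts about the erasure of Figure~\ref{fig:xlatedef} that are used throughout: first, erasure commutes with context extension, i.e.\ $\erase{\Gamma, x : S} = \erase{\Gamma}, x : \erase{S}$, which is immediate from the clause defining $\erase{\cdot}$ on typing contexts; and second, erasure is the identity on atomic types, so $\erase{A} = A$ and the constant signature $\Sigma$ is left unchanged (every constant still has an atomic argument and return type).

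For the constructs that \NRClsb inherits verbatim from \NRCl — variables, applied constants, record formation and projection, $\lambda$-abstraction, application, $\emptyset$, singletons, set union, set comprehension, the emptiness test, and the set conditional $\plwhere$ — erasure acts as a homomorphism on both terms and types (Figure~\ref{fig:xlatedef}), so each \NRCl typing rule of Figure~\ref{fig:typing} is mapped to an instance of itself: the induction hypotheses give the erased premises, the context-extension identity handles the binders in $\lambda$ and in comprehension, and the corresponding \NRCldi rule (which includes all the \NRCl rules) delivers the conclusion.

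The genuinely new cases are the bag constructs of Figure~\ref{fig:NRClsb_typing} and the two coercions. For $\emptymset$, $\msetlit{M}$, $M \uplus N$, $\mcomprehension{M \mid x \gets N}$, and $\bagwhere~M~\kwdo~N$, the key observation is that $\erase{\msetlit{T}} = \setlit{\erase{T}}$, so each of these erases to the matching set construct at the erased element type, and the conclusion follows from the set-level \NRCl typing rule applied to the erased premises (again using the context-extension identity for the comprehension binder). For $\distinct M$ we have $\Gamma \vdash M : \msetlit{T}$ in \NRClsb, so by the induction hypothesis $\erase{\Gamma} \vdash \erase{M} : \setlit{\erase{T}}$, and the \NRCldi rule for $\distinct$ (Figure~\ref{fig:rules_NRCldi}), which maps $\setlit{T'}$ to $\setlit{T'}$, gives $\erase{\Gamma} \vdash \distinct \erase{M} : \setlit{\erase{T}}$; since $\setlit{\erase{T}} = \erase{\setlit{T}}$ and $\setlit{T}$ is precisely the \NRClsb type of $\distinct M$, this is the desired conclusion. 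The case of $\promote M$ is symmetric, starting from $\Gamma \vdash M : \setlit{T}$. The crucial design point exploited here is that \NRCldi was deliberately set up so that $\distinct$ and $\promote$ preserve the single collection type $\setlit{\cdot}$, which is exactly what makes the erasure of the \NRClsb rules — where these operators switch $\msetlit{\cdot}$ to $\setlit{\cdot}$ and back — land inside \NRCldi.

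I do not expect a real obstacle here: the proof is a mechanical case analysis once the two commutation facts above are stated. The only point requiring any attention is the bookkeeping that $\erase{\cdot}$ distributes over the type- and context-forming operations that appear in the typing rules, together with the coherence of collapsing the two collection type constructors into one against the \NRCldi rules for $\distinct$ and $\promote$; with those in hand every case closes immediately.
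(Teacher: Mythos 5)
Your proposal is correct and matches the paper's (unstated, but clearly intended) argument: the paper dismisses this theorem with ``a straightforward induction,'' and your case analysis --- erasure as a homomorphism on the inherited constructs, the bag constructs landing on their set counterparts via $\erase{\msetlit{T}} = \setlit{\erase{T}}$, and the $\distinct$/$\promote$ rules of \NRCldi\ absorbing the type coercions --- is exactly that induction spelled out.
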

\begin{thm}\label{thm:xlatered}
For all terms $M$ of \NRClsb, if $M \red M'$, we have $\erase{M} \red \erase{M'}$
in \NRCldi. Consequently, if $\erase{M} \in \SN$ in \NRCldi, then $M \in \SN$ in \NRClsb.
\end{thm}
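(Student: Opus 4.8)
The second statement follows from the first with no extra work: if $M$ were not strongly normalizing in \NRClsb, an infinite chain $M = M_0 \red M_1 \red M_2 \red \cdots$ would, on applying the first statement link by link, give an infinite chain $\erase{M_0} \red \erase{M_1} \red \erase{M_2} \red \cdots$ in \NRCldi, contradicting $\erase{M} = \erase{M_0} \in \SN$. For this it matters that each \NRClsb-step is simulated by \emph{exactly one} \NRCldi-step rather than by a possibly empty reduction, so that the resulting chain stays genuinely infinite; this will be apparent from the case analysis. Hence the whole task is to show that $M \red M'$ implies $\erase{M} \red \erase{M'}$.

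I would first dispatch two preliminaries. The translation of Figure~\ref{fig:xlatedef} is extended to the operators \NRCldi keeps, namely $\erase{\distinct M} = \distinct\erase{M}$ and $\erase{\promote M} = \promote\erase{M}$. Then a routine structural induction on terms shows that $\erase{\cdot}$ preserves free variables, $\FV(\erase{M}) = \FV(M)$, and commutes with substitution, $\erase{M\subst{x}{N}} = \erase{M}\subst{x}{\erase{N}}$; the latter is used for the contraction rules that substitute (beta and the comprehension--singleton rules) and the former for the side conditions of the unnesting rules.

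The main claim I would prove by induction on the derivation of $M \red M'$, i.e.\ structural induction on the congruence closure followed by a case split on the basic contraction at the leaf. The congruence cases are mechanical: $\erase{\cdot}$ commutes with every term former once $\msetlit{\cdot}$ is collapsed onto $\setlit{\cdot}$, $\uplus$ onto $\cup$, bag comprehension onto set comprehension and $\bagwhere$ onto $\plwhere$, so a reduction of an immediate subterm is mirrored via the induction hypothesis and the corresponding congruence rule of \NRCldi (whose congruence closure, like \NRCl's, covers $\distinct$ and $\promote$ as well). For the base cases I would split into three groups. First, the rules inherited from \NRCl --- beta, record projection, constant evaluation, the set-comprehension rules, the $\plwhere$ rules, and the $\plempty$-expansion rule --- translate to the identical rules of \NRCldi. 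Second, the bag rules of Figure~\ref{fig:norm_NRClsb} (bag comprehension and $\bagwhere$) erase, one contraction apiece, to the corresponding set rules of \NRCldi. Third, each $\distinct$ or $\promote$ rule of Figure~\ref{fig:norm_NRClsb} erases to literally the corresponding rule of Figure~\ref{fig:rules_NRCldi}; for instance $\distinct\mcomprehension{M \mid x \gets N} \red \comprehension{\distinct M \mid x \gets \distinct N}$ becomes $\distinct\comprehension{\erase{M} \mid x \gets \erase{N}} \red \comprehension{\distinct\erase{M} \mid x \gets \distinct\erase{N}}$, and $\distinct\promote M \red M$ becomes $\distinct\promote\erase{M} \red \erase{M}$. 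In each case the image is a single contraction.

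The one step that requires real care --- and the closest thing here to an obstacle --- is the side condition ``$M$ is not relation-typed'' attached to the $\plempty$-expansion rule. I would handle it by transporting the typing of $M$ through Theorem~\ref{thm:xlatetyp} and observing that $\erase{\cdot}$ fixes atomic types and record structure and sends both $\setlit{T}$ and $\msetlit{T}$ to $\setlit{\erase{T}}$; hence $\erase{M}$ has relation type exactly when $M$ does, so the rule is enabled on $\erase{M}$ precisely when it is enabled on $M$. Once all base cases are seen to produce a single \NRCldi-contraction, the simulation --- and with it the strong-normalization transfer --- is complete.
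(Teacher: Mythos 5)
Your proposal is correct and matches the paper's intent: the paper gives no detailed proof of Theorem~\ref{thm:xlatered}, stating only that ``a straightforward induction is sufficient,'' and your induction on the derivation of $M \red M'$ --- with the observations that erasure commutes with substitution, that each \NRClsb{} contraction erases to exactly one \NRCldi{} contraction (so infinite chains stay infinite), and that the relation-type side condition on the $\plempty$ rule is preserved because $\erase{\cdot}$ identifies $\setlit{T}$ and $\msetlit{T}$ --- is precisely that straightforward induction, spelled out. No gaps.
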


Thanks to the two results above, strong normalization for \NRClsb is an immediate consequence of strong normalization for \NRCldi.

\subsection{Reducibility for \texorpdfstring{\NRCldi}{NRC-λδι}}
We are now going to present an extension of the strong normalization proof for \NRCl, allowing us to derive the same result for \NRCldi (and, consequently, for \NRClsb). Concretely, this extension involves adding some extra cases to some definitions and proofs; in a single case, we need to strengthen the statement of a lemma, whose proof remains otherwise close to its \NRCl version.

For \NRCldi continuations and frames, we will allow extra cases including $\distinct$ and $\promote$, as follows:
\begin{align*}
K,H ::= & \ldots \orelse \distinct K \orelse \promote K
\\
Q,O ::= & \ldots \orelse \distinct Q \orelse \promote Q
\\
F ::= & \ldots \orelse \distinct \Box \orelse \promote \Box
\end{align*}
We also extend the measures $\len{Q}_p$ and $\sz{Q}_p$ to account for the new cases.
\begin{defi}[extends~\ref{def:measures}]\label{def:hmeasures}
  The measures $\len{Q}_p$ and $\sz{Q}_p$ of \NRCl are extended to \NRCldi by means of the following additional cases:
  \[
    \len{\distinct Q}_p = \len{\promote Q}_p = \len{Q}_p + 1
    \qquad
    \sz{\distinct Q}_p = \sz{\promote Q}_p = \sz{Q}_p + 1
  \]
\end{defi}
Renaming reduction in \NRCldi is defined in the same way as its \NRCl counterpart.

\medskip

We notice that terms in the form $\distinct M$, when plugged into a context, never create new redexes: we thus extend the definition of neutral terms.
\begin{defi}[extends~\ref{def:neutral}]\label{def:hneutral}
The grammar of the neutral terms of \NRCl is extended to \NRCldi by means of the following additional production:
\[
W ::= \ldots \orelse \distinct M
\]
\end{defi}

Since the type sublanguage of \NRCldi is the same as in \NRCl, we can \emph{superficially} reuse the definition of reducibility sets: however, it is intended that the terms and continuations appearing in these definitions are those of \NRCldi rather than \NRCl. Similarly, the various technical lemmas involving contexts, continuations and instantiations use a uniform proof style that works seamlessly in \NRCldi; however, it is worth mentioning that Lemma~\ref{lem:szdecrease} holds because the definitions of the measures $\len{\cdot}$, $\sz{\cdot}$, and that of frames are aligned in \NRCldi just like in \NRCl; and that the proof of Lemma~\ref{lem:classification} must accommodate the additional frames of \NRCldi in the reduction at the interface. The proofs showing that all the reducibility sets are candidates (Lemmas~\ref{lem:cr1set} (CR1),~\ref{lem:cr2set} (CR2), and~\ref{lem:cr3set} (CR3)), use \NRCldi terms and continuations, but do not need to change structurally (Lemmas~\ref{lem:cr1set} and~\ref{lem:cr2set} do not need to inspect the shape of continuations and terms, while in Lemma~\ref{lem:cr3set} we do not need to consider any of the additional cases for an \NRCldi continuation $K$, because $K$ is applied to a neutral term, therefore there are no redexes at the interface regardless of the shape of $K$).

However, we do need to prove that the additional typing rules of \NRCldi (i.e.\ the introduction rules for $\distinct$ and $\promote$) preserve reducibility. This is expressed by the following results:
\begin{lem}\label{lem:Kreddistinct}
For all indices $p$ and candidates $\cC \in \CR$, if $K \in \topset{\cC_p}$, then $K \compop{p} (\distinct \Box) \in \topset{\cC_p}$.
\end{lem}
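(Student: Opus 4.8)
The plan is to unwind the definition of $\top$-lifting. Since $(K \compop{p} (\distinct\Box))[p \mapsto \setlit{M}] = K[p \mapsto \distinct\setlit{M}]$, it suffices to prove $K[p \mapsto \distinct\setlit{M}] \in \SN$ for every $M \in \cC$; fix such an $M$. From $K \in \topset{\cC_p}$ we get $K[p \mapsto \setlit{M}] \in \SN$, from $\CRi(\cC)$ we get $M \in \SN$, and hence $K \in \SN$ by Lemma~\ref{lem:maxredctxapp}. I will therefore establish the slightly more general statement: \emph{for every regular continuation $K$ and every $M \in \SN$ with $K[p \mapsto \setlit{M}] \in \SN$, we have $K[p \mapsto \distinct\setlit{M}] \in \SN$}, proceeding by well-founded induction on $\maxred{K[p \mapsto \setlit{M}]}$. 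The choice of this single measure is the key design decision: it strictly decreases both when $K$ is reduced and when $M$ is reduced, which is what makes the induction close. If $p \notin \supp(K)$ the claim is immediate, so assume $K$ contains its unique hole $\ibox{p}$.

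I then classify the contracta of $K[p \mapsto \distinct\setlit{M}]$ via Lemma~\ref{lem:classification} with $\eta = [p \mapsto \distinct\setlit{M}]$, noting that special reductions cannot occur since $K$ is a regular continuation. A reduction within the continuation, $K \ared{\sigma} K'$, yields a contractum $K'[p \mapsto \distinct\setlit{M}]^\sigma$; the same continuation step is available inside $K[p \mapsto \setlit{M}]$ (the hole of a regular continuation is never captured, cf.\ Lemma~\ref{lem:bvK}), so $K[p \mapsto \setlit{M}] \ared{\sigma} K'[p \mapsto \setlit{M}]^\sigma$ by Lemma~\ref{lem:renamectxapp}, hence by Corollary~\ref{cor:aredforall} we get $K'[q \mapsto \setlit{M}] \in \SN$ with strictly smaller measure for every $q$ with $\sigma(q) = p$; the induction hypothesis gives $K'[q \mapsto \distinct\setlit{M}] \in \SN$ for all such $q$, and Corollary~\ref{cor:aredforall} again gives $K'[p \mapsto \distinct\setlit{M}]^\sigma \in \SN$. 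A reduction within $\eta$ is a reduction of $\distinct\setlit{M}$: if it is the step $\distinct\setlit{M} \red \setlit{M}$ the contractum is $K[p \mapsto \setlit{M}] \in \SN$ by hypothesis; if it is a step $M \red M'$ the contractum is $K[p \mapsto \distinct\setlit{M'}]$, and the induction hypothesis applies since $K[p \mapsto \setlit{M}] \red K[p \mapsto \setlit{M'}]$ strictly decreases the measure (Lemma~\ref{lem:redmaxred}).

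The remaining case is a reduction at the interface, $K = K_0 \compop{p} F$ with $F^p[p \mapsto \distinct\setlit{M}]$ contracting at its root. The point is that $\distinct\setlit{M}$ is neutral (Definition~\ref{def:hneutral}) and unlocks no rewrite rule of the enclosing frame: for $F$ equal to $\comprehension{\pure{N} \mid x \gets \Box}$, $\distinct\Box$, or $\promote\Box$, the term $F^p[p \mapsto \distinct\setlit{M}]$ has no root redex at all, and the only frames producing a contraction are $F = \plwhere~\kwtrue~\kwdo~\Box$ and $F = \plwhere~\kwfalse~\kwdo~\Box$, whose contractions are independent of the hole content: they rewrite $K[p \mapsto \distinct\setlit{M}]$ to $K_0[p \mapsto \distinct\setlit{M}]$ and $K_0[p \mapsto \emptyoset]$ respectively, with $K \red K_0$ and correspondingly $K[p \mapsto \setlit{M}] \red K_0[p \mapsto \setlit{M}]$ (resp.\ $K_0[p \mapsto \emptyoset]$); in the first subcase the measure decreases and we invoke the induction hypothesis on $(K_0, M)$, in the second the contractum is a reduct of a strongly normalizing term. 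With all contracta shown strongly normalizing, the general claim, and hence Lemma~\ref{lem:Kreddistinct}, follows.

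The proof contains no conceptually hard step; it is a considerably simpler analogue of Lemma~\ref{lem:redcomprmain}. The only points needing care are (i) exhaustively verifying, in the interface case, that $\distinct\setlit{M}$ genuinely creates no new redex, so that there really are no further contracta, and (ii) the bookkeeping in the continuation-reduction case, where $\sigma$ may duplicate $\ibox{p}$ — but that is exactly what Corollary~\ref{cor:aredforall} and Lemma~\ref{lem:renamectxapp} are designed to absorb.
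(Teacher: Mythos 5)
Your proof is correct, and it takes a mildly but genuinely different route from the paper's. The paper keeps the candidate-theoretic invariants throughout the induction: it proves $K[p \mapsto \distinct\setlit{M}] \in \SN$ for $K \in \topset{\cC_p}$ and $M \in \cC$ by well-founded induction on the lexicographic pair $(\maxred{K}, \maxred{M})$, using CR2 for continuations (Lemma~\ref{lem:cr2setK}) and CR2 for $\cC$ to re-establish membership after each contraction, and it treats only three kinds of contracta (continuation reduction, reduction of $M$, and $\distinct\setlit{M} \red \setlit{M}$), leaving the interface analysis implicit. You instead discharge the candidate hypotheses once at the outset and prove a pure $\SN$-statement --- if $K[p \mapsto \setlit{M}] \in \SN$ and $M \in \SN$ then $K[p \mapsto \distinct\setlit{M}] \in \SN$ --- by induction on the single measure $\maxred{K[p \mapsto \setlit{M}]}$, which indeed strictly decreases under both continuation reductions and reductions of $M$. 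This puts the lemma in the same style as Lemmas~\ref{lem:redcomprmain} and~\ref{lem:redwheremain}, avoids any appeal to CR2, and makes the interface case explicit, which is a small gain in rigor over the paper's terser argument. Both inductions close for the same underlying reason, so neither buys anything essential over the other.

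One small imprecision: in the interface case you assert that for $F = \comprehension{\pure{N} \mid x \gets \Box}$ the term $F^p[p \mapsto \distinct\setlit{M}]$ ``has no root redex at all.'' That is not literally true --- if $\pure{N} = \emptyoset$ or $\pure{N} = \pure{N_1} \ocup \pure{N_2}$ there is a root redex (head-emptiness or distribution of union over the head). What is true, and what your argument actually needs, is that $\distinct\setlit{M}$ creates no \emph{new} redex: those root contractions are independent of the hole's content, are equally contractions of $K$ itself (they occur already in $\comprehension{\pure{N} \mid x \gets \ibox{p}}$, possibly deleting or duplicating the hole), and are therefore already absorbed by your continuation-reduction case via $K \ared{\sigma} K'$. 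So the enumeration is complete; only the justification given for it overreaches. With that phrasing corrected the proof stands.
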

\begin{proof}
By unfolding the definitions, we prove that for all $p$, if $K \in \topset{\cC_p}$ and $M \in \cC$, then $K[p \mapsto \distinct\setlit{M}] \in \SN$.
We proceed by well-founded induction on $(K,M)$ using the following metric:
\[
(K_1,M_1) \prec (K_2,M_2) \iff (\maxred{K_1}, \maxred{M_1}) \lessdot (\maxred{K_2}, \maxred{M_2})
\]
Equivalently, we prove that all the contracta of $K[p \mapsto \distinct \setlit{M}]$ are s.n.:
\begin{itemize}
\item $K'[p \mapsto \distinct\setlit{M}]^\sigma$ (where $K \ared{\sigma} K'$): to prove this term is s.n., by Corollary~\ref{cor:aredforall} we need to show that $K'[p' \mapsto \distinct\setlit{M}] \in \SN$ for all $p'$ s.t.\ $\sigma(p') = p$; by Lemmas~\ref{lem:cr2setK} and~\ref{lem:Kcrsimpl}, we know $K' \in (\topset{\cC_{p'}})^\sigma$, and naturally $\maxred{K'} < \maxred{K}$ (Lemma~\ref{lem:redmaxred}), so the thesis follows by the IH\@.
\item $K[p \mapsto \distinct\setlit{M'}]$ (where $M \red M'$): by IH, with unchanged $K$, $M' \in \cC$ (Lemma~\ref{lem:cr2setK}), and $\maxred{M'} < \maxred{M}$ (Lemma~\ref{lem:redmaxred}).
\item $K[p \mapsto \setlit{M}]$: this is trivial by hypothesis.
\qedhere
\end{itemize}
\end{proof}
\begin{cor}[reducibility for $\distinct$]\label{cor:reddistinct}
For all $\cC \in \CR$, if $M \in \toptopset{\cC}$, then $\distinct M \in \toptopset{\cC}$.
\end{cor}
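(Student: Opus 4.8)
The plan is to mirror the proof of reducibility for comprehensions (Lemma~\ref{lem:redcomprehension}), using the just-proved Lemma~\ref{lem:Kreddistinct} in exactly the role that Lemma~\ref{lem:redcomprmain} plays there. Unfolding the definition of $\toptopset{\cC}$, to prove $\distinct M \in \toptopset{\cC}$ it suffices to fix an arbitrary index $p$ and an arbitrary continuation $K \in \topset{\cC_p}$ and to show that $K[p \mapsto \distinct M] \in \SN$.

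First I would apply Lemma~\ref{lem:Kreddistinct} to $K$ and $\cC$, obtaining $K' := K \compop{p} (\distinct \Box) \in \topset{\cC_p}$. Then, since $M \in \toptopset{\cC}$ by hypothesis, the definition of $\top\top$-lifting immediately yields $K'[p \mapsto M] \in \SN$. It then remains only to observe the bookkeeping identity about frame composition: by Definition~\ref{def:frame}, $K \compop{p} (\distinct\Box) = K[p \mapsto (\distinct\Box)^p] = K[p \mapsto \distinct\ibox{p}]$, and plugging $M$ into the single occurrence of $\ibox{p}$ in this context gives $K'[p \mapsto M] = K[p \mapsto \distinct M]$. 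Combining this with the previous sentence gives $K[p \mapsto \distinct M] \in \SN$, as required, so $\distinct M \in \toptopset{\cC}$.

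I do not expect a genuine obstacle in this corollary: all the real work has already been done inside Lemma~\ref{lem:Kreddistinct} (whose well-founded induction handles the reductions of $K$, reductions inside $M$, and the $\distinct\setlit{M} \red \setlit{M}$ contraction). The only point that needs a moment's care is checking the frame-composition identity $K'[p \mapsto M] = K[p \mapsto \distinct M]$, which goes through cleanly because continuations have linear holes, so no hole duplication or renaming is triggered by instantiating at $p$ twice in succession.
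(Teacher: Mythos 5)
Your proof is correct and follows exactly the same route as the paper's: apply Lemma~\ref{lem:Kreddistinct} to obtain $K \compop{p} (\distinct\Box) \in \topset{\cC_p}$, then use $M \in \toptopset{\cC}$ and the identity $(K \compop{p} (\distinct\Box))[p \mapsto M] = K[p \mapsto \distinct M]$. The paper states that identity without comment; your observation that it holds because holes in continuations are linear is a harmless elaboration.
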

\begin{proof}
We need to prove that for all indices $p$, for all $K \in \topset{\cC}_p$, we have $K[p \mapsto \distinct M] \in \SN$. By Lemma~\ref{lem:Kreddistinct}, we prove $K \compop{p} (\distinct \Box) \in \topset{\cC}_p$; since $M \in \toptopset{\cC}$, we have $(K \compop{p} (\distinct \Box))[p \mapsto M] \in \SN$, which is equivalent to the thesis.
\end{proof}

\begin{lem}\label{lem:Kredpromote}
For all indices $p$ and candidates $\cC \in \CR$, if $K \in \topset{\cC_p}$, then $K \compop{p} (\promote \Box) \in \topset{\cC_p}$.
\end{lem}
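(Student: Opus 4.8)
The plan is to mirror the proof of Lemma~\ref{lem:Kreddistinct} step for step, since in \NRCldi the operator $\promote$ gives rise to a frame $\promote\Box$ in exactly the way $\distinct$ does. Unfolding the definitions of $\topset{(\cdot)}$ and $\compop{p}$, the goal becomes: for every $\cC\in\CR$, index $p$, $K\in\topset{\cC_p}$, and $M\in\cC$, we have $K[p\mapsto\promote\setlit{M}]\in\SN$. I would prove this by well-founded induction on $(K,M)$ using the lexicographic metric $(\maxred{K},\maxred{M})$, the same metric used for $\distinct$, which suffices here because no case composes $K$ with a further frame. Note that $K\in\SN$ by Lemma~\ref{lem:cr1setK} (candidates being non-empty) and $M\in\SN$ since $\CRi(\cC)$, so both measures are defined.

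The argument is then a case analysis on the one-step contracta of $K[p\mapsto\promote\setlit{M}]$, following Lemma~\ref{lem:classification} with $\eta=[p\mapsto\promote\setlit{M}]$. Three of the cases are essentially verbatim copies of the $\distinct$ proof: (a) a reduction $K\ared{\sigma}K'$ of the continuation, giving $K'[p\mapsto\promote\setlit{M}]^\sigma$, handled via Lemmas~\ref{lem:cr2setK} and~\ref{lem:Kcrsimpl} (so $K'\in(\topset{\cC_{p'}})^\sigma$), Lemma~\ref{lem:redmaxred} ($\maxred{K'}<\maxred{K}$), the induction hypothesis, and Corollary~\ref{cor:aredforall}; (b) a reduction inside $M$, giving $K[p\mapsto\promote\setlit{M'}]$, handled by the induction hypothesis using $\CRii(\cC)$ and Lemma~\ref{lem:redmaxred}; and (c) the redex $\promote\setlit{M}\red\setlit{M}$, giving $K[p\mapsto\setlit{M}]$, which is strongly normalizing at once because $K\in\topset{\cC_p}$ and $M\in\cC$.

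The genuinely new case---and the part I expect to require the most care---stems from the fact that, unlike $\distinct$, $\promote$ is \emph{not} neutral, and in particular that the rule $\distinct\promote M\red M$ places a frame ($\distinct\Box$) immediately above a $\promote$. If $K$ has the form $K_0\compop{p}(\distinct\Box)$, then $K[p\mapsto\promote\setlit{M}]=K_0[p\mapsto\distinct\promote\setlit{M}]$ admits the extra reduction at the interface $\distinct\promote\setlit{M}\red\setlit{M}$, whose contractum $K_0[p\mapsto\setlit{M}]$ is \emph{not} of the form $K[p\mapsto\setlit{M'}]$ and hence escapes cases (a)--(c). To dispatch it, I would observe that $K\in\topset{\cC_p}$ gives $K_0[p\mapsto\distinct\setlit{M'}]\in\SN$ for all $M'\in\cC$, whence $K_0[p\mapsto\setlit{M'}]\in\SN$ (reduce the outermost $\distinct\setlit{M'}\red\setlit{M'}$); so $K_0\in\topset{\cC_p}$, and therefore $K_0[p\mapsto\setlit{M}]\in\SN$ since $M\in\cC$. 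A quick check confirms that $\distinct\Box$ is the only regular-continuation frame yielding a genuine interface reduction against $\promote\setlit{M}$: a comprehension frame $\comprehension{\pure{N}\mid x\gets\Box}$ never makes $\comprehension{\pure{N}\mid x\gets\promote\setlit{M}}$ a redex; a $\plwhere~\pure{B}~\kwdo~\Box$ frame fires only when $\pure{B}$ is a Boolean constant, which is already a reduction internal to $K$ covered by case (a); and a $\promote\Box$ frame only enables the inner $\promote\setlit{M}\red\setlit{M}$, which is case (c).

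Finally, reducibility for $\promote$ (if $M\in\toptopset{\cC}$ then $\promote M\in\toptopset{\cC}$) follows from Lemma~\ref{lem:Kredpromote} exactly as Corollary~\ref{cor:reddistinct} follows from Lemma~\ref{lem:Kreddistinct}: for $K\in\topset{\cC_p}$ we get $K\compop{p}(\promote\Box)\in\topset{\cC_p}$, hence $(K\compop{p}(\promote\Box))[p\mapsto M]=K[p\mapsto\promote M]\in\SN$.
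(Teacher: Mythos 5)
Your proposal is correct and follows essentially the same route as the paper: the same well-founded induction on $(\maxred{K},\maxred{M})$ inherited from Lemma~\ref{lem:Kreddistinct}, the same three routine cases, and the same treatment of the one genuinely new contractum arising when $K = K_0 \compop{p} (\distinct\Box)$, namely observing that $K_0[p\mapsto\setlit{M}]$ is a contractum of $K[p\mapsto\setlit{M}] = K_0[p\mapsto\distinct\setlit{M}] \in \SN$. Your explicit check that no other frame produces an interface redex against $\promote\setlit{M}$ is a welcome elaboration the paper leaves implicit.
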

\begin{proof}
  By unfolding the definitions, we prove that for all $p$, if $K \in \topset{\cC_p}$ and $M \in \cC$, then $K[p \mapsto \promote\setlit{M}] \in \SN$.
The proof follows the same steps as that of Lemma~\ref{lem:Kreddistinct}, but we have to consider an additional contractum for $K = K_0 \compop{p} (\distinct \Box)$:
\[
K[p \mapsto \promote \setlit{M}] = K_0[p \mapsto \distinct\promote\setlit{M}] \red K_0[p \mapsto \setlit{M}]
\]
Since $K \in \topset{\cC}_p$ and $M \in \cC$, we prove $K[p \mapsto \setlit{M}] = K_0[p \mapsto \distinct\setlit{M}]\in \SN$. Thus, $K_0[p \mapsto \setlit{M}] \in \SN$ as well, being a contractum of that term. This proves the thesis.
\end{proof}
\begin{cor}[reducibility for $\promote$]\label{cor:redpromote}
If $M \in \toptopset{\cC}$, then $\promote M \in \toptopset{\cC}$.
\end{cor}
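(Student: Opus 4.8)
The plan is to proceed exactly as in the proof of Corollary~\ref{cor:reddistinct}, pushing the real work into Lemma~\ref{lem:Kredpromote}, which we may assume. First I would unfold the definition of $\toptopset{\cC}$: to show $\promote M \in \toptopset{\cC}$ it suffices to fix an arbitrary index $p$ and an arbitrary continuation $K \in \topset{\cC_p}$, and prove $K[p \mapsto \promote M] \in \SN$.

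The key observation is that, since $\promote \Box$ is a frame in \NRCldi (per the extended grammar for frames), the composition $K \compop{p} (\promote \Box)$ is again a (regular) continuation, and by the definition of frame composition (Definition~\ref{def:frame}) and frame lifting we have the identity
\[
(K \compop{p} (\promote \Box))[p \mapsto M] = K[p \mapsto \promote M].
\]
So it is enough to show the left-hand side is strongly normalizing. By Lemma~\ref{lem:Kredpromote}, from $K \in \topset{\cC_p}$ we obtain $K \compop{p} (\promote \Box) \in \topset{\cC_p}$. Then, since $M \in \toptopset{\cC}$ by hypothesis, the very definition of $\toptopset{(\cdot)}$ instantiated at the index $p$ and the continuation $K \compop{p} (\promote \Box)$ yields $(K \compop{p} (\promote \Box))[p \mapsto M] \in \SN$, which is the thesis.

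I do not expect any genuine obstacle in the corollary itself: all the difficulty has been isolated into Lemma~\ref{lem:Kredpromote}, whose proof (mirroring Lemma~\ref{lem:Kreddistinct}) must additionally handle the interface reduction $\distinct\promote\setlit{M} \red \setlit{M}$ that arises when the surrounding continuation already has the shape $K_0 \compop{p} (\distinct \Box)$. At the level of the corollary, the only thing to be slightly careful about is that $\promote$, unlike $\distinct$, is \emph{not} neutral and does not commute with unions or comprehensions, so one cannot appeal to $\CRiii$ directly; this is precisely why the argument is routed through $\top$-lifting and Lemma~\ref{lem:Kredpromote} rather than through the candidate closure properties.
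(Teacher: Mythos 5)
Your proposal is correct and matches the paper's own proof essentially verbatim: both unfold $\top\top$-lifting, use the identity $(K \compop{p} (\promote \Box))[p \mapsto M] = K[p \mapsto \promote M]$, invoke Lemma~\ref{lem:Kredpromote} to get $K \compop{p} (\promote \Box) \in \topset{\cC_p}$, and then apply the hypothesis $M \in \toptopset{\cC}$. Your closing remarks about why one must route through $\top$-lifting rather than $\CRiii$, and about the extra $\distinct\promote$ interface reduction inside Lemma~\ref{lem:Kredpromote}, are accurate as well.
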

\begin{proof}
We need to prove that for all indices $p$, for all $K \in \topset{\cC}_p$, we have $K[p \mapsto \promote M] \in \SN$. By Lemma~\ref{lem:Kredpromote}, we prove $K \compop{p} (\promote \Box) \in \topset{\cC}_p$; since $M \in \toptopset{\cC}$, we have $(K \compop{p} (\promote \Box))[p \mapsto M] \in \SN$, which is equivalent to the thesis.
\end{proof}

Finally we need to reconsider the reducibility properties of unions, comprehensions, and conditionals (Lemmas~\ref{lem:Qredunion},~\ref{lem:redcomprmain}, and~\ref{lem:redwheremain}), to add the extra cases in the updated definition of continuations. In the case of comprehensions, we need to reformulate the statement in a slightly strengthened way to ensure that the induction hypothesis remains applicable. The proofs concerning singletons (Lemma~\ref{lem:redsingle}) and empty sets (Corollary~\ref{cor:redempty}) do not need intervention.

\begin{lem}[extends~\ref{lem:Qredunion}]\label{lem:hQredunion}\mbox{}\\
  For all auxiliary continuations $Q,O_1,O_2$ with pairwise disjoint supports, if $Q[p \mapsto O_1] \in \SN$ and $Q[p \mapsto O_2] \in \SN$, then $Q[p \mapsto O_1 \ocup O_2] \in \SN$.
\end{lem}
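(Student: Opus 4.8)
The plan is to replay the proof of Lemma~\ref{lem:Qredunion} inside \NRCldi, adding the two new frames $\distinct\,\Box$, $\promote\,\Box$ and the new continuation shapes $\distinct Q$, $\promote Q$ as extra cases. As in the original argument we may assume $p \in \supp(Q)$, since otherwise $Q[p \mapsto O_1]$, $Q[p \mapsto O_2]$ and $Q[p \mapsto O_1 \ocup O_2]$ coincide and there is nothing to prove; from $Q[p \mapsto O_i] \in \SN$ and Lemma~\ref{lem:maxredctxapp} we get $Q, O_1, O_2 \in \SN$. I would then argue by well-founded induction on $(\maxred{Q[p \mapsto O_1]} + \maxred{Q[p \mapsto O_2]},\ \sz{Q}_p)$, ordered lexicographically (a metric of the kind used for the other commuting-conversion lemmas in this section), showing that every contractum of $Q[p \mapsto O_1 \ocup O_2]$ is strongly normalizing. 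The case split is governed by the \NRCldi version of Lemma~\ref{lem:classification}: a contraction is a reduction of the continuation $Q$ (possibly via a renaming reduction $Q \ared{\sigma} Q'$), a special reduction of $Q$, a reduction within the instantiation (inside $O_1$ or $O_2$), or a reduction at the interface.

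The first three families of cases go through as for Lemma~\ref{lem:Qredunion}. The only new observation is that reductions of $Q$ now also include the rewrites distributing $\distinct$ over $\ocup$, over comprehension and over $\plwhere$, and cancelling $\distinct\promote$; none of these moves a hole under a binder or essentially duplicates a hole, so each is realised as a renaming reduction $Q \ared{\sigma} Q'$ with $Q'$ again an auxiliary continuation (Lemma~\ref{lem:contred}), $\maxred{Q'[p\mapsto O_i]^\sigma} < \maxred{Q[p\mapsto O_i]}$, and the induction hypothesis applies through Corollary~\ref{cor:aredforall} with the first metric component strictly decreasing. Special reductions carry a substitution $\subst{x}{\pure{L}}$ into $O_1$ and $O_2$ without touching their holes, so support-disjointness is preserved and the first component again drops; reductions inside $O_i$ are inherited verbatim (freshening holes with a renaming where the reduction duplicates them). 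In the interface cases for the old frames $\comprehension{Q'|x\gets\Box}$, $\comprehension{\Box|x\gets Q'}$ and $\plwhere~\pure{B}~\kwdo~\Box$, the frame's $Q'$ may be copied by the firing rewrite, so one copy is first freshened; the induction hypothesis is then applied to $(Q_0, A_1, A_2)$ where $A_1, A_2$ are the two branches of the contractum (now with disjoint supports), using that the first metric component is unchanged and that $\sz{Q_0}_p < \sz{Q}_p$ by Lemma~\ref{lem:szdecrease}.

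The genuinely new reasoning is the interface behaviour of $\distinct\,\Box$ and $\promote\,\Box$. If $Q = Q_0 \compop{p} (\distinct\,\Box)$ then $Q[p\mapsto O_1\ocup O_2] = Q_0[p\mapsto\distinct(O_1\ocup O_2)] \red Q_0[p\mapsto\distinct O_1 \ocup \distinct O_2]$; since $\supp(\distinct O_i) = \supp(O_i)$, the auxiliary continuations $Q_0$, $\distinct O_1$, $\distinct O_2$ still have pairwise disjoint supports, $Q_0[p\mapsto\distinct O_i] = Q[p\mapsto O_i] \in \SN$ so the first metric component is unchanged, and $\sz{Q_0}_p < \sz{Q}_p$ by Lemma~\ref{lem:szdecrease} — this being exactly why Definition~\ref{def:hmeasures} sets $\sz{\distinct Q}_p = \sz{Q}_p + 1$, keeping the measure aligned with the new frame — so the induction hypothesis gives $Q_0[p\mapsto\distinct O_1 \ocup \distinct O_2] \in \SN$. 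If instead $Q = Q_0 \compop{p} (\promote\,\Box)$, there is no reduction at the interface: the promotion rewrites fire only on $\promote\,\emptyoset$, $\promote\,\setlit{M}$ and $\promote(\plwhere~\ldots)$, and $\promote(O_1\ocup O_2)$ matches none of them — precisely because $\promote$ is, by design, not distributive over $\ocup$ — so this frame contributes no case.

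I expect the main obstacle to be bookkeeping rather than anything conceptual: the new cases themselves are mild (the $\distinct\,\Box$ case is essentially a transcription of the $\plwhere~\pure{B}~\kwdo~\Box$ case, and the $\promote\,\Box$ case is vacuous), so the real point to verify is that the frame and measure definitions stay coherent — i.e.\ that the extension in Definition~\ref{def:hmeasures} preserves Lemma~\ref{lem:szdecrease}, which is what supplies the strict decrease $\sz{Q_0}_p < \sz{Q}_p$ in every interface case, the only place the first metric component can fail to decrease. The delicate hole-duplication step in the comprehension-frame interface cases — freshening the repeated copy of $Q'$ before appealing to the induction hypothesis — is inherited unchanged from the \NRCl proof and is orthogonal to the addition of $\distinct$ and $\promote$.
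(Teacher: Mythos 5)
Your proof is correct and follows essentially the same route as the paper's: the only genuinely new content is the interface contractum $Q_0[p\mapsto\distinct O_1\ocup\distinct O_2]$ arising from the frame $\distinct\,\Box$, which you dispatch exactly as the paper does --- via $\sz{Q_0}_p<\sz{Q}_p$ (supplied by Definition~\ref{def:hmeasures}) with the leading metric component unchanged --- together with the observation that $\promote\,\Box$ contributes no interface redex over unions. Your induction metric $(\maxred{Q[p\mapsto O_1]}+\maxred{Q[p\mapsto O_2]},\sz{Q}_p)$ differs cosmetically from the paper's $(\maxred{Q},\sz{Q}_p,\maxred{O_1}+\maxred{O_2})$, but it decreases in all the same cases, so the argument goes through unchanged.
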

\begin{proof}
For $Q = Q_0 \compop{p} (\distinct \Box)$, $Q[p \mapsto O_1 \ocup o_2]$ has an additional contractum $Q_0[p \mapsto \distinct O_1 \ocup \distinct O_2]$. We prove that $\maxred{Q_0} \leq \maxred{Q}$ and $\sz{Q_0}_p < \sz{Q}_p$: then we can use the IH to prove the thesis.
\end{proof}

We introduce the notation $\distinct^n M$ as syntactic sugar for the $\distinct$ operator applied $n$ times to the term $M$ (in particular: $\distinct^0 M = M$). We use it to state and prove the following strengthened version of the reducibility lemma for comprehensions.
\begin{lem}[extends~\ref{lem:redcomprmain}]\label{lem:hredcomprmain}
  Let $K$, $\pure{L}$, $\pure{N}$ be such that $K[p \mapsto \pure{N}\subst{x}{\pure{L}}] \in \SN$ and $\pure{L} \in \SN$. Then for all $n$, $K[p \mapsto \comprehension{\pure{N}|x \gets \distinct^n \setlit{\pure{L}}}] \in \SN$.
\end{lem}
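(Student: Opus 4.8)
The plan is to follow the proof of Lemma~\ref{lem:redcomprmain} closely, making three adjustments: replace the source $\setlit{\pure{L}}$ of the comprehension by $\distinct^n\setlit{\pure{L}}$ throughout, carry $n$ as an extra parameter of the induction, and add the contracta that the operators $\distinct$ and $\promote$ introduce into the grammar of continuations. Concretely, I would prove by well-founded induction on the tuple $(K,p,\pure{N},\pure{L},n)$, ordered by
\[
\bigl(\maxred{K[p\mapsto\pure{N}\subst{x}{\pure{L}}]}+\maxred{\pure{L}},\ \sz{K}_p,\ \kwsize(\pure{N}),\ n\bigr)
\]
under $\lessdot$, that $K[p\mapsto\comprehension{\pure{N}\mid x\gets\distinct^n\setlit{\pure{L}}}]\in\SN$, assuming as before that bound variables are chosen fresh and distinct from the free ones. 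The ``for all $n$'' in the statement is then just the universal closure over the last component.

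All the cases already present in Lemma~\ref{lem:redcomprmain} survive with only cosmetic changes, since in each of them the new component $n$ is left unchanged: the contractum $K[p\mapsto\pure{N}\subst{x}{\pure{L}}]$ (which is the comprehension-singleton reduct only when $n=0$, and is s.n.\ by hypothesis), the reductions of $K$ itself handled via Lemma~\ref{lem:contred} and Corollary~\ref{cor:aredforall}, the $\pure{N}=\emptyoset$ case, the distributivity case $\pure{N}=\pure{N_1}\ocup\pure{N_2}$ (now closed using Lemma~\ref{lem:hQredunion} in place of Lemma~\ref{lem:Kredunion}), the interface reductions with the frames $\comprehension{\pure{M}\mid y\gets\Box}$ and $\plwhere~\pure{B}~\kwdo~\Box$, and the reductions inside $\pure{N}$ or inside $\pure{L}$ (the latter still strictly decreasing the first metric component). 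Two genuinely new families of contracta must be added. First, when $n\ge 1$ the source $\distinct^n\setlit{\pure{L}}$ can fire its innermost redex $\distinct\setlit{\pure{L}}\red\setlit{\pure{L}}$, producing $K[p\mapsto\comprehension{\pure{N}\mid x\gets\distinct^{n-1}\setlit{\pure{L}}}]$, which lies in $\SN$ by the induction hypothesis because only the last metric component has dropped. Second, when $K=K_0\compop{p}(\distinct\,\Box)$, the interface reduction $\distinct\comprehension{\pure{N}\mid x\gets\distinct^n\setlit{\pure{L}}}\red\comprehension{\distinct\pure{N}\mid x\gets\distinct^{n+1}\setlit{\pure{L}}}$ yields $K_0[p\mapsto\comprehension{\distinct\pure{N}\mid x\gets\distinct^{n+1}\setlit{\pure{L}}}]$, to which I would apply the induction hypothesis on $(K_0,p,\distinct\pure{N},\pure{L},n+1)$; this is legitimate because $K_0[p\mapsto(\distinct\pure{N})\subst{x}{\pure{L}}]=K_0[p\mapsto\distinct(\pure{N}\subst{x}{\pure{L}})]=K[p\mapsto\pure{N}\subst{x}{\pure{L}}]\in\SN$ and $\pure{L}\in\SN$ by hypothesis. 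A continuation of the form $K_0\compop{p}(\promote\,\Box)$ needs no new treatment: promotion does not commute with comprehension, so there is no redex at the interface, and every contractum is either a reduction of $K$ or a reduction inside the plugged-in comprehension, both already covered above.

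The only genuinely delicate point — and the reason the statement had to be generalised to an arbitrary $n$, with $\distinct^n$ rather than a single $\distinct$ on the source — is well-foundedness in this last $\distinct$-frame case. The rewrite $\distinct\comprehension{\pure{N}\mid x\gets\distinct^n\setlit{\pure{L}}}\red\comprehension{\distinct\pure{N}\mid x\gets\distinct^{n+1}\setlit{\pure{L}}}$ simultaneously enlarges the comprehension body (from $\pure{N}$ to $\distinct\pure{N}$) and increments $n$, so \emph{neither} the third nor the fourth metric component decreases. The argument still terminates because the second component drops strictly — by Lemma~\ref{lem:szdecrease}, $\sz{K_0}_p<\sz{K_0\compop{p}(\distinct\,\Box)}_p=\sz{K}_p$ — while the first component is untouched, the terms $K[p\mapsto\pure{N}\subst{x}{\pure{L}}]$ and $K_0[p\mapsto(\distinct\pure{N})\subst{x}{\pure{L}}]$ being literally identical. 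Fixing this ordering of components and verifying that the ``old'' cases really do leave $n$ unchanged is where I would concentrate the care; everything else is routine bookkeeping transplanted from Lemma~\ref{lem:redcomprmain}.
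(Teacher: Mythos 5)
Your proof is correct and follows essentially the same route as the paper's: the same five-component metric $(\maxred{K[p\mapsto\pure{N}\subst{x}{\pure{L}}]}+\maxred{\pure{L}},\sz{K}_p,\kwsize(\pure{N}),n)$, the same observation that the comprehension-singleton reduct only arises when $n=0$, and the same two new contracta (the innermost $\distinct\setlit{\pure{L}}\red\setlit{\pure{L}}$ handled by decreasing $n$, and the $\distinct$-frame case handled by the strict decrease of $\sz{K_0}_p<\sz{K}_p$ with the first component unchanged). Your additional remark about the $\promote\,\Box$ frame having no interface redex is a harmless extra that the paper leaves implicit.
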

\begin{proof}
Due to the updated statement of this result, we need a stronger metric on \linebreak
$(K,p,\pure{N},\pure{L},n)$:
  \[
  \begin{array}{l}
  (K_1,p_1,\pure{N_1},\pure{L_1},n_1) \prec (K_2,p_2,\pure{N_2},\pure{L_2},n_2)
  \\
  \iff
  (\maxred{K_1[p_1 \mapsto \pure{N_1}\subst{x}{\pure{L_1}}]} + \maxred{\pure{L_1}}, \sz{K_1}_{p_1},\kwsize(\pure{N_1}),n_1)
  \\
  \qquad\quad \lessdot
  (\maxred{K_2[p_2 \mapsto \pure{N_2}\subst{x}{\pure{L_2}}]} + \maxred{\pure{L_2}}, \sz{K_2}_{p_2},\kwsize(\pure{N_2}),n_2)
  \end{array}
  \]
The cases considered in the proof of~\ref{lem:redcomprmain} can be mapped to this extended result in a straightforward manner (however, a reduction to $K[p \mapsto \pure{N}\subst{x}{\pure{L}}]$ is possible only if $n=0$). We also need to consider the following two additional contracta:
\begin{itemize}
  \item $K_0[p \mapsto \comprehension{\distinct\pure{N} \mid x \gets \distinct^{n+1}\setlit{\pure{L}}}]$, where $K = K_0 \compop{p} (\distinct \Box)$: we prove
  \begin{mathpar}
    \maxred{K_0[p \mapsto (\distinct \pure{N})\subst{x}{\pure{L}}]} = \maxred{K[p \mapsto \pure{N}\subst{x}{\pure{L}}]} \quad\text{and}\quad \sz{K_0}_p < \sz{K}_p
  \end{mathpar}
  then the term is s.n.\ by IH\@.
  \item $K[p \mapsto \comprehension{\pure{N} \mid x \gets \distinct^{n-1}\setlit{\pure{L}}}]$, where $n > 0$: since $n-1 < n$ and all of the other values involved in the metric are invariant, we can immediately apply the IH to obtain the thesis. 
  \qedhere
\end{itemize}
\end{proof}

\begin{lem}[extends~\ref{lem:redwheremain}]\label{lem:hredwheremain}
  Let $Q$, $\pure{B}$, $O$ such that $Q[p \mapsto O] \in \SN$, $\pure{B} \in \SN$, $\BV(Q) \cap FV(\pure{B}) = \emptyset$, and $\supp(Q) \cap \supp(O) = \emptyset$. Then $Q[p \mapsto \plwhere~\pure{B}~\kwdo~O] \in \SN$.
\end{lem}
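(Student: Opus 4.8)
The plan is to reprise the proof of Lemma~\ref{lem:redwheremain} almost verbatim, by well-founded induction on $(Q,\pure{B},O,p)$ ordered by the same lexicographic metric $(\maxred{Q[p \mapsto O]},\, \len{Q}_p,\, \kwsize(O),\, \maxred{\pure{B}})$, followed by a case analysis on the contracta of $Q[p \mapsto \plwhere~\pure{B}~\kwdo~O]$. Every contractum that already arises in \NRCl is discharged exactly as before: those arguments invoke only Lemmas~\ref{lem:szdecrease},~\ref{lem:redmaxred},~\ref{lem:Kredunionr},~\ref{lem:redwherechange}, the \NRCldi form of union-reducibility (Lemma~\ref{lem:hQredunion}), and Corollary~\ref{cor:aredforall}, whose statements are unchanged; and the ``reductions within $\pure{B}$ or $O$'' and ``renaming reduction of $Q$'' clauses still strictly decrease the metric, the only new observation being that $O$ may now have $\distinct$ or $\promote$ at its head, but a reduction inside $O$ still yields $Q[p \mapsto O] \red Q[p \mapsto O']$ and hence lowers the first component.

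The genuinely new contracta are the interface reductions through the two frames $\distinct\Box$ and $\promote\Box$ that \NRCldi adds to Definition~\ref{def:frame}. If $Q = Q_0 \compop{p} (\distinct\Box)$ then $Q[p \mapsto \plwhere~\pure{B}~\kwdo~O] = Q_0[p \mapsto \distinct(\plwhere~\pure{B}~\kwdo~O)]$, and the rewrite $\distinct(\plwhere~M~\kwdo~N) \red \plwhere~M~\kwdo~\distinct N$ produces the contractum $Q_0[p \mapsto \plwhere~\pure{B}~\kwdo~\distinct O]$; symmetrically, $Q = Q_0 \compop{p} (\promote\Box)$ yields $Q_0[p \mapsto \plwhere~\pure{B}~\kwdo~\promote O]$ via $\promote(\plwhere~M~\kwdo~N) \red \plwhere~M~\kwdo~\promote N$. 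In each case I apply the induction hypothesis to $(Q_0, \pure{B}, \distinct O, p)$ (respectively $(Q_0, \pure{B}, \promote O, p)$). Its premises transfer directly: composing with a frame only wraps the hole $\ibox{p}$, so $Q_0[p \mapsto \distinct O] = Q[p \mapsto O] \in \SN$, $\supp(Q_0) = \supp(Q)$ gives $\supp(Q_0) \cap \supp(\distinct O) = \emptyset$, and $\BV(Q_0) = \BV(Q)$ — once Definition~\ref{def:bv} is extended by $\BV(\distinct Q) = \BV(\promote Q) = \BV(Q)$ — gives $\BV(Q_0) \cap \FV(\pure{B}) = \emptyset$. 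The metric strictly decreases because the first component is unchanged ($Q_0[p \mapsto \distinct O]$ is literally $Q[p \mapsto O]$) while $\len{Q_0}_p < \len{Q}_p$ by Lemma~\ref{lem:szdecrease}(2), since $\distinct\Box$ and $\promote\Box$ are not of the form $\comprehension{\Box \mid x \gets O}$; this is exactly the point of giving both operators the ``$+1$'' clause in Definition~\ref{def:hmeasures}. The induction hypothesis then delivers $Q_0[p \mapsto \plwhere~\pure{B}~\kwdo~\distinct O] \in \SN$, as required.

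The step I expect to need the most care is checking that the case analysis stays exhaustive: that the new continuation shapes $\distinct Q$, $\promote Q$ and the new rewrite rules create no reductions of $Q[p \mapsto \plwhere~\pure{B}~\kwdo~O]$ beyond those enumerated. This relies on the \NRCldi version of Lemma~\ref{lem:classification}, which confines any $\distinct$/$\promote$ reduction to being internal to $Q$ (a renaming reduction of the continuation), internal to $O$ or $\pure{B}$, or an interface reduction through one of the two new frames. The one fact to verify by hand is that $\plwhere~\pure{B}~\kwdo~\distinct O$ and $\plwhere~\pure{B}~\kwdo~\promote O$ never head-reduce — no $\plwhere$ rewrite rule fires when the body is headed by $\distinct$ or $\promote$ — so that all further reductions of these terms lie inside $\pure{B}$ or inside $\distinct O$ / $\promote O$ and are thus already handled by the induction hypothesis applied above.
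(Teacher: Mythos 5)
Your proposal matches the paper's proof: the only genuinely new contracta are the interface reductions through the frames $\distinct\Box$ and $\promote\Box$, and each is discharged by applying the induction hypothesis to $(Q_0,\pure{B},\distinct O,p)$ (resp.\ $\promote O$), using that $Q_0[p \mapsto \distinct O]$ is literally $Q[p \mapsto O]$ so the first metric component is unchanged while $\len{Q_0}_p < \len{Q}_p$ by Lemma~\ref{lem:szdecrease} together with Definition~\ref{def:hmeasures}. Your explicit note that $\BV$ must be extended with $\BV(\distinct Q)=\BV(\promote Q)=\BV(Q)$ is a detail the paper leaves implicit, but otherwise the argument is the same.
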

\begin{proof}
We need to consider the following additional contracta of $Q[p \mapsto \plwhere~\pure{B}~\kwdo~P]$:
\begin{itemize}
  \item $Q_0[p \mapsto \plwhere~\pure{B}~\kwdo~\distinct O]$, where $Q = Q_0 \compop{p} (\distinct \Box)$: we show that $\maxred{Q_0[p \mapsto \distinct O]} = \maxred{Q[p \mapsto O]}$ and $\len{Q_0}_p < \len{Q}_p$; then we can apply the IH to prove the term is s.n.
  \item $Q_0[p \mapsto \plwhere~\pure{B}~\kwdo~\promote O]$, where $Q = Q_0 \compop{p} (\promote \Box)$: this is similar to the case above.
  \qedhere
\end{itemize}
\end{proof}

\noindent
Having proved that all the typing rules preserve reducibility, we obtain that all well-typed terms of \NRCldi are strongly normalizing and, as a corollary, the same property holds for \NRClsb.
\begin{thm}\label{thm:nrcldiSN}
If $\Gamma \vdash M : T$ in \NRCldi, then $M \in \SN$ in \NRCldi.
\end{thm}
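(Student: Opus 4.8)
The plan is to replay, for \NRCldi, the two-stage argument used for \NRCl: first observe that every reducibility set is a candidate of reducibility, then prove the fundamental theorem that every well-typed term lands in the reducibility set of its type, and finally extract strong normalization from the $\CRi$ property. Since the type sublanguage of \NRCldi coincides with that of \NRCl, the definition of $\Red{T}$ is reused verbatim, with the understanding that all terms and continuations now range over \NRCldi. The analogue of Theorem~\ref{thm:rediscr} --- that $\Red{T}\in\CR$ for every $T$ --- goes through unchanged: Lemmas~\ref{lem:cr1set} and~\ref{lem:cr2set} never inspect the shape of continuations, and in the proof of Lemma~\ref{lem:cr3set} the continuation is applied to a neutral term, so the extra frames $\distinct\Box$ and $\promote\Box$ create no reductions at the interface; the only genuinely new point is that $\distinct M$ has been added to the neutral terms (Definition~\ref{def:hneutral}), which is exactly what is needed so that $\distinct$ enjoys a $\CRiii$-style closure property.

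Next I would prove the fundamental theorem: if $\Gamma\vdash M:T$ in \NRCldi and $\rho\vDash\Gamma$, then $M\rho\in\Red{T}$, by structural induction on the typing derivation. The cases for variables, constants, records, projections, abstraction, and application are exactly as in \NRCl and require no change. For the collection constructors I would reuse the \NRCl machinery with its \NRCldi reinforcements: singletons by Lemma~\ref{lem:redsingle}, the empty collection by Corollary~\ref{cor:redempty}, unions by the corollary of the strengthened Lemma~\ref{lem:hQredunion} (proved just as Corollary~\ref{cor:redunion}), conditionals by the analogue of Corollary~\ref{cor:redwhere} obtained from Lemma~\ref{lem:hredwheremain}, and the emptiness test by Lemma~\ref{lem:redisempty}. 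For comprehensions I would first re-derive the analogue of Lemma~\ref{lem:redcomprehension} from the strengthened Lemma~\ref{lem:hredcomprmain} --- instantiating it with $n=0$, so that $\distinct^n\setlit{\pure{L}}=\setlit{\pure{L}}$ --- and then argue exactly as in the proof of Theorem~\ref{thm:reducibility}: from the induction hypotheses $\rho\vDash M_2:\setlit{S}$ and $\rho\subst{x}{L}\vDash M_1:\setlit{T}$ for every $L\in\Red{S}$, conclude $\rho\vDash\comprehension{M_1\mid x\gets M_2}:\setlit{T}$. The two new cases, $\distinct M$ and $\promote M$, are handled by Corollaries~\ref{cor:reddistinct} and~\ref{cor:redpromote} respectively, using the induction hypothesis $M\rho\in\toptopset{\Red{T}}=\Red{\setlit{T}}$.

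Finally, taking $\rho$ to be the identity substitution $\id_\Gamma$ (which satisfies $\Gamma$ by the \NRCldi analogue of Lemma~\ref{lem:idsubst}) yields $M=M\id_\Gamma\in\Red{T}$, and $\Red{T}\in\CR$ gives $M\in\SN$ by the $\CRi$ property.

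I expect no essentially new obstacle here: the difficult work was front-loaded into the extended technical lemmas of the previous subsection --- in particular the strengthening of Lemma~\ref{lem:redcomprmain} to Lemma~\ref{lem:hredcomprmain} with the extra index $n$ tracking iterated $\distinct$'s, and the widening of the candidate-of-reducibility proofs to the new frames. The only point demanding a little care in the present proof is checking that this strengthening is exactly what the comprehension case of the fundamental theorem consumes --- i.e.\ that instantiating Lemma~\ref{lem:hredcomprmain} at $n=0$ recovers precisely the statement needed to establish the $\top\top$-lifting fact $\comprehension{\pure{N}\mid x\gets\pure{M}}\in\toptopset{\Red{T}}$ --- after which the argument is a verbatim transcription of the \NRCl proof, and strong normalization for \NRClsb follows from Theorems~\ref{thm:xlatetyp} and~\ref{thm:xlatered}.
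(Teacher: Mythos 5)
Your proposal is correct and follows essentially the same route as the paper: the theorem is obtained by rerunning the fundamental theorem (Theorem~\ref{thm:reducibility}) over the \NRCldi typing rules, using the extended lemmas of Section~5.2 for the collection cases, Corollaries~\ref{cor:reddistinct} and~\ref{cor:redpromote} for the two new constructs, and then extracting $\SN$ via the identity substitution and $\CRi$. The only quibble is that the reducibility of $\distinct M$ is established by the frame argument of Lemma~\ref{lem:Kreddistinct} rather than by a $\CRiii$-style closure, so the neutrality of $\distinct M$ is not what carries that case --- but this does not affect the correctness of your argument, since you also invoke the right corollaries.
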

\begin{cor}\label{cor:nrclsbSN}
If $\Gamma \vdash M : T$ in \NRClsb, then $M \in \SN$ in \NRClsb.
\end{cor}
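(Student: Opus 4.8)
The strategy is not to argue directly about \NRClsb, which would force us to replay, at the bag level, every collection case of the \NRCl development, but to transport strong normalization along the forgetful translation $\erase{\cdot}$ of Figure~\ref{fig:xlatedef}. Given $\Gamma \vdash M : T$ in \NRClsb, I would proceed in three steps: first apply Theorem~\ref{thm:xlatetyp} to obtain $\erase{\Gamma} \vdash \erase{M} : \erase{T}$ in \NRCldi; then apply Theorem~\ref{thm:nrcldiSN} to conclude $\erase{M} \in \SN$ in \NRCldi; and finally invoke the ``consequently'' clause of Theorem~\ref{thm:xlatered} to transfer the result back. The transfer is sound because reduction is preserved step by step and non-trivially: an infinite reduction sequence $M \red M_1 \red M_2 \red \cdots$ in \NRClsb would project, under $\erase{\cdot}$, to an infinite reduction sequence $\erase{M} \red \erase{M_1} \red \erase{M_2} \red \cdots$ in \NRCldi, contradicting $\erase{M} \in \SN$. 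Hence $M \in \SN$, which is the thesis.

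Since all three ingredients are available, the corollary is itself immediate, and the work has been front-loaded into the cited theorems. Theorems~\ref{thm:xlatetyp} and~\ref{thm:xlatered} are routine inductions; the only points needing care are that $\erase{\cdot}$ identifies $\setlit{\cdot}$ with $\msetlit{\cdot}$ (so the translation is not injective, which is harmless since we only use the contrapositive of strong normalization), and that each of the bag-specific and $\distinct$/$\promote$-specific rewrite rules of \NRClsb maps onto a genuine \NRCldi rule of Figure~\ref{fig:rules_NRCldi} — for instance $\distinct\mcomprehension{M \mid x \gets N} \red \comprehension{\distinct M \mid x \gets \distinct N}$ translates to $\distinct\comprehension{\erase{M} \mid x \gets \erase{N}} \red \comprehension{\distinct\erase{M} \mid x \gets \distinct\erase{N}}$, and the simplification rules for $\distinct$ and $\promote$ on empty and singleton collections translate likewise.

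The genuine difficulty lies in Theorem~\ref{thm:nrcldiSN}, i.e.\ in extending the reducibility development of Section~\ref{sec:sn} with $\distinct$ and $\promote$. I expect the delicate points to be: (i) treating $\distinct M$ as a \emph{neutral} term while $\promote M$ is not, and enlarging continuations, auxiliary continuations, and frames with the productions $\distinct K$, $\promote K$, $\distinct\Box$, $\promote\Box$ in a way that keeps the measures $\len{\cdot}$ and $\sz{\cdot}$ aligned so that Lemma~\ref{lem:szdecrease} still holds; (ii) re-establishing Lemma~\ref{lem:classification} with the two new frames contributing reductions at the interface, the new interface contraction $\distinct\promote\Box \red \Box$ included; and, most importantly, (iii) strengthening the comprehension lemma from the shape $K[p \mapsto \comprehension{\pure{N} \mid x \gets \setlit{\pure{L}}}] \in \SN$ (Lemma~\ref{lem:redcomprmain}) to $K[p \mapsto \comprehension{\pure{N} \mid x \gets \distinct^{n}\setlit{\pure{L}}}] \in \SN$ (Lemma~\ref{lem:hredcomprmain}), so that when $K = K_0 \compop{p} (\distinct\Box)$ the contractum $K_0[p \mapsto \comprehension{\distinct\pure{N} \mid x \gets \distinct^{n+1}\setlit{\pure{L}}}]$ still falls under the induction hypothesis, once the well-founded metric is extended with the extra component $n$. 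The dedicated reducibility lemmas for $\distinct$ and $\promote$ — that $K \compop{p}(\distinct\Box)$ and $K \compop{p}(\promote\Box)$ lie in $\topset{\cC_p}$ whenever $\cC \in \CR$ — are then short inductions, the only asymmetry being the extra $\distinct\promote\setlit{M} \red \setlit{M}$ contractum that has to be treated in the $\promote$ case. With these pieces in place, the fundamental theorem for \NRCldi goes through as for \NRCl, and the corollary follows by the three-step chain above.
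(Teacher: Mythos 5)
Your proposal is correct and follows essentially the same route as the paper: the corollary is obtained by chaining Theorem~\ref{thm:xlatetyp}, Theorem~\ref{thm:nrcldiSN}, and the reduction-preservation (and hence SN-reflection) property of Theorem~\ref{thm:xlatered}, with the real work front-loaded into the \NRCldi development exactly as you describe.
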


\section{Related Work}%
\label{sec:related}

This paper builds on a long line of research on normalization of
comprehension queries, a model of query languages popularized over 25
years ago by Buneman et al.~\cite{BNTW95} and inspired by Trinder and Wadler's
work on comprehensions~\cite{trinder-wadler:comprehensions,Wadler92}.  Wong~\cite{wong96jcss}
proved conservativity via a strongly normalizing rewrite system, which
was used in
\linebreak
Kleisli~\cite{wong:comprehensions}, a functional query
system, in which flat query expressions were normalized to SQL\@.
Libkin and Wong~\cite{LW94,LW97} investigated conservativity in the
presence of aggregates, internal generic functions, and bag
operations, and demonstrated that bag operations can be expressed
using nested comprehensions.  However, their normalization results
studied bag queries by translating to relational queries with
aggregation, and did not consider higher-order queries, so they do not
imply the normalization results for $NRC_\lambda(Set,Bag)$
given here.

Cooper~\cite{Cooper09} first investigated query normalization (and
hence conservativity) in the presence of higher-order functions.  He
gave a rewrite system showing how to normalize homogeneous (that is,
pure set or pure bag) queries to eliminate intermediate occurrences of
nesting or of function types.  However, although Cooper
claimed a proof (based on $\top\top$-lifting~\cite{LindleyS05}) and
provided proof details in his PhD thesis~\cite{Cooper09a}, there
unfortunately turned out to be a nontrivial lacuna in that proof, and
this paper therefore (in our opinion) contains the first
\emph{complete} proof of normalization for higher-order queries, even
for the homogeneous case.

Admittedly, our approach is sometimes difficult to work with: the difficulty lies with the notion of (variable capturing) context, along with rewrite rules involving substitutions and renaming of bound variables, as we noted in Section~\ref{sec:red}; for this reason, it would be interesting to consider alternatives. The complexity of computing with contexts has been the object of research in higher-order rewriting and higher-order abstract syntax techniques (\cite{Oostrom93,pfenning88}). Another approach that could be more easily adapted to our scenario is to extend the language to allow hole variables to be decorated with explicit substitutions~(\cite{lambdasigma}). In Section~\ref{sec:red} we have shown that if an unapplied context reduces in a certain way, the same reduction does not have to be allowed when the context is applied to an instantiation. The simplest example we have shown of this phenomenon is:
\[
\begin{array}{c}
(\lambda z.\ibox{p})~N \red \ibox{p}
\\
\text{but}
\\
((\lambda z.\ibox{p})~N)[p \mapsto z] \not\red \ibox{p}[p \mapsto z]
\end{array}
\]
The reason for this discrepancy lies in the fact that while beta reduction yields a substitution replacing $z$ with $N$, once this substitution meets the hole $\ibox{p}$, it is completely lost. If we replaced the meta-operation of substitution with new syntax $L\tuple{x := M}$ denoting a (suspended) explicit substitution that will eventually replace with $M$ all the free occurrences of $x$ within $L$, we could write:
\[
\begin{array}{c}
(\lambda z.\ibox{p})~N \red \ibox{p}\tuple{z := N}
\\
\text{and}
\\
((\lambda z.\ibox{p})~N)[p \mapsto z] \red \ibox{p}\tuple{z := N}[p \mapsto z] = z\tuple{z := N}
\end{array}
\]
where the final term correctly reduces to $N$.
Holes with explicit substitutions have been studied in the context of dependently-typed lambda calculi, where they are more often known as \emph{metavariables}, with applications to proof assistants (\cite{munoz2001}).
We could study strong normalization in such an extended calculus,
however explicit substitutions are known to require a careful treatment of reduction for them to simultaneously preserve
confluence and strong normalization (see~\cite{mellies05} for a counterexample); more recent explicit substitution
calculi (e.g.~\cite{david01,kesner05}) often employ ideas from linear logic to ensure strong normalization is preserved.

Another approach, introduced by Bognar and De Vrijer, employs a \emph{context calculus} (\cite{bognar2001}), i.e.\ an extension of the lambda calculus with additional operators to express context-building and instantiation, along with interfaces describing the evolution of contexts under reduction (``communication''). Under this approach, the context $(\lambda z.\ibox{p})~N$ would be expressed as
\[
\delta \ibox{p}.(\lambda z.\ibox{p}\tuple{z})~N
\]
where the operator $\delta \ibox{p}.-$ (unrelated to the deduplication operator of Section~\ref{sec:heterogeneous}) builds a context by abstracting over a hole variable $\ibox{p}$, and the syntax $\ibox{p}\tuple{z}$ expresses the fact that once $\ibox{p}$ is instantiated with a term, this term will communicate with the context by means of the (captured) variable $z$. The term $z$ to be plugged into the context would be represented as
\[
\Lambda z.z
\]
where the abstraction $\Lambda z.-$ is provided to express the fact that this term can communicate with the context over the variable $z$. To apply this term to the context, we use the syntax $-\lceil - \rceil$:
\[
(\delta \ibox{p}.(\lambda z.\ibox{p}\tuple{z})~N)\lceil \Lambda z.z \rceil
\]
Here as well, we are allowed to beta reduce the context both in the unapplied and in the applied form:
\[
\begin{array}{c}
\delta \ibox{p}.(\lambda z.\ibox{p}\tuple{z})~N \red \delta \ibox{p}.\ibox{p}\tuple{N}
\\
\text{and}
\\
(\delta \ibox{p}.(\lambda z.\ibox{p}\tuple{z})~N)\lceil \Lambda z.z \rceil \red (\delta \ibox{p}.\ibox{p}\tuple{N})\lceil \Lambda z.z \rceil
\end{array}
\]
where the final term can be further reduced to $(\Lambda z.z)\tuple{N}$, and finally to $N$, as expected. Like explicit substitutions, the context calculus allows contexts to be reduced independently of an applied instantiation, potentially simplifying technical results such as those of Lemma~\ref{lem:redtoredctxapp} and~\ref{lem:redrename_aux4}. Both techniques require fairly important extension to the language, type system and rewrite system, and will be considered in future work.

Since the fundamental work of Wong and others on the Kleisli system,
language-integrated query has gradually made its way into other
systems, most notably Microsoft's\ .NET framework languages C\# and 
F\#~\cite{meijer:sigmod}, and the Web programming language Links~\cite{CLWY06}.  Cheney et al.~\cite{cheney13icfp}
formally investigated the F\# approach to language-integrated query
and showed that normalization results due to Wong and Cooper could be
adapted to improve it further; however, their work considered only
homogeneous collections. In subsequent work, Cheney et al.~\cite{cheney14sigmod} showed how
use normalization to perform \emph{query shredding} for multiset queries, in which a
query returning a type with $n$ nested collections can be implemented
by combining the results of $n$ flat queries; this has been
implemented in Links~\cite{CLWY06}.

Higher-order relational queries have also been studied by Benedikt et
al.~\cite{benedikt15ic}, where the focus was mostly on complexity of
the evaluation and containment problems.  Their calculus focuses on
higher-order expressions composing operations over \emph{flat}
relational algebra operators only, where the base types are records
listing the fields of the relations.  Thus, modulo notational
differences, their calculus is a sublanguage of \NRC.  In their 
setting, normalization up to $\beta$-reduction follows as a special
case of normalization for typed lambda-calculus; in our setting
the same approach would not work because collection and record types
can be combined arbitrarily in \NRC and normalization involves rules
that nontrivially rearrange comprehensions and other collection operations.

Several recent efforts to formalize and reason about the semantics of
SQL are complementary to our work.  Guagliardo and
Libkin~\cite{guagliardo17} presented a semantics for SQL's actual
behaviour in the presence of set and multiset operators (including bag
intersection and difference) as well as
incomplete information (nulls), and related the expressiveness of this
fragment of SQL with that of an algebra over bags with nulls.  Chu et
al.~\cite{Chu17} presented a formalized semantics for reasoning about
SQL (including set and bag semantics as well as aggregation/grouping,
but excluding nulls) using nested relational queries in Coq, while
Benzaken and Contejean~\cite{benzaken19cpp} presented a semantics
including all of these SQL features (set, multiset,
aggregation/grouping, nulls), and formalized the semantics in Coq.
Kiselyov et al.~\cite{kiselyov17aplas} has proposed
language-integrated query techniques that handle sorting operations
(SQL's $\kworder~\kwby$).

However, the above work on semantics has not considered query
normalization, and to the best of our knowledge normalization results
for query languages with more than one collection type were previously
unknown even in the first-order case.  We are interested in extending
our results for mixed set and bag semantics to handle nulls,
grouping/aggregation, and sorting, thus extending higher-order
language integrated query to cover all of the most widely-used SQL
features.  Normalization of higher-order
queries in the presence of all of these features simultaneously
remains an open problem, which we plan to consider next.  In addition,
fully formalizing such normalization proofs also appears to be a
nontrivial challenge.

\section{Conclusions}%
\label{sec:concl}

Integrating database queries into programming languages has many
benefits, such as type safety and avoidance of common SQL injection
attacks, but also imposes limitations that prevent programmers from
constructing queries dynamically as they could by concatenating SQL
strings unsafely.  Previous work has demonstrated that many useful
dynamic queries can be constructed safely using \emph{higher-order
  functions} inside language-integrated queries; provided such
functions are not recursive, it was believed, query expressions can be
normalized.  Moreover, while it is common in practice for
language-integrated query systems  to provide
support for SQL features such as mixed set and bag operators, it is
not well understood in theory how to normalize these queries in the
presence of higher-order functions.
Previous work on higher-order query normalization has considered only
homogeneous (that is, pure set or pure bag) queries, and in the
process of attempting to generalize this work to a heterogeneous
setting, we discovered a nontrivial gap in the previous proof of
strong normalization.  We therefore prove strong normalization for
both homogeneous and heterogeneous queries for the first time.

As next steps, we intend to extend the Links implementation of
language-integrated query with heterogeneous queries and
normalization, and to investigate (higher-order) query normalization
and conservativity for the remaining common SQL features, such as
nulls, grouping/aggregation, and ordering.

\section*{Acknowledgments}
This work was supported by ERC Consolidator Grant Skye (grant number
ERC-682315) and by an ISCF Metrology
Fellowship grant provided by the UK government’s Department for
Business, Energy and Industrial Strategy (BEIS).

This research has been supported by the National Cyber Security Centre (NCSC) project: Mechanising the metatheory of SQL with nulls.

We are grateful to
Philip Wadler, Sam Lindley, and the anonymous reviewers for their
comments and suggestions.

\bibliography{paper}

\begin{thebibliography}{BNTW95}

\bibitem[ACCL91]{lambdasigma}
M.~Abadi, L.~Cardelli, P.-L. Curien, and J.-J. Lévy.
\newblock Explicit substitutions.
\newblock {\em J. Functional Programming}, 1(4):375–416, Oct 1991.
\newblock \href {https://doi.org/10.1017/S0956796800000186}
  {\path{doi:10.1017/S0956796800000186}}.

\bibitem[BC19]{benzaken19cpp}
V{\'{e}}ronique Benzaken and Evelyne Contejean.
\newblock A {Coq} mechanised formal semantics for realistic {SQL} queries:
  formally reconciling {SQL} and bag relational algebra.
\newblock In {\em CPP 2019}, pages 249--261, 2019.
\newblock \href {https://doi.org/10.1145/3293880.3294107}
  {\path{doi:10.1145/3293880.3294107}}.

\bibitem[BdV01]{bognar2001}
Mirna Bognar and Roel~C. de~Vrijer.
\newblock A calculus of lambda calculus contexts.
\newblock {\em Journal of Automated Reasoning}, 27:29--59, 2001.

\bibitem[BNTW95]{BNTW95}
Peter Buneman, Shamim Naqvi, Val Tannen, and Limsoon Wong.
\newblock Principles of programming with complex objects and collection types.
\newblock {\em Theor. Comput. Sci.}, 149(1), 1995.
\newblock \href {https://doi.org/10.1016/0304-3975(95)00024-Q}
  {\path{doi:10.1016/0304-3975(95)00024-Q}}.

\bibitem[BPV15]{benedikt15ic}
Michael Benedikt, Gabriele Puppis, and Huy Vu.
\newblock The complexity of higher-order queries.
\newblock {\em Inf. Comput.}, 244:172--202, 2015.
\newblock \href {https://doi.org/10.1016/j.ic.2015.07.003}
  {\path{doi:10.1016/j.ic.2015.07.003}}.

\bibitem[CLW13]{cheney13icfp}
James Cheney, Sam Lindley, and Philip Wadler.
\newblock A practical theory of language-integrated query.
\newblock In {\em ICFP}, 2013.
\newblock \href {https://doi.org/10.1145/2500365.2500586}
  {\path{doi:10.1145/2500365.2500586}}.

\bibitem[CLW14]{cheney14sigmod}
James Cheney, Sam Lindley, and Philip Wadler.
\newblock Query shredding: efficient relational evaluation of queries over
  nested multisets.
\newblock In {\em SIGMOD}, pages 1027--1038. ACM, 2014.
\newblock \href {https://doi.org/10.1145/2588555.2612186}
  {\path{doi:10.1145/2588555.2612186}}.

\bibitem[CLWY07]{CLWY06}
Ezra Cooper, Sam Lindley, Philip Wadler, and Jeremy Yallop.
\newblock Links: web programming without tiers.
\newblock In {\em FMCO}, 2007.
\newblock \href {https://doi.org/10.1007/978-3-540-74792-5_12}
  {\path{doi:10.1007/978-3-540-74792-5_12}}.

\bibitem[Coo09a]{Cooper09a}
Ezra Cooper.
\newblock {\em Programming language features for web application development}.
\newblock PhD thesis, University of Edinburgh, 2009.

\bibitem[Coo09b]{Cooper09}
Ezra Cooper.
\newblock The script-writer's dream: How to write great {SQL} in your own
  language, and be sure it will succeed.
\newblock In {\em DBPL}, 2009.
\newblock \href {https://doi.org/10.1007/978-3-642-03793-1_3}
  {\path{doi:10.1007/978-3-642-03793-1_3}}.

\bibitem[CWCS17]{Chu17}
Shumo Chu, Konstantin Weitz, Alvin Cheung, and Dan Suciu.
\newblock {HoTTSQL}: Proving query rewrites with univalent {SQL} semantics.
\newblock In {\em PLDI}, pages 510--524. ACM, 2017.
\newblock \href {https://doi.org/10.1145/3062341.3062348}
  {\path{doi:10.1145/3062341.3062348}}.

\bibitem[DG01]{david01}
René David and Bruno Guillaume.
\newblock A $\lambda$-calculus with explicit weakening and explicit
  substitution.
\newblock {\em Mathematical Structures in Computer Science}, 11(1):169–206,
  2001.
\newblock \href {https://doi.org/10.1017/S0960129500003224}
  {\path{doi:10.1017/S0960129500003224}}.

\bibitem[FM00]{DBLP:journals/tods/FegarasM00}
Leonidas Fegaras and David Maier.
\newblock Optimizing object queries using an effective calculus.
\newblock {\em ACM Trans. Database Syst.}, 25(4):457--516, 2000.

\bibitem[GL17]{guagliardo17}
Paolo Guagliardo and Leonid Libkin.
\newblock A formal semantics of {SQL} queries, its validation, and
  applications.
\newblock {\em PVLDB}, 2017.
\newblock \href {https://doi.org/10.14778/3151113.3151116}
  {\path{doi:10.14778/3151113.3151116}}.

\bibitem[GLT89]{GLT89}
Jean-Yves Girard, Yves Lafont, and Paul Taylor.
\newblock {\em Proofs and Types}.
\newblock Cambridge University Press, 1989.

\bibitem[KK17]{kiselyov17aplas}
Oleg Kiselyov and Tatsuya Katsushima.
\newblock Sound and efficient language-integrated query - maintaining the
  {ORDER}.
\newblock In {\em APLAS 2017}, pages 364--383, 2017.
\newblock \href {https://doi.org/10.1007/978-3-319-71237-6_18}
  {\path{doi:10.1007/978-3-319-71237-6_18}}.

\bibitem[KL05]{kesner05}
Delia Kesner and St{\'e}phane Lengrand.
\newblock Extending the explicit substitution paradigm.
\newblock In J{\"u}rgen Giesl, editor, {\em Term Rewriting and Applications},
  pages 407--422, Berlin, Heidelberg, 2005. Springer Berlin Heidelberg.

\bibitem[LC12]{lindley12tldi}
Sam Lindley and James Cheney.
\newblock Row-based effect types for database integration.
\newblock In {\em TLDI}, 2012.
\newblock \href {https://doi.org/10.1145/2103786.2103798}
  {\path{doi:10.1145/2103786.2103798}}.

\bibitem[LS05]{LindleyS05}
Sam Lindley and Ian Stark.
\newblock Reducibility and $\top\top$-lifting for computation types.
\newblock In {\em TLCA}, 2005.
\newblock \href {https://doi.org/10.1007/11417170_20}
  {\path{doi:10.1007/11417170_20}}.

\bibitem[LW94]{LW94}
Leonid Libkin and Limsoon Wong.
\newblock Conservativity of nested relational calculi with internal generic
  functions.
\newblock {\em Inf. Process. Lett.}, 49(6):273--280, 1994.
\newblock \href {https://doi.org/10.1016/0020-0190(94)90099-X}
  {\path{doi:10.1016/0020-0190(94)90099-X}}.

\bibitem[LW97]{LW97}
Leonid Libkin and Limsoon Wong.
\newblock Query languages for bags and aggregate functions.
\newblock {\em J. Comput. Syst. Sci.}, 55(2), 1997.
\newblock \href {https://doi.org/10.1006/jcss.1997.1523}
  {\path{doi:10.1006/jcss.1997.1523}}.

\bibitem[MBB06]{meijer:sigmod}
Erik Meijer, Brian Beckman, and Gavin~M. Bierman.
\newblock {LINQ}: reconciling object, relations and {XML} in the {.NET}
  framework.
\newblock In {\em SIGMOD}, 2006.
\newblock \href {https://doi.org/10.1145/1142473.1142552}
  {\path{doi:10.1145/1142473.1142552}}.

\bibitem[Mel95]{mellies05}
Paul-Andr{\'e} Mellies.
\newblock Typed $\lambda$-calculi with explicit substitutions may not
  terminate.
\newblock In Mariangiola Dezani-Ciancaglini and Gordon Plotkin, editors, {\em
  Typed Lambda Calculi and Applications}, pages 328--334, Berlin, Heidelberg,
  1995. Springer Berlin Heidelberg.

\bibitem[Mu{\~n}01]{munoz2001}
C{\'e}sar Mu{\~n}oz.
\newblock Dependent types and explicit substitutions: a meta-theoretical
  development.
\newblock {\em Mathematical Structures in Computer Science}, 11(1):91–129,
  2001.
\newblock \href {https://doi.org/10.1017/S0960129500003261}
  {\path{doi:10.1017/S0960129500003261}}.

\bibitem[PE88]{pfenning88}
Frank Pfenning and Conal Elliott.
\newblock Higher-order abstract syntax.
\newblock In {\em Proceedings of the SIGPLAN '88 conference on Programming
  language design and implementation}, volume~23 of {\em Sigplan Notices -
  SIGPLAN}, pages 199--208, 07 1988.
\newblock \href {https://doi.org/10.1145/960116.54010}
  {\path{doi:10.1145/960116.54010}}.

\bibitem[PG92]{ParedaensG92}
Jan Paredaens and Dirk~Van Gucht.
\newblock Converting nested algebra expressions into flat algebra expressions.
\newblock {\em ACM Trans. Database Syst.}, 17(1), 1992.
\newblock \href {https://doi.org/10.1145/128765.128768}
  {\path{doi:10.1145/128765.128768}}.

\bibitem[Pit98]{Pitts98}
Andrew~M. Pitts.
\newblock Parametric polymorphism and operational equivalence (preliminary
  version).
\newblock In {\em HOOTS II}, volume~10, pages 2--27, 1998.
\newblock \href {https://doi.org/10.1016/S1571-0661(05)80685-1}
  {\path{doi:10.1016/S1571-0661(05)80685-1}}.

\bibitem[RC17]{Ricciotti2017a}
W.~Ricciotti and J.~Cheney.
\newblock {Strongly Normalizing Audited Computation}.
\newblock In V.~Goranko and M.~Dam, editors, {\em CSL 2017}, volume~82 of {\em
  Leibniz International Proceedings in Informatics (LIPIcs)}, pages
  36:1--36:21. Schloss Dagstuhl--Leibniz-Zentrum fuer Informatik, 2017.
\newblock \href {https://doi.org/10.4230/LIPIcs.CSL.2017.36}
  {\path{doi:10.4230/LIPIcs.CSL.2017.36}}.

\bibitem[RC19]{ricciotti19dbpl}
Wilmer Ricciotti and James Cheney.
\newblock Mixing set and bag semantics.
\newblock In {\em DBPL}, pages 70--73, 2019.
\newblock \href {https://doi.org/10.1145/3315507.3330202}
  {\path{doi:10.1145/3315507.3330202}}.

\bibitem[RC20]{Ricciotti20fscd}
Wilmer Ricciotti and James Cheney.
\newblock {Strongly Normalizing Higher-Order Relational Queries}.
\newblock In Zena~M. Ariola, editor, {\em 5th International Conference on
  Formal Structures for Computation and Deduction (FSCD 2020)}, volume 167 of
  {\em Leibniz International Proceedings in Informatics (LIPIcs)}, pages
  28:1--28:22, Dagstuhl, Germany, 2020. Schloss Dagstuhl--Leibniz-Zentrum
  f{\"u}r Informatik.
\newblock \href {https://doi.org/10.4230/LIPIcs.FSCD.2020.28}
  {\path{doi:10.4230/LIPIcs.FSCD.2020.28}}.

\bibitem[TW89]{trinder-wadler:comprehensions}
Philip Trinder and Philip Wadler.
\newblock Improving list comprehension database queries.
\newblock In {\em TENCON '89.}, 1989.
\newblock \href {https://doi.org/10.1109/TENCON.1989.176921}
  {\path{doi:10.1109/TENCON.1989.176921}}.

\bibitem[UG15]{SIGMOD2015UlrichG}
Alexander Ulrich and Torsten Grust.
\newblock The flatter, the better: Query compilation based on the flattening
  transformation.
\newblock In {\em SIGMOD}, pages 1421--1426. ACM, 2015.
\newblock \href {https://doi.org/10.1145/2723372.2735359}
  {\path{doi:10.1145/2723372.2735359}}.

\bibitem[vv06]{Oostrom93}
Vincent {van Oostrom} and Femke {van Raamsdonk}.
\newblock Comparing combinatory reduction systems and higher-order rewrite
  systems.
\newblock Technical Report CS-R9361, CWI, January 2006.

\bibitem[Wad92]{Wadler92}
Philip Wadler.
\newblock Comprehending monads.
\newblock {\em Math. Struct. in Comp. Sci.}, 2(4), 1992.
\newblock \href {https://doi.org/10.1017/S0960129500001560}
  {\path{doi:10.1017/S0960129500001560}}.

\bibitem[Won96]{wong96jcss}
Limsoon Wong.
\newblock Normal forms and conservative extension properties for query
  languages over collection types.
\newblock {\em J. Comput. Syst. Sci.}, 52(3), 1996.
\newblock \href {https://doi.org/10.1006/jcss.1996.0037}
  {\path{doi:10.1006/jcss.1996.0037}}.

\bibitem[Won00]{wong:comprehensions}
Limsoon Wong.
\newblock Kleisli, a functional query system.
\newblock {\em J. Funct. Programming}, 10(1), 2000.
\newblock \href {https://doi.org/10.1017/S0956796899003585}
  {\path{doi:10.1017/S0956796899003585}}.

\end{thebibliography}
\bibliographystyle{alphaurl}

\newpage
\appendix

\section{Proofs}
This appendix expands on some results whose proofs were omitted or only sketched in the paper.

\begin{lemmaproof}{lem:Kredempty}{
  If $K \in \SN$ is a continuation, then for all indices $p$ we have $K[p \mapsto \emptyoset] \in \SN$.
}
    We proceed by well-founded induction, using the metric:
    \[
    (K_1,p_1) \prec (K_2,p_2) \iff (\maxred{K_1},\sz{K_1}_{p_1}) \lessdot (\maxred{K_2},\sz{K_2}_{p_2})
    \]
    \begin{itemize}
    \item $K'[p \mapsto \emptyoset]^\sigma$, where $K \ared{\sigma} K'$: by Corollary~\ref{cor:aredforall}, we need to show $K'[q \mapsto \emptyoset] \in \SN$ whenever $\sigma(q) = p$; this follows from the IH, with $\maxred{K'} < \maxred{K}$ by Lemma~\ref{lem:redmaxred}.
    \item $K_0[p \mapsto \emptyoset]$, where $K = K_0 \compop{p} F$ for some frame $F$: by Lemma~\ref{lem:maxredsubK} we have $\maxred{K_0} \leq \maxred{K}$; furthermore, by Lemma~\ref{lem:szdecrease} we show that $\sz{K_0}_p < \sz{K}_p$; then the thesis follows immediately from the IH\@.
    \qedhere
    \end{itemize}
\end{lemmaproof}

\begin{lemmaproof}{lem:Qredunion}{
For all Q-continuations $Q,O_1,O_2$ with pairwise disjoint supports, if $Q[p \mapsto O_1] \in \SN$ and $Q[p \mapsto O_2] \in \SN$, then $Q[p \mapsto O_1 \ocup O_2] \in \SN$.
}
We assume $p \in \supp(Q)$ (otherwise, $Q[p \mapsto O_1] = Q[p \mapsto O_2] = Q[p \mapsto O_1 \ocup O_2]$, and the thesis holds trivially).
Then, by Lemma~\ref{lem:ctxappsn}, $Q[p \mapsto O_1] \in \SN$ and $Q[p \mapsto O_2] \in \SN$ imply $Q \in \SN$, $O_1 \in \SN$, and $O_2 \in \SN$:
thus we can proceed by well-founded induction on $(Q,p,O_1,O_2)$ using the following metric:
\[
\begin{array}{l}
(Q^1, p^1, O^1_1, O^1_2) \prec (Q^2, p^2, O^2_1,O^2_2)
\\
\iff
(\maxred{Q^1},\sz{Q^1}_{p^1},\maxred{O^1_1} + \maxred{O^1_2}) \lessdot
(\maxred{Q^2},\sz{Q^2}_{p^2},\maxred{O^2_1} + \maxred{O^2_2})
\end{array}
\]
to prove that if $Q[p \mapsto O_1] \in \SN$ and $Q[p \mapsto O_2] \in \SN$, then $Q[p \mapsto O_1 \ocup O_2] \in \SN$.
Equivalently, we will consider all possible contracta and show that each of them must be a strongly normalizing term; we will apply the induction hypothesis to new auxiliary continuations obtained by placing pieces of $Q$ into $O_1$ and $O_2$: the hypothesis on the supports of the continuations being disjoint is used to make sure that the new continuations do not contain duplicate holes and are thus well-formed. By cases on the possible contracta:

\begin{itemize}
\item $Q_1 [q \mapsto Q_2\subst{x}{\pure{L}}][p \mapsto (O_1\subst{x}{\pure{L}}) \ocup (O_2\subst{x}{\pure{L}})]$ (where \[Q =
(Q_1 \compop{q} \comprehension{\Box \mid x \gets \setlit{\pure{L}}})[q \mapsto Q_2],\] $q \in \supp(Q_1)$, $p \in \supp(Q_2)$): let $Q' = Q_1[q \mapsto Q_2\subst{x}{\pure{L}}]$, and note that $Q \red Q'$, hence $\maxred{Q'} < \maxred{Q}$; note $Q[p \mapsto O_1] \red Q'[p \mapsto O_1\subst{x}{\pure{L}}]$, hence since the former term is s.n., so must be the latter, and hence also $O_1\subst{x}{\pure{L}} \in \SN$; similarly, $O_2\subst{x}{\pure{L}}$; then we can apply the IH with
$(Q',p,O_1\subst{x}{\pure{L}},O_2\subst{x}{\pure{L}})$ to obtain the thesis.
\item $Q'[p \mapsto O_1 \ocup O_2]^\sigma$ (where $Q \ared{\sigma} Q'$): by Corollary~\ref{cor:aredforall}, we need to prove that, for all $q$ s.t.\ $\sigma(q) = p$, $Q'[q \mapsto O_1 \ocup O_2] \in \SN$; since $Q[p \mapsto O_1] \in \SN$, we also have $Q'[p \mapsto O_1]^\sigma \in \SN$, which implies $Q'[q \mapsto O_1] \in \SN$ by Corollary~\ref{cor:aredforall}; for the same reason, $Q'[q \mapsto O_2] \in \SN$; by Lemma~\ref{lem:redmaxred}, $\maxred{Q'} < \maxred{Q}$, thus the thesis follows by IH\@.
\item $Q_1 [p \mapsto (\comprehension{Q_2|x \gets O_1}) \ocup (\comprehension{Q_2|x \gets O_2})]$ (where $Q = Q_1 \compop{p} \comprehension{Q_2|x \gets \Box}$):
\linebreak
by Lemma~\ref{lem:maxredsubK}, $\maxred{Q_1} \leq \maxred{Q}$; we also know $\sz{Q_1}_p < \sz{Q}_p$; take $O_1' := \comprehension{Q_2|x \gets O_1}$ and note that, since
$Q[p \mapsto O_1] = Q_0[p \mapsto O_1']$, we have $O_1'$ is a subterm of a strongly normalizing term, thus $O_1' \in \SN$; similarly, we define $O_2' := \comprehension{Q_2|x \gets O_2}$ and show it is s.n.\ in a similar way; then $(Q_1,p,O_1', O_2')$ reduce the metric, and we can prove the thesis by IH\@.
\item $Q_1 [p \mapsto (\comprehension{O_1|x \gets Q_2}) \ocup (\comprehension{O_2|x \gets Q_2})]$ (where $Q = Q_1 \compop{p} \comprehension{\Box \mid x \gets Q_2}$):
\linebreak
by Lemma~\ref{lem:maxredsubK}, $\maxred{Q_1} \leq \maxred{Q}$; by Lemma~\ref{lem:szdecrease} we also know $\sz{Q_1}_p < \sz{Q}_p$; take $O_1' := \comprehension{O_1|x \gets Q_2}$ and note that, since
$Q[p \mapsto O_1] = Q_1[p \mapsto O_1']$, we have $O_1'$ is a subterm of a strongly normalizing term, thus $O_1' \in \SN$; similarly, we define $O_2' := \comprehension{O_2|x \gets Q_2}$ and show it is s.n.\ in a similar way; then $(Q_1, p, O_1', O_2')$ reduce the metric, and we can prove the thesis by IH\@.
\item $Q_0 [p \mapsto (\plwhere~\pure{B}~\kwdo~O_1) \ocup (\plwhere~\pure{B}~\kwdo~O_2)]$ (where $Q = Q_0 \compop{p} (\plwhere~\pure{B}~\kwdo~\Box)$):
by Lemma~\ref{lem:maxredsubK}, $\maxred{Q_0} \leq \maxred{Q}$; by Lemma~\ref{lem:szdecrease} we also know $\sz{Q_0}_p < \sz{Q}_p$; take $O_1' := \plwhere~B~\kwdo~O_1$ and note that, since $Q[p \mapsto O_1] = Q_0[p \mapsto O_1']$, we have $O_1'$ is a subterm of a strongly normalizing term, thus $O_1' \in \SN$; similarly, we define $O_2' := \plwhere~\pure{B}~\kwdo~O_2$ and prove it is strongly normalizing in the same way; then $(Q_0, p, O_1',O_2')$ reduce the metric, and we can prove the thesis by IH\@.
\item Contractions within $O_1$ or $O_2$ reduce $\maxred{O_1} + \maxred{O_2}$, thus the thesis follows by IH\@.
\qedhere
\end{itemize}
\end{lemmaproof}

\noindent
Reducibility for conditionals is proved similarly to comprehensions. However, to consider all the conversions commuting with $\plwhere$, we need to use the more general auxiliary continuations.
\begin{lemmaproof}{lem:redwherechange}{
Suppose $Q[p \mapsto \plwhere~B~\kwdo~M] \in \SN$. Then for all $B' \in \SN$ such that $\BV(Q)$ and $\FV(B')$ are disjoint, $Q[p \mapsto \plwhere~B \land B'~\kwdo~M] \in \SN$.
}
  We proceed by well-founded induction on $(Q,B,B',M,p)$ using the following metric:
  \[
    \begin{array}{l}
    (Q_1,B_1,B'_1,M_1,p_1) \prec (Q_2,B_2,B'_2,M_2,p_2)
    \iff
    \\
    \qquad
    (\maxred{Q_1[p_1 \mapsto \plwhere~B_1~\kwdo~M_1]},\maxred{B'_1},\kwsize(M_1))
    \\
    \qquad
    \lessdot
    (\maxred{Q_2[p_2 \mapsto \plwhere~B_2~\kwdo~M_2]}, \maxred{B'_2}, \kwsize(M_2))
    \end{array}
  \]
  We will consider all possible contracta of $Q[p \mapsto \plwhere~B \land B'~\kwdo~M]$ and show that each of them must be a strongly normalizing term. By cases:
  \begin{itemize}
  \item $Q_1[q \mapsto Q_2 \subst{x}{\pure{L}}][p \mapsto (\plwhere~B \land B'~\kwdo~M)\subst{x}{\pure{L}}]$, where \[Q =
  (Q_1 \compop{q} \comprehension{\Box \mid x \gets \setlit{\pure{L}}})[q \mapsto Q_2],\] $q \in \supp(Q_1)$, and $p \in \supp(Q_2)$; by the freshness condition we know $x \notin \FV(B')$, thus $(\plwhere~B \land B'~\kwdo~M)\subst{x}{\pure{L}} = \plwhere~B\subst{x}{\pure{L}} \land B'~\kwdo~(O\subst{x}{\pure{L}})$; to apply the IH, we need to show
  $\maxred{Q_1[q \mapsto Q_2 \subst{x}{\pure{L}}][p \mapsto \plwhere~B\subst{x}{\pure{L}}~\kwdo~M]} < \maxred{Q[p \mapsto \plwhere~B~\kwdo~M]}$: since the former term is a contractum of the latter, this is implied by Lemma~\ref{lem:redmaxred}.
  \item $Q'[p \mapsto \plwhere~B \land B'~\kwdo~M]^\sigma$, where $Q \ared{\sigma} Q'$. By Corollary~\ref{cor:aredforall}, it suffices to prove $Q'[p \mapsto \plwhere~B \land B'~\kwdo~M]$ for all $q$ s.t.\ $\sigma(p) = q$; we prove $\maxred{Q'[q \mapsto \plwhere~B~\kwdo~M]} \leq \maxred{Q'[p \mapsto \plwhere~B~\kwdo~M]^\sigma}$ (by Corollary~\ref{cor:aredforall}), and $\maxred{Q'[p \mapsto \plwhere~B~\kwdo~M]^\sigma} < \maxred{Q[p \mapsto \plwhere~B~\kwdo~M]}$ (by Lemma~\ref{lem:redmaxred}, since the former term is a contractum of the latter); then the thesis follows by IH\@.
  \item $Q_1[p \mapsto \plwhere~B \land B'~\kwdo~\comprehension{Q_2|x \gets M}]$, where $Q = Q_1 \compop{p} \comprehension{Q_2|x \gets \Box}$; to apply the IH, we need to show $\maxred{Q_1[p \mapsto \plwhere~B~\kwdo~\comprehension{Q_2|x \gets M}]} < \maxred{Q[p \mapsto \plwhere~B~\kwdo~M]}$: since the former term is a contractum of the latter, this is implied by Lemma~\ref{lem:redmaxred}.
  \item $Q_0[p \mapsto \plwhere~B_0 \land B \land B'~\kwdo~O]$, where $Q = Q_0 \compop{p} (\plwhere~B_0~\kwdo~\Box)$; to apply the IH, we need to show $\maxred{Q_1[p \mapsto \plwhere~B_0 \land B~\kwdo~M]} < \maxred{Q[p \mapsto \plwhere~B~\kwdo~M]}$: since the former term is a contractum of the latter, this is implied by Lemma~\ref{lem:redmaxred}.
  \item $Q[p \mapsto \emptyoset]$, where $O = \emptyoset$: this term is also a contractum of $Q[p \mapsto \plwhere~B~\kwdo~\emptyoset]$, thus it is strongly normalizing.
  \item $Q[p \mapsto (\plwhere~B \land B'~\kwdo~M_1) \ocup (\plwhere~B \land B'~\kwdo~M_2)]$, where $M = M_1 \ocup M_2$; we note that,
  for $i=1,2$, we have $\maxred{Q[p \mapsto \plwhere~B~\kwdo~M_i]} \leq \maxred{Q[p \mapsto (\plwhere~B~\kwdo~M_1) \cup (\plwhere~B~\kwdo~M_2)]} < \maxred{Q[p \mapsto \plwhere~B~\kwdo~M]}$, where the first inequality is by Lemma~\ref{lem:Kredunionr}, and the second by Lemma~\ref{lem:redmaxred}; we also note $\kwsize(M_i) < \kwsize(M)$; then we can apply the IH to prove $Q[p \mapsto \plwhere~B \land B'~\kwdo~M_i] \in \SN$, which implies the thesis by Lemma~\ref{lem:Kredunion}.
  \item $(Q \compop{p} \comprehension{\Box \mid x \gets M_2})[p \mapsto \plwhere~B \land B'~\kwdo~M_1]$, where $M = \comprehension{M_1|x \gets M_2}$;
  to apply the IH, we need to show $\maxred{(Q \compop{p} \comprehension{\Box \mid x \gets M_2})[p \mapsto \plwhere~B~\kwdo~M_1]} < \maxred{Q[p \mapsto \plwhere~B~\kwdo~M]}$: since the former is a contractum of the latter, this is implied by Lemma~\ref{lem:redmaxred}.
  \item $Q[p \mapsto \plwhere~B \land B' \land B_0~\kwdo~M_0]$, where $M = \plwhere~B_0~\kwdo~M_0$;
  to apply the IH, we need to show $\maxred{Q[p \mapsto \plwhere~B \land B_0~\kwdo~M_0]} < \maxred{Q[p \mapsto \plwhere~B~\kwdo~M]}$: since the former is a contractum of the latter, this is implied by Lemma~\ref{lem:redmaxred}.
  \item Reductions within $B$ or $M$ make the induction metric smaller, thus follow immediately from the IH\@.
  \qedhere
  \end{itemize}
\end{lemmaproof}
\end{document}